\documentclass[11pt]{article}
\pdfoutput=1

\usepackage{ifdraft}
\ifdraft{\newcommand{\authnote}[3]{{\color{#3} {\bf  #1:} #2}}}{\newcommand{\authnote}[3]{}}
\newcommand{\anote}[1]{\authnote{ Andras}{#1}{green}}

\newif\ifcount
\ifdraft{\counttrue}

\usepackage[utf8]{inputenc}
\usepackage[T1]{fontenc}
\usepackage{fullpage}
\usepackage[english]{babel}
\usepackage{verbatim}

\usepackage{tikz}
\usepackage{amssymb}
\usepackage{mathtools}
\usepackage{bbm}
\usepackage{mleftright}\mleftright
\usepackage{multirow}
\usepackage{amsthm} 
\theoremstyle{plain}
\usepackage{thmtools} 
\usepackage{thm-restate}

\usepackage{algorithm}
\usepackage{algcompatible}
\usepackage{float}
\usepackage{enumitem}
\makeatletter
\def\namedlabel#1#2{\begingroup
	#2%
	\def\@currentlabel{#2}%
	\phantomsection\label{#1}\endgroup
}

\usepackage{caption}
\usepackage{subcaption}

\usepackage{titling}
\setlength{\droptitle}{-5mm}  

\usepackage[final=true,colorlinks = true,allcolors = {blue},]{hyperref}

\newcommand{\eps}{\varepsilon}

\newcommand{\ketbra}[2]{|#1\rangle\! \langle #2|}
\newcommand{\braketbra}[3]{\langle #1|#2| #3 \rangle}
\newcommand{\nrm}[1]{\left\lVert #1 \right\rVert}
\newcommand{\bigO}[1]{\mathcal{O}\left( #1 \right)}

\newcommand{\erf}[1]{\mathrm{erf}\left( #1 \right)}

\newcommand{\erfc}[1]{\mathrm{erfc}\left( #1 \right)}
\newcommand{\sign}[1]{\mathrm{sign}\left( #1 \right)}
\newcommand{\diag}[1]{\mathrm{diag}\left( #1 \right)}
\DeclareMathOperator{\arccosh}{arccosh}

\renewcommand{\H}{\mathcal{H}}
\newcommand{\C}{\mathbb{C}}
\newcommand{\N}{\mathbb{N}}
\newcommand{\R}{\mathbb{R}}
\newcommand{\Z}{\mathbb{Z}}
\newcommand{\PM}{\mathcal{P}}

\newcommand{\cupdot}{\overset{.}{\cup}}
\newcommand{\pvp}{\vec{p}{\kern 0.45mm}'}
\let\oldnabla\nabla
\renewcommand{\nabla}{\oldnabla\!}

\DeclarePairedDelimiter\bra{\langle}{\rvert}
\DeclarePairedDelimiter\ket{\lvert}{\rangle}
\DeclarePairedDelimiterX\braket[2]{\langle}{\rangle}{#1 \delimsize\vert #2}
\newcommand{\underflow}[2]{\underset{\kern-60mm \overbrace{#1} \kern-60mm}{#2}}

\def\Tr{\mathrm{Tr}}

\providecommand{\tr}[1]{\Tr\left[#1\right]}

\providecommand{\rank}[1]{\mathrm{rank}\left(#1\right)}
\providecommand{\poly}[1]{\mathrm{poly}\left(#1\right)}
\providecommand{\img}[1]{\mathrm{img}\left(#1\right)}
\providecommand{\spn}[1]{\mathrm{Span}\left(#1\right)}
\providecommand{\eend}[1]{\mathrm{End}\left(#1\right)}

\long\def\ignore#1{}

\newtheorem{theorem}{Theorem}
\newtheorem{corollary}[theorem]{Corollary}
\newtheorem{lemma}[theorem]{Lemma}

\newtheorem{definition}[theorem]{Definition}

\newtheorem*{claim*}{Claim}

\usepackage{tikz}

\usetikzlibrary{backgrounds,fit,decorations.pathreplacing,calc}

\usepackage{qcircuit}


\title{	
	Quantum singular value transformation and beyond:\\
	exponential improvements for quantum matrix arithmetics
}
\author{
	András Gilyén\thanks{QuSoft, CWI and University of Amsterdam, the Netherlands. Supported by ERC Consolidator Grant 615307-QPROGRESS. \texttt{gilyen@cwi.nl} }
	\and
	Yuan Su\thanks{Department of Computer Science, Institute for Advanced Computer Studies, and Joint Center for Quantum Information and Computer Science, University of Maryland, USA.  \texttt{buptsuyuan@gmail.com}}
	\and
	Guang Hao Low\thanks{Quantum Architectures and Computing group, Microsoft Research, USA.  \texttt{GuangHao.Low@microsoft.com}}
    \and
	Nathan Wiebe\thanks{Quantum Architectures and Computing group, Microsoft Research, USA.  \texttt{nawiebe@microsoft.com}}
}
\date{\today\vspace{-5mm}}

\begin{document}
	
\maketitle
 
  \begin{abstract}
 	Quantum computing is powerful because unitary operators describing the time-evolution of a quantum system have exponential size in terms of the number of qubits present in the system. 
 	We develop a new ``Singular value transformation'' algorithm capable of harnessing this exponential advantage, that can apply polynomial transformations to the singular values of a block of a unitary, generalizing the optimal Hamiltonian simulation results of Low and Chuang~\cite{low2017HamSimUnifAmp}.
 	The proposed quantum circuits have a very simple structure, often give rise to optimal algorithms and have appealing constant factors, while typically only use a constant number of ancilla qubits. 
 	
 	We show that singular value transformation leads to novel algorithms. We give an efficient solution to a ``non-commutative'' measurement problem used for efficient ground-state-preparation of certain local Hamiltonians, and propose a new method for singular value estimation. We also show how to exponentially improve the complexity of implementing fractional queries to unitaries with a gapped spectrum. Finally, as a quantum machine learning application we show how to efficiently implement principal component regression.
 	
 	``Singular value transformation'' is conceptually simple and efficient, and leads to a unified framework of quantum algorithms incorporating a variety of quantum speed-ups. We illustrate this by showing how it generalizes a number of prominent quantum algorithms, and quickly derive the following algorithms: optimal Hamiltonian simulation, implementing the Moore-Penrose pseudoinverse with exponential precision, fixed-point amplitude amplification, robust oblivious amplitude amplification, fast $\mathsf{QMA}$ amplification, fast quantum OR lemma, certain quantum walk results and several quantum machine learning algorithms.

 	In order to exploit the strengths of the presented method it is useful to know its limitations too, therefore we also prove a lower bound on the efficiency of singular value transformation, which often gives optimal bounds.
 	
 \end{abstract}
 
\newpage
\tableofcontents
\newpage

\section{Introduction}

It is often said in quantum computing that there are only a few quantum algorithms that are known to give speed-ups over classical computers.  While this is true, a remarkable number of applications stem from from these primitives.
The first class of quantum speedups is derived from quantum simulation which was originally proposed by Feynman~\cite{feynman1982SimQPhysWithComputers}.  Such algorithms yield exponential speedups over the best known classical methods for simulating quantum dynamics as well as probing electronic structure problems in material science and chemistry.  The two most influential quantum algorithms developed later in the 90's are Shor's algorithm~\cite{shor1994Factoring} (based on quantum Fourier transform) and Grover's search~\cite{grover1996QSearch}.  Other examples have emerged over the years, but arguably quantum walks~\cite{szegedy2004QMarkovChainSearch} and the quantum linear systems algorithm of Harrow, Hassidim and Lloyd~\cite{harrow2009QLinSysSolver} are the most common other primitives that provide speed-ups relative to classical computing.  An important question that remains is whether ``are these primitives truly independent or can they be seen as examples  of a deeper underlying concept?''  The aim of this work is to provide an argument that a wide array of techniques from these disparate fields can all be seen as manifestations of a single quantum idea that we call ``singular value transformation'' generalizing all the above mentioned techniques except for quantum Fourier transform.

Of the aforementioned quantum algorithms, quantum simulation is arguably the most diverse and rapidly developing.  Within the last few years a host of techniques have been developed that have led to ever more powerful methods~\cite{childs2017towardsFirstQSimSpeedup}.  The problem in quantum simulation fundamentally is to take an efficient description of a Hamiltonian $H$, an evolution time $t$, and an error tolerance $\eps$, and find a quantum operation $V$ such that $\nrm{e^{-iHt} -V}\le \eps$ while the implementation of $V$ should use as few resources as possible.  The first methods introduced to solve this problem were Trotter formula decompositions~\cite{lloyd1996UnivQSim,berry2005EffQAlgSimmSparseHam} and subsequently methods based on linear combinations of unitaries were developed~\cite{childs2012HamSimLCU} to provide better asymptotic scaling of the cost of simulation.  

An alternative strategy was also developed concurrent with these methods that used ideas from quantum walks.  Asymptotically, this approach is perhaps the favored method for simulating time-independent Hamiltonians because it is capable of achieving near-optimal scaling with all relevant parameters.  The main tool developed for this approach is a walk operator that has eigenvalues $e^{-i\arcsin(E_k/\alpha)}$ where $E_k$ is the $k^{\rm th}$ eigenvalue of $H$ and $\alpha$ is a normalizing parameter.  While early work adjusted the spectrum recovering the desired eigenvalues $e^{-iE_k}$ by using phase estimation to invert the $\arcsin$, subsequent work achieved better scaling using linear combination of quantum walk steps~\cite{berry2015HamSimNearlyOpt}. Recently another approach, called qubitization~\cite{low2016HamSimQubitization}, was introduced to transform the spectrum in a more efficient manner.

Quantum simulation is not the only field that uses such spectral transformations.  Quantum linear systems algorithms~\cite{harrow2009QLinSysSolver}, as well as algorithms for semi-definite programming~\cite{brandao2016QSDPSpeedup,apeldoorn2017QSDPSolvers},  use these ideas extensively.  Earlier work on linear systems used a strategy similar to the quantum walk simulation method: use phase estimation to estimate the eigenvalues of a matrix $\lambda_j$ and then use quantum rejection sampling to rescale the amplitude of each eigenvector $\ket{\lambda_j}$ via the map $\ket{\lambda_j} \mapsto \lambda_j^{-1} \ket{\lambda_j}$.  This enacts the inverse of a matrix and generalizations to the pseudo-inverse are straightforward.  More recent methods eschew the use of phase estimation in favor of linear-combination of unitary methods~\cite{childs2015QLinSysExpPrec} which typically approximate the inversion using a Fourier-series or Chebyshev series expansion.  Similar ideas can be used to prepare Gibbs states efficiently~\cite{chowdhury2016QGibbsSampling,apeldoorn2017QSDPSolvers}.  

These improvements typically result in exponentially improved scaling in terms of precision in various important subroutines. However, since these techniques work on quantum states, and usually one needs to learn certain properties of these states to a specified precision $\eps$, a polynomial dependence on $\frac{1}{\eps}$ is unavoidable. Therefore these improvements typically ``only'' result in polynomial savings in the complexity. Nevertheless, for complex algorithms this can make a huge difference. These techniques played a crucial role in improving the complexity of quantum semi-definite program solvers~\cite{apeldoorn2018ImprovedQSDPSolving} where the scaling with accuracy was improved from the initial $\mathcal{O}(1/\epsilon^{32})$ to $\mathcal{O}(1/\epsilon^4)$.


We provide a new generalization of qubitization that allows us to view all of the above mentioned applications as a manifestation of a single concept we call \emph{singular value transformation}. The central object for this result is \emph{projected unitary encoding}, which is defined as follows. Suppose that $\widetilde{\Pi}$,$\Pi$ are orthogonal projectors and $U$ is a unitary, then we say that the unitary $U$ and the projectors $\widetilde{\Pi}$,$\Pi$ form a projected unitary encoding of the operator $A:=\widetilde{\Pi} U \Pi$. 
We define singular value transformation by a polynomial $P\in\C[x]$ in the following way: if $P$ is an odd polynomial and $A=W\Sigma V^\dagger$ is a singular value decomposition (SVD), then $P^{(SV)}(A):=W P(\Sigma) V^\dagger$, where the polynomial $P$ is applied to the diagonal entries of $\Sigma$.
Our main result is that for any degree-$d$ odd polynomial $P\in\R[x]$, that is bounded by $1$ in absolute value on $[-1,1]$, we can implement a unitary $U_\Phi$ with a simple circuit using $U$ and its inverse a total number of $d$ times such that 
$$ A=\widetilde{\Pi} U \Pi \Longrightarrow P^{(SV)}(A)=\widetilde{\Pi} U_\Phi \Pi.$$
We prove a similar result for even polynomials as well, but with replacing $\widetilde{\Pi}$ by $\Pi$ in the above equation, and defining $P^{(SV)}(A):=V P(\Sigma) V^\dagger$ for even polynomials. One can view these results as generalizations of the quantum walk techniques introduced by Szegedy~\cite{szegedy2004QMarkovChainSearch}.

In order to illustrate the power of this technique we briefly explain some corollaries of this result. For example suppose that $U$ is a quantum algorithm that on the initial state $\ket{0}^{\otimes n}$ succeeds with probability at least $p$, and indicates success by setting the first qubit to $\ket{1}$. Then we can take $\widetilde{\Pi}:=\ketbra{1}{1}\otimes I_{n-1}$ and $\Pi:= \ketbra{0}{0}^{\otimes n}$. Observe that $A=\widetilde{\Pi} U \Pi$ is a rank-$1$ matrix having a single non-trivial singular value being the square root of the success probability. If $P$ is an odd polynomial bounded by $1$ in absolute value such that $P$ is $\frac{\eps}{2}$-close to $1$ on the interval $[\sqrt{p},1]$, then by applying singular value transformation we get an algorithm $U_\Phi$ that succeeds with probability at least $1-\eps$. Such a polynomial can be constructed with degree $\bigO{\frac{1}{\sqrt{p}}\log\left(\frac{1}{\eps}\right)}$ providing a conceptually simple and efficient implementation of fixed-point amplitude amplification.

It also becomes straightforward to implement the Moore-Penrose pseudoinverse directly. Suppose that $A=W\Sigma V^\dagger$ is an SVD, then the pseudoinverse is simply $A^+=V\Sigma^{-1} W^\dagger$, where we take the inverse of each non-zero diagonal element of $\Sigma$. If we have $A$ represented as a projected unitary encoding, then simply finding an appropriately scaled approximation polynomial of $\frac{1}{x}$ and applying singular value transformation to it implements an approximation of the Moore-Penrose pseudoinverse directly. As an application in quantum machine learning, we design a quantum algorithm for \emph{principal component regression}, and argue that singular value transformation could become a central tool in designing quantum machine learning algorithms.

Based on singular value transformation we develop two main general results: \emph{singular vector transformation}, which maps right singular vectors to left singular vectors, and \emph{singular value threshold projectors}, which project out singular vectors with singular value below a certain threshold. These threshold projectors play a major role in quantum algorithms recently proposed by Kerenidis et al.~\cite{kerenidis2016QRecSys,kerenidis2018SlowFeatureAnalysis}, and our work fills a minor gap that was present in earlier implementation proposals.  Our implementation is also simpler and applies in greater generality than the algorithm of Kerenidis and Prakash~\cite{kerenidis2016QRecSys}.  As a useful application of singular value threshold projectors we develop \emph{singular value discrimination}, which decides whether a given quantum state has singular value below or above a certain threshold. As another application we show that using singular vector transformation one can efficiently implement a form of ``non-commutative measurement'' which is used for preparing ground states of local Hamiltonians. Also we propose a new method for \emph{quantum singular value estimation} introduced by~\cite{kerenidis2016QRecSys}. 

Other algorithms can also be cast in the singular value transformation framework, including optimal Hamiltonian simulation, robust oblivious amplitude amplification, fast $\mathsf{QMA}$ amplification, fast quantum OR lemma and certain quantum walk results. 
Based on these techniques we also show how to exponentially improve the complexity of implementing fractional queries to unitaries with a gapped spectrum. We summarize in Table~\ref{tab:speedUpTypes} the various types of quantum speed-ups that are inherently incorporated in our singular value transformation framework. 

\begin{table}[H]
	\centering
	\begin{tabular}{l|ll}
		Speed-up & Source of speed-up & Examples of algorithms 
		\\ \hline
		\multirow{2}{*}{Exponential} & Dimensionality of the Hilbert space &  Hamiltonian simulation~\cite{lloyd1996UnivQSim} \\
		&  Precise polynomial approximations  &  Improved HHL algorithm~\cite{childs2015QLinSysExpPrec} \\ \hline
		\multirow{3}{*}{Quadratic}    &  Singular value $=$ square root of probability  &  Grover search~\cite{grover1996QSearch} \\
		&  Distinguishability of singular values & Amplitude estimation~\cite{brassard2002AmpAndEst}  \\
		&  Singular values close to $1$ are more useful & Quantum walks~\cite{szegedy2004QMarkovChainSearch}
	\end{tabular}
	\caption{This table gives an intuitive summary of the different types of speed-ups that our singular value transformation framework inherently incorporates. The explanations, examples and the cited papers are far from being complete or representative, the table only intends to give some intuition and illustrate the different sources of speed-ups.}
	\label{tab:speedUpTypes}	
\end{table}

In order to harness the power of singular value transformation one needs to construct projected unitary encodings. A special case of projected unitary encoding is called \emph{block-encoding}, when $\widetilde{\Pi}=\Pi=\ketbra{0}{0}^{\otimes a}\otimes I$. In this case $A$ is literally the top-left block of the unitary $U$. In the paper we provide a versatile toolbox for efficiently constructing block-encodings, summarizing recent developments in the field. In particular we demonstrate how to construct block-encodings of unitary matrices, density operators, POVM operators, sparse-access matrices and matrices stored in a QROM\footnote{By QROM we mean quantum read-only memory, which stores classical data that can be accessed in superposition.}. Furthermore, we show how to form linear combinations and products of block-encodings. 

\subsection{Structure of the paper}

In Section~\ref{sec:QSing} we derive a new formalization of qubitization that allows us to view all of the aforementioned applications as a manifestation of a single concept we call ``singular value transformation''. 
In Subsection~\ref{subsec:QSignProc} we develop a slightly improved version of the quantum signal processing result of Low et al.~\cite{low2016CompositeQuantGates}.  In Subsection~\ref{subsec:SingularvalTrans} we develop our singular value transformation result based on qubitization ideas of Low and Chuang~\cite{low2016HamSimQubitization}.  In Subsection~\ref{subsec:robustness} we prove bounds about the robustness of singular value transformation. We then introduce singular vector transformation and singular value amplification in Subsection~\ref{subsec:SingularvecTrans}, from which we provide elementary derivations of fixed-point amplitude amplification and robust oblivious amplitude amplification.  We then extend these ideas in Subsection~\ref{subsec:singWalk} to solve the problem of singular value threshold projection and singular value discrimination which as we show allow us to detect and find marked elements in a reversible Markov chain.  These ideas then allow us to provide an easy derivation of the quantum linear-systems algorithm, and more generally the quantum least-squares fitting algorithm, in Subsection~\ref{subsec:NonCommMeasAndSVE}.  In Subsection~\ref{subsec:DirectPseudoInverse} we show how to implement a form of ``non-commutative measurement'' which is used for preparing ground states of local Hamiltonians, and propose a new method for quantum singular value estimation. Finally, in Subsection~\ref{subsec:QML}, we design a quantum algorithm for principal component regression, and show how various other machine learning problems can be solved within our framework.

Section \ref{sec:matArith} shows how to efficiently construct block-encodings and contains a discussion of how these techniques can be employed to perform matrix arithmetic on a quantum computer.  In particular we show how to perform basic linear algebra operations on Hamiltonians using block-encodings; we discuss matrix addition and multiplication in Subsections~\ref{subsec:add} and~\ref{subsec:mult}.  We follow this 
up with a discussion of how arbitrary smooth functions of Hermitian matrices can be performed.  We then give an elementary proof of the complexity of block-Hamiltonian simulation in Subsection~\ref{subsec:opt} and discuss approximating piecewise smooth functions of Hamiltonians in Subsection~\ref{subsec:polyApprox} and present the special cases of Gibbs state preparation and fractional queries in Subsection~\ref{subsec:applications}.  We then conclude by proving lower bounds for implementing functions of Hermitian matrices in Section~\ref{sec:lowerBd}, which in turn implies lower bounds on singular value transformation.

\section{Preliminaries and notation}\label{sec:prelims}

It is well known that for every $A\in\C^{m\times n}$ matrix there exists a pair of unitaries $W\in\C^{m\times m}$, $V\in\C^{n\times n}$ and $\Sigma\in\R^{m\times n}$ such that $\Sigma$ is a diagonal matrix with non-negative non-increasing entries on the diagonal, and $A=W\Sigma V^\dagger$. Such a decomposition is called \emph{singular value decomposition}. Let $k:=\min[m,n]$, then we use $\varsigma_i:=\Sigma_{ii}$ for $i\in[k]$ to denote the \emph{singular values} of $A$, which are the diagonal elements of $\Sigma$. The columns of $V$ are called right singular vectors, and the columns of $W$ are called the left singular vectors. In this paper we often define the matrix $A$ as the product of two orthogonal projectors $\widetilde{\Pi},\Pi$ and unitary $U$ such that $A=\widetilde{\Pi}U\Pi$. In such a case we will assume without loss of generality that the first $\text{rank}(\widetilde{\Pi})$ left singular vectors span $\text{img}(\widetilde{\Pi})$ and the first $\rank{\Pi}$ right singular vectors span $\img{\Pi}$. 

The singular value decomposition is not unique if there are multiple singular values with the same value. However, the singular value projectors are uniquely determined, see, e.g., Gilyén and Sattath~\cite{gilyen2016PrepGapHamEffQLLL}.
\begin{definition}[Singular value projectors]\label{def:sinValProj}
	Let $A=W\Sigma V^\dagger$ be a singular value decomposition. 
	Let $\Sigma_{\varsigma}$ be the matrix that we get from $\Sigma$ by replacing all singular values that have value $= \varsigma$ by $1$ and replacing all $\neq \varsigma$ singular values by $0$. Then we define the right singular value projector to singular value $\varsigma$ as $V \Sigma_{\varsigma} V^\dagger$, and define the left singular value projector to singular value $\varsigma$ as $W \Sigma_{\varsigma} W^\dagger$ projecting orthogonally to the subspace spanned by the corresponding singular vectors. For a set $S\subset \R$ we similarly define the right and left singular value projectors $V \Sigma_{S} V^\dagger$, $W \Sigma_{S} W^\dagger$ projecting orthogonally to the subspace spanned by the singular vectors having singular value in $S$.
\end{definition}

In this paper we will work with polynomial approximations, and therefore we introduce some related notation. For a function $f:I\rightarrow \C$ and a subset $I'\subseteq I$ we use the notation $\nrm{f}_{I'}:=\sup_{x\in I'}|f(x)|$ to denote the sup-norm of the function $f$ on the domain $I'$. We say that a function $f\colon \R\rightarrow \C$ \emph{is even} if for all $x\in\R$ we have $f(-x)=f(x)$, and that it \emph{is odd} if for all $x\in\R$ we have $f(-x)=-f(x)$.

Let $P\in\C[x]$ be a complex polynomial $P(x)=\sum_{j=0}^{k}a_jx^j$, then we denote by $P^*(x):=\sum_{j=0}^{k}a^*_jx^j$ the polynomial with conjugated coefficients, and let $\Re[P](x):=\sum_{j=0}^{k}\Re[a_j]x^j$ denote the real polynomial we get by taking the real part of the coefficients. We say that $P$ \emph{is even} if all coefficients corresponding to odd powers of $x$ are $0$, and similarly we say that $P$ \emph{is odd} if all coefficients corresponding to even powers of $x$ are $0$. For an integer number $z\in \Z$ we say that $P$ has parity $z$ if $z$ is even and $P$ is even or $z$ is odd and $P$ is odd. We will denote by $T_d\in\R[x]$ the $d$-th Chebyshev polynomial of the first kind, defined by $T_d(x):=\cos(d \arccos(x))$.

Whenever we present a matrix and put a $.$ in some place we mean a matrix with arbitrary values of the elements in the unspecified block. For example $[.]$ just denotes a matrix with completely arbitrary elements, similarly 
$$U=\left[\begin{array}{cc} A & .\\
. & .\end{array}\right]
$$
denotes an arbitrary matrix whose top-left block is $A$.

For an orthogonal projector $\Pi$ we will frequently use the $\Pi$-controlled NOT gate, denoted by C$_\Pi$NOT, which implements a coherent measurement operator by flipping the value of a qubit based on whether the state of a register is in the image of $\Pi$ or not. For example if $\Pi=\ketbra{1}{1}$, then we just get back the usual CNOT gate controlled by the second qubit.
\begin{definition}[C$_\Pi$NOT gate]\label{def:PiNOT}
	For an orthogonal projector $\Pi$ let us define the $\Pi$-controlled NOT gate as the unitary operator
	$$\mathrm{C}_\Pi\mathrm{NOT}:=X\otimes\Pi+I\otimes(I-\Pi) .$$
\end{definition}

\section{Qubitization and Singular value transformations}
\label{sec:QSing}

The methods in this section are based on the so called ``Quantum Signal Processing'' techniques introduced by Low, Yoder and Chuang~\cite{low2016CompositeQuantGates}.
In Section~\ref{subsec:QSignProc} we present a self-contained treatment of these techniques, significantly streamline the formalism, and develop slightly improved versions of the results presented in~\cite{low2016CompositeQuantGates}. As a corollary of the results we also develop Corollary~\ref{cor:refAchievableP}-\ref{cor:realP}, which will be the only results that we need in the rest of the paper. We suggest the first-time reader to skip the proofs in Section~\ref{subsec:QSignProc}, as they are not necessary in order to understand the main ideas of Sections~\ref{subsec:SingularvalTrans}-\ref{subsec:NonCommMeasAndSVE}.

In Sections~\ref{subsec:SingularvalTrans} we show how to leverage the results of Section~\ref{subsec:QSignProc} to perform \emph{singular value transformation} of projected unitary matrices, with ideas coming from ``qubitization''~\cite{low2016HamSimQubitization}.
Singular value transformation is a common generalization of the techniques developed around qubitization, based on which we can quickly derive a host of well-optimized applications in Sections~\ref{subsec:SingularvecTrans}-\ref{subsec:QML}.
\subsection{Parametrized SU(2) unitaries induced by Pauli rotations} 
\label{subsec:QSignProc} 

In this section we review the results of Low, Yoder and Chuang~\cite{low2016CompositeQuantGates}, who show how to build $2\times2$ unitary matrices whose entries are trigonometric polynomials by taking products of various rotation and phase gates. They consider essentially the following problem, which they call ``Quantum Signal Processing'': suppose one can apply a gate sequence 
\begin{equation}\label{eq:rotAndPhase}
e^{i\phi_0 \sigma_z}e^{i\theta \sigma_x}
e^{i\phi_1 \sigma_z}e^{i\theta \sigma_x}
e^{i\phi_2 \sigma_z}\cdot\ldots\cdot
e^{i\theta \sigma_x}e^{i\phi_k \sigma_z},
\end{equation}
where $\theta$ is unknown (they call $e^{i\theta \sigma_x}$ the signal unitary) but one has control over the angles $\varphi_0,\varphi_1, \ldots, \varphi_k$; which unitary operators can we build this way? They give a characterization of the unitary operators that can be constructed this way, and find that the set of achievable unitary operators is quite rich.

We find it more useful to work with the above matrices using a slightly modified parametrization. For $x\in[-1,1]$ let us define
\begin{equation*}
W(x):=\left[\begin{array}{cc} x & i\sqrt{1-x^2} \\ i\sqrt{1-x^2} & x \end{array}\right]=e^{i\arccos(x) \sigma_x}.
\end{equation*}

It is easy to see that if $\theta\in[0,\pi]$, then by setting $x:=\cos(\theta)$ Eq.~\eqref{eq:rotAndPhase} can be rewritten as 
\begin{equation}\label{eq:sandwitchedForm}
e^{i\phi_0 \sigma_z}
W(x)
e^{i\phi_1 \sigma_z}
W(x)
e^{i\phi_2 \sigma_z}
\cdot\ldots\cdot
W(x)
e^{i\phi_k \sigma_z}.
\end{equation}

Now we present the characterization of Low et al.~\cite{low2016CompositeQuantGates} using the above formalism. Our formulation makes the statement simpler and reduces the number of cases. We also present a succinct simplified proof which can be conveniently described using our formalism.

\begin{theorem}\label{thm:basicCharacterisation}
	Let $k\in\N$; there exists $\Phi=\{\phi_0,\phi_1,\ldots,\phi_k\}\in\R^{k+1}$ such that for all $x\in[-1,1]\colon$
	\begin{equation}\label{eq:sandwitchedThm}		
	e^{i\phi_0 	\sigma_z}\prod_{j=1}^{k}\left(W(x)e^{i\phi_j \sigma_z}\right)
	=		\left[\begin{array}{cc} P(x) & iQ(x)\sqrt{1-x^2} \\ iQ^*(x)\sqrt{1-x^2} & P^*(x) \end{array}\right]
	\end{equation}
	if and only if $P,Q\in\C[x]$ such\footnote{Note that the value of $P(x)$ is only determined for $x\in[-1,1]$ and $Q(x)$ for $x \in(-1,1)$; thus more precisely we should talk about the polynomial functions induced by $\left.P(x)\right|_{[-1,1]}\in \C[x]$ and $\left.Q(x)\right|_{(-1,1)}\in \C[x]$.} that 
	\begin{itemize}
		\item[\namedlabel{prop:i}{(i)}] $\deg(P)\leq k$ and $\deg(Q)\leq k-1$
		\item[\namedlabel{prop:ii}{(ii)}] $P$ has parity-$(k \mod 2)$ and $Q$ has parity-$(k-1 \mod 2)$	
		\item[\namedlabel{prop:iii}{(iii)}] $\forall x\in[-1,1]\colon |P(x)|^2+(1-x^2)|Q(x)|^2=1$.			
	\end{itemize} 
\end{theorem}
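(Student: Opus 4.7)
The plan is to prove both directions by induction on $k$. The base case $k=0$ is transparent: $U_0 = e^{i\phi_0\sigma_z}$ is diagonal with entries $e^{i\phi_0}, e^{-i\phi_0}$, corresponding to $P(x) = e^{i\phi_0}$ and $Q(x) = 0$; conversely any $(P,Q)$ satisfying \ref{prop:i}--\ref{prop:iii} for $k=0$ forces $P$ to be a unit-modulus constant and $Q \equiv 0$, which is realized by an appropriate $\phi_0$.

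For the forward direction's inductive step, I multiply the expression for $k$ by $W(x)e^{i\phi_{k+1}\sigma_z}$ on the right and read off the recursion
\[
P_{k+1}(x) = e^{i\phi_{k+1}}\bigl[xP_k(x) - (1-x^2)Q_k(x)\bigr], \quad Q_{k+1}(x) = e^{-i\phi_{k+1}}\bigl[P_k(x) + xQ_k(x)\bigr].
\]
Properties \ref{prop:i} and \ref{prop:ii} then follow by elementary degree-and-parity bookkeeping, and \ref{prop:iii} is automatic because the extended matrix is a product of unitaries, so its first column has unit norm, which is precisely \ref{prop:iii} read off the $(1,1)$ and $(2,1)$ entries.

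The backward direction is where the content lies. Given $P,Q$ with degree bounds $k+1$ and $k$ satisfying \ref{prop:i}--\ref{prop:iii}, I peel off a rightmost factor $W(x)e^{i\phi_{k+1}\sigma_z}$ and apply the inductive hypothesis to the remainder $U_{k+1}\cdot e^{-i\phi_{k+1}\sigma_z}W(x)^{-1}$. Inverting the recursion above gives candidates
\[
\widetilde P(x) := xe^{-i\phi_{k+1}}P(x) + (1-x^2)e^{i\phi_{k+1}}Q(x), \quad \widetilde Q(x) := -e^{-i\phi_{k+1}}P(x) + xe^{i\phi_{k+1}}Q(x),
\]
and \ref{prop:iii} for $(\widetilde P, \widetilde Q)$ follows from the same unitary-product argument, while the parity requirements of \ref{prop:ii} are immediate by inspection of the two formulas. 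The main obstacle, and the only nontrivial step, is enforcing the degree bounds of \ref{prop:i}. Writing $p, q$ for the leading coefficients of $P$ at $x^{k+1}$ and of $Q$ at $x^k$ respectively, the coefficient of $x^{k+2}$ in $\widetilde P$ and of $x^{k+1}$ in $\widetilde Q$ are both proportional to $pe^{-i\phi_{k+1}} - qe^{i\phi_{k+1}}$, so a single choice $e^{2i\phi_{k+1}} = p/q$ simultaneously kills both. For this to be solvable I need $|p| = |q|$, which I extract from the polynomial identity $P(x)P^*(x) + (1-x^2)Q(x)Q^*(x) = 1$ — valid as an identity in $x\in\C$ because \ref{prop:iii} forces equality on the infinite set $[-1,1]$ — by reading off the coefficient of $x^{2k+2}$, namely $|p|^2 - |q|^2 = 0$. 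The degenerate case $p=q=0$ admits any $\phi_{k+1}$. Once the top coefficients are cancelled, the parity constraints already established for $(\widetilde P, \widetilde Q)$ automatically drop the degrees one further step, giving $\deg(\widetilde P) \leq k$ and $\deg(\widetilde Q) \leq k-1$, so the inductive hypothesis applies and produces the remaining angles $\phi_0,\ldots,\phi_k$.
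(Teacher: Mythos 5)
Your proof is correct and takes essentially the same route as the paper: induction on the length, with the forward direction following from the matrix recursion and unitarity, and the backward direction peeling off one $W(x)e^{i\phi\sigma_z}$ factor by choosing $e^{2i\phi}=p/q$ to kill the top coefficient (using $|p|=|q|$, which both you and the paper extract from the identity $PP^*+(1-x^2)QQ^*=1$). The only organizational difference is that you fold the $p=q=0$ degeneracy into the inductive step, whereas the paper carves out a separate "trivial case" $\deg(P)=0$ and uses an explicit phase sequence $(\phi_0,\tfrac{\pi}{2},-\tfrac{\pi}{2},\ldots)$ that covers both the base case and the degenerate branch at once; both handlings are sound.
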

\begin{proof}
	``$\Longrightarrow$'': For the $k=0$ case the unitary on the left hand side of \eqref{eq:sandwitchedThm} is $e^{i\phi_0 \sigma_z}$, so that $P\equiv e^{i\phi_0}$ and $Q\equiv 0$ satisfy  the properties \ref{prop:i}-\ref{prop:iii}. Now we prove \ref{prop:i}-\ref{prop:ii} by induction. The induction step can be shown as follows: suppose we have proved for $k-1$ that
	\begin{equation*}	
	e^{i\phi_0 	\sigma_z}\prod_{j=1}^{k-1}\left(W(x)e^{i\phi_j \sigma_z}\right)=
	\left[\begin{array}{cc} \tilde{P}(x) & i\tilde{Q}(x)\sqrt{1-x^2} \\ i\tilde{Q}^*(x)\sqrt{1-x^2} & \tilde{P}^*(x) \end{array}\right],
	\end{equation*}	
	where $\tilde{P},\tilde{Q}\in\C[x]$ satisfy \ref{prop:i}-\ref{prop:ii}. Then
	\begin{align}	
	e^{i\phi_0 	\sigma_z}\prod_{j=1}^{k}\left(W(x)e^{i\phi_j \sigma_z}\right)
	\!\!&=\!
	\left[\begin{array}{cc} \tilde{P}(x) &\!\! i\tilde{Q}(x)\sqrt{1-x^2} \!\\\! i\tilde{Q}^*(x)\sqrt{1-x^2} &\!\! \tilde{P}^*(x) \end{array}\right]
	\left[\begin{array}{cc} e^{i\phi_{k}}x &\!\! ie^{-i\phi_{k}}\sqrt{1-x^2} \!\\\! ie^{i\phi_{k}}\sqrt{1-x^2} &\!\! e^{-i\phi_{k}}x \end{array}\right]\nonumber\\
	&=
	\Bigg[\begin{array}{cc} 
	\overset{P(x):=}{\overbrace{e^{i\phi_{k}}\left(x\tilde{P}(x) + (x^2-1)\tilde{Q}(x)\right)}}
	& ie^{-i\phi_{k}}\left(x\tilde{Q}(x)+\tilde{P}(x)\right)\sqrt{1-x^2} \\ 
	i\underset{Q^*(x):=}{\underbrace{e^{i\phi_{k}}\left(x\tilde{Q}^*(x)+\tilde{P}^*(x)\right)}}\sqrt{1-x^2} 
	& e^{-i\phi_{k}}\left(x\tilde{P}^*(x) + (x^2-1)\tilde{Q}^*(x)\right)
	\end{array}\Bigg],
	\end{align}			
	and it is easy to see that $P,Q$ satisfy \ref{prop:i}-\ref{prop:ii}.
	Finally note that the left hand side of \eqref{eq:sandwitchedThm} is a product of unitaries, therefore the right hand side is unitary too, which implies \ref{prop:iii}.
	
	``$\Longleftarrow$'': Suppose $P,Q$ satisfy \ref{prop:i}-\ref{prop:iii}. First we handle a trivial case: suppose that $\deg(P)=0$, then due to \ref{prop:iii} we must have that $|P(1)|=1$ and thus $P\equiv e^{i\phi_0}$ for some $\phi_0\in\R$. This again using \ref{prop:iii} implies that $Q\equiv 0$. Due to \ref{prop:ii} we must have that $k$ is even, and thus $\Phi=(\phi_0,\frac\pi2,-\frac\pi2,\ldots,\frac\pi2,-\frac\pi2)\in\R^{k+1}$ is a solution, since 
	\begin{align*}
	e^{i\phi_0 	\sigma_z}\prod_{j=1}^{k/2}\left(W(x)e^{i\frac\pi2 \sigma_z}W(x)e^{-i\frac\pi2 \sigma_z}\right)
	=e^{i\phi_0 	\sigma_z}
	=\left[\begin{array}{cc} e^{i\phi_0} & 0 \\ 0 & e^{-i\phi_0} \end{array}\right].
	\end{align*}
	This special case also covers the $k=0$ case, providing the base of our induction. 
	
	Now we show the induction step, assuming that we proved the claim for $k-1$. Note that \ref{prop:iii} can be rewritten as 
	\begin{equation}\label{eq:cancel}
	\forall x\in[-1,1]\colon P(x)P^*(x)+(1-x^2)Q(x)Q^*(x)=1.
	\end{equation}
	Since this equation holds for infinitely many points, the polynomial on the right hand side of \eqref{eq:cancel} must be the constant $\equiv1$ polynomial. Assume without loss of generality that $1\leq\deg(P)=\ell\leq k$, then we must have that $\deg(Q)=\ell-1$, and $|p_{\ell}|=|q_{\ell-1}|$, since the highest order terms cancel each other in \eqref{eq:cancel}. Let $\phi_k\in\R$ be such that $e^{2i\phi_k}=\frac{p_{\ell}}{q_{\ell-1}}$, and let us define
	\begin{align}	
	\!\!\left[\begin{array}{cc} \tilde{P}(x) &\kern-5mm i\tilde{Q}(x)\sqrt{1-x^2}\kern-1mm \\
	\kern-2mm i\tilde{Q}^*(x)\sqrt{1-x^2} &\kern-5mm \tilde{P}^*(x) \end{array}\right]
	\!\!:=&
	\left[\begin{array}{cc} P(x) &\kern-5mm iQ(x)\sqrt{1-x^2}\kern-1mm \\ \kern-2mm iQ^*(x)\sqrt{1-x^2} &\kern-5mm P^*(x) \end{array}\right]e^{-i\phi_k\sigma_z}W^\dagger(x)\!\nonumber\\		
	\!\!=&
	\left[\begin{array}{cc} P(x) &\kern-5mm iQ(x)\sqrt{1-x^2}\kern-1mm \\ \kern-2mm iQ^*(x)\sqrt{1-x^2} &\kern-5mm P^*(x) \end{array}\right]
	\!\left[\begin{array}{cc} e^{-i\phi_k}x &\kern-5mm -ie^{-i\phi_k}\sqrt{1-x^2} \kern-1mm \\ \kern-2mm -ie^{i\phi_k}\sqrt{1-x^2} &\kern-5mm e^{i\phi_k}x \end{array}\right]\!\nonumber\\
	\!\!=&
	\Bigg[\begin{array}{cc} 
	\overset{\tilde{P}(x):=}{\overbrace{ e^{-i\phi_k}xP(x)+e^{i\phi_k}(1-x^2)Q(x)}}
	& i\tilde{Q}(x)\sqrt{1-x^2} \\ 
	i\underset{\tilde{Q}^*(x):=}{\underbrace{\left(e^{-i\phi_k}xQ^*(x)-e^{i\phi_k}P^*(x)\right)}}\sqrt{1-x^2} 
	& \tilde{P}^*(x)
	\end{array}\Bigg]\label{eq:inducted}
	\end{align}	
	where 
	\begin{equation}\label{eq:Ptilde}
	\tilde{P}(x)= e^{-i\phi_k}xP(x)+e^{i\phi_k}(1-x^2)Q(x)=e^{-i\phi_k}\left(xP(x)+\frac{p_{\ell}}{q_{\ell-1}}(1-x^2)Q(x)\right)
	\end{equation} and
	\begin{equation}\label{eq:Qtilde}
	\tilde{Q}(x)=e^{i\phi_k}xQ(x)-e^{-i\phi_k}P(x)=e^{-i\phi_k}\left(\frac{p_{\ell}}{q_{\ell-1}}xQ(x)-P(x)\right).
	\end{equation} 
	It is easy to see that the highest order terms in \eqref{eq:Ptilde}-\eqref{eq:Qtilde} cancel out, and therefore $\deg(\tilde{P})\leq \ell-1 \leq k-1$, $\deg(\tilde{Q})\leq \ell-2\leq k-2$. Using \eqref{eq:Ptilde}-\eqref{eq:Qtilde} we can also verify that $\tilde{P},\tilde{Q}$ satisfy \ref{prop:i}-\ref{prop:ii} regarding $k-1$, moreover property \ref{prop:iii} is preserved due to unitarity. So by the induction hypothesis we get that \eqref{eq:inducted} equals $e^{i\tilde{\phi}_0 	\sigma_z}\left(\prod_{j=1}^{k-1}W(x)e^{i\tilde{\phi}_j \sigma_z}\right)$ for some $\tilde{\Phi}\in\R^{k}$, therefore $\Phi:=(\tilde{\phi}_0,\tilde{\phi}_1,\tilde{\phi}_2,\ldots,\tilde{\phi}_{k-1},\phi_k)\in\R^{k+1}$ is a solution.
\end{proof}

\newpage
Note that the above proof also gives an algorithm that finds $\Phi$ using $\bigO{k^2}$ arithmetic operations. The following two characterizations and their proofs also follow a constructive approach which can be translated to a polynomial time algorithm. However, they have the drawback that they rely on finding roots of high-degree polynomials,\footnote{For a good bound on the complexity of approximate root finding see, e.g., the work of Neff and Reif~\cite{neff1996AlgComplexRootFinding}.} which makes it harder in practice to execute the resulting protocols.

\begin{theorem}\label{thm:achievablePorQ}
	Let $k\in\N$ be fixed. Let $P\in\C[x]$, there exists some $Q\in\C[x]$ such that $P,Q$ satisfy properties \ref{prop:i}-\ref{prop:iii} of Theorem~\ref{thm:basicCharacterisation} if and only if $P$ satisfies properties \ref{prop:i}-\ref{prop:ii} of Theorem~\ref{thm:basicCharacterisation} and 
	\begin{itemize}
		\item[\namedlabel{prop:iva}{(iv.a)}] $\forall x\in[-1,1]\colon |P(x)|\leq 1$
		\item[\namedlabel{prop:ivb}{(iv.b)}] $\forall x\in(-\infty,-1]\cup[1,\infty)\colon |P(x)|\geq 1$		
		\item[\namedlabel{prop:ivc}{(iv.c)}]if $k$ is even, then $\forall x\in\R\colon P(ix)P^*(ix)\geq 1$.
	\end{itemize} 
	
	Similarly, let $Q\in\C[x]$, there exists some $P\in\C[x]$ such that $P,Q$ satisfy properties \ref{prop:i}-\ref{prop:iii} of Theorem~\ref{thm:basicCharacterisation} if and only if $Q$ satisfies properties \ref{prop:i}-\ref{prop:ii} of Theorem~\ref{thm:basicCharacterisation} and 
	\begin{itemize}
		\item[\namedlabel{prop:va}{(v.a)}] $\forall x\in[-1,1]\colon \sqrt{1-x^2}|Q(x)|\leq 1$
		\item[\namedlabel{prop:vb}{(v.b)}] if $k$ is odd, then $\forall x\in\R\colon (1+x^2)Q(ix)Q^*(ix)\geq 1$.
	\end{itemize} 
\end{theorem}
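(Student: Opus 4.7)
The plan is to handle the two directions separately and deduce the $Q$-version from the $P$-version by a symmetric argument. For the forward direction, property \ref{prop:iii} is an identity of polynomial functions on the infinite set $[-1,1]$, hence extends to the polynomial identity $P(x)P^*(x)+(1-x^2)Q(x)Q^*(x)\equiv 1$ on all of $\C$. Evaluating on $\R$ and tracking the sign of $1-x^2$ yields \ref{prop:iva}, \ref{prop:ivb}, and \ref{prop:va}. For \ref{prop:ivc} I would substitute $x=iy$ with $y\in\R$: when $k$ is even, \ref{prop:ii} forces $Q$ to be odd, and a direct expansion shows $Q(iy)=i\,r(y)$ for some $r\in\C[y]$, so that $Q(iy)Q^*(iy)=-|r(y)|^2\leq 0$; since $1-(iy)^2=1+y^2>0$, the identity then forces $P(iy)P^*(iy)\geq 1$. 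The same computation with the parities of $P$ and $Q$ swapped yields \ref{prop:vb}.

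For the backward direction, given $P$ satisfying \ref{prop:i}, \ref{prop:ii}, \ref{prop:iva}, \ref{prop:ivb}, \ref{prop:ivc}, I would introduce the auxiliary polynomial
\[ R(x):=\frac{1-P(x)P^*(x)}{1-x^2} \]
and first verify that $R$ is a real, even polynomial of degree at most $2k-2$ which is nonnegative on $\R$. Reality follows from $(PP^*)^*=PP^*$; divisibility by $1-x^2$ follows by pinching \ref{prop:iva} and \ref{prop:ivb} at $x=\pm 1$ to obtain $|P(\pm 1)|=1$; evenness is immediate from \ref{prop:ii}; and nonnegativity on $\R$ follows because numerator and denominator share the same sign on $[-1,1]$ and on $|x|\geq 1$. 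The main obstacle is then to factor $R=QQ^*$ with $Q$ of the correct parity $(k-1)\bmod 2$. My tool is the standard lemma that any $T\in\R[y]$ with $T\geq 0$ on $\R$ admits a factorization $T=AA^*$ with $A\in\C[y]$ (real roots have even multiplicity, complex roots come in conjugate pairs, and the leading coefficient is nonnegative), combined with the substitution $y=x^2$ to enforce parity. When $k$ is odd, so that $Q$ must be even, I write $R(x)=S(x^2)$; the forward-direction sign computation applied to the now-odd polynomial $P$ gives $R(ix)\geq 0$ for all real $x$, hence $S\geq 0$ on $\R$, and I set $Q(x):=A(x^2)$. When $k$ is even, so that $Q$ must be odd, pinching \ref{prop:iva} and \ref{prop:ivc} at $x=0$ forces $|P(0)|=1$ and hence $R(0)=0$, enabling $R(x)=x^2\,T(x^2)$; moreover \ref{prop:ivc} gives $R(ix)\leq 0$, which combined with the sign of $y=-x^2<0$ yields $T\geq 0$ on $\R$; I then set $Q(x):=x\,A(x^2)$. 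The degree bookkeeping works out because $\deg T\leq k-2$ (resp.\ $\deg S\leq k-1$) as a polynomial in $y$, and $T=AA^*$ halves these degrees, giving $\deg Q\leq k-1$.

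The $Q$-version is obtained symmetrically by running the same argument on $\tilde R(x):=1-(1-x^2)Q(x)Q^*(x)$: property \ref{prop:va} combined with the sign of $1-x^2$ for $|x|\geq 1$ makes $\tilde R\geq 0$ on $\R$, while \ref{prop:vb} together with the pinch $|Q(0)|=1$ at $x=0$ (in the $k$ odd case) plays the role that \ref{prop:ivc} played above. The main subtlety throughout is the parity bookkeeping, which is precisely why the auxiliary conditions \ref{prop:ivc} and \ref{prop:vb} are required in only one parity case each.
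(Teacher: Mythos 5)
Your proof is correct and takes essentially the same route as the paper. In the backward direction the paper works directly with $A(x)=1-P(x)P^*(x)$, passes to $\tilde A(y)=A(\sqrt y)$, and carries out an explicit root enumeration to extract the factor $(1-y)$ together with a sum-of-squares factorization of the rest; you instead divide out $1-x^2$ first (justified by the pinch $|P(\pm 1)|=1$), and invoke the standard Fej\'er--Riesz-type fact that a real polynomial nonnegative on $\R$ factors as $AA^*$. These are the same argument up to where the $(1-y)$ factor is absorbed, and your sign computations for $R(ix)$ (using~\ref{prop:ivc} in the $k$ even case, and oddness of $P$ in the $k$ odd case) match the paper's use of $\tilde A(y)\ge 1$ for $y\le 0$. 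Two small things you package more explicitly than the paper: you spell out both parity cases for $P$ (the paper only does $k$ odd and says the rest is similar), and you note that the $Q$-version follows by running the same argument on $\tilde R(x)=1-(1-x^2)Q(x)Q^*(x)$ — exactly the polynomial the paper names in its one-line remark. The only thing worth stating a bit more carefully is that $R(\pm 1)\ge 0$: the sign-matching argument applies a priori only for $x\neq\pm 1$, and you should observe that $R$ is a polynomial so nonnegativity at $\pm 1$ follows by continuity (or because $\pm 1$ is a one-sided extremum of $|P|^2$). Your degree bookkeeping is correct in every case, including the observation that $\deg T$ is automatically even because $T\ge 0$.
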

\begin{proof}``$\Longrightarrow$'': Trivially follows from \ref{prop:iii}: $$\forall x\in\C\colon P(x)P^*(x)+(1-x^2)Q(x)Q^*(x)=1.$$
	``$\Longleftarrow$'':
	First consider the case when $k$ is odd, and consider the polynomial $A(x):=1-P(x)P^*(x)$. Note that $A\in\R[x]$ and $A$ is even, therefore $A$ is in fact a polynomial in $x^2$. Let $y=x^2$ and consider the real polynomial $\tilde{A}(y):=A(\sqrt{y})$. Observe that $\forall y\geq 1\colon\tilde{A}(y)\leq 0$ due to \ref{prop:ivb}, $\forall y \in [0,1]\colon\tilde{A}(y)\geq 0$ due to \ref{prop:iva} and $\forall y \leq 0\colon\tilde{A}(y)\geq 1$ since
	\begin{align*}
	\tilde{A}(y)&=A(i\sqrt{-y})\tag{$y \leq 0$ }\\
	&=1-P(i\sqrt{-y})P^*(i\sqrt{-y})
	=1+P(i\sqrt{-y})P^*(-i\sqrt{-y})\tag{$P$ is odd }\\
	&=1+P(i\sqrt{-y})(P(i\sqrt{-y}))^*
	=1+|P(i\sqrt{-y})|^2
	\geq 1.
	\end{align*} 
	Therefore all real roots have even multiplicity except for $1$, moreover all complex roots come in pairs. Thus $\tilde{A}(y)=(1-y)K^2\prod_{s\in S}(y-s)(y-s^*)$ for some $K\in\R$ and $S\subseteq\C$ multiset of roots. Let $W(y):=K\prod_{s\in S}(y-s)\in\C[y]$, then $\tilde{A}(y)=(1-y)W(y)W^*(y)$, and thus $A(x)=(1-x^2)W(x^2)W^*(x^2)$, i.e., $1=P(x)P^*(x)+(1-x^2)W(x^2)W^*(x^2)$. Setting $Q(x):=W(x^2)$ concludes this case.
	
	The other cases can be proven similarly, by examining the polynomial $1-P(x)P^*(x)$ or $1-(1-x^2)Q(x)Q^*(x)$ respectively.
\end{proof}

The original proof of the next theorem in \cite{low2016CompositeQuantGates} used the Weierstrass substitution, which made it difficult to understand, and made it hard to analyze the numerical stability of the induced algorithm. Also the theorem was stated in a slightly less general form requiring $\Re[\tilde{P}](1)=1$. We roughly follow the approach of \cite{low2016CompositeQuantGates}, but improve all of the mentioned aspects of the theorem and its proof, while making the statement and the proof conceptually simpler.

\begin{theorem}\label{thm:cleverSOS}
	Let $k\in\N$ be fixed. Let $\tilde{P},\tilde{Q}\in\R[x]$, there exists some $P,Q\in\C[x]$ satisfying properties \ref{prop:i}-\ref{prop:iii} of Theorem~\ref{thm:basicCharacterisation} such that $\tilde{P}=\Re[P]$, $\tilde{Q}=\Re[Q]$, if and only if $\tilde{P},\tilde{Q}$ satisfy properties \ref{prop:i}-\ref{prop:ii} of Theorem~\ref{thm:basicCharacterisation} and 
	\begin{itemize}
		\item[\namedlabel{prop:vi}{(vi)}] $\forall x\in[-1,1]\colon \tilde{P}(x)^2+(1-x^2)\tilde{Q}^2\leq 1$.
	\end{itemize} 
	(Note that the same holds if we replace $\Re[P]$ by $\Im[P]$ and/or $\Re[Q]$ by $\Im[Q]$ in the statement. Moreover we may set $\tilde{Q}\equiv 0$ or $\tilde{P}\equiv 0$ if we are only interested in $\tilde{P}$ or $\tilde{Q}$.)
\end{theorem}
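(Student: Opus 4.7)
The forward direction is immediate: taking real parts preserves the degree bound and the monomial support, so properties~\ref{prop:i}--\ref{prop:ii} carry from $P,Q$ to their real parts, and $\tilde P(x)^2\le|P(x)|^2$, $\tilde Q(x)^2\le|Q(x)|^2$ on $[-1,1]$ immediately give~\ref{prop:vi} from~\ref{prop:iii}.

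For the converse, the plan is to construct real polynomials $P_I,Q_I\in\R[x]$ of degrees $\le k,\,k-1$ and parities $k\bmod 2,\,(k-1)\bmod 2$ such that $P:=\tilde P+iP_I$ and $Q:=\tilde Q+iQ_I$ satisfy~\ref{prop:i}--\ref{prop:iii}. Expanding $|P|^2+(1-x^2)|Q|^2$ on the real line shows that~\ref{prop:iii} is equivalent to the polynomial identity
\[
P_I(x)^2+(1-x^2)\,Q_I(x)^2\;=\;R(x),\qquad R(x):=1-\tilde P(x)^2-(1-x^2)\tilde Q(x)^2.
\]
Property~\ref{prop:ii} makes $R$ an even polynomial of degree $\le 2k$, and~\ref{prop:vi} makes it non-negative on $[-1,1]$; hence $\tilde R(y):=R(\sqrt y)$ is a real polynomial of degree $\le k$ that is non-negative on $[0,1]$.

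I would then invoke the Markov--Luk\'acs theorem on $[0,1]$ in the branch matching the parity of $k$. For $k$ odd, decompose $\tilde R(y)=y\,\alpha(y)^2+(1-y)\beta(y)^2$ with $\deg\alpha,\deg\beta\le(k-1)/2$ and set $P_I(x):=x\,\alpha(x^2)$, $Q_I(x):=\beta(x^2)$; for $k$ even, decompose $\tilde R(y)=\gamma(y)^2+y(1-y)\delta(y)^2$ with $\deg\gamma\le k/2$ and $\deg\delta\le k/2-1$, and set $P_I(x):=\gamma(x^2)$, $Q_I(x):=x\,\delta(x^2)$. A direct substitution verifies $P_I^2+(1-x^2)Q_I^2=R$, and the ansatz automatically forces the required degrees and parities on $P_I,Q_I$. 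The variants with $\Im[P]$ or $\Im[Q]$ replacing $\Re[P]$ or $\Re[Q]$ follow by noting that multiplying $P$ or $Q$ by $i$ preserves~\ref{prop:i}--\ref{prop:iii} and swaps real and imaginary parts; the special cases $\tilde P\equiv 0$ or $\tilde Q\equiv 0$ are the obvious specializations of the same argument.

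The main obstacle is precisely the parity matching. Applying Markov--Luk\'acs directly on $[-1,1]$ to write $R=A^2+(1-x^2)B^2$ would leave $A,B$ free to mix even and odd monomials whose odd cross-terms cancel only inside the sum, so pulling out $P_I,Q_I$ of the correct parity would require additional work. Substituting $y=x^2$ first sidesteps this: the pure-power ansatz $x^{\epsilon}\cdot(\text{polynomial in }x^2)$ forces the parity automatically, and the bound $\deg\tilde R\le k$ supplied by~\ref{prop:ii} is exactly what makes the Markov--Luk\'acs decomposition fit inside the degree budget of Theorem~\ref{thm:basicCharacterisation}.
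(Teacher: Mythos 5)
Your proof is correct, and it takes a genuinely different route from the paper's. The paper does not cite a sum-of-squares result from the literature; instead it proves Lemma~\ref{lemma:cleverSOS} from scratch by factoring the polynomial $R(x)=1-\tilde P(x)^2-(1-x^2)\tilde Q(x)^2$ over $\C$, grouping its roots by type (at $0$, in $(0,1)$, in $[1,\infty)$, purely imaginary, generic complex), and explicitly assembling a complex ``square root'' $W(x)=B(x)+i\sqrt{1-x^2}\,C(x)$ with $WW^*=R$, fixing the parity afterwards by multiplying by the unimodular factor $x+i\sqrt{1-x^2}$ when the degree falls short. Your route substitutes $y=x^2$ first and invokes the Markov--Luk\'acs decomposition on $[0,1]$, choosing the odd-degree form $y\alpha(y)^2+(1-y)\beta(y)^2$ or the even-degree form $\gamma(y)^2+y(1-y)\delta(y)^2$ according to the parity of $k$, and the ansatz $P_I(x)=x^\epsilon(\text{poly in }x^2)$ then forces the parities and degree bounds automatically. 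The two arguments are morally equivalent---the classical proof of Markov--Luk\'acs is itself the same root-pairing trick---but yours is shorter and cleaner because it offloads the factorization to a named theorem and because, as you observe, working in the variable $y=x^2$ sidesteps entirely the parity bookkeeping that the paper must handle by hand at the end of the lemma's proof. Your treatment of the remarks (multiplying $P$ or $Q$ by $i$ to pass between real and imaginary parts, and the $\tilde P\equiv 0$ or $\tilde Q\equiv 0$ specializations) is also correct. One small point worth making explicit if you write this up: Markov--Luk\'acs is usually stated for polynomials of degree \emph{exactly} $n$, and you are applying it to $\tilde R$ with a degree \emph{bound} of $k$; this is a standard and harmless extension (pad with a zero leading coefficient, or note that the root-factorization proof never uses exactness of the degree), but a referee would want you to say a word about it.
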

\begin{proof}``$\Longrightarrow$'': Trivial.\\
	``$\Longleftarrow$'': Apply Lemma~\ref{lemma:cleverSOS} to the polynomial $1-\tilde{P}(x)^2-(1-x^2)\tilde{Q}(x)^2$, and set $P:=\tilde{P}+iB$, $Q:=\tilde{Q}+iC$.
\end{proof}
\begin{lemma}\label{lemma:cleverSOS}
	Suppose that $A\in\R[x]$ is an even polynomial such that $\deg(A)\leq 2k$ and for all $x\in [-1,1]$ we have $A(x)\geq 0$. 
	Then there exist polynomials $B,C\in\R[x]$ such that  $A(x)=B^2(x)+(1-x^2)C^2(x)$, moreover $\deg(B)\leq k$, $\deg(C)\leq k-1$, $B$ has parity-$(k \mod 2)$ and $C$ has parity-$(k-1 \mod 2)$.
\end{lemma}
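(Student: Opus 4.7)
The plan is to reduce the statement to the classical Lukács theorem on non-negative polynomials via the substitution $y := 2x^2 - 1$.

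First I would observe that because $A$ is even with $\deg A \leq 2k$, and because of the Chebyshev identity $T_{2j}(x) = T_j(2x^2 - 1)$, the polynomial $A$ lies in the span of $\{T_{2j}(x)\}_{j=0}^{k}$ and therefore can be written as $A(x) = h(2x^2 - 1)$ for some $h \in \R[y]$ with $\deg h \leq k$. Since the map $x \mapsto 2x^2 - 1$ sends $[-1,1]$ onto $[-1,1]$, the hypothesis $A \geq 0$ on $[-1,1]$ translates to $h \geq 0$ on $[-1,1]$.

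Next I would invoke Lukács's classical theorem, using the parity of $k$ as the formal degree bound for $h$:
$$ h(y) = r(y)^2 + (1-y^2)\, s(y)^2, \quad \deg r \leq k/2,\ \deg s \leq k/2 - 1, \quad \text{if } k \text{ is even,} $$
$$ h(y) = (1+y)\, r(y)^2 + (1-y)\, s(y)^2, \quad \deg r,\deg s \leq (k-1)/2, \quad \text{if } k \text{ is odd.} $$
Substituting $y = 2x^2 - 1$ and using the identities $1 - y^2 = 4x^2(1-x^2)$, $1 + y = 2x^2$, and $1 - y = 2(1-x^2)$, in the even-$k$ case one obtains $A(x) = r(2x^2-1)^2 + (1-x^2)\bigl(2x\, s(2x^2-1)\bigr)^2$, so I set $B(x) := r(2x^2-1)$ and $C(x) := 2x\, s(2x^2-1)$; in the odd-$k$ case one obtains $A(x) = \bigl(\sqrt{2}\, x\, r(2x^2-1)\bigr)^2 + (1-x^2)\bigl(\sqrt{2}\, s(2x^2-1)\bigr)^2$, so I set $B(x) := \sqrt{2}\, x\, r(2x^2-1)$ and $C(x) := \sqrt{2}\, s(2x^2-1)$. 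The degree and parity requirements for $B$ and $C$ then follow by inspection, using that any polynomial in $2x^2 - 1$ is an even function of $x$ while multiplying by $x$ toggles parity.

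The hard part will be the invocation of Lukács's theorem with the sharp degree bound $\lfloor k/2 \rfloor$ rather than the looser bound $k$. A self-contained proof would factor $h$ over $\C$, observing that a simple real root in the open interval $(-1,1)$ would force $h$ to change sign and so such roots have even multiplicity, that non-real roots come in complex-conjugate pairs contributing the positive factor $(y - \Re z)^2 + (\Im z)^2$, and that the remaining real roots at the endpoints $\pm 1$ and outside $[-1,1]$ can be paired so as to yield the weighted sum-of-squares structure with the correct degree accounting. A naive Fejér--Riesz decomposition of $h(\cos\theta)$ would only give $\deg r \leq k$, which is insufficient for the present lemma, so this sharp accounting is the real content of the argument.
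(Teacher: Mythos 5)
Your proof is correct and gives a genuinely different route from the paper's. The paper works directly in the $x$-variable: it partitions the complex roots of $A$ into several classes and, for each class, manufactures a factor of the form $B'(x)+i\sqrt{1-x^2}\,C'(x)$, then multiplies these to obtain $W(x)=B(x)+i\sqrt{1-x^2}\,C(x)$ with $A=WW^*$ and keeps the parity/degree bookkeeping by (if needed) multiplying by the unimodular factor $x+i\sqrt{1-x^2}$. You instead push the statement through the Chebyshev-style change of variables $y=2x^2-1$, turning the even $A$ of degree $\le 2k$ into $h\in\R[y]$ of degree $\le k$ nonnegative on $[-1,1]$, invoke the classical Markov--Luk\'acs theorem, and pull back. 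The translations $1-y^2=4x^2(1-x^2)$, $1+y=2x^2$, $1-y=2(1-x^2)$ and the resulting $B,C$ all check out, including the degree and parity accounting in both parity cases of $k$. What your route buys is modularity (reduce to a named classical theorem on the standard interval, which also clarifies where the ``weighted SOS with sharp half-degree'' structure really comes from); what the paper's route buys is self-containment, since it effectively re-proves Luk\'acs in the $x$-variable without naming it, and it produces the object $W=B+i\sqrt{1-x^2}C$ that the surrounding signal-processing machinery actually wants.

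One subtlety you should address explicitly: you invoke Markov--Luk\'acs ``with formal degree $k$.'' The classical theorem is usually stated for $\deg h = n$; when $\deg h < k$ and in particular when $\deg h$ has the opposite parity to $k$, it is not a priori obvious that the degree-$k$ form of the decomposition (say $h=r^2+(1-y^2)s^2$ with $\deg r\le k/2$, $\deg s\le k/2-1$ when $k$ is even) still holds, since the two parity forms are not trivially interconvertible. This is true, and can be justified either by a perturbation-and-compactness argument (add $\eps y^k$, apply Luk\'acs, pass to the limit using the sup-norm bounds $\|r_\eps\|_\infty^2,\|s_\eps\|_\infty^2\le\|h_\eps\|_\infty$), or by noting that the root-based proof you sketch handles it directly via the identities $1\pm y=\tfrac12\bigl((1\pm y)^2+(1-y^2)\bigr)$ applied to any odd-multiplicity endpoint factor. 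The paper handles the analogous issue in the $x$-variable with the $\tilde W:=W\cdot(x+i\sqrt{1-x^2})$ correction. Your sketch of the self-contained Luk\'acs proof covers this implicitly, but since you lean on it for the formal-degree claim it deserves a sentence.
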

\begin{proof}
	If $A=0$ the statement is trivial, so we assume in the rest that $A\neq 0$.
	Let $S$ be the multiset of roots, containing the roots of $A$ with their algebraic multiplicity. Note that if $s\in S$ then also $-s\in S$ and $s^*\in S$ since $A$ is an even real polynomial. (This statement holds considering multiplicities.) Let us introduce the following subsets of $S$ (these are again multisets):
	\begin{align*}
	S_{0}&:=\{s\in S\colon s=0\}\\
	S_{(0,1)}&:=\{s\in S\colon s\in (0,1)\}\\
	S_{[1,\infty)}&:=\{s\in S\colon s\in [1,\infty)\}\\
	S_{I}&:=\{s\in S\colon \mathrm{Re}(s)=0\,\&\,\mathrm{Im}(s)> 0\}\\
	S_{C}&:=\{s\in S\colon \mathrm{Re}(s)> 0\,\&\,\mathrm{Im}(s)> 0\}.	
	\end{align*}
	Using the roots in $S$ and some scaling factor $K\in \R_+$ we can write
	\begin{equation}\label{eq:Aprod}
	A(x)=K^2x^{|S_0|}
	\!\prod_{s\in S_{(0,1)}}\!\!(x^2-s^2)
	\!\prod_{s\in S_{[1,\infty)}}\!\!\!(s^2-x^2)
	\prod_{s\in S_{I}}(x^2+|s|^2)		
	\!\!\prod_{(a+b i)\in S_{C}}\!\!\!\!\left(x^4+2x^2(b^2-a^2)+(a^2+b^2)^2\right).
	\end{equation}
	Consider the following rearrangement of the above terms corresponding to the roots in $S_{[1,\infty)},S_{I},S_{C}$:
	\begin{align}
	s^2-x^2=(s^2-1)x^2+s^2(1-x^2)&=\underset{R_{(s)}(x):=}{\underbrace{\left(\sqrt{(s^2-1)}x+is\sqrt{1-x^2}\right)}}R_{(s)}^*(x)\\
	x^2+|s|^2=(|s|^2+1)x^2 + |s|^2(1-x^2)&=\underset{P_{(s)}(x):=}{\underbrace{\left(\sqrt{(|s|^2+1)}x+i|s|\sqrt{1-x^2}\right)}}P_{(s)}^*(x)\\		
	x^4+2x^2(b^2-a^2)+(a^2+b^2)^2
	&=\underset{Q_{(a,b)}(x):=}{\underbrace{\left(\left(cx^2-(a^2+b^2)\right)+i\sqrt{c^2-1}x\sqrt{1-x^2}\right)}} Q_{(a,b)}^*(x), \label{eq:Qprod}\\
	\text{ where\footnotemark }\,\, c&=a^2+b^2+\sqrt{2 \left(a^2+1\right)
		b^2+\left(a^2-1\right)^2+b^4}\nonumber.
	\end{align}
	\footnotetext{Observe that $c\geq 1$ for all $a,b\geq 0$ and thus $\sqrt{c^2-1}\in\R$.}
	Let us define 
	$$W(x):=Kx^{|S_0|/2}
	\prod_{s\in S_{(0,1)}}\sqrt{(x^2-s^2)}
	\prod_{s\in S_{[1,\infty)}}R_{s}(x)
	\prod_{s\in S_{I}}P_{s}(x)	
	\prod_{(a+b i)\in S_{C}}Q_{(a,b)}(x).$$
	Note that the factor $x^{|S_0|/2}\prod_{s\in S_{(0,1)}}\sqrt{(x^2-s^2)}$ is a polynomial, since every root in $S_0$ and $S_{(0,1)}$ has even multiplicity as $A(x)\geq 0$ for all $x\in (-1,1)$.
	Also note that $W(x)$ is a product of expressions of the form
	$B'(x)+i\sqrt{1-x^2}C'(x)$ where $B',C'\in \R[x]$ are polynomials having opposite parities (n.b. the zero polynomial is both even and odd, thus it has opposite parity to any even/odd polynomial). Since the product of expressions of such form can again be written in such a form, we have that $W(x)=B(x)+i\sqrt{1-x^2} C(x)$ for some $B,C\in \R[x]$ having opposite parities. Also note that $\deg(B)\leq |S|/2$ and $\deg(C)\leq |S|/2-1$.
	
	Finally observe that by \eqref{eq:Aprod}-\eqref{eq:Qprod} we have that $A(x)=W(x)\cdot W^*(x)$, thus $A(x)=B(x)^2+ (1-x^2)C(x)^2$. Since $\deg(B)\leq |S|/2\leq k$ and $\deg(C)\leq |S|/2-1\leq k-1$, in case $\deg(A)=2k$, we must have that $B$ has parity-$(k \mod 2)$ and $C$ has parity-$(k-1 \mod 2)$.
	If $\deg(A)\leq 2k-2$ and $B$ has parity-$(k-1 \mod 2)$, then consider $\tilde{W}(x):=W(x)\cdot\left(x+i\sqrt{1-x^2}\right)$. Since $\left(x+i\sqrt{1-x^2}\right)\left(x+i\sqrt{1-x^2}\right)^{\!\!*}=1$ we still have that $\tilde{W}(x)\tilde{W}^*(x)=A(x)$. Now let us denote $\tilde{W}(x)=\tilde{B}(x)+i\sqrt{1-x^2}\tilde{C}(x)$, then we get that $A(x)=\tilde{B}(x)^2+ (1-x^2)\tilde{C}(x)^2$, moreover $\deg(\tilde{B})\leq k$, $\deg(\tilde{C})\leq k-1$, $\tilde{B}$ has parity-$(k \mod 2)$ and $\tilde{C}$ has parity-$(k-1 \mod 2)$.   
\end{proof}

Note that the proofs of Theorems~\ref{thm:basicCharacterisation}-\ref{thm:cleverSOS} are constructive, therefore they also give algorithms to find $P,Q$ and $\Phi$. The most difficult step in the proofs is to find the roots of a given degree-$d$ univariate complex polynomial. This problem is fortunately well studied, and can be solved up to $\eps$ precision on a classical computer in time $\bigO{\poly{d,\log(1/\eps)}}$.

Now we prove a corollary of the above result where we replace the $W(x)$ rotation operators with the following $R(x)$ reflection gates, which fit the block-encoding formalism nicer.

\begin{definition}[Parametrized family of single qubit reflections]
	We define a parametrized \linebreak family of single qubit reflection operators for all $x\in[-1,1]$ such that
	\begin{equation}\label{eq:2DReflection}
	R(x):=\left[\begin{array}{cc} x & \sqrt{1-x^2} \\ \sqrt{1-x^2} & -x \end{array}\right].
	\end{equation}
\end{definition}

\begin{corollary}[Quantum signal processing using reflections]\label{cor:refAchievableP}
	Let $P\in\C[x]$ be a degree-$d$ polynomial, such that
	\begin{itemize}
		\item $P$ has parity-$(d \mod 2)$,
		\item $\forall x\in[-1,1]\colon |P(x)|\leq 1$,
		\item $\forall x\in(-\infty,-1]\cup[1,\infty)\colon |P(x)|\geq 1$,	
		\item if $d$ is even, then $\forall x\in\R\colon P(ix)P^*(ix)\geq 1$.
	\end{itemize} 
	Then there exists $\Phi\in\R^d$ such that\footnote{Note that the $e^{i\phi_1\sigma_z}$ gate can in fact be replaced by a simple phase gate $e^{i\phi_1}$.}
	\begin{equation}\label{eq:IterRef}
	\prod_{j=1}^{d}\left(e^{i\phi_j\sigma_z}R(x)\right)
	=\left[\begin{array}{cc} P(x) & . \\ . & . \end{array}\right].
	\end{equation}	
	Moreover for $x\in\{-1,1\}$ we have that $P(x)=x^d\prod_{j=1}^{d}e^{i\phi_j}$, and for $d$ even $P(0)=e^{-i\sum_{j=1}^{d}(-1)^j\phi_j}$.	
\end{corollary}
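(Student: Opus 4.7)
The approach is to reduce to Theorem~\ref{thm:basicCharacterisation} via a change of basis. First I would establish the algebraic identity
\begin{equation*}
R(x) \;=\; -i\,e^{i\frac{\pi}{4}\sigma_z}\,W(x)\,e^{i\frac{\pi}{4}\sigma_z}
\end{equation*}
by direct $2\times 2$ multiplication (equivalently, from $R(x)=\sigma_z\,e^{i\arccos(x)\sigma_y}$ together with $\sigma_y = e^{-i\pi/4\sigma_z}\sigma_x e^{i\pi/4\sigma_z}$). Substituting it into the left-hand side of~\eqref{eq:IterRef} and collapsing adjacent $\sigma_z$-exponentials rewrites the product as
\begin{equation*}
\prod_{j=1}^{d}\bigl(e^{i\phi_j\sigma_z}R(x)\bigr) \;=\; (-i)^d\, e^{i\tilde\phi_0\sigma_z}\prod_{j=1}^{d}\bigl(W(x)\,e^{i\tilde\phi_j\sigma_z}\bigr),
\end{equation*}
where $\tilde\phi_0 := \phi_1+\pi/4$, $\tilde\phi_j := \phi_{j+1}+\pi/2$ for $1\le j\le d-1$, and \emph{crucially} $\tilde\phi_d := \pi/4$ is forced irrespective of $\Phi$. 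Thus the $d$ free parameters $\phi_j$ correspond bijectively to $\tilde\phi_0,\dots,\tilde\phi_{d-1}$, while $\tilde\phi_d$ is pinned to $\pi/4$.

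Next I would set $\tilde P(x):=i^d P(x)$ and verify that the four hypotheses on $P$ translate exactly into properties~\ref{prop:i},~\ref{prop:ii},~\ref{prop:iva},~\ref{prop:ivb},~\ref{prop:ivc} of Theorem~\ref{thm:achievablePorQ} applied to $\tilde P$; the only non-immediate check is~\ref{prop:ivc}, which uses $\tilde P^*(ix)=(-i)^d P^*(ix)$ to conclude $\tilde P(ix)\tilde P^*(ix)=P(ix)P^*(ix)\ge 1$. Applying Theorem~\ref{thm:achievablePorQ} produces some $\tilde Q\in\C[x]$ with $(\tilde P,\tilde Q)$ satisfying properties~\ref{prop:i}--\ref{prop:iii}. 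The main obstacle is then to arrange that the decomposition afforded by Theorem~\ref{thm:basicCharacterisation} yields precisely $\tilde\phi_d=\pi/4$: the constructive proof peels off the last angle via $e^{2i\tilde\phi_d}=\tilde p_d/\tilde q_{d-1}$, and property~\ref{prop:iii} only forces $|\tilde q_{d-1}|=|\tilde p_d|$, leaving the phase of $\tilde q_{d-1}$ free because replacing $\tilde Q$ by $e^{i\beta}\tilde Q$ preserves~\ref{prop:i}--\ref{prop:iii}. Picking $\beta$ so that $\tilde q_{d-1}=-i\,\tilde p_d$ makes $e^{2i\tilde\phi_d}=i$, from which I select the branch $\tilde\phi_d=\pi/4$. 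Iterating Theorem~\ref{thm:basicCharacterisation} produces $\tilde\phi_0,\dots,\tilde\phi_{d-1}$, and setting $\phi_1:=\tilde\phi_0-\pi/4$ together with $\phi_{j+1}:=\tilde\phi_j-\pi/2$ for $1\le j\le d-1$ gives the sought $\Phi$. By the first display, the top-left entry of~\eqref{eq:IterRef} is then $(-i)^d\tilde P(x)=(-i)^d i^d P(x)=P(x)$.

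For the ``moreover'' clause, I would compute directly without invoking the construction above. At $x=\pm 1$, $R(\pm 1)=\pm\sigma_z$ commutes with every $e^{i\phi_j\sigma_z}$, so the product collapses to $(\pm 1)^d\,\sigma_z^d\,e^{i(\sum_j\phi_j)\sigma_z}$, whose $(1,1)$-entry is $x^d\prod_{j=1}^{d} e^{i\phi_j}$. At $x=0$, $R(0)=\sigma_x$, and the anticommutation $\sigma_x\,e^{i\phi\sigma_z}=e^{-i\phi\sigma_z}\sigma_x$ telescopes the product pairwise; for even $d$ the product reduces to $e^{i(\sum_{j=1}^{d}(-1)^{j+1}\phi_j)\sigma_z}$, whose $(1,1)$-entry is $e^{-i\sum_{j=1}^{d}(-1)^j\phi_j}$, as claimed.
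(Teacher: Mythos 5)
Your proof is correct, and it arrives at the corollary by a genuinely different (though closely related) route from the paper's. Both you and the paper reduce to Theorem~\ref{thm:achievablePorQ} / Theorem~\ref{thm:basicCharacterisation} via a conjugation identity relating $W(x)$ and $R(x)$; your identity $R(x) = -i\,e^{i\pi/4\sigma_z}W(x)e^{i\pi/4\sigma_z}$ has the correct signs (the paper's \eqref{eq:IterRefRaw} contains a sign typo in the last exponent --- it should read $e^{-i\pi/4\sigma_z}$ --- although the subsequent algebra in the paper's proof is consistent with the corrected identity). Where you genuinely diverge is in how the $(d+1)$-versus-$d$ angle mismatch is resolved. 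The paper starts from $\Phi'\in\R^{d+1}$ in the $W$-form, rewrites it in the $R$-form, and then absorbs the trailing phase gate $e^{i(\phi'_d-\pi/4)\sigma_z}$ into the leading one, using only the observation that the top-left entry of $e^{i\alpha\sigma_z}Me^{i\beta\sigma_z}$ depends on $\alpha$ and $\beta$ only through $\alpha+\beta$. You go the opposite direction: converting the $R$-form into the $W$-form pins $\tilde\phi_d=\pi/4$, and you then exploit the fact that the phase of the auxiliary polynomial $\tilde Q$ is a genuine free parameter (preserved by conditions \ref{prop:i}--\ref{prop:iii}) to force the peeling recursion of Theorem~\ref{thm:basicCharacterisation} to produce exactly $\tilde\phi_d=\pi/4$; this uses $|\tilde q_{d-1}|=|\tilde p_d|\neq 0$, which holds since $\deg \tilde P=d$ exactly. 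Both arguments are valid. Your version is arguably more conceptually transparent because it pinpoints where the missing degree of freedom resides (the overall phase of $Q$), whereas the paper's absorption trick is a shade slicker in not re-opening the internals of the peeling proof. Your direct computation of the ``moreover'' clause using $R(\pm 1)=\pm\sigma_z$ and the anticommutation $\sigma_x e^{i\phi\sigma_z}=e^{-i\phi\sigma_z}\sigma_x$ is the same argument the paper gestures at.
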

\begin{proof}
	By Theorem~\ref{thm:achievablePorQ} we have that there exists $\Phi'\in\R^{d+1}$ for some $d\geq 1$ such that
	\begin{equation}
	e^{i\phi'_0\sigma_z}\left(\prod_{j=1}^{d}W(x)e^{i\phi'_j\sigma_z}\right)
	=\left[\begin{array}{cc} P(x) & . \\ . & . \end{array}\right].
	\end{equation}	
	Observe that 
	\begin{equation}\label{eq:IterRefRaw}
	W(x)=i e^{-i\frac\pi4\sigma_z}R(x)e^{i\frac\pi4\sigma_z},
	\end{equation}		
	thus the left-hand-side of \eqref{eq:IterRef} equals
	\begin{align*}
	e^{i\phi'_0\sigma_z}\left(\prod_{j=2}^{d}e^{i\phi_j\sigma_z}i e^{-i\frac\pi4\sigma_z}R(x)e^{i\frac\pi4\sigma_z}\right)
	&=i^d e^{i(\phi'_0-\frac\pi4)\sigma_z}R(x)\left(\prod_{j=2}^{d}e^{i(\phi'_{j-1}-\frac\pi2)\sigma_z} R(x)\right)e^{i(\phi'_d-\frac\pi4)}.
	\end{align*}	
	Therefore
	\begin{equation*}
	e^{i(\phi'_0+\phi'_d+(d-1)\frac\pi2)}R(x)\left(\prod_{j=2}^{d}e^{i(\phi'_{j-1}-\frac\pi2)\sigma_z} R(x)\right)
	=\left[\begin{array}{cc} P(x) & . \\ . & . \end{array}\right].
	\end{equation*}		
	So choosing $\phi_1:=\phi'_0+\phi'_d+(d-1)\frac\pi2$ and for all $j\in\{2,3,\ldots,d\}$ setting $\phi_j:=\phi'_{j-1}-\frac\pi2$, results in a $\Phi\in\R^d$ that clearly satisfies \eqref{eq:IterRef}. The additional result for $x\in\{-1,1\}$ follows from the fact that for $x\in\{-1,1\}$ every matrix in \eqref{eq:IterRef} becomes diagonal.
	
	The claim about $P(0)$ follows from the observation that 
	$$ e^{i\phi_1\sigma_z}R(0)e^{i\phi_2\sigma_z}R(0) = e^{i(\phi_1-\phi_2)\sigma_z}.$$
\end{proof}

The above requirements on $P$ are not very intuitive, but fortunately we have a good understanding of the polynomials that can emerge by taking the real part of the above complex polynomials. Before stating the corresponding corollary, we note that Chebyshev polynomials satisfy the above requirements. One can prove it directly, but instead of doing so we just explicitly describe\footnote{By Theorem~\ref{thm:achievablePorQ} it actually proves that the conditions of Corollary~\ref{cor:refAchievableP} hold for Chebyshev polynomials.} the corresponding $\Phi$. 
\begin{lemma}[Constructing Chebyshev polynomials via quantum signal processing]\label{lemma:chebyshev}
	Let $T_d\in\R[x]$ be the $d$-th Chebyshev polynomial of the first kind. Let $\Phi\in\R^d$ be such that $\phi_1=(1-d)\frac\pi2$, and for all $i\in[d]\setminus\{1\}$ let $\phi_{i}:=\frac\pi2$. Using this $\Phi$ in equation \eqref{eq:IterRef} we get that $P=T_d$.
\end{lemma}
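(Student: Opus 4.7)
The plan is to substitute $x=\cos\theta$ and recognize that with the prescribed phases each factor $e^{i\phi_j\sigma_z}R(x)$ for $j\geq 2$ collapses to $i$ times a standard planar rotation. Composing these rotations and applying the cosine angle-addition formula should then expose the top-left entry of the product as $\cos(d\theta)=T_d(x)$.

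First I would fix $\theta\in[0,\pi]$ with $x=\cos\theta$, so $\sqrt{1-x^2}=\sin\theta\geq 0$. Using the identity $e^{i\pi/2\,\sigma_z}=i\sigma_z$, one computes
\[e^{i\pi/2\,\sigma_z}R(x)\;=\;i\!\left[\begin{array}{cc}\cos\theta & \sin\theta\\ -\sin\theta & \cos\theta\end{array}\right]\;=\;i\,M(\theta),\]
where $M(\alpha)$ denotes the planar rotation satisfying the group law $M(\alpha)M(\beta)=M(\alpha+\beta)$. Iterating this $d-1$ times yields
\[\prod_{j=2}^{d}\bigl(e^{i\pi/2\,\sigma_z}R(x)\bigr)\;=\;i^{d-1}\,M((d-1)\theta).\]

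Next I would absorb the $j=1$ factor. Since $e^{i(1-d)\pi/2\,\sigma_z}=\diag{i^{1-d},(-i)^{1-d}}$ and $i^{1-d}\cdot i^{d-1}=1$, the $(1,1)$ entry of the full product $e^{i\phi_1\sigma_z}R(x)\cdot i^{d-1}M((d-1)\theta)$ coincides with the $(1,1)$ entry of $R(x)M((d-1)\theta)$, which by the angle-addition identity equals $\cos\theta\cos((d-1)\theta)-\sin\theta\sin((d-1)\theta)=\cos(d\theta)=T_d(x)$, as claimed. I do not expect a genuine obstacle: once one spots that $\phi_j=\pi/2$ turns each signal factor into a (scaled) planar rotation, the lemma reduces to angle-addition, and the only bookkeeping is to verify that the $\phi_1=(1-d)\pi/2$ prefactor precisely cancels the accumulated $i^{d-1}$ phase, so that $P(x)$ comes out as a genuine real polynomial rather than a complex multiple of $T_d$.
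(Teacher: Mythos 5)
Your proof is correct and realizes exactly the approach the paper hints at (``by induction using the substitution $x:=\cos\theta$''), though you replace the explicit induction with the cleaner observation that the middle factors form a one-parameter rotation group, so the product of $d-1$ of them telescopes in one step via the group law $M(\alpha)M(\beta)=M(\alpha+\beta)$. The arithmetic checks out: $e^{i\pi/2\,\sigma_z}R(\cos\theta)=iM(\theta)$, the $(d-1)$-fold product gives $i^{d-1}M((d-1)\theta)$, the $e^{i(1-d)\pi/2\,\sigma_z}$ prefactor contributes $\mathrm{diag}(i^{1-d},i^{d-1})$ whose top-left entry cancels $i^{d-1}$, and $[R(\cos\theta)M((d-1)\theta)]_{11}=\cos\theta\cos((d-1)\theta)-\sin\theta\sin((d-1)\theta)=\cos(d\theta)=T_d(\cos\theta)$. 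The equality extends from $[-1,1]$ to an identity of polynomials since both sides are polynomials agreeing on infinitely many points, which you may want to state explicitly. Essentially the same route, packaged slightly more slickly.
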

\begin{proof}
	One can prove this, e.g., by induction using the substitution $x:=\cos(\theta)$.
\end{proof}

\begin{restatable}{corollary}{realQuantumSignal}\emph{(Real quantum signal processing)}\label{cor:realP}
	Let $ P_{\Re}(x)\in\R[x]$ be a degree-$d$ polynomial for some $d\geq 1$, such that 
	\begin{itemize}
		\item $ P_{\Re}$ has parity-($d\mod 2$), and
		\item for all $x\in[-1,1]\colon | P_{\Re}(x)|\leq 1$.
	\end{itemize}
	Then there exists $P\in\C[x]$ that satisfies the requirements of Corollary~\ref{cor:refAchievableP}.

	Moreover, given $ P_{\Re}(x)$ and $\delta\geq 0$ we can find a $P$ and a corresponding $\Phi$, such that $|\Re[P]- P_{\Re}|\leq \delta$ for all $x\in[-1,1]$, using a classical computer in time $\bigO{\mathrm{poly}(d,\log(1/\delta))}$.
\end{restatable}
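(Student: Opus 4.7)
The plan is to factor the existence claim through Theorem~\ref{thm:cleverSOS} and then Theorem~\ref{thm:achievablePorQ}, after which Corollary~\ref{cor:refAchievableP} delivers $\Phi$. Set $\tilde{P}:=P_\Re$ and $\tilde{Q}\equiv 0$ and apply Theorem~\ref{thm:cleverSOS}: the degree and parity conditions \ref{prop:i}--\ref{prop:ii} for the pair $(\tilde{P},\tilde{Q})$ are exactly the hypotheses on $P_\Re$, and condition \ref{prop:vi} collapses to $P_\Re(x)^2\leq 1$ on $[-1,1]$, which is given. The theorem then produces $P,Q\in\C[x]$ satisfying \ref{prop:i}--\ref{prop:iii} of Theorem~\ref{thm:basicCharacterisation} with $\Re[P]=P_\Re$. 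Reading the $\Longrightarrow$ direction of Theorem~\ref{thm:achievablePorQ} with this pair (where $Q$ plays the role of the witness for $P$) forces $P$ to satisfy \ref{prop:iva}--\ref{prop:ivc}, and these are exactly the bullet points of Corollary~\ref{cor:refAchievableP}. A phase sequence $\Phi\in\R^d$ thus exists, proving the non-algorithmic half.

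For the algorithmic claim, every underlying result is constructive. Lemma~\ref{lemma:cleverSOS}, applied to $A(x):=1-P_\Re(x)^2$ of degree at most $2d$, reduces to computing the complex roots of $A$, partitioning them into the multisets $S_0,S_{(0,1)},S_{[1,\infty)},S_I,S_C$, and then multiplying out the explicit factors $R_{(s)},P_{(s)},Q_{(a,b)}$ to extract $B$ (yielding $P=P_\Re+iB$). Approximate root-finding for a univariate complex polynomial of degree $2d$ runs in $\mathrm{poly}(d,\log(1/\delta'))$ bit operations. Passing the resulting $P$ into the peeling induction from the $\Longleftarrow$ direction of Theorem~\ref{thm:basicCharacterisation} (which strips off one $e^{i\phi_j\sigma_z}W^\dagger(x)$ factor at a time using only $\bigO{d^2}$ polynomial arithmetic operations) and then applying the phase-rewrite at the end of the proof of Corollary~\ref{cor:refAchievableP} converts $P$ into an explicit $\Phi\in\R^d$.

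The main obstacle is numerical robustness. With only approximate roots, the identity $|P(x)|^2+(1-x^2)|Q(x)|^2\equiv 1$ holds only up to error, so the leading-coefficient cancellation the peeling induction relies on becomes inexact, and naive accumulation could leave the feasible region. The standard fix is to inject slack: run the whole pipeline on $(1-\delta/4)P_\Re$ in place of $P_\Re$, which strictly uniformly shrinks the feasible value of $A$ away from zero on $[-1,1]$ and separates the interior roots from $\pm 1$. Choosing the internal precision $\delta'$ polynomially small in $\delta/d$ then ensures that (a) the perturbed polynomial $P$ still satisfies \ref{prop:iva}--\ref{prop:ivc} by continuity around the slack-inflated witness, (b) the peeling induction executes with controlled error and outputs an honest $\Phi$, and (c) $\nrm{\Re[P]-P_\Re}_{[-1,1]}\leq \delta$. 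The hard part is bookkeeping how root-finding error propagates through the polynomial multiplications in Lemma~\ref{lemma:cleverSOS} and through the $d$-step peeling induction of Theorem~\ref{thm:basicCharacterisation}; both chains are of depth $\bigO{d}$, so choosing the root-finding precision polynomially small in $\delta/d$ suffices. Collecting everything gives total classical running time $\mathrm{poly}(d,\log(1/\delta))$.
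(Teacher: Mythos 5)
Your proposal is correct and follows the same route as the paper: existence via Theorem~\ref{thm:cleverSOS} with $\tilde{Q}\equiv 0$ followed by the forward direction of Theorem~\ref{thm:achievablePorQ}, and the algorithmic claim via the constructive proofs (root-finding plus the peeling induction of Theorem~\ref{thm:basicCharacterisation} and the phase rewrite in Corollary~\ref{cor:refAchievableP}). The paper's own proof is far terser and does not spell out the numerical-stability bookkeeping (the slack trick $(1-\delta/4)P_\Re$ and the error-propagation argument), so your write-up is in fact a more complete account of the same argument rather than a different one.
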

\begin{proof}
	The existence of such $P$ follows directly from Theorem~\ref{thm:basicCharacterisation}-\ref{thm:cleverSOS}.
	
	The complexity statement follows from the fact that we can find $P$ and $\Phi'$ using the procedures of Theorems~\ref{thm:basicCharacterisation}-\ref{thm:cleverSOS} on a classical computer in time $\bigO{\mathrm{poly}(d,\log(1/\eps))}$ as noted above. Computing $\Phi$ from $\Phi'$ as in the proof of Corollary~\ref{cor:refAchievableP} only yields a small overhead.
\end{proof}

\subsection{Singular value transformation by qubitization}\label{subsec:SingularvalTrans}

Qubitization is a technique introduced by Low and Chuang~\cite{low2016HamSimQubitization} in order to apply polynomial transformations to the spectrum of a Hermitian (or normal) operator, which is represented as the top-left block of a unitary matrix. They also showed how to use their techniques in order to develop advanced amplitude amplification techniques. In this section we generalize their results, and develop the technique of singular value transformation, which applies to any operator as opposed to only normal operators. 

It turns out that by applying a unitary $U$ back and forth interleaved with some simple phase operators one can induce polynomial transformations to the singular values of a particular (not necessarily rectangular) block-matrix of the unitary $U$. The main idea behind the qubitization approach is to lift the quantum signal processing results presented in the previous section. One can do so by defining some two-dimensional invariant subspaces within which the results of quantum signal processing apply, thereby ``qubitizing''\footnote{Another justification for the term ``qubitization'' is that the involved higher-dimensional phase operations reduce to carefully choosing a single qubit phase gate, see Figure~\ref{fig:Piphi}.} the problem. Then by understanding how the two-dimensional subspaces behave, one can infer the higher-dimensional behavior. 

The original qubitization approach can be understood along the lines of C. Jordan's Lemma~\cite{jordan1875ProducsOfReflections} about the common invariant subspaces of two reflections.\footnote{By reflection we mean a Hermitian operator having only $\pm 1$ eigenvalues, possibly having multiple $-1$ eigenvalues.} Jordan's result is most often presented stating that the product of two reflections decomposes to one- and two-dimensional invariant subspaces, such that the operator has eigenvalue $\pm 1$ on the one-dimensional subspaces, and the operator acts as a rotation on the two-dimensional subspaces. This higher dimensional insight lies at the heart of Szegedy's quantum walk results~\cite{szegedy2004QMarkovChainSearch} as well as Marriott and Watrous' QMA amplification scheme~\cite{marriott2005QAMGames}.

Motivated by a series of prior work on quantum search algorithms \cite{grover2005FixedPointSearch,hoyer2000ArbitraryPhasesAmpAmp,yoder2014FixedPointSearch} the original qubitization approach of Low and Chuang~\cite{low2016HamSimQubitization} replaced one of the reflections in Jordan's Lemma by a phase-gate, such as in Figure~\ref{fig:Piphi}. They examined the operators arising by iterative application of the reflection- and phase-operator with applying possibly different phases in each step. In this paper we go one step further and replace the other reflection\footnote{One could also merge $U$ into one of the projectors, leading to a product of reflections as in Jordan's Lemma~\cite{jordan1875ProducsOfReflections}.} by an arbitrary unitary operator $U$, and analyze the procedure with carefully chosen one- and two-dimensional subspaces coming from singular value decomposition.

\begin{definition}[Singular value decomposition of a projected unitary]\label{def:singDec}
	$\,\,$Let $\H_U$ be a finite-dimensional Hilbert space and let $U,\Pi, \widetilde{\Pi}\in\eend{\H_U}$ be linear operators on $\H_U$ such that $U$ is a unitary, and $\Pi, \widetilde{\Pi}$ are orthogonal projectors, and let
	$$A=\widetilde{\Pi} U \Pi.$$
	Let $d:=\rank{\Pi}$, $\tilde{d}:=\rank{\widetilde{\Pi}}$, $d_{\min}:=\min(d,\tilde{d})$. By singular value decomposition we know that there exist orthonormal bases $\left(\ket{\psi_i}\colon i\in [d]\right)$, $\left(\ket{\tilde{\psi}_i}\colon i\in [\tilde{d}]\right)$ of the subspaces $\img{\Pi}$ and $\img{\widetilde{\Pi}}$ respectively, 
	such that
	\begin{equation}\label{eq:SVD}
		A=\sum_{i=1}^{d_{\min}}\varsigma_i\ketbra{\tilde{\psi}_i}{\psi_i},
	\end{equation}
	and\footnote{In singular value decomposition one usually requires that the diagonal elements of $\Sigma$ are non-negative. Here we could also allow negative reals, all the proofs of this section would go through with minor modifications, mostly defining the ordering of the singular values with decreasing absolute value. This would enable one to treat spectral decompositions of Hermitian matrices also as singular value decompositions.} for all $i\in[d_{\min}]\colon\varsigma_i\in\R^+_0$.
	Moreover, $\varsigma_i\geq \varsigma_j$ for all $i\leq j\in[d_{\min}]$.
\end{definition}
	
\begin{definition}[Invariant subspaces associated to a singular value decomposition]\label{def:SVDInvSubspaces} $\phantom{+}$\linebreak
	Let $U,\Pi, \widetilde{\Pi}, A$ be as in Definition~\ref{def:singDec}, and let us use its notation.
	Let $k\in[d_{\min}]$ be the largest index for which $\varsigma_k=1$, and
	let $r=\rank{A}$. For
	 \begin{align*}
		 &i\in[k] \text{ let} & 	
		 &\H_i:=\spn{\ket{\psi_i}} \text{ and}& &\tilde{\H}_i:=\spn{\ket{\tilde{\psi}_i}},\\
		 &i\in[r]\setminus[k]\text{ let} &
		 &\H_i:=\spn{\ket{\psi_i},\ket{\psi^\perp_i}} \text{ where} & 
		 &\ket{\psi^\perp_i}:=\frac{(I-\Pi)U^\dagger\ket{\tilde{\psi}_i}}{\nrm{(I-\Pi)U^\dagger\ket{\tilde{\psi}_i}}}=\frac{(I-\Pi)U^\dagger\ket{\tilde{\psi}_i}}{\sqrt{1-\varsigma^2_i}},\\
		 &i\in[r]\setminus[k]\text{ let} &	  	
		 &\tilde{\H}_i:=\spn{\ket{\tilde{\psi}_i},\ket{\tilde{\psi}^\perp_i}} \text{ where} & 
		 &\ket{\tilde{\psi}^\perp_i}:=\frac{(I-\widetilde{\Pi})U\ket{\psi_i}}{\nrm{(I-\widetilde{\Pi})U\ket{\psi_i}}}=\frac{(I-\widetilde{\Pi})U\ket{\psi_i}}{\sqrt{1-\varsigma^2_i}},\\
		 &i\in[d]\setminus[r]\text{ let} &
		 &\H^R_i:=\spn{\ket{\psi_i}} \text{ and}&
		 &\tilde{\H}^R_i:=\spn{U\ket{\psi_i}},\\	
		 &i\in[\tilde{d}]\setminus[r]\text{ let} &
		 &\H^L_i:=\spn{U^\dagger\ket{\tilde{\psi}_i}} \text{ and}&
		 &\tilde{\H}^L_i:=\spn{\ket{\tilde{\psi}_i}}.			 	 
	 \end{align*}
	 Finally let 
	 \begin{align*}
	 &\H_\perp:=\left(\bigoplus_{i\in[r]}\H_i\,\,\oplus\!
	 \bigoplus_{i\in[d]\setminus[r]}\H_i^R\,\,\oplus\!
	 \bigoplus_{i\in[\tilde{d}\,]\setminus[r]}\H_i^L\right)^{\!\!\!\perp}\,\,\text{ and} &
	 &\tilde{\H}_\perp:=\left(\bigoplus_{i\in[r]}\tilde{\H}_i\,\,\oplus\!
	 \bigoplus_{i\in[d]\setminus[r]}\tilde{\H}_i^R\,\,\oplus\!
	 \bigoplus_{i\in[\tilde{d}\,]\setminus[r]}\tilde{\H}_i^L\right)^{\!\!\!\perp}\!\!.
	 \end{align*}
\end{definition}

Now we show that the subspaces $\H_i\colon i\in[k]$, $\H_i\colon i\in[r]\setminus[k]$, $\H^R_i\colon i\in[d]\setminus[r]$ and $\H^L_i\colon i\in[\tilde{d}]\setminus[r]$ are indeed pairwise orthogonal, by proving that their spanning bases described in Definition~\ref{def:SVDInvSubspaces} form an orthonormal system of vectors. (By symmetry it also implies that the spanning bases of the $\tilde{\H}$ subspaces form also an orthonormal system of vectors.) The proof is summarized in Table~\ref{tab:orthoProof}, relying on the following observations:
\begin{align}
&\forall i,j\in [d] & &\braket{\psi_i}{\psi_{j}}=\delta_{ij} \label{eq:orth:ONB} \\ 
&\forall i\in [d], j\in[r]\setminus[k]  & &\braket{\psi_i}{\psi^\perp_{j}}\propto \braketbra{\psi_i}{(I-\Pi)U^\dagger}{\tilde{\psi}_j}\propto\bra{\psi_i}(I-\Pi)=0 \label{eq:orth:notInPi}\\ 
&\forall i\in [d], j\in[\tilde{d}]\setminus[r]  & &\braketbra{\psi_i}{U^\dagger}{\tilde{\psi}_j}=\braketbra{\psi_i}{\Pi U^\dagger\widetilde{\Pi}}{\tilde{\psi}_j}=\braketbra{\psi_i}{A^\dagger}{\tilde{\psi}_j}\propto A^\dagger\ket{\tilde{\psi}_j}=0  \label{eq:orth:zeroSV}\\
&\forall i,j\in[r]\setminus[k]  & &\braket{\psi^\perp_i}{\psi^\perp_j}=\frac{\braketbra{\tilde{\psi}_i}{U(I-\Pi)U^\dagger}{\tilde{\psi}_j}}{\sqrt{(1-\varsigma_i^2)(1-\varsigma_j^2)}}=\frac{\delta_{ij}-\braketbra{\tilde{\psi}_i}{AA^\dagger}{\tilde{\psi}_j}}{\sqrt{(1-\varsigma_i^2)(1-\varsigma_j^2)}}=\delta_{ij}  \label{eq:orth:ONBcomp} \\
&\forall i\in[r]\setminus[k],j \in[\tilde{d}]\setminus[r] & &\braketbra{\psi^\perp_i}{U^\dagger}{\tilde{\psi}_j}=\frac{\braketbra{\tilde{\psi}_i}{U(I-\Pi)U^\dagger}{\tilde{\psi}_j}}{\sqrt{(1-\varsigma_i^2)}}=\frac{\delta_{ij}-\braketbra{\tilde{\psi}_i}{AA^\dagger}{\tilde{\psi}_j}}{\sqrt{(1-\varsigma_i^2)}}=0 \label{eq:orth:ONBcomp2} \\
&\forall i,j\in [\tilde{d}] & &\braket{\tilde{\psi}_i}{\tilde{\psi}_{j}}=\delta_{ij} \label{eq:orth:ONBtilde}
\end{align}

\begin{table}[H]
	\centering
	\renewcommand{\arraystretch}{1.5}
	\begin{tabular}{|c|ccccc|}\hline
		\multirow{2}{*}[0mm]{$\H_i \perp \H_{j}$ 	}								
		&	$\ket{\psi_j}\in \H_{j}$&	$\ket{\psi_j}\in \H_{j}$ &	$\ket{\psi^\perp_j}\in \H_{j}$  &	$\ket{\psi_j}\in \H^R_{j}$  & $U^\dagger\ket{\tilde{\psi_j}}\in \H^L_{j}$  \\	
		&	$j\in[k]$ & $j\in[r]\setminus[k]$  &	$j\in[r]\setminus[k]$  &	$j\in[d]\setminus[r]$  & $j\in[\tilde{d}]\setminus[r]$  \\ \hline
		$\ket{\psi_i}\in \H_{i}$ & by \eqref{eq:orth:ONB}	& by \eqref{eq:orth:ONB} & by \eqref{eq:orth:notInPi} & by \eqref{eq:orth:ONB} & by \eqref{eq:orth:zeroSV}  \\
		$i\in[k]$	& $\braket{\psi_i}{\psi_{j}}=\delta_{ij}$	& $\braket{\psi_i}{\psi_{j}}=0$  & $\braket{\psi_i}{\psi^\perp_j}=0$ &  $\braket{\psi_i}{\psi_{j}}=0$ &  $\braketbra{\psi_i}{U^\dagger}{\tilde{\psi}_j}=0$ \\ \hline
		$\ket{\psi_i}\in \H_{i}$ & 	& by \eqref{eq:orth:ONB} & by \eqref{eq:orth:notInPi} & by \eqref{eq:orth:ONB} & by \eqref{eq:orth:zeroSV}  \\
		$i\in[r]\setminus[k]$	& & $\braket{\psi_i}{\psi_{j}}=\delta_{ij}$  & $\braket{\psi_i}{\psi^\perp_j}=0$ &  $\braket{\psi_i}{\psi_{j}}=0$ &  $\braketbra{\psi_i}{U^\dagger}{\tilde{\psi}_j}=0$ \\ \hline
		$\ket{\psi^\perp_i}\in \H_{i}$ & &  & by \eqref{eq:orth:ONBcomp} & by \eqref{eq:orth:notInPi} & by \eqref{eq:orth:ONBcomp2}  \\
		$i\in[r]\setminus[k]$	& & & $\braket{\psi^\perp_i}{\psi^\perp_j}=\delta_{ij}$ &  $\braket{\psi^\perp_i}{\psi_{j}}=0$ &  $\braketbra{\psi^\perp_i}{U^\dagger}{\tilde{\psi}_j}=0$ \\ \hline		
		$\ket{\psi_i}\in \H^R_{i}$ & &  & & by \eqref{eq:orth:ONB} & by \eqref{eq:orth:zeroSV} \\
		$i\in[d]\setminus[r]$	& & & &  $\braket{\psi_i}{\psi_{j}}=\delta_{ij}$ &  $\braketbra{\psi_i}{U^\dagger}{\tilde{\psi}_j}=0$ \\ \hline	
		$U^\dagger \ket{\tilde{\psi}_i}\in \H^L_{i}$ & &  & & & by \eqref{eq:orth:ONBtilde} \\
		$i\in[\tilde{d}]\setminus[r]$	& & & & &  $\braketbra{\tilde{\psi}_i}{U U^\dagger}{\tilde{\psi}_j}=\delta_{ij}$ \\ \hline	
	\end{tabular}
	\caption{Proof of the orthonormality of the spanning bases described in Definition~\ref{def:SVDInvSubspaces}. }
	\label{tab:orthoProof}
\end{table}

Now we introduce some notation for matrices that represent linear maps acting between different subspaces. This will enable us to conveniently express matrices in a block-diagonal form. We will use the subspaces of Definition~\ref{def:SVDInvSubspaces}, because they enable us to block-diagonalize the unitaries used for implementing singular value transformation.

\begin{definition}[Notation for matrices of linear maps between different vector spaces]
	$\phantom{..}$ For two  vector (sub)spaces $\H,\H'$ let us denote by $[\,\cdot\,]^{\H}_{\H'}$ the matrix of a linear map that maps $\H\mapsto \H'$. Moreover, if the subspaces are as in Definition~\ref{def:SVDInvSubspaces} and we explicitly write down matrix elements, they are meant to be interpreted in the spanning bases we used for defining $\H,\H'$ in Definition~\ref{def:SVDInvSubspaces}.
\end{definition}

\begin{lemma}[Invariant subspace decomposition of a projected unitary]\label{lemma:singInvDec}
	Let $\H_U$ be a finite-dimensional Hilbert-space and $U,\Pi, \widetilde{\Pi} \in\eend{\H_U}$ be as in Definition~\ref{def:singDec}. Then using the singular value decomposition of Definition~\ref{def:SVDInvSubspaces} we have that
	\begin{align}\label{eq:UDec}
	U=
	\bigoplus_{i\in[k]}\left[\varsigma_i\right]_{\tilde{\H}_i}^{\H_i}
	\oplus 
	\bigoplus_{i\in[r]\setminus[k]}\left[\begin{array}{cc} \varsigma_i & \sqrt{1-\varsigma_i^2}\\
	\sqrt{1-\varsigma_i^2} & -\varsigma_i\end{array}\right]_{\tilde{\H}_i}^{\H_i}
	\oplus 
	\bigoplus_{i\in[d]\setminus[r]}\left[1\right]_{\tilde{\H}^R_i}^{\H^R_i}
	\oplus 
	\bigoplus_{i\in[\tilde{d}]\setminus[r]}\left[1\right]_{\tilde{\H}^L_i}^{\H^L_i}
	\oplus 	
	\left[\,\cdot\,\right]_{\tilde{\H}_\perp}^{\H_\perp}.
	\end{align}
	Moreover,
	\begin{align}
	2\Pi-I&=	\bigoplus_{i\in[k]}\left[1\right]_{\H_i}^{\H_i}
	\oplus 
	\bigoplus_{i\in[r]\setminus[k]}\left[\begin{array}{cc} 1 & 0\\
	0& -1\end{array}\right]_{\H_i}^{\H_i}
	\oplus 
	\bigoplus_{i\in[d]\setminus[r]}\left[1\right]_{\H^R_i}^{\H^R_i}
	\oplus 	
	\bigoplus_{i\in[d]\setminus[r]}\left[-1\right]_{\H^L_i}^{\H^L_i}
	\oplus 		
	\left[\,\cdot\,\right]_{\H_\perp}^{\H_\perp},\label{eq:PiUDec}\\
	e^{i\phi(2\Pi-I)}&=	\bigoplus_{i\in[k]}\left[e^{i\phi}\right]_{\H_i}^{\H_i}
	\oplus 
	\bigoplus_{i\in[r]\setminus[k]}\left[\begin{array}{cc} e^{i\phi} & 0\\
	0& e^{-i\phi}\end{array}\right]_{\H_i}^{\H_i}
	\oplus 
	\bigoplus_{i\in[d]\setminus[r]}\left[e^{i\phi}\right]_{\H^R_i}^{\H^R_i}
	\oplus 	
	\bigoplus_{i\in[d]\setminus[r]}\left[e^{-i\phi}\right]_{\H^L_i}^{\H^L_i}
	\oplus 		
	\left[\,\cdot\,\right]_{\H_\perp}^{\H_\perp},\label{eq:PiPhaseUDec}	
	\end{align}
	and
	\begin{align}
	2\widetilde{\Pi}-I&=	\bigoplus_{i\in[k]}\left[1\right]_{\tilde{\H}_i}^{\tilde{\H}_i}
	\oplus 
	\bigoplus_{i\in[r]\setminus[k]}\left[\begin{array}{cc} 1 & 0\\
	0& -1\end{array}\right]_{\tilde{\H}_i}^{\tilde{\H}_i}
	\oplus 
	\bigoplus_{i\in[d]\setminus[r]}\left[-1\right]_{\tilde{\H}^R_i}^{\tilde{\H}^R_i}
	\oplus 	
	\bigoplus_{i\in[d]\setminus[r]}\left[1\right]_{\tilde{\H}^L_i}^{\tilde{\H}^L_i}
	\oplus 		
	\left[\,\cdot\,\right]_{\tilde{\H}_\perp}^{\tilde{\H}_\perp},\label{eq:PitUDec}\\
	e^{i\phi(2\widetilde{\Pi}-I)}&=	\bigoplus_{i\in[k]}\left[e^{i\phi}\right]_{\tilde{\H}_i}^{\tilde{\H}_i}
	\oplus 
	\bigoplus_{i\in[r]\setminus[k]}\left[\begin{array}{cc} e^{i\phi} & 0\\
	0& e^{-i\phi}\end{array}\right]_{\tilde{\H}_i}^{\tilde{\H}_i}
	\oplus 
	\bigoplus_{i\in[d]\setminus[r]}\left[e^{-i\phi}\right]_{\tilde{\H}^R_i}^{\tilde{\H}^R_i}
	\oplus 	
	\bigoplus_{i\in[d]\setminus[r]}\left[e^{i\phi}\right]_{\tilde{\H}^L_i}^{\tilde{\H}^L_i}
	\oplus 		
	\left[\,\cdot\,\right]_{\tilde{\H}_\perp}^{\tilde{\H}_\perp}.\label{eq:PitPhaseUDec}	
	\end{align}
\end{lemma}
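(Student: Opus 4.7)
The orthonormality of the listed spanning vectors has already been verified in Table~\ref{tab:orthoProof}, and a symmetric computation handles the $\tilde{\H}$ side. Thus the direct sums in the statement are well defined, and the plan is to compute the action of $U$, $2\Pi-I$ and $2\widetilde{\Pi}-I$ on each basis vector and read off the block-diagonal decomposition directly.

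The first step is to verify~\eqref{eq:UDec}. For $i\in[k]$, the relation $\widetilde{\Pi}U\ket{\psi_i}=\varsigma_i\ket{\tilde\psi_i}=\ket{\tilde\psi_i}$ forces $U\ket{\psi_i}=\ket{\tilde\psi_i}$ because $\nrm{U\ket{\psi_i}}=1$ and $\nrm{\widetilde{\Pi}U\ket{\psi_i}}=1$. For $i\in[r]\setminus[k]$, splitting $U\ket{\psi_i}=\widetilde{\Pi}U\ket{\psi_i}+(I-\widetilde{\Pi})U\ket{\psi_i}$ together with the definition of $\ket{\tilde\psi_i^\perp}$ yields the first column $\varsigma_i\ket{\tilde\psi_i}+\sqrt{1-\varsigma_i^2}\ket{\tilde\psi_i^\perp}$. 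The second column requires a small computation: from $A^\dagger\ket{\tilde\psi_i}=\varsigma_i\ket{\psi_i}$ I get $U\Pi U^\dagger\ket{\tilde\psi_i}=\varsigma_i U\ket{\psi_i}=\varsigma_i^2\ket{\tilde\psi_i}+\varsigma_i\sqrt{1-\varsigma_i^2}\ket{\tilde\psi_i^\perp}$, hence $U(I-\Pi)U^\dagger\ket{\tilde\psi_i}=(1-\varsigma_i^2)\ket{\tilde\psi_i}-\varsigma_i\sqrt{1-\varsigma_i^2}\ket{\tilde\psi_i^\perp}$. Dividing by $\sqrt{1-\varsigma_i^2}$ produces $U\ket{\psi_i^\perp}=\sqrt{1-\varsigma_i^2}\ket{\tilde\psi_i}-\varsigma_i\ket{\tilde\psi_i^\perp}$, giving the claimed $2\times 2$ block. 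The cases $i\in[d]\setminus[r]$ and $i\in[\tilde d]\setminus[r]$ are immediate from the definitions of $\tilde{\H}_i^R$ and $\H_i^L$, and the residual block on $\H_\perp\to\tilde{\H}_\perp$ exists by unitarity because $U$ maps the orthogonal complement of the listed $\H$-subspaces isometrically onto the orthogonal complement of the corresponding $\tilde{\H}$-subspaces. This is the technical heart of the lemma; once these six matrix entries are verified the rest is bookkeeping.

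The second step is to verify~\eqref{eq:PiUDec} and~\eqref{eq:PitUDec}. For every $i$ the vectors $\ket{\psi_i}$ and $\ket{\psi_i^\perp}$ lie in $\img\Pi$ and $\ker\Pi$ respectively by construction, so $(2\Pi-I)\ket{\psi_i}=\ket{\psi_i}$ and $(2\Pi-I)\ket{\psi_i^\perp}=-\ket{\psi_i^\perp}$. For $i\in[\tilde d]\setminus[r]$ the vector $U^\dagger\ket{\tilde\psi_i}$ satisfies $\Pi U^\dagger\ket{\tilde\psi_i}=A^\dagger\ket{\tilde\psi_i}=0$, hence lies in $\ker\Pi$, yielding the $-1$ block on $\H_i^L$. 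The mirror computation with $A\ket{\psi_i}=0$ for $i\in[d]\setminus[r]$ gives the $-1$ block on $\tilde{\H}_i^R$ in~\eqref{eq:PitUDec}. The remaining diagonal signs in~\eqref{eq:PiUDec} and~\eqref{eq:PitUDec} follow from the same two membership facts.

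Finally, equations~\eqref{eq:PiPhaseUDec} and~\eqref{eq:PitPhaseUDec} follow from~\eqref{eq:PiUDec} and~\eqref{eq:PitUDec} by functional calculus: since $2\Pi-I$ is a reflection with eigenvalues $\pm 1$, one has $e^{i\phi(2\Pi-I)}=\cos(\phi)I+i\sin(\phi)(2\Pi-I)$, which acts as $e^{\pm i\phi}$ on the $\pm 1$ eigenspaces. Applying this to every diagonal block of~\eqref{eq:PiUDec}, and observing that each $2\times 2$ block is already diagonal in the chosen basis, produces exactly the diagonal phase blocks stated. The subspace $\H_\perp$ is invariant under $\Pi$ (as the orthogonal complement of a $\Pi$-invariant subspace), so the residual $e^{i\phi(2\Pi-I)}$-block on $\H_\perp$ exists; the analogous statement gives the $\tilde{\H}_\perp$ block. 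The main obstacle in the whole argument is the computation of $U\ket{\psi_i^\perp}$ outlined above; everything else is direct verification from the defining formulas.
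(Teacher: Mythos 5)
Your proof is correct and follows essentially the same route as the paper: the key computation of $U\ket{\psi_i^\perp}$ via $U(I-\Pi)U^\dagger\ket{\tilde\psi_i}=\ket{\tilde\psi_i}-\varsigma_i U\ket{\psi_i}$ is exactly the paper's equation~\eqref{eq:Upsiperp}, and the remaining block entries are read off from the defining subspace memberships just as the paper does (though the paper dismisses these as ``trivially follow from Definition~\ref{def:singDec}'' while you spell them out).
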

\begin{proof}
	For all $i\in[r]\setminus[k]$ we can verify that
	\begin{equation}\label{eq:Upsi}
	U\ket{\psi_i}
	=\widetilde{\Pi} U\ket{\psi_i} + (I-\widetilde{\Pi})U\ket{\psi_i}
	=\underset{A}{\underbrace{\widetilde{\Pi} U \Pi}}\ket{\psi_i} + (I-\widetilde{\Pi})U\ket{\psi_i}
	=\varsigma_i \ket{\tilde{\psi}_i} + \sqrt{1-\varsigma^2_i}\ket{\tilde{\psi}^\perp_i},
	\end{equation}
	and 
	\begin{align}
	\sqrt{1-\varsigma^2_i}U\ket{\psi^\perp_i}
	&=U(I-\Pi)U^\dagger\ket{\tilde{\psi}_i}
	=\ket{\tilde{\psi}_i}-U\Pi U^\dagger\ket{\tilde{\psi}_i}
	=\ket{\tilde{\psi}_i}-U\underset{A^\dagger}{\underbrace{\Pi U^\dagger\widetilde{\Pi}}}\ket{\tilde{\psi}_i}
	=\ket{\tilde{\psi}_i}-U\varsigma_i \ket{\psi_i}	\nonumber\\
	&=(1-\varsigma^2_i) \ket{\tilde{\psi}_i} - \varsigma_i \sqrt{1-\varsigma^2_i}\ket{\tilde{\psi}^\perp_i},\label{eq:Upsiperp}
 	\end{align}
	where in the last equality we used \eqref{eq:Upsi}. Since $U$ is unitary, it preserves the inner product and therefore maps $\H_\perp$ onto $\tilde{\H}_\perp$. Now equation \eqref{eq:UDec} directly follows from \eqref{eq:Upsi}-\eqref{eq:Upsiperp}.
	The other statements trivially follow from Definition~\ref{def:singDec}.
\end{proof}

\begin{definition}[Alternating phase modulation sequence]\label{def:phaseSeq}
	Let $\H_U$ be a finite-dimensional \linebreak Hilbert space and let $U,\Pi, \widetilde{\Pi}\in\eend{\H_U}$ be linear operators on $\H_U$ such that $U$ is a unitary, and $\Pi, \widetilde{\Pi}$ are orthogonal projectors.
	Let $\Phi\in\R^{n}$, then we define the \emph{phased alternating sequence} $U_\Phi$ as follows
	\begin{equation}\label{eq:phaseSeq}
		U_\Phi:=\left\{\begin{array}{rcl} \underset{\phantom{\sum}}{e^{i\phi_{1} (2\widetilde{\Pi}-I)}U}
		\prod_{j=1}^{(n-1)/2}\left(e^{i\phi_{2j} (2\Pi-I)}U^\dagger e^{i\phi_{2j+1} (2\widetilde{\Pi}-I)}U\right) & &\text{if }n\text{ is odd, and}\\
		\prod_{j=1}^{n/2}\left(e^{i\phi_{2j-1} (2\Pi-I)}U^\dagger e^{i\phi_{2j} (2\widetilde{\Pi}-I)}U\right)& & \text{if }n\text{ is even.}\end{array}\right.
	\end{equation}
\end{definition}
\begin{definition}[Singular value transformation by even/odd functions]\label{def:PolySVTrans}
		Let $f:\mathbb{R}\rightarrow\mathbb{C}$ be an even or odd function.
		Let $A\in\C^{\tilde{d}\times d}$, let $d_{\min}:=\min(d,\tilde{d})$ and let 
		\begin{equation*}
		A=\sum_{i=1}^{d_{\min}}\varsigma_i\ketbra{\tilde{\psi}_i}{\psi_i}
		\end{equation*}
		be a singular value decomposition of $A$. 
		
		We define the \emph{polynomial singular value transformation} of $A$, for odd function $f$ as
		\begin{equation*}
		f^{(SV)}(A):=\sum_{i=1}^{d_{\min}}f(\varsigma_i)\ketbra{\tilde{\psi}_i}{\psi_i},
		\end{equation*}
		and for even $f$ as
		\begin{equation*}
		f^{(SV)}(A):=\sum_{i=1}^{d}f(\varsigma_i)\ketbra{\psi_i}{\psi_i},
		\end{equation*}	
		where for $i\in[d]\setminus[d_{\min}]$ we define $\varsigma_i:=0$.
	\end{definition}

The following theorem is a generalized and improved version of the ``Flexible quantum signal processing'' result of Low and Chuang~\cite[Theorem 4]{low2017HamSimUnifAmp}. Our result is more general because it works for arbitrary matrices as opposed to only working for Hermitian (or normal) matrices. Another improvement is that we remove the constraint $P_{\Re}(0)=0$ for even $d$, which stems from our improved treatment of Theorem~\ref{thm:cleverSOS} and Corollary~\ref{cor:refAchievableP}. Also note that the following theorem can be viewed as a generalization of the quantum walk techniques introduced by Szegedy~\cite{szegedy2004QMarkovChainSearch}.

\begin{theorem}[Singular value transformation by alternating phase modulation]\label{thm:singValTransformation}
	Let $\H_U$ be a finite-dimensional Hilbert space and let $U,\Pi, \widetilde{\Pi}\in\eend{\H_U}$ be linear operators on $\H_U$ such that $U$ is a unitary, and $\Pi, \widetilde{\Pi}$ are orthogonal projectors.
	Let $P\in\C[x]$ and $\Phi\in\R^{n}$ is as in Corollary~\ref{cor:refAchievableP}, then
	\begin{equation}\label{eq:singValTransform}
		P^{(SV)}(\widetilde{\Pi} U\Pi)=\left\{\begin{array}{rcl} \widetilde{\Pi} U_{\Phi}\Pi & &\text{if }n\text{ is odd, and}\\
		\Pi U_{\Phi}\Pi& & \text{if }n\text{ is even.}\end{array}\right.
	\end{equation}
\end{theorem}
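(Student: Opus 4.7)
The plan is to reduce the theorem to Corollary~\ref{cor:refAchievableP} by using Lemma~\ref{lemma:singInvDec} to block-diagonalize all of $U$, $U^\dagger$, $e^{i\phi(2\Pi-I)}$ and $e^{i\phi(2\widetilde{\Pi}-I)}$ simultaneously, and then matching each block against the quantum signal processing product $\prod_{j=1}^{n}(e^{i\phi_j\sigma_z}R(x))$ of Corollary~\ref{cor:refAchievableP}. Since the alternating phase modulation sequence $U_\Phi$ is built from precisely these four operator types, each of its summands in the orthogonal decomposition reduces to either a $2\times 2$ quantum signal processing product (when $0<\varsigma_i<1$) or to a simple scalar product of phases (in the $1$-dimensional invariant subspaces). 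After projecting on the left by $\widetilde{\Pi}$ or $\Pi$, as dictated by the parity of $n$, only the appropriate entry of each block survives, and collecting these entries reconstructs $P^{(SV)}(\widetilde{\Pi}U\Pi)$ term by term.

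First I would treat the generic case $i\in[r]\setminus[k]$. Using the bases $(\ket{\psi_i},\ket{\psi_i^\perp})$ and $(\ket{\tilde\psi_i},\ket{\tilde\psi_i^\perp})$ from Definition~\ref{def:SVDInvSubspaces} to identify $\H_i$ and $\tilde\H_i$ with a common abstract qubit, Lemma~\ref{lemma:singInvDec} gives that both $U$ and $U^\dagger$ act as the reflection $R(\varsigma_i)$ and that both $e^{i\phi(2\Pi-I)}|_{\H_i}$ and $e^{i\phi(2\widetilde{\Pi}-I)}|_{\tilde\H_i}$ act as $e^{i\phi\sigma_z}$. Expanding Definition~\ref{def:phaseSeq} on this qubit then yields exactly the product $\prod_{j=1}^{n}\!\left(e^{i\phi_j\sigma_z}R(\varsigma_i)\right)$, whose $(0,0)$ entry equals $P(\varsigma_i)$ by Corollary~\ref{cor:refAchievableP}. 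For odd $n$ the net map is $\H_i\to\tilde\H_i$ and the pair of projections $\widetilde{\Pi}(\cdot)\Pi$ extracts the $\ket{\tilde\psi_i}\bra{\psi_i}$ coefficient, producing the term $P(\varsigma_i)\ket{\tilde\psi_i}\bra{\psi_i}$; for even $n$ the map is $\H_i\to\H_i$ and $\Pi(\cdot)\Pi$ extracts the $\ket{\psi_i}\bra{\psi_i}$ coefficient, producing $P(\varsigma_i)\ket{\psi_i}\bra{\psi_i}$. In both cases this matches Definition~\ref{def:PolySVTrans} exactly.

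Next I would dispose of the one-dimensional summands. For $i\in[k]$, i.e.\ $\varsigma_i=1$, Lemma~\ref{lemma:singInvDec} shows $U$ acts as $1$ and the phase gates act as $e^{i\phi}$, so the composed scalar is $\prod_{j=1}^n e^{i\phi_j}$, which by the $x=1$ part of Corollary~\ref{cor:refAchievableP} equals $P(1)=P(\varsigma_i)$, and the relevant projection keeps this contribution in the correct basis. For $i\in[d]\setminus[r]$ the relevant subspace $\H_i^R$ lies in $\img{\Pi}$ while $\tilde\H_i^R$ is orthogonal to $\img{\widetilde{\Pi}}$, so when $n$ is odd the left projection $\widetilde{\Pi}$ kills the output (matching $P(0)=0$ for an odd polynomial), whereas when $n$ is even the accumulated scalar is $e^{-i\sum_{j=1}^{n}(-1)^j\phi_j}$ which by the $P(0)$ formula of Corollary~\ref{cor:refAchievableP} equals $P(0)=P(\varsigma_i)$. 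A symmetric argument handles the left-null subspaces $\H_i^L$ via $\Pi$, and finally $\H_\perp$ is killed by $\Pi$ on the right (or $\widetilde{\Pi}$ on the left) since it is orthogonal to both $\img{\Pi}$ and $\img{\widetilde{\Pi}}$. Summing all surviving contributions yields exactly $P^{(SV)}(\widetilde{\Pi}U\Pi)$.

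The main obstacle will be the bookkeeping: making sure that the two sets of bases $\{\ket{\psi_i},\ket{\psi_i^\perp}\}$ and $\{\ket{\tilde\psi_i},\ket{\tilde\psi_i^\perp}\}$ are consistently identified so that the products of the block matrices honestly compose into the abstract $2\times 2$ product of Corollary~\ref{cor:refAchievableP}, and that the alternation of $U$ and $U^\dagger$ in Definition~\ref{def:phaseSeq} is tracked correctly through the parity of $n$. Once this is set up, the algebraic identities are immediate from Lemma~\ref{lemma:singInvDec} and Corollary~\ref{cor:refAchievableP}, and the special-case verification of the scalar $P(0)$ on the zero-singular-value blocks for even $n$ is the only nontrivial piece of arithmetic.
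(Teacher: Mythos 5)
Your proposal is correct and follows essentially the same route as the paper's proof: block-diagonalize $U_\Phi$ via Lemma~\ref{lemma:singInvDec}, identify each two-dimensional block with the quantum signal processing product of Corollary~\ref{cor:refAchievableP}, and verify that the residual one-dimensional blocks (for $\varsigma_i=1$, the zero-singular-value subspaces $\H^R_i,\H^L_i$, and $\H_\perp$) give the correct scalars $P(1)$ and $P(0)$ or are annihilated by the projections according to the parity of $n$. The only detail worth flagging is that Lemma~\ref{lemma:singInvDec} only gives the block decomposition of $U$ explicitly, so you should note that $U^\dagger$ acts as $R(\varsigma_i)^\dagger=R(\varsigma_i)$ on the paired blocks because $R(\varsigma_i)$ is a real symmetric matrix.
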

\begin{proof} We first prove the odd case. Observe that $P(1)=\prod_{j=1}^{n}e^{i\phi_j}$, and let $e^{i\phi_0}:=e^{i\sum_{j=1}^{n}(-1)^n\phi_j}$
	\begin{align*}
	U_{\Phi}&=e^{i\phi_{1} (2\widetilde{\Pi}-I)}U
		\prod_{j=1}^{n/2}\left(e^{i\phi_{2j} (2\Pi-I)}U^\dagger e^{i\phi_{2j+1} (2\widetilde{\Pi}-I)}U\right)\\
	&=\bigoplus_{i\in[k]}\left[\varsigma_k^nP(1)\right]_{\tilde{\H}_i}^{\H_i}
	\oplus 
	\bigoplus_{i\in[r]\setminus[k]}\left[\prod_{j=1}^{n}\left(e^{i\phi_j\sigma_z}R(\varsigma_\ell)\right)\right]_{\tilde{\H}_i}^{\H_i}
	\oplus 
	\bigoplus_{i\in[d]\setminus[r]}\left[e^{i\phi_0}\right]_{\tilde{\H}^R_i}^{\H^R_i}
	\oplus 
	\bigoplus_{i\in[\tilde{d}]\setminus[r]}\left[e^{-i\phi_0}\right]_{\tilde{\H}^L_i}^{\H^L_i}
	\oplus 	
	\left[\,\cdot\,\right]_{\tilde{\H}_\perp}^{\H_\perp}
	\tag{by Lemma~\ref{lemma:singInvDec}}\\	
	&=\bigoplus_{i\in[k]}\left[P(\varsigma_i)\right]_{\tilde{\H}_i}^{\H_i}
	\oplus 
	\bigoplus_{i\in[r]\setminus[k]}\left[\begin{array}{cc} P(\varsigma_i) & .\\
	. & . \end{array}\right]_{\tilde{\H}_i}^{\H_i}
	\oplus 
	\bigoplus_{i\in[d]\setminus[r]}\left[e^{i\phi_0}\right]_{\tilde{\H}^R_i}^{\H^R_i}
	\oplus 
	\bigoplus_{i\in[\tilde{d}]\setminus[r]}\left[e^{-i\phi_0}\right]_{\tilde{\H}^L_i}^{\H^L_i}
	\oplus 	
	\left[\,\cdot\,\right]_{\tilde{\H}_\perp}^{\H_\perp}
	.\tag{by Corollary~\ref{cor:refAchievableP}}
	\end{align*}
	Finally equation \eqref{eq:singValTransform} follows from the fact that $\Pi=\sum_{i=1}^{d}\ketbra{\psi_i}{\psi_i}$ and $\widetilde{\Pi}=\sum_{i=1}^{\tilde{d}}\ketbra{\tilde{\psi}_i}{\tilde{\psi}_i}$, therefore 
	\begin{align*}
	\widetilde{\Pi} U_{\Phi}\Pi
	&=\bigoplus_{i\in[k]}\left[P(\varsigma_i)\right]_{\tilde{\H}_i}^{\H_i}
	\oplus 
	\bigoplus_{i\in[r]\setminus[k]}\left[\begin{array}{cc} P(\varsigma_i) & 0\\
	0 & 0 \end{array}\right]_{\tilde{\H}_i}^{\H_i}
	\oplus 
	\bigoplus_{i\in[d]\setminus[r]}\left[0\right]_{\tilde{\H}^R_i}^{\H^R_i}
	\oplus 
	\bigoplus_{i\in[\tilde{d}]\setminus[r]}\left[0\right]_{\tilde{\H}^L_i}^{\H^L_i}
	\oplus 	
	\left[0\right]_{\tilde{\H}_\perp}^{\H_\perp}\\
	&=\sum_{i=1}^{d_{\min}}P(\varsigma_i)\ketbra{\tilde{\psi}_i}{\psi_i}.
	\end{align*}
	The last equality follows from the observation that for $n$ odd $P$ is odd, therefore $P(0)=0$.
	
	For the even case we can similarly derive that 
	\begin{align*}
	U_{\Phi}&=\bigoplus_{i\in[k]}\left[P(\varsigma_i)\right]_{\H_i}^{\H_i}
	\oplus 
	\bigoplus_{i\in[r]\setminus[k]}\left[\begin{array}{cc} P(\varsigma_i) & .\\
	. & . \end{array}\right]_{\H_i}^{\H_i}
	\oplus 
	\bigoplus_{i\in[d]\setminus[r]}\left[e^{-i\phi_0}\right]_{\H^R_i}^{\H^R_i}
	\oplus 
	\bigoplus_{i\in[\tilde{d}]\setminus[r]}\left[e^{i\phi_0}\right]_{\H^L_i}^{\H^L_i}
	\oplus 	
	\left[\,\cdot\,\right]_{\H_\perp}^{\H_\perp}
	.\tag{by Corollary~\ref{cor:refAchievableP}}
	\end{align*}	
	Finally equation \eqref{eq:singValTransform} follows from the fact that $\Pi=\sum_{i=1}^{d}\ketbra{\psi_i}{\psi_i}$, and therefore 
	\begin{align*}
	\Pi U_{\Phi}\Pi
	&=\bigoplus_{i\in[k]}\left[P(\varsigma_i)\right]_{\H_i}^{\H_i}
	\oplus 
	\bigoplus_{i\in[r]\setminus[k]}\left[\begin{array}{cc} P(\varsigma_i) & 0\\
	0 & 0 \end{array}\right]_{\H_i}^{\H_i}
	\oplus 
	\bigoplus_{i\in[d]\setminus[r]}\left[e^{-i\phi_0}\right]_{\H^R_i}^{\H^R_i}
	\oplus 
	\bigoplus_{i\in[\tilde{d}]\setminus[r]}\left[0\right]_{\H^L_i}^{\H^L_i}
	\oplus 	
	\left[0\right]_{\H_\perp}^{\H_\perp}\\
	&=\sum_{i=1}^{d}P(\varsigma_i)\ketbra{\psi_i}{\psi_i}.
	\end{align*}
	The last equality uses the observation that for $n$ even  $P(0)=e^{-i\phi_0}$, as shown by Corollary~\ref{cor:refAchievableP}.
\end{proof}

\begin{corollary}[Singular value transformation by real polynomials]\label{cor:matchingParity}
	Let $U,\Pi, \widetilde{\Pi}$ be as in \linebreak Theorem~\ref{thm:singValTransformation}.
	Suppose that $ P_{\Re}\in\R[x]$ is a degree-$n$ polynomial satisfying that
	\begin{itemize}
		\item $ P_{\Re}$ has parity-$(n\mod 2)$ and
		\item for all $x\in[-1,1]\colon$ $| P_{\Re}(x)|\leq 1$.
	\end{itemize}
	Then there exist $\Phi\in\R^n$, such that
	\begin{equation}\label{eq:realSingValTransform}
	P_\Re^{(SV)}\!\left(\widetilde{\Pi}U\Pi\right)\!=\!\left\{\begin{array}{rcl} \!\!\!\left(\bra+\otimes\widetilde{\Pi}\right)\Big(\ketbra00\!\otimes\! U_{\Phi}+\ketbra11\!\otimes\! U_{\!-\Phi}\Big) \left(\ket+\otimes\overset{\phantom{.}}{\Pi}\right)\!\!\!\!\!& &\text{if }n\text{ is odd, and}\\[\medskipamount]
	\!\!\!\left(\bra+\otimes\underset{\phantom{.}}{\Pi}\right)\Big(\ketbra00\!\otimes\! U_{\Phi}+\ketbra11\!\otimes\! U_{\!-\Phi}\Big) \left(\ket+\otimes\underset{\phantom{.}}{\Pi}\right)\!\!\!\!\!& & \text{if }n\text{ is even.}\end{array}\right.
	\end{equation}
\end{corollary}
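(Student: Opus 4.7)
The plan is to reduce $P_\Re$ to a complex polynomial $P$ with $\Re[P]=P_\Re$ realizable by Theorem~\ref{thm:singValTransformation}, and then to recover the real part by a Hadamard-test-style linear combination of $U_\Phi$ and $U_{-\Phi}$. Concretely, I would first invoke the existence statement of Corollary~\ref{cor:realP} (equivalently, apply Theorem~\ref{thm:cleverSOS} with $\tilde P=P_\Re$ and $\tilde Q\equiv 0$) to obtain a complex polynomial $P$ satisfying the hypotheses of Corollary~\ref{cor:refAchievableP} together with $\Re[P]=P_\Re$. Corollary~\ref{cor:refAchievableP} then supplies a phase sequence $\Phi\in\R^n$ realizing $P$ in the QSP sense.

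The key observation is that negating the phase sequence conjugates the implemented polynomial. At the level of Corollary~\ref{cor:refAchievableP}, this is immediate: $R(x)$ has real entries on $[-1,1]$ and $e^{-i\phi_j\sigma_z}=\overline{e^{i\phi_j\sigma_z}}$, so the product $\prod_{j=1}^n(e^{-i\phi_j\sigma_z}R(x))$ is the entrywise conjugate of $\prod_{j=1}^n(e^{i\phi_j\sigma_z}R(x))$, with top-left entry $P^*(x)$. This lifts faithfully to the full-dimensional setting because Lemma~\ref{lemma:singInvDec} shows that on each of the invariant two-dimensional subspaces $\H_i,\tilde\H_i$, the projector-phase operators $e^{i\phi(2\Pi-I)}$ and $e^{i\phi(2\widetilde\Pi-I)}$ act exactly as $e^{i\phi\sigma_z}$, while $U$ acts as $R(\varsigma_i)$. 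Running the argument of Theorem~\ref{thm:singValTransformation} verbatim with $-\Phi$ in place of $\Phi$ therefore yields $\widetilde\Pi U_{-\Phi}\Pi=(P^*)^{(SV)}(A)$ in the odd case and $\Pi U_{-\Phi}\Pi=(P^*)^{(SV)}(A)$ in the even case, where $A=\widetilde\Pi U\Pi$.

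Finally, I would verify the controlled-combination formula. A direct computation shows
\[
\left(\bra+\otimes I\right)\Bigl(\ketbra00\otimes U_\Phi+\ketbra11\otimes U_{-\Phi}\Bigr)\left(\ket+\otimes I\right)=\tfrac12\bigl(U_\Phi+U_{-\Phi}\bigr),
\]
so sandwiching with $\widetilde\Pi$ on the left (odd $n$) or $\Pi$ on the left (even $n$) and $\Pi$ on the right and applying Theorem~\ref{thm:singValTransformation} and the previous paragraph gives $\tfrac12\bigl(P^{(SV)}(A)+(P^*)^{(SV)}(A)\bigr)$. Since the singular values $\varsigma_i$ are real, $\tfrac12\bigl(P(\varsigma_i)+P^*(\varsigma_i)\bigr)=\Re[P](\varsigma_i)=P_\Re(\varsigma_i)$, so this equals $P_\Re^{(SV)}(A)$, proving \eqref{eq:realSingValTransform}. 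The only genuinely non-obvious point is the faithful lifting of the ``negation equals conjugation'' identity from the QSP toy model to the full $U_\Phi$, which is exactly what the block-diagonal structure from Lemma~\ref{lemma:singInvDec} delivers; everything else is a mechanical assembly of prior results.
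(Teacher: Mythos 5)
Your proposal is correct and follows essentially the same route as the paper: obtain $P$ with $\Re[P]=P_\Re$ via Corollary~\ref{cor:realP}, observe that negating the phase sequence conjugates the implemented polynomial (the paper simply refers to equation~\eqref{eq:IterRef} for this, while you spell out the conjugation of $e^{i\phi_j\sigma_z}$ against the real matrix $R(x)$), and then recover $P_\Re=(P+P^*)/2$ by the Hadamard-controlled combination of $U_\Phi$ and $U_{-\Phi}$. The only cosmetic difference is that you explicitly cite Lemma~\ref{lemma:singInvDec} for the lifting step, whereas the paper lets this be absorbed into the invocation of Theorem~\ref{thm:singValTransformation}.
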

\begin{proof}
	By Corollary~\ref{cor:realP} we can find a $\Phi\in\R^n$ such that $\Re[P]=P_\Re$. Observe that $-\Phi$ gives rise to $P^*$ in Corollary~\ref{cor:refAchievableP} as can be seen from equation~\eqref{eq:IterRef}.
	Let $\Pi'=\widetilde{\Pi}$ for $n$ odd and let $\Pi'=\Pi$ for $n$ even. Then by Theorem~\ref{thm:singValTransformation} we get that $P^{(SV)}\left(\widetilde{\Pi}U\Pi\right)=\Pi'U_\Phi\Pi$, and $P^{*(SV)}\left(\widetilde{\Pi}U\Pi\right)=\Pi'U_{-\Phi}\Pi$. Therefore 
	\begin{align*}
		\left(\bra+\otimes\Pi'\right)\left(\ketbra00\otimes U_{\Phi}\right) \left(\ket+\otimes\Pi\right)
		&=P^{(SV)}\left(\widetilde{\Pi}U\Pi\right)/2\\
		\left(\bra+\otimes\Pi'\right)\left(\ketbra11\otimes U_{\!-\Phi}\right) \left(\ket+\otimes\Pi\right)
		&=P^{*(SV)}\left(\widetilde{\Pi}U\Pi\right)/2.
	\end{align*}
	We can conclude by observing that $P_\Re=(P+P^*)/2$, and therefore 
	\begin{equation*}
		P_\Re^{(SV)}\left(\widetilde{\Pi}U\Pi\right)=\left(P^{(SV)}\!\left(\widetilde{\Pi}U\Pi\right)+P^{*(SV)}\!\left(\widetilde{\Pi}U\Pi\right)\right)/2.
	\end{equation*}
\end{proof}

Note that the above result is essentially optimal in the sense that the requirements are necessary. It is obvious that the polynomial needs to be bounded within $[-1,1]$ since the matrix must have norm at most $1$ as it is a projected unitary. Also one cannot implement a degree $d$ Chebyshev polynomial with $d-1$ uses of the unitary $U$, since $T_d(x)$ takes value $1$ at $1$ with derivative $d^2$. Substituting $y:=1$ and $x:=1-\delta$ for some small $\delta$ to equation \eqref{eq:preciseQueryBound} in Theorem~\ref{thm:lowerBoundEVT} shows that exactly implementing $T_d(x)$ requires at least $d$ uses of $U$.
Finally, about the parity constraint, note that every result in this subsection would stay valid if we would extend the concept of singular values by allowing negative values as well. But then by changing a singular vector/singular value term $\varsigma\ketbra{\phi}{\psi}$ to $-\varsigma\ketbra{-\phi}{\psi}$ would be again a valid decomposition, where singular value transformation could be applied with a polynomial $P$.
It would require that $P(\varsigma)\ketbra{\psi}{\psi}=P(-\varsigma)\ketbra{\psi}{\psi}$, and $P(\varsigma)\ketbra{\phi}{\psi}=-P(-\varsigma)\ketbra{-\phi}{\psi}$ for consistency, showing the necessity of the even/odd constraint. Equations \eqref{eq:HermitianEven}-\eqref{eq:HermitianOdd} in the proof of Corollary~\ref{cor:PolyNormDiffHerm} also show that the even/odd case separation for singular value transformation is quite natural.

What remains is to discuss how to efficiently implement an alternating phase modulation sequences. Observe that with  a single ancilla qubit, two uses of C$_\Pi$NOT, and a single-qubit phase gate $e^{-i\phi\sigma_z}$ we can implement the operator $e^{i\phi(2\Pi-I)}$=C$_\Pi$NOT$\left(I\otimes e^{-i\phi\sigma_z}\right)$C$_\Pi$NOT, which leads to an efficient implementation of $U_{\Phi}$, see Figure~\ref{fig:Piphi}.
\begin{lemma}[Efficient implementation of alternating phase modulation sequences]\label{lemma:implementingPhasedSeq}
	$\phantom{+}$\linebreak Let $\Phi\in\R^n$, then the alternating phased sequence $U_\Phi$ of Definition~\ref{def:phaseSeq} can implemented using a single ancilla qubit with $n$ uses of $U$ and $U^\dagger$, $n$ uses of C$_\Pi$NOT and $n$ uses of C$_{\widetilde{\Pi}}$NOT gates and $n$ single qubit gates. A controlled version of $U_\Phi$ can be built similarly just replacing the $n$ single qubit gates by controlled gates, and in case $n$ is odd replacing one $U$ gate with a controlled $U$ gate. For a set of vectors $\{\Phi^{(k)}\in\R^n\colon k\in\{0,1\}^m \}$ a multi-controlled alternating phased sequence $\sum_{k\in\{0,1\}^m}\ketbra{k}{k}\otimes U_{\Phi^{(k)}}$ can be implemented similarly by replacing the single qubit gates with multiply controlled single qubit gates of the form $\sum_{k\in\{0,1\}^m}\ketbra{k}{k}\otimes e^{i\phi^{(k)}}$.
\end{lemma}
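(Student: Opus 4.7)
My plan is to build everything from a single gadget identity and then just account for how many times it is used.

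First I would verify the key identity that replaces a reflection-phase by a circuit on one extra ancilla qubit:
\begin{equation*}
\big(I\otimes \bra{0}\big)\,\mathrm{C}_\Pi\mathrm{NOT}\,\big(I\otimes e^{-i\phi\sigma_z}\big)\,\mathrm{C}_\Pi\mathrm{NOT}\,\big(I\otimes \ket{0}\big)
= e^{i\phi(2\Pi-I)},
\end{equation*}
and analogously for $\widetilde{\Pi}$. This is immediate by splitting $\ket\psi=\Pi\ket\psi+(I-\Pi)\ket\psi$: the first $\mathrm{C}_\Pi\mathrm{NOT}$ flips the ancilla only on the $\Pi$-component, the $e^{-i\phi\sigma_z}$ stamps $e^{+i\phi}$ on the flipped branch and $e^{-i\phi}$ on the other, and the second $\mathrm{C}_\Pi\mathrm{NOT}$ resets the ancilla to $\ket 0$. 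Crucially, the ancilla returns to $\ket 0$ after each gadget, so gadgets chain in series without building up entanglement with the system.

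Second, I would substitute this gadget into Definition~\ref{def:phaseSeq} for each of the $n$ phase factors of $U_\Phi$, interleaved with the $n$ unchanged applications of $U$ and $U^\dagger$. Counting: every phase gadget uses one single-qubit gate $e^{-i\phi_j\sigma_z}$ and two $\mathrm{C}_\Pi\mathrm{NOT}$ or $\mathrm{C}_{\widetilde\Pi}\mathrm{NOT}$ gates (depending on whether the phase is with respect to $\Pi$ or $\widetilde\Pi$), and the distribution of the $n$ phases across $\Pi$ and $\widetilde\Pi$ is exactly as prescribed by \eqref{eq:phaseSeq}, giving the stated resource count.

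Third, for the controlled version I would make each $e^{-i\phi_j\sigma_z}$ into a singly-controlled gate. On the control-$\ket 0$ branch, every controlled phase becomes the identity, so each phase gadget collapses to $\mathrm{C}_\Pi\mathrm{NOT}\cdot\mathrm{C}_\Pi\mathrm{NOT}=I$ (self-inverse), and what remains of the circuit is a bare alternating product of $U$ and $U^\dagger$. When $n$ is even these cancel pairwise to $I$ as required. When $n$ is odd there is one uncancelled $U$ (from the opening $U$ in \eqref{eq:phaseSeq}), so to obtain identity on the control-$\ket 0$ branch exactly one of the $U$ gates must itself be replaced by its controlled version; on the control-$\ket 1$ branch nothing changes and we recover the full $U_\Phi$. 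This pinpoints the only subtlety in the lemma, and is the only step where the odd/even distinction matters.

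Finally, the multi-controlled version is a direct extension: replace each $e^{-i\phi_j\sigma_z}$ by the multi-controlled gate $\sum_{k\in\{0,1\}^m}\ketbra{k}{k}\otimes e^{-i\phi_j^{(k)}\sigma_z}$. Because the surrounding $\mathrm{C}_\Pi\mathrm{NOT}$ and $\mathrm{C}_{\widetilde\Pi}\mathrm{NOT}$ structure is shared across all branches, on the input subspace $\ket k$ the system simply experiences the alternating sequence associated to $\Phi^{(k)}$, yielding $\sum_k \ketbra{k}{k}\otimes U_{\Phi^{(k)}}$. I do not expect a genuine obstacle here—the only care needed is the odd-$n$ parity argument in the controlled case; everything else is straightforward bookkeeping from the gadget identity.
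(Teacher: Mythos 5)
Your proposal is correct and follows exactly the construction the paper summarizes via Figure~\ref{fig:qubitization}: the two-C$_\Pi$NOT-plus-phase gadget is Figure~\ref{fig:Piphi}, the multi-controlled variant is Figure~\ref{fig:cPiphi}, and your parity argument for the controlled case (one leftover $U$ when $n$ is odd) matches the lemma statement. You merely spell out algebraically what the paper presents as circuit diagrams, so this is the same proof, not a different route.
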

\begin{proof}
	See the constructions of Figure~\ref{fig:qubitization}.
\end{proof}

	Note that Figure~\ref{fig:qubitization} also explains the term ``qubitization'': the fine-tuned driving of the circuit giving rise to the required polynomial transformation is achieved by cleverly chosen Pauli-$z$ rotations of a single ancilla qubit. The rotations of the single ancilla qubit induce rotations on the common two-dimensional invariant subspaces of $U,\Pi,\widetilde{\Pi}$ cf. Lemma~\ref{lemma:singInvDec}.

\begin{figure}[H]
	\providecommand{\ctrlA}{\push{\rule{1.5mm}{1.5mm}}}\vskip-10mm
	\begin{subfigure}{.15\textwidth}
		\centering
		\begin{displaymath}
		\Qcircuit @C=1.0em @R=1.2em {
			& & \\
			& & \\				
			&\targ{1}				& \qw\\
			&\multigate{2}{\Pi}\qwx	& \qw\\
			\vdots& 	&\vdots\\
			&\ghost{\Pi} 			& \qw	
		}		
		\end{displaymath}		
		\caption{C$_\Pi$NOT}
		\label{fig:CPiNot}
	\end{subfigure}%
	\hskip0.02\textwidth	
	\begin{subfigure}{.35\textwidth}
		\centering
		\begin{displaymath}
		\Qcircuit @C=1.0em @R=1.2em {
			& & & &\\
			& & & &\\				
			\lstick{\ket{b}\kern-2mm}
			&\targ{1}				& \gate{e^{-i\phi\sigma_z}} \qw	& \targ{1}				&\qw\\
			&\multigate{2}{\Pi}\qwx	& \qw			& \multigate{2}{\Pi}\qwx&\qw \\
			\vdots& 	&\vdots  & 	& \vdots\\
			&\ghost{\Pi} 			& \qw			& \ghost{\Pi}			&\qw
		}		
		\end{displaymath}		
		\caption{$\ketbra{b}{b}\otimes e^{(-1)^bi\phi(2\Pi-I)}$}
		\label{fig:Piphi}
	\end{subfigure}%
	\hskip0.02\textwidth
	\begin{subfigure}{.45\textwidth}
		\centering
		\begin{displaymath}
		\Qcircuit @C=1.0em @R=1.2em {
			& & & &\\			
			\lstick{\ket{c}\kern-2mm}
			&\qw					& \ctrlo{1} \qw	& \qw &\ctrl{1} \qw	
			& \qw					&\qw\\
			\lstick{\ket{b}\kern-2mm}
			&\targ{1}				& \gate{e^{-i\phi^{(0)}\sigma_z}} \qw & \qw	& \gate{e^{-i\phi^{(1)}\sigma_z}} \qw
			&\targ{1}				&\qw\\
			&\multigate{2}{\Pi}\qwx	& \qw	 & \qw  & \qw		 
			&\multigate{2}{\Pi}\qwx&\qw \\
			\vdots& 	& &\vdots & & 	& \vdots\\
			&\ghost{\Pi} 			& \qw & \qw	 & \qw			& \ghost{\Pi}			&\qw
		}		
		\end{displaymath}		
		\caption{$\ketbra{cb}{cb}\otimes e^{(-1)^bi\phi^{(c)}(2\Pi-I)}$}
		\label{fig:cPiphi}
	\end{subfigure}%
	\\\vskip-5mm\centering	
	\begin{subfigure}{0.9\textwidth}
		\centering
		\begin{displaymath}
		\Qcircuit @C=1.0em @R=1.2em {
			& &\\
			& &\\
			& \multigate{2}{U} 
			&\multigate{2}{e^{i\phi_{n}(2\widetilde{\Pi}-I)}}	
			& \multigate{2}{U^\dagger} 
			&\multigate{2}{e^{i\phi_{n-1}(2\Pi-I)}}	
			& \multigate{2}{U} 						
			& \qw & \cdots &
			& \multigate{2}{U}	&\multigate{2}{e^{i\phi_1(2\widetilde{\Pi}-I)}} & \qw  \\
			\vdots & & & & & & & & & & & \vdots \\
			&\ghost{U}		
			&\ghost{e^{i\phi_{n}(2\widetilde{\Pi}-I)}} 	
			&\ghost{U^\dagger}		
			&\ghost{e^{i\phi_{n-1}(2\Pi-I)}} 
			&\ghost{U}											
			& \qw & \cdots &
			& \ghost{U}	&\ghost{e^{i\phi_{1}(2\widetilde{\Pi}-I)}} &\qw
		}		
		\end{displaymath}	
		\caption{$U_{\Phi}=e^{i\phi_{1} (2\widetilde{\Pi}-I)}U
			\prod_{j=1}^{(n-1)/2}\left(e^{i\phi_{2j} (2\Pi-I)}U^\dagger e^{i\phi_{2j+1} (2\widetilde{\Pi}-I)}U\right)$ (for odd $n$)}
		\label{fig:UPhi}
	\end{subfigure}%
	\caption{Gates and gate sequences used for singular value transformation in Theorem~\ref{thm:singValTransformation}.
		Figure~\ref{fig:CPiNot} shows how to implement a C$_\Pi$NOT gate, and Figure~\ref{fig:Piphi} shows how to implement $e^{i\phi(2\Pi-I)}$ using a single ancilla qubit, two C$_\Pi$NOT gates and an $e^{-i\phi\sigma_z}$ gate. Figure~\ref{fig:cPiphi} demonstrates how to implement a controlled version of the gate $e^{i\phi^{(c)}(2\Pi-I)}$, by only controlling the single qubit gate $e^{-i\phi^{(c)}\sigma_z}$. Finally, Figure~\ref{fig:UPhi} summarizes the complete circuit used in Theorem~\ref{thm:singValTransformation}.\\
}
	\label{fig:qubitization}
\end{figure}

\subsection{Robustness of singular value transformation}\label{subsec:robustness}
	In this subsection we will prove results about the robustness of singular value transformation. More precisely we prove bounds on how big can be difference $\nrm{P^{(SV)}(A)-P^{(SV)}(\tilde{A})}$ in terms of the magnitude of ``perturbation'' $\nrm{A-\tilde{A}}$.
	
	First consider the generalization of ordinary $\R\rightarrow \C$ functions to Hermitian matrices. One is tempted to think that if such a function is Lipschitz-continuous, then the induced operator function is also Lipschitz-continuous, however it turns out to be false. For a recent survey on the topic see the work of Aleksandrov and Peller~\cite{aleksandrov2016OperatorLipschitzFun}. 
	
	Although the Lipschitz property cannot be saved directly, one may not lose more than some logarithmic factors. We invoke a nice result form the theory of operator functions, quantifying this claim. The following theorem is due to Farforovskaya and Nikolskaya~\cite[Theorem 10]{farforovskaya2008ModulusContOperatorFun}.
	\begin{theorem}[Robustness of eigenvalue transformation]\label{thm:ModulusDiff}
		Suppose that $f\colon [-1,1]\rightarrow \C$ is a function such that $\omega \colon [0,2]\rightarrow [0,\infty]$ is a modulus of continuity, i.e., for all $x,x'\in[-1,1]$ 
		$$ |f(x)-f(x')|\leq \omega(|x-x'|).$$
		Then for all $A,B$ Hermitian matrices such that $\nrm{A},\nrm{B}\leq 1$, we have that $$\nrm{f(A)-f(B)}\leq  4 \left[\ln\left(\frac{2}{\nrm{A-B}}+1\right)+1\right]^2\omega(\nrm{A-B}).$$
	\end{theorem}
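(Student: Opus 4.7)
The plan is to combine a Jackson-type polynomial approximation of $f$ with an operator-Lipschitz estimate for polynomials, stacked via a dyadic decomposition that produces the $\log^2$ factor. Set $\delta := \nrm{A-B}$ and $K := \lceil \log_2(2/\delta + 1)\rceil$. Jackson's theorem for a general modulus of continuity yields, for each $n\in\N$, a polynomial $p_n$ of degree $\leq n$ with $\nrm{f-p_n}_{[-1,1]}\leq c_0\,\omega(1/n)$, where $c_0$ is an absolute constant. Telescope
$$ f \;=\; p_1 \;+\; \sum_{k=1}^{K}(p_{2^k}-p_{2^{k-1}}) \;+\; (f-p_{2^K}), $$
and write $q_k := p_{2^k}-p_{2^{k-1}}$. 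Then $\deg q_k\leq 2^k$, $\nrm{q_k}_{[-1,1]}\leq 2c_0\,\omega(2^{-k+1})$, and $\nrm{f-p_{2^K}}_{[-1,1]}\leq c_0\,\omega(\delta)$; by contractivity of the Hermitian functional calculus the remainder and the affine piece $p_1$ together contribute at most $O(\omega(\delta))$ to $\nrm{f(A)-f(B)}$.

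Next, establish a classical operator-Lipschitz estimate for polynomials: for $q\in\R[x]$ of degree $\leq n$ with $\nrm{q}_{[-1,1]}\leq M$, and Hermitian $A,B$ with $\nrm{A},\nrm{B}\leq 1$,
$$ \nrm{q(A)-q(B)} \;\leq\; C\,n\log(n+1)\,M\,\nrm{A-B}. $$
One proves this by expanding $q=\sum_j \alpha_j T_j$ in the Chebyshev basis (using $|\alpha_j|\leq 2M$) and bounding each $\nrm{T_j(A)-T_j(B)}$ via the Birman--Solomyak double-operator-integral formula together with the Schur-multiplier norm of the divided difference $(T_j(x)-T_j(y))/(x-y)$; the $\log(n+1)$ factor arises from a harmonic-type sum on the Chebyshev coefficients. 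Applying the estimate to each block gives $\nrm{q_k(A)-q_k(B)}\leq C\cdot 2^k k\,\omega(2^{-k})\,\delta$.

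Finally, sum the contributions. By subadditivity $\omega(m t)\leq m\,\omega(t)$ of a modulus of continuity, taking $m=2^{K-k}$, $t=2^{-K}$ gives $\omega(2^{-k})\leq 2^{K-k}\omega(\delta)$, hence $2^k\omega(2^{-k})\delta \leq 2^K\delta\,\omega(\delta) = O(\omega(\delta))$ since $2^K\delta\leq 4+2\delta$. Therefore
$$ \sum_{k=1}^{K}\nrm{q_k(A)-q_k(B)} \;\leq\; C\,\omega(\delta)\sum_{k=1}^{K}k \;=\; O\!\left(K^2\,\omega(\delta)\right) \;=\; O\!\left(\log^2(2/\delta+1)\,\omega(\delta)\right), $$
and combining with the remainder bound yields the claimed inequality. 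The main obstacle is the operator-Lipschitz estimate for polynomials: the scalar inequality $|q(x)-q(y)|\leq \nrm{q'}_\infty|x-y|$ does \emph{not} lift to Hermitian operators, and the naive Markov bound $\nrm{q'}_\infty\leq n^2 M$ would lose a factor of $n$. Obtaining the sharp $n\log(n+1)$ dependence requires genuinely operator-theoretic machinery, and its logarithmic factor compounds with the logarithm from the dyadic summation to produce the $\log^2$ in the final bound; the explicit constant $4$ is recovered by tightening the arithmetic throughout.
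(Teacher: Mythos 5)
The paper does not prove this theorem: it is invoked directly as \cite[Theorem 10]{farforovskaya2008ModulusContOperatorFun}. So any correct proof you supplied would be new relative to the text. Unfortunately, your proposal contains a fatal gap at the central step, so it cannot stand as written.

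The claimed operator-Lipschitz estimate for algebraic polynomials of degree $n$ on $[-1,1]$,
\begin{equation*}
\nrm{q(A)-q(B)} \;\leq\; C\,n\log(n+1)\,\nrm{q}_{[-1,1]}\,\nrm{A-B},
\end{equation*}
is false, and it already fails for $1\times 1$ matrices. Take $q=T_n$, $A=(1)$ and $B=(1-\frac{1}{2n^2})$, so $\nrm{A-B}=\frac{1}{2n^2}$. Since $T_n(\cos\theta)=\cos(n\theta)$ and $\arccos(1-\tfrac{1}{2n^2})\approx\tfrac{1}{n}$, we get $T_n(1)-T_n(1-\tfrac{1}{2n^2})\approx 1-\cos(1)\approx 0.46$, a constant, while your bound would force it to be $\leq C\,\frac{\log(n+1)}{2n}\to 0$. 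The scalar Lipschitz constant of $T_n$ on $[-1,1]$ is exactly $\nrm{T_n'}_\infty=n^2$ (attained at $\pm 1$), and any operator bound has to dominate the scalar one, so $\Omega(n^2)$ is unavoidable. You were apparently thinking of the Bernstein-type estimate for \emph{trigonometric} polynomials acting on \emph{unitary} operators, where $\nrm{\phi(U)-\phi(V)}\lesssim n\nrm{\phi}_\infty\nrm{U-V}$ is true; for algebraic polynomials acting on Hermitian contractions the endpoints $\pm 1$ destroy that estimate, which is precisely the phenomenon addressed by the paper's own Lemma~\ref{lem:PolyNormDiff} (square-root loss in the perturbation) and Lemma~\ref{lem:PolyNormDiff2} (linear only when the spectrum stays away from $\pm 1$). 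With the correct $\Theta(n^2)$ Lipschitz constant your dyadic sum $\sum_{k}\nrm{q_k(A)-q_k(B)}$ becomes of order $\sum_{k\leq K}4^k\omega(2^{-k})\delta \lesssim 2^K\omega(\delta)\lesssim \omega(\delta)/\delta$, far weaker than the target $\log^2$ bound. The actual Farforovskaya--Nikolskaya proof is not built from Jackson approximation plus a polynomial operator-Lipschitz inequality; it relies on a piecewise-linear approximation of $f$ together with a spectral-projection counting argument (a double-operator-integral / Schur-multiplier estimate for a carefully tiled divided-difference kernel), and the $\log^2$ comes from that combinatorics rather than from stacking a $\log$ from each of two places as in your sketch. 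If you want to pursue a self-contained proof, you would need to replace your key lemma by that more delicate spectral-tiling argument; the polynomial route as stated cannot be repaired.
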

	Now we show how this general theorem implies a general robustness result for singular value transformation.
	
	\begin{corollary}[Robustness of singular value transformation 1]\label{cor:PolyNormDiffHerm}
		If $f\colon [-1,1]\rightarrow \C$ is an even or odd function such that $\omega \colon [0,2]\rightarrow [0,\infty]$ is a modulus of continuity, and $A,\tilde{A}\in\C^{\tilde{d}\times d}$ are matrices of operator norm at most $1$, then we have that 
		$$\nrm{f^{(SV)}(A)-f^{(SV)}(\tilde{A})}\leq4 \left[\ln\left(\frac{2}{\nrm{A-\tilde{A}}}+1\right)+1\right]^2 \omega\bigg(\nrm{A-\tilde{A}}\bigg).$$
	\end{corollary}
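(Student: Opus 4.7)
The plan is to reduce singular value transformation of the (possibly rectangular) matrices $A,\tilde A$ to eigenvalue transformation of Hermitian matrices, so that Theorem~\ref{thm:ModulusDiff} can be applied directly. The natural vehicle is the Hermitian dilation
$$
H_A := \begin{pmatrix} 0 & A \\ A^\dagger & 0 \end{pmatrix},\qquad H_{\tilde A} := \begin{pmatrix} 0 & \tilde A \\ \tilde A^\dagger & 0 \end{pmatrix},
$$
both of which are Hermitian with operator norm $\leq 1$, and satisfy the crucial identity $\nrm{H_A-H_{\tilde A}} = \nrm{A-\tilde A}$.

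Next, I would compute $f(H_A)$ explicitly in terms of the singular value decomposition $A=\sum_{i} \varsigma_i |\tilde\psi_i\rangle\langle\psi_i|$. A direct calculation shows that the eigenvectors of $H_A$ are the vectors $\tfrac{1}{\sqrt 2}(|0\rangle|\tilde\psi_i\rangle \pm |1\rangle|\psi_i\rangle)$ with eigenvalues $\pm\varsigma_i$, together with zero-eigenvectors of the form $|1\rangle|\psi_j\rangle$ for $|\psi_j\rangle \in \ker A$ and $|0\rangle|\tilde\psi_j\rangle$ for $|\tilde\psi_j\rangle\in\ker A^\dagger$. Plugging $f$ into the spectral decomposition and using that $f(-\varsigma_i) = -f(\varsigma_i)$ in the odd case and $f(-\varsigma_i) = f(\varsigma_i)$ in the even case, one finds
$$
f(H_A) = \begin{pmatrix} 0 & f^{(SV)}(A) \\ \big(f^{(SV)}(A)\big)^\dagger & 0 \end{pmatrix}\ \ (\text{odd } f),\qquad
f(H_A) = \begin{pmatrix} \ast & 0 \\ 0 & f^{(SV)}(A) \end{pmatrix}\ \ (\text{even } f),
$$
where in the even case the top-left block is the analogous ``left'' singular value transform $\sum_{i}f(\varsigma_i)|\tilde\psi_i\rangle\langle\tilde\psi_i|$ (with the convention that zero singular values contribute $f(0)$ on the relevant null spaces, consistent with Definition~\ref{def:PolySVTrans}).

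Now applying Theorem~\ref{thm:ModulusDiff} to $H_A$ and $H_{\tilde A}$ (valid since their spectra lie in $[-1,1]$ and $\omega$ is a modulus of continuity for $f$ on $[-1,1]$) yields
$$
\nrm{f(H_A)-f(H_{\tilde A})} \leq 4\!\left[\ln\!\left(\tfrac{2}{\nrm{A-\tilde A}}+1\right)+1\right]^{2}\!\omega\bigl(\nrm{A-\tilde A}\bigr).
$$
To finish, I would observe that $f^{(SV)}(A)-f^{(SV)}(\tilde A)$ appears as a submatrix of $f(H_A)-f(H_{\tilde A})$, and that the operator norm of a submatrix never exceeds the operator norm of the full matrix (for the off-diagonal/odd case the two norms are in fact equal). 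This gives exactly the claimed inequality.

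The only real subtlety is the bookkeeping around zero singular values in the even case and checking that the null-space contributions in $f(H_A)$ don't contaminate the bottom-right block; once the block form above is established, the rest is a direct invocation of Theorem~\ref{thm:ModulusDiff}. No additional analytic machinery is needed beyond the dilation trick.
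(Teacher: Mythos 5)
Your proposal is correct and follows essentially the same route as the paper: form the Hermitian dilation $\overline{A}=\begin{pmatrix}0 & A\\ A^\dagger & 0\end{pmatrix}$, verify the block form of $f(\overline{A})$ in the even and odd cases, apply Theorem~\ref{thm:ModulusDiff}, and extract the submatrix. (One small slip: for complex-valued odd $f$ the off-diagonal block is $f^{(SV)}(A^\dagger)$, not $\bigl(f^{(SV)}(A)\bigr)^\dagger$, but since you only read off the top-right block this does not affect the argument.)
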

	\begin{proof}
		Let us assume that $f$ is an even function and that $\tilde{d}\leq d$. Then, using singular value decomposition, we can rewrite $A$ as
		\begin{equation*}
			A=W
			\begin{bmatrix}
				\Sigma & 0
			\end{bmatrix}
			V^\dagger,
		\end{equation*}
		where $W\in\mathbb{C}^{\tilde{d}\times\tilde{d}}$, $V\in\mathbb{C}^{d\times d}$ are unitaries and $\Sigma\in\mathbb{R}^{\tilde{d}\times\tilde{d}}$ is a diagonal matrix with nonnegative diagonal entries. Let $\overline{A}:=\begin{bmatrix}
		0 & A\\
		A^\dagger & 0
		\end{bmatrix}\in\C^{(\tilde{d}+d)\times(\tilde{d}+d)}$ be the Hermitian matrix obtained from $A$. We claim that
		\begin{equation}\label{eq:HermitianEven}
		f(\overline{A})=
		\begin{bmatrix}
		f^{(SV)}(A^\dagger) & 0\\
		0 & f^{(SV)}(A)
		\end{bmatrix}.
		\end{equation}
		
		To prove this claim, first note that
		\begin{equation*}
			\overline{A}=
			\begin{bmatrix}
			0 & A\\
			A^\dagger & 0
			\end{bmatrix}
			=\begin{bmatrix}
			0 & W\begin{bmatrix}	\Sigma & 0	\end{bmatrix}	V^\dagger\\
			V\begin{bmatrix}	\Sigma \\ 0	\end{bmatrix}	W^\dagger & \begin{bmatrix}		0 & 0\\		0 & 0	\end{bmatrix}
			\end{bmatrix}
			=\begin{bmatrix}
			W & 0\\
			0 & V
			\end{bmatrix}
			\begin{bmatrix}
			0 & \Sigma & 0\\
			\Sigma & 0 & 0\\
			0 & 0 & 0
			\end{bmatrix}
			\begin{bmatrix}
			W^\dagger & 0\\
			0 & V^\dagger
			\end{bmatrix}
		\end{equation*}
		and that
		\begin{equation*}
			\begin{bmatrix}
				0 & \Sigma\\
				\Sigma & 0
			\end{bmatrix}
			=\frac{1}{\sqrt{2}}
			\begin{bmatrix}
			I & I\\
			I & -I
			\end{bmatrix}
			\begin{bmatrix}
			\Sigma & 0\\
			0 & -\Sigma
			\end{bmatrix}
			\frac{1}{\sqrt{2}}
			\begin{bmatrix}
			I & I\\
			I & -I
			\end{bmatrix}.
		\end{equation*}
		Therefore, if we denote
		\begin{equation*}
			U=\begin{bmatrix}
			W & 0\\
			0 & V
			\end{bmatrix}
			\begin{bmatrix}
			\frac{1}{\sqrt{2}}	\begin{bmatrix}	I & I\\	I & -I	\end{bmatrix} & \begin{array}{c} 0 \\ 0	\end{array}\\
			\begin{array}{c c}	0 & 0	\end{array} & I
			\end{bmatrix},
		\end{equation*}
		we get
		\begin{equation*}
			\overline{A}=U
			\begin{bmatrix}
			\Sigma & 0 & 0\\
			0 & -\Sigma & 0\\
			0 & 0 & 0
			\end{bmatrix}
			U^\dagger,
		\end{equation*}
		which implies that
		\begin{equation*}
		\begin{aligned}
			f(\overline{A})=&U
			\begin{bmatrix}
			f(\Sigma) & 0 & 0\\
			0 & f(-\Sigma) & 0\\
			0 & 0 & f(0)I
			\end{bmatrix}
			U^\dagger
			=U
			\begin{bmatrix}
			f(\Sigma) & 0 & 0\\
			0 & f(\Sigma) & 0\\
			0 & 0 & f(0)I
			\end{bmatrix}
			U^\dagger\\
			=&\begin{bmatrix}
			Wf(\Sigma)W^\dagger & \begin{array}{c c}	0 & 0	\end{array}\\
			\begin{array}{c} 0 \\ 0	\end{array} & V \begin{bmatrix}		f(\Sigma) & 0\\		0 & f(0)I		\end{bmatrix} V^\dagger
			\end{bmatrix}
			=\begin{bmatrix}
			f^{(SV)}(A^\dagger) & 0\\
			0 & f^{(SV)}(A)
			\end{bmatrix}.
		\end{aligned}
		\end{equation*}
        
		Thus, using Theorem~\ref{thm:ModulusDiff} we get that
		\begin{equation*}
		\begin{aligned}
		\nrm{f^{(SV)}(A)-f^{(SV)}(\tilde{A})}&\leq\nrm{f(\overline{A})-f(\overline{\tilde{A}})}\\
		&\leq 4 \left[\ln\left(\frac{2}{\nrm{\overline{A}-\overline{\tilde{A}}}}+1\right)+1\right]^2 \omega\bigg(\nrm{\overline{A}-\overline{\tilde{A}}}\bigg)\\
		&=4 \left[\ln\left(\frac{2}{\nrm{A-\tilde{A}}}+1\right)+1\right]^2 \omega\bigg(\nrm{A-\tilde{A}}\bigg),
		\end{aligned}
		\end{equation*}
		which completes the proof for the case where $f$ is an even function and that $\tilde{d}\leq d$. The case $\tilde{d}\ge d$ can be handled by symmetry. Finally, the remaining case where $f$ is odd can be handled similarly by observing that
		\begin{equation}\label{eq:HermitianOdd}
		f(\overline{A})=
		\begin{bmatrix}
		0 & f^{(SV)}(A)\\
		f^{(SV)}(A^\dagger) & 0
		\end{bmatrix}.
		\end{equation}
	\end{proof}
	
	We can also prove robustness results by bootstrapping our exact (non-robust) results, enabling us to remove the $\log$ factor from the above corollary under certain circumstances. We study two cases. First we make no extra assumptions, and establish error bounds that scale with the square root of the initial error. Then we improve the dependence to linear under the assumption that the singular values are bounded away from $1$ in absolute value.


\begin{lemma}[Robustness of singular value transformation 2]\label{lem:PolyNormDiff}
	If $ P\in\C[x]$ is a degree-$n$ polynomial satisfying the requirements of Corollary~\ref{cor:refAchievableP},
	moreover $A,\tilde{A}\in\C^{\tilde{d}\times d}$ are matrices of operator norm at most $1$, then we have\footnote{Let us do a sanity check for $d=\tilde{d}=1$. For large $d$ we have that $T_d(1)-T_d(1-\frac{1}{2d^2})\approx1-\cos(1)\approx 0.46$, whereas our upper bound gives $2\sqrt{2}$, showing that the upper bound is tight up to a constant factor, for arbitrary large $d$ and for arbitrary small $\eps$. (However, the joint dependence on $d$ and $\eps$ might not be optimal.)} that 
	$$\nrm{P^{(SV)}(A)-P^{(SV)}(\tilde{A})}\leq 4n\sqrt{\nrm{A-\tilde{A}}}.$$
\end{lemma}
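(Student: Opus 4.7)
The plan is to realize both $A$ and $\tilde A$ as the same projected block of two unitary dilations on a common Hilbert space, invoke Theorem~\ref{thm:singValTransformation} to rewrite both $P^{(SV)}(A)$ and $P^{(SV)}(\tilde A)$ as projections of alternating phase modulation sequences $U_\Phi$ and $\tilde U_\Phi$ built from a common, $A$-independent template, and then propagate a quantitative comparison between $U$ and $\tilde U$ through the $n$ uses of the unitary inside $U_\Phi$. After zero-padding we may assume that $A,\tilde A$ are square contractions of the same size $N$, since padding with zero rows/columns preserves $\nrm{A-\tilde A}$ and every singular value equal to $0$ is sent by $P$ to $P(0)$, which cancels in the difference for odd $P$ and is the same constant for even $P$.

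For the dilation I will use the Halmos block
$$U:=\left[\begin{array}{cc} A & \sqrt{I-AA^\dagger}\\ \sqrt{I-A^\dagger A} & -A^\dagger \end{array}\right],$$
and analogously $\tilde U$; unitarity follows from the intertwining identity $A\sqrt{I-A^\dagger A}=\sqrt{I-AA^\dagger}\,A$, which can be verified by polynomial approximation of the square root on the spectrum of $A^\dagger A$. Taking $\Pi=\widetilde\Pi$ to be the projector onto the first $N$ coordinates, both $U$ and $\tilde U$ live on the same ambient space with the same $\Pi,\widetilde\Pi$ and satisfy $\widetilde\Pi U\Pi=A$, $\widetilde\Pi\tilde U\Pi=\tilde A$.

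To bound $\nrm{U-\tilde U}$ I split $U-\tilde U$ into its diagonal part (two blocks of norm $\nrm{A-\tilde A}$ each, giving a block-diagonal term of norm $\nrm{A-\tilde A}$) and its anti-diagonal part. For the latter I invoke the operator Hölder-type inequality $\nrm{\sqrt M-\sqrt{\tilde M}}\le\sqrt{\nrm{M-\tilde M}}$ valid for positive semi-definite $M,\tilde M$ (a standard consequence of operator monotonicity of the square root), together with the elementary estimate $\nrm{AA^\dagger-\tilde A\tilde A^\dagger}\le(\nrm A+\nrm{\tilde A})\nrm{A-\tilde A}\le 2\nrm{A-\tilde A}$, and likewise for $A^\dagger A$. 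This yields $\nrm{\sqrt{I-AA^\dagger}-\sqrt{I-\tilde A\tilde A^\dagger}}\le\sqrt{2\nrm{A-\tilde A}}$, so combining with the diagonal part and using $\nrm{A-\tilde A}\le 2$ to absorb the linear term into the square-root one gives $\nrm{U-\tilde U}\le\nrm{A-\tilde A}+\sqrt{2\nrm{A-\tilde A}}\le 2\sqrt 2\sqrt{\nrm{A-\tilde A}}$.

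Finally, by Definition~\ref{def:phaseSeq} the alternating phase modulation sequence $U_\Phi$ is a product of $n$ factors of the form $U^{\pm 1}$ interleaved with $n$ unitaries of the form $e^{i\phi_j(2\widetilde\Pi-I)}$ or $e^{i\phi_j(2\Pi-I)}$ that do not depend on $A$. Swapping one $U$-factor at a time and using that pre- and post-multiplication by unitaries preserves the operator norm gives the telescoping bound $\nrm{U_\Phi-\tilde U_\Phi}\le n\nrm{U-\tilde U}$. Together with Theorem~\ref{thm:singValTransformation} and the contractivity of $\Pi$ and $\widetilde\Pi$, this yields $\nrm{P^{(SV)}(A)-P^{(SV)}(\tilde A)}\le\nrm{U_\Phi-\tilde U_\Phi}\le 2\sqrt 2\,n\sqrt{\nrm{A-\tilde A}}\le 4n\sqrt{\nrm{A-\tilde A}}$. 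The main obstacle — and the reason for the $\sqrt{\cdot}$ rather than a linear rate — is precisely the non-Lipschitz behaviour of the operator square root appearing in the dilation; any attempt to recover a Lipschitz bound by this route runs into the same obstruction and, as the next lemma in the paper presumably shows, requires extra structure such as a spectral gap bounding the singular values away from $1$.
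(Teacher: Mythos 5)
Your proof is correct, and it takes a genuinely different route from the paper's. Let me first confirm the key steps and then compare.

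The Halmos block $U=\bigl[\begin{smallmatrix}A & \sqrt{I-AA^\dagger}\\ \sqrt{I-A^\dagger A} & -A^\dagger\end{smallmatrix}\bigr]$ is indeed unitary (the off-diagonal products vanish by $A\sqrt{I-A^\dagger A}=\sqrt{I-AA^\dagger}A$), and with $\Pi=\widetilde\Pi$ projecting onto the first block you get $\widetilde\Pi U\Pi=A$ as a projected unitary encoding. The central estimate is legitimate: splitting $U-\tilde U$ into a block-diagonal part of norm $\nrm{A-\tilde A}$ and a block-anti-diagonal part, the PSD square-root Hölder bound $\nrm{\sqrt M-\sqrt{\tilde M}}\le\sqrt{\nrm{M-\tilde M}}$ together with $\nrm{AA^\dagger-\tilde A\tilde A^\dagger}\le 2\nrm{A-\tilde A}$ gives $\nrm{U-\tilde U}\le\nrm{A-\tilde A}+\sqrt{2\nrm{A-\tilde A}}\le 2\sqrt2\sqrt{\nrm{A-\tilde A}}$, where the last step uses $\nrm{A-\tilde A}\le 2$. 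The telescoping bound $\nrm{U_\Phi-\tilde U_\Phi}\le n\nrm{U-\tilde U}$ then follows since $U_\Phi$ contains exactly $n$ occurrences of $U^{\pm1}$ interleaved with $A$-independent phase unitaries. The zero-padding argument is fine for both parities: odd $P$ sends the added zero singular values to $0$, and even $P$ sends them to the same constant $P(0)$ acting on the same extra coordinates for both $A_e$ and $\tilde A_e$, so the added block cancels in the difference.

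The paper's own proof does not use a Halmos dilation of $A$ and $\tilde A$ separately. Instead it builds a single unitary $U$ on a $4(d+\tilde d)$-dimensional space whose top-left $2\times 2$ block structure carries the zero-padded $A$ and the normalized difference $(\tilde A-A)/\nrm{\tilde A-A}$, and then conjugates by a two-level rotation $W$ with $\nrm{W-I}\le\sqrt{\eps}$ to obtain $\bar U=W^\dagger UW$ encoding $\tilde A/(1+\eps)$. This yields $\nrm{U-\bar U}\le 2\sqrt\eps$, hence $\nrm{P^{(SV)}(A)-P^{(SV)}(\tilde A/(1+\eps))}\le 2n\sqrt\eps$; a symmetric construction bounds $\nrm{P^{(SV)}(\tilde A)-P^{(SV)}(\tilde A/(1+\eps))}$ by another $2n\sqrt\eps$, and the triangle inequality gives $4n\sqrt\eps$. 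The trade-off is thus: your route obtains $2\sqrt2\,n\sqrt\eps$ directly with a slightly smaller dilation and no rescaled intermediate, but it leans on operator monotonicity of the square root (precisely the source of the $\sqrt{\cdot}$, as you observe); the paper's route avoids that operator-theoretic input entirely at the cost of the auxiliary rotation $W$, the detour through $\tilde A/(1+\eps)$, and an extra triangle-inequality step. Both land on the stated constant $4$.
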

\begin{proof}
	Let $\eps=\nrm{\tilde{A}-A}$, and let $B,\tilde{B}\in\C^{(d+\tilde{d})\times(d+\tilde{d})}$ be the matrices such that 	
	$$B:=\left[\begin{array}{cc} A & 0\\
	0 & 0\end{array}\right], \,\,\,
	\tilde{B}:=\left[\begin{array}{cc}\frac{\tilde{A}-A}{\eps} & 0\\
	0 & 0\end{array}\right],
	$$
	and let $U\in\C^{4(d+\tilde{d})\times4(d+\tilde{d})}$ be a unitary such that\footnote{We denote by $\,\cdot\,$ arbitrary matrix blocks and elements that are irrelevant for our presentation.}
	$$
		U= \left[\begin{array}{cccc} 
		B & 0 & . & . \\
		0 & \tilde{B} & . & . \\
		. & . & . & . \\
		. & . & . & . 
		\end{array}\right].
	$$
	Such $U$ must exist because $\nrm{B}\leq 1$ and $\nrm{\tilde{B}}\leq 1$.
	Let $\Pi$ be the orthogonal projector projecting to the first $d$ coordinates, and let $\widetilde{\Pi}$ be the orthogonal projector projecting to the first $\tilde{d}$ coordinates. Observe that $\widetilde{\Pi}U\Pi=A$. Let $W\in\C^{4(d+\tilde{d})\times4(d+\tilde{d})}$ be the unitary
	$$W:= \left[\begin{array}{cccc} 
	\sqrt{\frac{1}{1+\eps}}I & -\sqrt{\frac{\eps}{1+\eps}}I & 0 & 0 \\
	\sqrt{\frac{\eps}{1+\eps}}I & \sqrt{\frac{1}{1+\eps}}I & 0 & 0 \\
	0 & 0 & I & 0 \\
	0 & 0 & 0 & I 
	\end{array}\right].$$
	Let $\bar{U}:=W^\dagger U W$, and observe that $\widetilde{\Pi}\bar{U}\Pi=\tilde{A}/(1+\eps)$. 
	Also observe that 
	$$\nrm{W-I}=\sqrt{2-2/\sqrt{1+\eps}}\leq \sqrt{\eps},$$
	therefore $\nrm{U-\bar{U}}\leq 2\sqrt{\eps}$. Let $\Pi'=\widetilde{\Pi}$ if $n$ is odd, and let $\Pi'=\Pi$ for $n$ even. Let $\Phi$ be as in Corollary~\ref{cor:refAchievableP}, then
	Theorem~\ref{thm:singValTransformation} implies that 
	$$\nrm{P^{(SV)}\!(A)-P^{(SV)}\!(\tilde{A}/(1+\eps))}
	=\nrm{\Pi'U_{\Phi}\Pi-\Pi'\bar{U}_{\Phi}\Pi}
	\leq \nrm{U_{\Phi}-\bar{U}_{\Phi}}
	\leq n\nrm{U-\bar{U}}
	\leq
	2n\sqrt{\nrm{A-\tilde{A}}}.\!$$
	Let $B'\in\C^{(d+\tilde{d})\times(d+\tilde{d})}$ be the matrix such that 	
	$$B':=\left[\begin{array}{cc} \tilde{A} & 0\\
	0 & 0\end{array}\right],
	$$
	and let $U'\in\C^{4(d+\tilde{d})\times4(d+\tilde{d})}$ be a unitary such that
	$$
	U'= \left[\begin{array}{cccc} 
	B' & 0 & . & . \\
	0 & 0 & . & . \\
	. & . & . & . \\
	. & . & . & . 
	\end{array}\right].
	$$
	Observe that $\widetilde{\Pi}U'\Pi=\tilde{A}$, and $\bar{U}':=W^\dagger \tilde{V} W$ is such that $\widetilde{\Pi}\bar{U}'\Pi=\tilde{A}/(1+\eps)$. By the same argument as before we get that
	$$\nrm{P^{(SV)}(\tilde{A})-P^{(SV)}(\tilde{A}/(1+\eps))}\leq 2n\sqrt{\nrm{A-\tilde{A}}}.$$
	We can conclude using the triangle inequality.
\end{proof}

Now we establish another lemma which improves on the previous results for example in the case when the singular values are bounded away from $1$ in absolute value.

\begin{lemma}[Robustness of singular value transformation 3]\label{lem:PolyNormDiff2}
	If $ P\in\C[x]$ is a degree-$n$ polynomial satisfying the requirements of Corollary~\ref{cor:refAchievableP},
	moreover $A,\tilde{A}\in\C^{\tilde{d}\times d}$ are matrices of operator norm at most $1$, such that 
	$$\nrm{A-\tilde{A}}+\nrm{\frac{A+\tilde{A}}{2}}^2\leq 1,$$
	then we have that 
	$$\nrm{P^{(SV)}(A)-P^{(SV)}(\tilde{A})}\leq n\sqrt{\frac{2}{1-\nrm{\frac{A+\tilde{A}}{2}}^2}}\nrm{A-\tilde{A}}.$$
\end{lemma}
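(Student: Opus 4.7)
The plan is to mirror the proof of Lemma~\ref{lem:PolyNormDiff}: I would construct two unitaries $U_-$ and $U_+$ on a common extended Hilbert space together with common projectors $\Pi$ and $\widetilde{\Pi}$ such that $\widetilde{\Pi}U_-\Pi=A$ and $\widetilde{\Pi}U_+\Pi=\tilde{A}$, sharply bound $\|U_+-U_-\|$, and then invoke Theorem~\ref{thm:singValTransformation} with $\Phi$ obtained from Corollary~\ref{cor:refAchievableP} for $P$. Because the phased alternating sequence uses each of $U_\pm$ and $U_\pm^\dagger$ a total of $n$ times, we have $\|(U_+)_\Phi-(U_-)_\Phi\|\leq n\|U_+-U_-\|$; combined with $P^{(SV)}(A)=\Pi'(U_-)_\Phi\Pi$ and $P^{(SV)}(\tilde{A})=\Pi'(U_+)_\Phi\Pi$ (for $\Pi'\in\{\Pi,\widetilde{\Pi}\}$ according to the parity of $n$), this reduces the entire problem to showing $\|U_+-U_-\|\leq\sqrt{2/(1-m^2)}\cdot\|A-\tilde{A}\|$, where $m:=\|(A+\tilde{A})/2\|$.

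The gain over Lemma~\ref{lem:PolyNormDiff} (where the isotropic rescaling by $1/(1+\eps)$ only yielded a $\sqrt{\eps}$ bound) comes from isolating the common part $M:=(A+\tilde{A})/2$ from the perturbation $\Delta:=(\tilde{A}-A)/2$ and exploiting the headroom $\sqrt{1-m^2}$ left inside the unit ball by $M$. The hypothesis $m^2+2\|\Delta\|\leq 1$ guarantees that with $\alpha:=\sqrt{1-m^2}/\|\Delta\|$ we have $\alpha\|\Delta\|=\sqrt{1-m^2}$ and $\|(M,\alpha\Delta)\|^2\leq\|MM^\dagger+\alpha^2\Delta\Delta^\dagger\|\leq m^2+(1-m^2)=1$. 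I would therefore dilate the rectangular matrix $(M,\alpha\Delta)$ to a unitary $U_B$ on an extended space equipped with a $2$-dimensional ``input-column'' ancilla so that $U_B$ simultaneously block-encodes $M$ on the $\ket{0}$-column and $\alpha\Delta$ on the $\ket{1}$-column of its top-left block. Pre-composing $U_B$ with a small rotation $R(\beta_\pm)$ on this ancilla by the two angles with $\sin\beta_\pm:=\pm\|\Delta\|/\sqrt{1-m^2}$ then turns the effective top-left block into $\cos\beta\cdot M\pm\Delta$, and the two unitaries $U_\pm:=U_B\cdot(R(\beta_\pm)\otimes I)$ differ only in this ancilla rotation, so $\|U_+-U_-\|\leq 2|\sin\beta|$, which, after the triangle-inequality cleanup described below, will give the advertised prefactor $\sqrt{2/(1-m^2)}$.

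The harmless residual $(1-\cos\beta)M$ that separates $\widetilde{\Pi}U_\pm\Pi$ from the intended $A,\tilde{A}$ is of order $\beta^2 m=O(\|\Delta\|^2/(1-m^2))$, i.e., genuinely lower order in $\|\Delta\|$, and can be absorbed by a triangle-inequality step essentially identical to the final $\tilde{A}\mapsto\tilde{A}/(1+\eps)$ passage in Lemma~\ref{lem:PolyNormDiff}, losing at most the same prefactor $\sqrt{2/(1-m^2)}$ already present. The hard part will be executing the joint dilation $U_B$ cleanly---in particular, arranging the column-ancilla structure so that the rotation $R(\beta)\otimes I$ really does linearly mix the $M$ and $\alpha\Delta$ columns of the top-left block (rather than acting dually on the output side) and tracking constants so that the final prefactor arrives exactly at $\sqrt{2/(1-m^2)}$ under the sharp hypothesis $m^2+2\|\Delta\|\leq 1$. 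Once the dilation is in place, Theorem~\ref{thm:singValTransformation} closes the argument immediately.
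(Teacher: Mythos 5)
Your overall strategy---split into the common part $M=(A+\tilde A)/2$ and the perturbation $\Delta=(\tilde A-A)/2$, dilate them jointly into a single unitary, compare $U_\pm$ through a one-ancilla rotation, and close with Theorem~\ref{thm:singValTransformation} plus the $\nrm{(U_+)_\Phi-(U_-)_\Phi}\le n\nrm{U_+-U_-}$ estimate---is exactly the paper's strategy, and you correctly identify where the difficulty sits. But the specific dilation you propose has a gap that you underestimate, and the paper avoids it by building in one more degree of freedom.

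The problem is the residual $(1-\cos\beta)M$. In your construction, $U_B$ encodes the fixed pair $(M,\alpha\Delta)$, and the only knob left on the $2\times 2$ ancilla is a genuine rotation with $\cos^2\beta+\sin^2\beta=1$; to hit $\Delta$ exactly you must take $\sin\beta\,\alpha=1$, and then $\cos\beta<1$ is forced, so you end up encoding $\cos\beta\,M\pm\Delta$ rather than $M\pm\Delta$. You dismiss the mismatch as ``genuinely lower order in $\nrm\Delta$,'' which is true of its \emph{norm} ($\nrm{(1-\cos\beta)M}=O(\nrm\Delta^2/(1-m^2))$), but the robustness bounds you would invoke to absorb it are not linear. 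Using Lemma~\ref{lem:PolyNormDiff} on the two end-point corrections gives $4n\sqrt{(1-\cos\beta)m}\approx 2\sqrt{2}\,n\sqrt{m}\,\nrm\Delta/\sqrt{1-m^2}$ for each of $A$ and $\tilde A$, i.e.\ a contribution of the \emph{same} order $n\nrm\Delta/\sqrt{1-m^2}$ as the main term $n\nrm{U_+-U_-}\le 2n\nrm\Delta/\sqrt{1-m^2}$ (and, for $m$ near $1$, several times larger). There is only a $\sqrt 2$ of slack between the main term and the target $n\sqrt{2/(1-m^2)}\nrm{A-\tilde A}$, so the stated sharp constant cannot come out. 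A recursive use of the very lemma you are proving fares better in order but is circular and still does not land on the clean prefactor.

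The missing idea is to normalize \emph{both} directions and introduce a free scalar coupling the two scalings. The paper stores the unit-norm matrices $B\propto A+\tilde A$ and $\tilde B\propto A-\tilde A$ in the dilated unitary with coefficients $\sqrt{(x-1)/x}$ and $\sqrt{1/x}$ (whose squares sum to $1$ so the row is subnormalized), and then compensates in $W_\pm$ with coefficients $\pm\sqrt{x/(x-1)}\,\nrm{A+\tilde A}/2$ and $\pm\sqrt{x}\,\nrm{A-\tilde A}/2$. The product of matching coefficients is identically $1$, so $\widetilde\Pi\, U W_\pm\,\Pi$ equals $A$ and $\tilde A$ \emph{exactly}; there is no residual. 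The price is that unitarity of $W_\pm$ is now a nontrivial constraint, namely $\frac{x}{x-1}\nrm{A+\tilde A}^2/4+x\nrm{A-\tilde A}^2/4=1$, and the remaining freedom in $x$ is spent minimizing $\nrm{W_+-W_-}=\sqrt{x}\,\nrm{A-\tilde A}$. Solving the quadratic and bounding $x\le 8/(4-\nrm{A+\tilde A}^2)=2/(1-m^2)$ under the hypothesis $\nrm{A-\tilde A}+m^2\le 1$ gives precisely $n\sqrt{2/(1-m^2)}\,\nrm{A-\tilde A}$. In short: replace your one-parameter rotation by a two-parameter scaling-plus-rotation whose net effect is the identity on $(M,\Delta)$, and do the optimization on the resulting free parameter rather than on $\alpha$ alone.
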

\begin{proof}
	Let $B,\tilde{B}\in\C^{(d+\tilde{d})\times(d+\tilde{d})}$ be the matrices such that 	
	$$B:=\left[\begin{array}{cc} \frac{A+\tilde{A}}{\nrm{A+\tilde{A}}} & 0\\
	0 & 0\end{array}\right], \,\,\,
	\tilde{B}:=\left[\begin{array}{cc}\frac{A-\tilde{A}}{\nrm{A-\tilde{A}}} & 0\\
	0 & 0\end{array}\right].
	$$
	Let $x> 1$ and let $U\in\C^{4(d+\tilde{d})\times4(d+\tilde{d})}$ be a unitary such that
	$$
	U= \left[\begin{array}{cccc} 
	\sqrt{\frac{x-1}{x}} B & \sqrt{\frac{1}{x}} \tilde{B} & . & . \\
	. & . & . & . \\
	. & . & . & . \\
	. & . & . & . 
	\end{array}\right].
	$$
	Let $C:=\sqrt{\frac{x-1}{x}} B \oplus \sqrt{\frac{1}{x}}\tilde{B}$ be top-left block of $U$. It is easy to see that 
	$$\nrm{C}^2\leq \frac{x-1}{x}\nrm{B}^2 + \frac{1}{x}\nrm{\tilde{B}}^2
	=\frac{x-1}{x} + \frac{1}{x} = 1,$$
	 therefore a unitary $U$ must exist with $C$ being the top-left block.
	Suppose that
	\begin{equation}\label{eq:optimisedTangent}
	\frac{x}{x-1}\frac{\nrm{A+\tilde{A}}^2}{4}+x\frac{\nrm{A-\tilde{A}}^2}{4}=1.
	\end{equation} 
	Let $W_{\pm}\in\C^{4(d+\tilde{d})\times4(d+\tilde{d})}$ be the unitary
	$$W_{\pm}:= \left[\begin{array}{cccc} 
	\sqrt{\frac{x}{x-1}}\frac{\nrm{A+\tilde{A}}}{2}I & \mp\sqrt{x}\frac{\nrm{A-\tilde{A}}}{2}I & 0 & 0 \\
	\pm\sqrt{x}\frac{\nrm{A-\tilde{A}}}{2}I & \sqrt{\frac{x}{x-1}}\frac{\nrm{A+\tilde{A}}}{2}I & 0 & 0 \\
	0 & 0 & I & 0 \\
	0 & 0 & 0 & I 
	\end{array}\right].$$
	
	Let $\Pi$ be the orthogonal projector projecting to the first $d$ coordinates, and let $\widetilde{\Pi}$ be the orthogonal projector projecting to the first $\tilde{d}$ coordinates. Observe that $\widetilde{\Pi}UW_+\Pi=A$ and $\widetilde{\Pi}UW_-\Pi=\tilde{A}$.
	Also observe that $\nrm{W_+-W_-}=\sqrt{x}\nrm{A-\tilde{A}}$, thus $\nrm{UW_+-UW_-}=\sqrt{x}\nrm{A-\tilde{A}}$.

	Let $\eps:=\nrm{A-\tilde{A}}^2$ and let $\delta:=4-\nrm{A+\tilde{A}}^2$.
	We can rewrite \eqref{eq:optimisedTangent} as
	\begin{equation}\label{eq:optimisedTangent2}
	\frac{x}{x-1}\frac{4-\delta}{4}+x\frac{\eps}{4}=1,
	\end{equation} 	 
	which has a solution
	\begin{equation}\label{eq:optimisedTangent3}
	x=\frac{4}{\delta+\eps}\left(1+ \frac{ \left(1-\frac{8 \eps}{(\delta+\eps)^2}\right)-\sqrt{1-\frac{16\eps}{(\delta+\eps)^2}}}{\frac{8\eps}{(\delta+\eps)^2}}\right).
	\end{equation}
	Now let $y:=8\eps/(\delta+\eps)^2$, it is easy to see that for $y\leq\frac12$ we have that $\frac{(1-y)-\sqrt{1-2y}}{y}\leq 1$.
	It is also easy to see that if $\eps\leq \delta^2/16$, then $y\leq \frac12$. Thus for $\eps\leq \delta^2/16$ we get that $x\leq 8/(\delta+\eps)$, and therefore $\nrm{UW_+-UW_-}=\sqrt{8/(\delta+\eps)}\nrm{A-\tilde{A}}\leq\sqrt{8/\delta}\nrm{A-\tilde{A}}$.
	
	Now we proceed similarly to the proof of Lemma~\ref{lem:PolyNormDiff}. Let $\Pi'=\widetilde{\Pi}$ if $n$ is odd, and let $\Pi'=\Pi$ for $n$ even. Let $\Phi$ be as in Corollary~\ref{cor:refAchievableP} and let $U^{(\pm)}:=UW_\pm$, then
	Theorem~\ref{thm:singValTransformation} implies that 
	$$\nrm{P^{(SV)}(A)-P^{(SV)}(\tilde{A})}
	=\nrm{\Pi'U^+_{\Phi}\Pi-\Pi'U^-_{\Phi}\Pi}
	\leq \nrm{U^+_{\Phi}-U^-_{\Phi}}
	\leq n\nrm{U^+-U^-}
	=
	n\sqrt{\frac{8}{\delta}}\nrm{A-\tilde{A}}.$$
	Finally note that $\eps\leq \delta^2/16$ is equivalent to $4\sqrt{\eps}\leq\delta$, which by definition is equivalent to $$\nrm{A-\tilde{A}}+\nrm{\frac{A+\tilde{A}}{2}}^2\leq 1.$$
\end{proof}

\subsection{Singular vector transformation and singular value amplification}\label{subsec:SingularvecTrans} 

In this subsection we derive some corollaries of singular value transformation. We call the first corollary projected singular vector transformation, because it implements a unitary that transforms the right singular vectors to the left singular vectors above some singular value threshold. Then we show how to quickly derive advanced amplitude amplification results using this general technique. Finally, we develop a corollary called singular value amplification, which shows how to uniformly amplify the singular values of a matrix represented as a projected unitary.

First we define singular value threshold projectors which are slight modifications of the singular value projectors of Definition~\ref{def:sinValProj}.
\begin{definition}[Singular value threshold projectors]\label{def:sinValThrProj}
	Let $A= \widetilde{\Pi}U\Pi=W \Sigma V^\dagger$ be a singular value decomposition of a projected unitary. 
	For $S\subseteq \R$ we define $\Pi_{S}:=\Pi V \Sigma_{S}V^\dagger \Pi$, and similarly $\widetilde{\Pi}_{S}:=\widetilde{\Pi} W \Sigma_{S}W^\dagger \widetilde{\Pi}$. 
	For $\delta\in \R$ we define $\Pi_{\geq\delta}:=\Pi_{[\delta,\infty)} $, also we define $\Pi_{>\delta},\Pi_{\leq \delta}, \Pi_{<\delta}, \Pi_{=\delta}$  and $\widetilde{\Pi}_{>\delta},\widetilde{\Pi}_{\leq \delta}, \widetilde{\Pi}_{<\delta}, \widetilde{\Pi}_{=\delta}$ analogously.
\end{definition}

Then we invoke a result of Low and Chuang~\cite[Corollary 6]{low2017HamSimUnifAmp} about constructive polynomial approximations of the sign function -- the error of the optimal approximation, studied by Eremenko and Yuditskii~\cite{eremenko2006uniformApxSgn},  achieves similar scaling but is non-constructive. 

\begin{lemma}[Polynomial approximations of the sign function]\label{lemma:signApx}
	For all $\delta>0$ , $\eps\in(0,1/2)$ there exists an efficiently computable odd polynomial $P\in\R[x]$ of degree $n=\bigO{\frac{\log(1/\eps)}{\delta}}$, such that 
	\begin{itemize}
		\item for all $x\in[-2,2]\colon |P(x)|\leq 1$, and
		\item for all $x\in[-2,2]\setminus(-\delta,\delta)\colon |P(x)-\sign{x}|\leq \eps$.
	\end{itemize}
\end{lemma}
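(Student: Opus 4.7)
My plan is to approximate $\sign(x)$ by the smooth function $\erf{kx}$ for a suitable scale parameter $k$, and then approximate $\erf{kx}$ on $[-2,2]$ by a polynomial via a truncated Chebyshev expansion.

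First I would fix the smoothing scale. The standard tail estimate $|\erf{y} - \sign{y}| \leq e^{-y^2}/(\sqrt{\pi}\,y)$ for $y>0$ shows that taking $k := \lceil \sqrt{\log(2/\eps)}/\delta \rceil$ already yields $|\erf{kx} - \sign{x}| \leq \eps/2$ for every $x$ with $|x|\geq \delta$. Since $\erf{\cdot}$ is odd, entire, and bounded by $1$ on $\R$, the target function has the correct parity and is uniformly bounded on all of $[-2,2]$.

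Next I would build the polynomial. After rescaling $x\in[-2,2]$ to $y := x/2 \in[-1,1]$, the target becomes $g(y) := \erf{2ky}$. The Bernstein bound states that if $g$ is analytic in the Bernstein ellipse $E_\rho$ with $\rho = 1+\alpha$ and $\sup_{E_\rho}|g|\leq M$, then the degree-$n$ Chebyshev truncation of $g$ has uniform error at most $2M\rho^{-n}/(\rho-1)$. Using the identity $\erf{iv}=\tfrac{2i}{\sqrt{\pi}}\int_0^v e^{s^2}\,ds$ to control $g$ off the real axis, one finds $M\leq e^{Ck^2\alpha^2}$ for a universal constant $C>0$. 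Balancing $n\alpha$ against $Ck^2\alpha^2$ by choosing $\alpha \asymp n/k^2$ gives truncation error $e^{-\Omega(n^2/k^2)}$, and requiring this to be at most $\eps/4$ yields $n = O(k\sqrt{\log(1/\eps)}) = O(\log(1/\eps)/\delta)$, matching the claimed bound.

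Finally I would enforce parity and the sup-norm constraint. Because $\erf$ is odd and its Chebyshev expansion on a symmetric interval involves only odd-indexed (hence odd) Chebyshev polynomials, the truncated approximation $\tilde{P}$ is automatically an odd polynomial of degree at most $n$. Setting $P := \tilde{P}/(1+\eps/4)$ guarantees $|P(x)|\leq 1$ on $[-2,2]$, while the triangle inequality combines the three error contributions (smoothing, truncation, rescaling) into the $\eps$ bound for $|x|\geq \delta$. Efficient computability is immediate since the Chebyshev coefficients of $\erf{2ky}$ can be computed by numerical quadrature with polylogarithmic cost per coefficient. The principal technical obstacle is making the Bernstein-ellipse estimate $\sup_{E_\rho}|\erf{2kz}|\leq e^{Ck^2\alpha^2}$ fully rigorous and explicit: this inequality is what couples the Gaussian scale $k$ to the degree $n$ and ultimately drives the $O(\log(1/\eps)/\delta)$ scaling.
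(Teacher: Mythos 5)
The paper does not actually prove this lemma; it invokes it as Corollary~6 of Low and Chuang~\cite{low2017HamSimUnifAmp}. Your outline --- smooth the sign function by $\erf{kx}$ with $k\asymp\sqrt{\log(1/\eps)}/\delta$, then Chebyshev-truncate using a Bernstein-ellipse estimate --- is the natural first approach, but as written it proves a slightly weaker degree bound, and the discrepancy sits exactly in the step you yourself flagged as the ``principal technical obstacle.''

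The problem is that the Chebyshev tail bound is $2M\rho^{-n}/(\rho-1)$, and the $(\rho-1)^{-1}=\alpha^{-1}$ factor cannot be dropped. Tracking it through your balancing: the optimal ellipse parameter is $\alpha=\Theta(\delta)$, so the truncation condition becomes $M\rho^{-n}\leq\eps\alpha/8=\Theta(\eps\delta)$, which forces
$$n=\Theta\Bigl(k\sqrt{\log\tfrac{1}{\eps\delta}}\Bigr)=\Theta\Bigl(\tfrac{1}{\delta}\sqrt{\log\tfrac{1}{\eps}\,\log\tfrac{1}{\eps\delta}}\Bigr)=O\Bigl(\tfrac{1}{\delta}\log\tfrac{1}{\eps\delta}\Bigr).$$
This matches the claimed $O\bigl(\tfrac{1}{\delta}\log\tfrac{1}{\eps}\bigr)$ only when $\log(1/\delta)=O(\log(1/\eps))$; for $\delta\ll\eps$ your bound is strictly worse. (A commented-out draft of exactly this erf-plus-Bernstein argument sits in the paper's source, annotated as ``suboptimal'' for this same reason, with final degree $\bigO{\frac{1}{\delta}\log\frac{1}{\delta\eps}}$.)

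To get the stated bound uniformly over $\delta,\eps$ one needs a sharper truncation of the Gaussian than the Bernstein-ellipse estimate yields. Low and Chuang work with the explicit Chebyshev expansion
$$e^{-\kappa^2 y^2}=e^{-\kappa^2/2}\Bigl(I_0(\kappa^2/2)+2\sum_{j\geq1}(-1)^j I_j(\kappa^2/2)\,T_{2j}(y)\Bigr),$$
whose modified-Bessel coefficients decay \emph{super}-geometrically, like $(\kappa^2/4)^j/j!$ after normalization. Integrating term by term yields an odd polynomial of degree $O\bigl(\sqrt{(\kappa^2+\log(1/\eps))\log(1/\eps)}\bigr)$ that $\eps$-approximates $\erf{\kappa y}$ on $[-1,1]$ with no $\log(1/\delta)$ loss; substituting $\kappa=\Theta(\sqrt{\log(1/\eps)}/\delta)$ gives the claimed $O(\log(1/\eps)/\delta)$. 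Replacing your Bernstein step with this sharper $\erf{\cdot}$ approximation closes the gap; the remainder of your outline (oddness of $\erf{\cdot}$ for the parity claim, the rescaling to enforce the sup-norm constraint, and the triangle inequality collecting the three error sources) is correct as stated.
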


Now we are ready to prove our result about singular value transformation. Our singular vector transformation implements a unitary which maps a right singular vector having singular value at least $\delta$ to the corresponding left singular vector.

\begin{theorem}[Singular vector transformation]\label{thm:singularVecTrans}
	Let $U,\Pi, \widetilde{\Pi}$ be as in Theorem~\ref{thm:singValTransformation} and let $\delta>0$. Suppose that $\widetilde{\Pi}U\Pi=W \Sigma V^\dagger$ is a singular value decomposition. Then there is an $m= \bigO{\frac{\log(1/\eps)}{\delta}}$ and a $\Phi\in\R^m$ such that 
	$\nrm{\widetilde{\Pi}_{\geq\delta}U_\Phi \Pi_{\geq\delta}-\widetilde{\Pi}_{\geq\delta}(WV^\dagger) \Pi_{\geq\delta}}\leq \eps.$  Moreover, $U_\Phi$ can be implemented using a single ancilla qubit with $m$ uses of $U$ and $U^\dagger$, $m$ uses of C$_\Pi$NOT and $m$ uses of C$_{\widetilde{\Pi}}$NOT gates and $m$ single qubit gates.
\end{theorem}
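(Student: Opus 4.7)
The plan is to apply Theorem~\ref{thm:singValTransformation} with a carefully chosen odd polynomial $P$ that is $\eps$-close to the constant $1$ on $[\delta,1]$ and bounded by $1$ in absolute value on $[-1,1]$, and then verify that the resulting operator, sandwiched between $\widetilde{\Pi}_{\geq\delta}$ and $\Pi_{\geq\delta}$, approximates $WV^\dagger$ on the high-singular-value subspace.

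First invoke Lemma~\ref{lemma:signApx} to produce an odd real polynomial $P_\Re\in\R[x]$ of odd degree $n=\bigO{\log(1/\eps')/\delta}$ satisfying $|P_\Re(x)|\leq 1$ on $[-2,2]$ and $|P_\Re(x)-\sgn{x}|\leq \eps'$ for $x\in[-2,2]\setminus(-\delta,\delta)$. Since the singular values of $\widetilde{\Pi}U\Pi$ all lie in $[0,1]$ and $\sgn{x}=1$ on $[\delta,1]$, we get $|P_\Re(x)-1|\leq\eps'$ throughout the range of interest. Then use Corollary~\ref{cor:realP} to lift $P_\Re$ to a complex polynomial $P\in\C[x]$ satisfying the hypotheses of Corollary~\ref{cor:refAchievableP} with $\Re[P]=P_\Re$, together with a phase vector $\Phi\in\R^n$.

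Since $n$ is odd, Theorem~\ref{thm:singValTransformation} gives $\widetilde{\Pi}U_\Phi\Pi = P^{(SV)}(\widetilde{\Pi}U\Pi) = \sum_i P(\varsigma_i)\ketbra{\tilde\psi_i}{\psi_i}$. The threshold projectors are sub-projectors of $\widetilde{\Pi}$ and $\Pi$ that pick out exactly the singular vectors with $\varsigma_i\geq\delta$, so
\begin{equation*}
\widetilde{\Pi}_{\geq\delta}U_\Phi\Pi_{\geq\delta}-\widetilde{\Pi}_{\geq\delta}(WV^\dagger)\Pi_{\geq\delta}=\sum_{i:\varsigma_i\geq\delta}(P(\varsigma_i)-1)\ketbra{\tilde\psi_i}{\psi_i},
\end{equation*}
whose operator norm equals $\max_{i:\varsigma_i\geq\delta}|P(\varsigma_i)-1|$.

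The main subtlety is converting the real-part guarantee $|\Re[P(x)]-1|\leq\eps'$ into a bound on the full modulus $|P(x)-1|$. From $|P(x)|\leq 1$ one has $\Im[P(x)]^2\leq 1-\Re[P(x)]^2=(1-\Re[P(x)])(1+\Re[P(x)])$, whence $|P(x)-1|^2=(1-\Re[P(x)])^2+\Im[P(x)]^2\leq 2(1-\Re[P(x)])\leq 2\eps'$. Setting $\eps':=\eps^2/2$ preserves the degree $n=\bigO{\log(1/\eps)/\delta}$, and the implementation cost of $U_\Phi$ claimed in the theorem is then immediate from Lemma~\ref{lemma:implementingPhasedSeq}. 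The hard part is precisely this quadratic loss when transferring from $\Re[P]$ to $P$; it is absorbed by the logarithm, but it is also the reason we use the complex polynomial $P$ directly rather than routing through the larger real-polynomial circuit of Corollary~\ref{cor:matchingParity}.
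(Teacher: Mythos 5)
Your proof is correct and follows essentially the same route as the paper's: approximate the sign function with precision $\eps^2/2$ via Lemma~\ref{lemma:signApx}, lift $P_\Re$ to a complex $P$ via Corollary~\ref{cor:realP}, and apply Theorem~\ref{thm:singValTransformation} with Lemma~\ref{lemma:implementingPhasedSeq}. You usefully make explicit the step the paper leaves implicit — the estimate $|P(x)-1|^2 \leq 2\bigl(1-\Re[P(x)]\bigr)$ that converts the real-part guarantee into a bound on $|P(x)-1|$, which is precisely why the paper takes precision $\eps^2/2$ rather than $\eps$.
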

\begin{proof}
	By Lemma~\ref{lemma:signApx} we can construct an odd polynomial $P_\Re\in\R[x]$ of degree $m=\bigO{\frac{\log(1/\eps^2)}{\delta}}$ that approximates the sign function with $\eps^2/2$ precision on the domain $[-1,1]\setminus(-\delta,\delta)$. By Corollary~\ref{cor:realP} we know that there exists a polynomial $P$ of the same degree as $P_\Re$ such that $\Re[P]=P_\Re$, moreover $P$ satisfies the conditions of Corollary~\ref{cor:refAchievableP}. Use singular value transformation Theorem~\ref{thm:singValTransformation} to construct a $\Phi\in\R^m$ such that $\widetilde{\Pi}U_\Phi\Pi=P^{(SV)}\left(\widetilde{\Pi}U\Pi\right)$ up to precision $\eps$ and observe that $\nrm{\widetilde{\Pi}_{\geq\delta}P^{(SV)}\left(\widetilde{\Pi}U\Pi\right) \Pi_{\geq\delta}-\widetilde{\Pi}_{\geq\delta}(WV^\dagger) \Pi_{\geq\delta}}\leq \eps.$
	Conclude the gate complexity using Lemma~\ref{lemma:implementingPhasedSeq}.
\end{proof}

	\noindent As an easy corollary we recover and improve upon fixed-point amplitude amplification results~\cite{hoyer2000ArbitraryPhasesAmpAmp,grover2005FixedPointSearch,aaronson2012QMoneyHiddSubgrp,yoder2014FixedPointSearch} by combining the advantages of prior art. On one hand, the query complexity of $\mathcal{O}(\frac{1}{\delta}\operatorname{poly}(1/\eps))$ by~\cite{aaronson2012QMoneyHiddSubgrp} is optimal with respect to target state overlap $\delta$, but converges slowly with respect to error $\eps$. On the other hand, the query complexity of  $\mathcal{O}(\frac{1}{\delta}\log{(1/\eps)})$ by ~\cite{yoder2014FixedPointSearch} is optimal and exhibits exponentially fast convergence with respect to the error, but it introduces an unknown phase on the amplified state. Our presented approach has the same optimal asymptotic scaling and also ensures that this phase error is $\epsilon$-close to $0$.

\begin{theorem}[Fixed-point amplitude amplification]\label{thm:fixedPointAmp}
	Let $U$ be a unitary and $\Pi$ be an orthogonal projector such that $a\ket{\psi_G}=\Pi U\ket{\psi_0}$, and $a> \delta>0$.
	There is a unitary circuit $\tilde{U}$ such that $\nrm{\ket{\psi_G}-\tilde{U}\ket{\psi_0}}\leq \eps$, which uses a single ancilla qubit and consists of $\bigO{\frac{\log(1/\eps)}{\delta}}$ $U,U^\dagger$, C$_\Pi$NOT, C$_{\ketbra{\psi_0}{\psi_0}}$NOT and $e^{i\phi\sigma_z}$ gates.
\end{theorem}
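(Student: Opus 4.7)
The plan is to view this as an instance of singular vector transformation (Theorem~\ref{thm:singularVecTrans}) applied to the rank-one projected unitary $A := \Pi U \ketbra{\psi_0}{\psi_0}$. Since $\Pi U \ket{\psi_0} = a\ket{\psi_G}$ with $a>\delta$, the matrix $A$ has exactly one non-zero singular value, namely $a$, with right singular vector $\ket{\psi_0}$ and left singular vector $\ket{\psi_G}$. Consequently the singular value threshold projectors of Definition~\ref{def:sinValThrProj} simplify to $\ketbra{\psi_0}{\psi_0}$ on the right and $\ketbra{\psi_G}{\psi_G}$ on the left, and the ``flip'' unitary $WV^\dagger$ satisfies $WV^\dagger \ket{\psi_0} = \ket{\psi_G}$.

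First I would invoke Theorem~\ref{thm:singularVecTrans} with threshold $\delta$ and a small error parameter $\eps'$ (taking $\widetilde{\Pi} = \Pi$ and the right projector equal to $\ketbra{\psi_0}{\psi_0}$), producing $\Phi\in\R^m$ with $m=\bigO{\log(1/\eps')/\delta}$ and
$$\nrm{\ketbra{\psi_G}{\psi_G}\, U_\Phi\, \ketbra{\psi_0}{\psi_0} - \ketbra{\psi_G}{\psi_G}\,(WV^\dagger)\,\ketbra{\psi_0}{\psi_0}}\leq \eps'.$$
Because $\ketbra{\psi_G}{\psi_G}\,(WV^\dagger)\,\ketbra{\psi_0}{\psi_0} = \ketbra{\psi_G}{\psi_0}$, this is equivalent to $|\alpha-1|\leq \eps'$ for $\alpha:=\braket{\psi_G}{U_\Phi\psi_0}$.

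The main (minor) obstacle is that this only controls the overlap with $\ket{\psi_G}$, not the full state $U_\Phi\ket{\psi_0}$. Unitarity of $U_\Phi$ gives the identity
$$\nrm{U_\Phi\ket{\psi_0} - \ket{\psi_G}}^2 = 2 - 2\Re(\alpha) \leq 2\eps',$$
so choosing $\eps' := \eps^2/2$ yields $\nrm{\tilde U\ket{\psi_0} - \ket{\psi_G}}\leq \eps$ for $\tilde U := U_\Phi$. This quadratic tightening only multiplies the degree by a constant, so $m = \bigO{\log(1/\eps)/\delta}$ still holds.

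The circuit structure is then read off directly from the last sentence of Theorem~\ref{thm:singularVecTrans} (via Lemma~\ref{lemma:implementingPhasedSeq}): $\tilde U$ uses a single ancilla qubit together with $m$ applications each of $U$ and $U^\dagger$, of $\mathrm{C}_\Pi\mathrm{NOT}$ (for the outcome projector $\Pi$), of $\mathrm{C}_{\ketbra{\psi_0}{\psi_0}}\mathrm{NOT}$, and of single-qubit $e^{i\phi\sigma_z}$ gates, matching the claimed resource count.
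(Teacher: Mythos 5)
Your proof takes essentially the same route as the paper: set $\widetilde{\Pi}:=\Pi$ and the right projector to $\ketbra{\psi_0}{\psi_0}$, note that $\Pi U\ketbra{\psi_0}{\psi_0}=a\ketbra{\psi_G}{\psi_0}$ is rank one with singular value $a>\delta$, and invoke Theorem~\ref{thm:singularVecTrans}. The difference is that you go one step further than the paper. The paper's proof stops at the operator-norm bound
\[
\nrm{\ketbra{\psi_G}{\psi_G}\,\tilde{U}\,\ketbra{\psi_0}{\psi_0}-\ketbra{\psi_G}{\psi_0}}\leq \eps,
\]
which only controls $|\alpha-1|$ for $\alpha=\braket{\psi_G}{\tilde{U}\psi_0}$, and this by itself gives only $\nrm{\ket{\psi_G}-\tilde{U}\ket{\psi_0}}\leq\sqrt{2\eps}$ rather than $\eps$. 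You correctly notice this mismatch and close it by using $\nrm{\tilde{U}\ket{\psi_0}-\ket{\psi_G}}^2=2-2\Re(\alpha)\leq 2\eps'$ (valid since $\tilde{U}=U_\Phi$ is unitary and both vectors are normalized) and then feeding $\eps'=\eps^2/2$ into Theorem~\ref{thm:singularVecTrans}. Since $\log(1/\eps')=\Theta(\log(1/\eps))$, the stated $\bigO{\log(1/\eps)/\delta}$ degree is unchanged. So your proof is correct, matches the paper's strategy, and in fact explicitly repairs a constant-factor gap the paper glosses over.
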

\begin{proof}
	Set $\widetilde{\Pi}:=\Pi$ and $\Pi':=\ketbra{\psi_0}{\psi_0}$  and observe that 
	$$ \widetilde{\Pi}U\Pi'=a\ketbra{\psi_G}{\psi_0}.$$
	Now use Theorem~\ref{thm:singularVecTrans} in order to get an algorithm $\tilde{U}$ that satisfies 
	$$\nrm{\ketbra{\psi_G}{\psi_G}\tilde{U}\ketbra{\psi_0}{\psi_0}-\ketbra{\psi_G}{\psi_0}}\leq \eps.$$
\end{proof}

Another easy to derive corollary of our machinery is robust oblivious amplitude amplification.\footnote{Note that we could also easily derive a fixed-point version of oblivious amplitude amplification based on Theorem~\ref{thm:singularVecTrans}, but we state the usual version instead for readability.}

\begin{theorem}[Robust oblivious amplitude amplification]\label{cor:oblivious}
	Let $n\in\mathbb{N}_+$ be odd, let $\eps\in\R_+$, let $U$ be a unitary, let $\widetilde{\Pi},\Pi$ be orthogonal projectors, and let $W:\img{\Pi}\mapsto\img{\widetilde{\Pi}}$ be an isometry, such that
	\begin{equation}\label{eq:apxUnifNorm2}
	\nrm{\sin\left(\frac{\pi}{2n}\right)W\ket{\psi}-\widetilde{\Pi} U \ket{\psi}}\leq \eps
	\end{equation} 
	for all $\ket{\psi}\in\img{\Pi}$.
	Then we can construct a unitary $\tilde{U}$ such that for all $\ket{\psi}\in \img{\Pi}$
	\begin{equation*}
		\nrm{W\ket{\psi}-\widetilde{\Pi}\tilde{U} \ket{\psi}}\leq 2n\eps,
	\end{equation*} 	
	which uses a single ancilla qubit, with $n$ uses of $U$ and $U^\dagger$, $n$ uses of C$_\Pi$NOT and $n$ uses of C$_{\widetilde{\Pi}}$NOT gates and $n$ single qubit gates.
\end{theorem}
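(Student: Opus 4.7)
The plan is to apply Theorem~\ref{thm:singValTransformation} with a carefully chosen Chebyshev polynomial that sends $\sin(\pi/(2n))$ to $1$, and then invoke the robustness Lemma~\ref{lem:PolyNormDiff2} to absorb the $\eps$-error in the hypothesis. Concretely, set $\sigma:=\sin(\pi/(2n))$ and define the degree-$n$ polynomial $P(x):=(-1)^{(n-1)/2} T_n(x)$. Since $n$ is odd, $P$ is odd, so its parity matches. The identity $T_n(\sin\theta)=(-1)^{(n-1)/2}\sin(n\theta)$ for odd $n$ gives $P(\sigma)=\sin(\pi/2)=1$, and $P$ satisfies the hypotheses of Corollary~\ref{cor:refAchievableP} because $T_n$ does (Lemma~\ref{lemma:chebyshev}) and overall sign does not affect any of those conditions.

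First I would use Theorem~\ref{thm:singValTransformation} to obtain $\Phi\in\R^n$ with $\widetilde{\Pi}U_\Phi\Pi=P^{(SV)}(\widetilde{\Pi}U\Pi)$, and define $\tilde U:=U_\Phi$; Lemma~\ref{lemma:implementingPhasedSeq} then immediately yields the claimed gate counts. For the error analysis I would let $A:=\widetilde{\Pi}U\Pi$ and $\tilde A:=\sigma \tilde W$, where $\tilde W:=W\Pi$ is $W$ extended by zero to all of $\H_U$. The hypothesis~\eqref{eq:apxUnifNorm2} restricted to $\img{\Pi}$ says exactly $\nrm{A-\tilde A}\leq\eps$. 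Since the non-zero singular values of $\tilde A$ all equal $\sigma$ and $P$ is odd, by Definition~\ref{def:PolySVTrans} we get $P^{(SV)}(\tilde A)=P(\sigma)\tilde W=\tilde W$.

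Then by Lemma~\ref{lem:PolyNormDiff2},
\begin{equation*}
\nrm{\widetilde{\Pi}\tilde U\Pi - \tilde W}=\nrm{P^{(SV)}(A)-P^{(SV)}(\tilde A)}\leq n\sqrt{\tfrac{2}{1-\nrm{(A+\tilde A)/2}^2}}\,\eps,
\end{equation*}
and the triangle inequality yields $\nrm{(A+\tilde A)/2}\leq\sigma+\eps/2$. For $n\geq 3$ we have $\sigma\leq 1/2$, and when $\eps\leq 1/n$ this gives $\sigma+\eps/2\leq 1/\sqrt{2}$, so the prefactor is at most $2$ and the bound becomes $2n\eps$. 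For larger $\eps$ the claim $\nrm{W\ket\psi-\widetilde{\Pi}\tilde U\ket\psi}\leq 2n\eps$ is trivial, since both vectors have norm at most $1$. The case $n=1$ is immediate ($P=T_1$ is the identity, so $\tilde U=U$ works directly).

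The main obstacle is the bookkeeping at the end: checking that $\sigma+\eps/2\leq 1/\sqrt{2}$ indeed holds across the non-trivial parameter range so that Lemma~\ref{lem:PolyNormDiff2} yields exactly $2n\eps$, and separately handling the small cases ($n=1$) and the vacuous regime ($\eps\geq 1/n$). Neither step is deep, but together they are where the clean constant $2n$ comes from, rather than a slightly worse factor.
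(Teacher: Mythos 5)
Your proposal is correct and follows essentially the same route as the paper: it applies singular value transformation with the degree-$n$ Chebyshev polynomial $T_n$ (sign-corrected so that $\sin(\pi/(2n))\mapsto 1$) and then invokes Lemma~\ref{lem:PolyNormDiff2} to control the error, verifying that $\nrm{(A+\tilde A)/2}$ is small enough to make the prefactor at most $2$. The only cosmetic differences are that you absorb the global sign into $P$ rather than correcting it afterward, and you use $\eps>1/n$ rather than $\eps>1/3$ as the threshold for the trivial regime.
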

\begin{proof}
	First we prove the $\eps=0$ case. We prove this case by reproducing the polynomials stemming from ordinary amplitude amplification. Let $T_n\in\R[x]$ be the degree-$n$ Chebyshev polynomial of the first kind. As discussed after Corollary~\ref{cor:refAchievableP} there is an easy to describe $\Phi\in\R^n$ which corresponds to $T_n$ in equation~\eqref{eq:IterRef}.

	Now observe that by \eqref{eq:apxUnifNorm2} we have that $\widetilde{\Pi} U\Pi=\sin\left(\frac{\pi}{2n}\right)W$. We can apply singular value transformation using $T_n$ to obtain $U_\Phi$ such that 
	\begin{equation*}
		\widetilde{\Pi} U_\Phi\Pi=T_n\left(\sin\left(\frac{\pi}{2n}\right)\right)W
		=T_n\left(\cos\left(\frac\pi2-\frac{\pi}{2n}\right)\right)W
		=\cos\left(\frac{n-1}{2}\pi\right)W
		=\pm W.
	\end{equation*}
	After correcting the global phase $\pm 1$ (which depends on the parity of $(n-1)/2$) we get $\tilde{U}:=\pm U_\Phi$ such that for all $\ket\psi\in\img\Pi$ we have $\tilde{U}\ket{\psi}=W\ket{\psi}$. The complexity statement follows from Lemma~\ref{lemma:implementingPhasedSeq}.
	
	In the $\eps\neq 0$ case we first handle some trivial cases. If $n=1$ or $\eps>\frac13$ we simply take $\tilde{U}:=U$.
	Otherwise if $n\geq 3$ and $\eps\in[0,\frac13]$
	the error bounds follow from Lemma~\ref{lem:PolyNormDiff2}, in the following way: 
	Let $A:=\sin\left(\frac{\pi}{2n}\right)W$ and let $\tilde{A}:=\widetilde{\Pi} U\Pi$, by \eqref{eq:apxUnifNorm2} we have that $\nrm{A-\tilde{A}}\leq  \eps$. Then
	$$\nrm{A+\tilde{A}}\leq \nrm{A} +\nrm{A }+\nrm{\tilde{A}-A}=2\sin\left(\frac{\pi}{2n}\right)+\eps\leq 2\sin\left(\frac\pi6\right)+\frac13=\frac{4}{3},$$ thus $\nrm{\frac{A+\tilde{A}}{2}}^2\leq \frac{4}{9}$ and $\nrm{A-\tilde{A}} + \nrm{\frac{A+\tilde{A}}{2}}^2\leq  \frac79<1$. This also implies that $\sqrt{\frac{2}{1-\nrm{\frac{A+\tilde{A}}{2}}^2}}\leq \sqrt{\frac{2}{1-\frac49}}=\sqrt{\frac{18}{5}}<2$, and therefore by Lemma~\ref{lem:PolyNormDiff2} we get that $\nrm{W-\widetilde{\Pi} \tilde{U}\Pi}\leq 2n \eps$.
\end{proof}

Now we turn to solving the linear singular value amplification problem. That is, given a matrix in a projected encoding form, construct a projected encoding of a matrix which have singular values that are $\gamma$ times larger than the original singular values. 

In order to proceed we first construct some polynomials similarly that can be used in combination with our singular value transformation results.

\begin{lemma}[Polynomial approximations of the rectangle function]\label{lemma:polyRect}
	Let $\delta',\eps'\in(0,\frac12)$ and $t\in[-1,1]$. There exist an even polynomial $P'\in\R[x]$ of degree $\bigO{\log(\frac1{\eps'})/\delta'}$, such that $|P'(x)|\leq 1$ for all $x\in[-1,1]$, and
		\begin{equation}\label{eq:polyRect}
		\left\{\begin{array}{rcl} P'(x)\in & [0,\eps'] &\text{for all }x\in[-1,-t-\delta']\cup[t+\delta',1]\text{, and}\\
		P'(x)\in & [1-\eps',1] &\text{for all }x\in[-t+\delta',t-\delta'].\end{array}\right.
		\end{equation}
\end{lemma}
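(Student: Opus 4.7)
The plan is to build the rectangle approximation as a product of two shifted sign approximations, exploiting the factorization $\mathbbm{1}_{[-t,t]}(x) = \mathbbm{1}_{x\le t}\cdot\mathbbm{1}_{x\ge -t}$. First I would apply Lemma~\ref{lemma:signApx} with threshold $\delta'$ and accuracy parameter $\eps'/2$ to obtain an odd polynomial $P_{\mathrm{sgn}}\in\R[x]$ of degree $m=\bigO{\log(1/\eps')/\delta'}$ such that $|P_{\mathrm{sgn}}(y)|\le 1$ on $[-2,2]$ and $|P_{\mathrm{sgn}}(y)-\sign{y}|\le \eps'/2$ for $y\in[-2,2]\setminus(-\delta',\delta')$. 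Since $x\in[-1,1]$ and $t\in[-1,1]$, both $x+t$ and $x-t$ stay safely inside $[-2,2]$, so the sign approximation may be evaluated freely.

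I would then set
$$P'(x):=\frac{(1+P_{\mathrm{sgn}}(x+t))(1-P_{\mathrm{sgn}}(x-t))}{4}.$$
Each of the two factors $(1\pm P_{\mathrm{sgn}}(\,\cdot\,))/2$ lies in $[0,1]$, so automatically $P'(x)\in[0,1]$, giving the required global bound $|P'(x)|\le 1$ as well as the non-negativity needed in the ``zero'' region. Evenness follows from $P_{\mathrm{sgn}}$ being odd, since $P'(-x) = \tfrac{1}{4}(1+P_{\mathrm{sgn}}(-x+t))(1-P_{\mathrm{sgn}}(-x-t)) = \tfrac{1}{4}(1-P_{\mathrm{sgn}}(x-t))(1+P_{\mathrm{sgn}}(x+t))=P'(x)$. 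The degree of $P'$ is $2m=\bigO{\log(1/\eps')/\delta'}$, matching the claim.

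To verify the approximation, I would split into the three regimes (assuming WLOG $t\ge 0$). For $x\in[-t+\delta',t-\delta']$ (nonempty only when $t\ge\delta'$), one has $x+t\ge\delta'$ and $x-t\le-\delta'$, so $P_{\mathrm{sgn}}(x+t)\ge 1-\eps'/2$ and $-P_{\mathrm{sgn}}(x-t)\ge 1-\eps'/2$; multiplying gives $P'(x)\ge(1-\eps'/2)^2\ge 1-\eps'$, while the overall upper bound $P'(x)\le 1$ is already established. For $x\in[t+\delta',1]$, the shift $x-t\ge\delta'$ forces $(1-P_{\mathrm{sgn}}(x-t))/2\le \eps'/4$, and since the other factor is at most $1$ we get $P'(x)\in[0,\eps'/4]\subseteq[0,\eps']$. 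The symmetric case $x\in[-1,-t-\delta']$ is handled via the first factor. When $t<\delta'$ the middle interval is empty and only the far-region analysis is needed.

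The main conceptual obstacle is guaranteeing $P'(x)\ge 0$ on the ``zero'' region: the naive ansatz $\tfrac{1}{2}(P_{\mathrm{sgn}}(x+t)-P_{\mathrm{sgn}}(x-t))$ is even and has the right magnitude, but may fluctuate slightly into negative values between the sign transitions and would only yield $P'(x)\in[-\eps',\eps']$ there. The product form avoids this entirely because each factor is a non-negative polynomial by construction, at the modest price of doubling the degree. Everything else is a direct calculation using the quantitative guarantees of Lemma~\ref{lemma:signApx}.
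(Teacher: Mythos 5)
Your proof is correct, but it follows a genuinely different route from the paper's. The paper also starts from the sign approximation of Lemma~\ref{lemma:signApx} at accuracy $\eps'/2$, but combines the two shifted copies \emph{additively} rather than multiplicatively: it takes $P'(x):=(1-\eps')\frac{P(x+t)+P(-x+t)}{2}+\eps'$. Since $P$ is odd, $P(-x+t)=-P(x-t)$, so this is an affine rescaling of precisely the ``naive difference ansatz'' you set aside; rather than avoiding the slight negativity you worried about, the paper contracts the range by $(1-\eps')$ and shifts it up, pinning the far-region plateau near $\eps'$ instead of near $0$, while keeping the degree equal to that of a single sign approximation. Your product form doubles the degree (still $\bigO{\log(1/\eps')/\delta'}$), but has the advantage that both factors lie in $[0,1]$ by construction, so the global bound $P'\in[0,1]$ and the far-region bound $P'\leq\eps'/4$ are immediate and the constant bookkeeping is cleaner. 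One minor slip in your middle-region step: the factors $\frac{1}{2}\left(1+P_{\mathrm{sgn}}(x+t)\right)$ and $\frac{1}{2}\left(1-P_{\mathrm{sgn}}(x-t)\right)$ are each at least $1-\eps'/4$ rather than $1-\eps'/2$, so the correct lower bound is $(1-\eps'/4)^2$; since this exceeds the $(1-\eps'/2)^2\geq 1-\eps'$ you wrote, the conclusion is unaffected.
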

\begin{proof}
	First let us take a real polynomial $P$ which $\frac{\eps'}{2}$-approximates the sign function on the interval $[-2,2]\setminus(-\delta',\delta')$, moreover for all $x\in[-2,2]\colon |P(x)|\leq 1$. Such a polynomial of degree $\bigO{\frac{1}{\delta'}\log\left(\frac1{\eps'}\right)}$ can be efficiently constructed by Lemma~\ref{lemma:signApx}.
	Now take the polynomial $$P'(x):=(1-\eps')\frac{P\left(x+t\right)+P\left(-x+t\right)}{2} + \eps'.$$ It is easy to see that by construction $P'(x)$ is an even polynomial of degree $\bigO{\frac1{\delta'}\log\left(\frac1{\eps'}\right)}$. Moreover $|P'(x)|\leq 1$ for all $x\in[-1,1]$ and \eqref{eq:polyRect} also holds.
\end{proof}

Now we prove our result about uniform singular value amplification, which is a common generalization of the results of Low and Chuang \cite[Theroems 2,8]{low2017HamSimUnifAmp}.

\begin{theorem}[Uniform singular value amplification]\label{thm:singularValAmp}
	Let $U,\Pi, \widetilde{\Pi}$ be as in Theorem~\ref{thm:singValTransformation}, let $\gamma>1$ and let $\delta,\eps\in (0,\frac12)$. Suppose that $\widetilde{\Pi}U\Pi=W \Sigma V^\dagger=\sum_{i}\varsigma_i\ket{w_i}\bra{v_i}$ is a singular value decomposition. Then there is an $m= \bigO{\frac{\gamma}{\delta}\log\left(\frac\gamma\eps\right)}$ and an efficiently computable $\Phi\in\R^m$ such that\footnote{Here we implicitly assumed that $U_\Phi$ is implemented as in Figure~\ref{fig:qubitization}, with the phase gates as in Figure~\ref{fig:Piphi} and the the $\ket+$ ancilla state corresponds to the ancilla qubit in Figure~\ref{fig:Piphi}.\label{foot:plusQubitLocation}}
	\begin{equation}\label{eq:linearAmp}
		\left(\bra+\otimes\widetilde{\Pi}_{\leq\frac{1-\delta}{\gamma}}\right)U_\Phi \left(\ket+\otimes\Pi_{\leq\frac{1-\delta}{\gamma}}\right)=\sum_{i\colon\varsigma_i\leq \frac{1-\delta}{\gamma} }\tilde{\varsigma}_i\ket{w_i}\bra{v_i} , \text{ where }\nrm{\frac{\tilde{\varsigma}_i}{\gamma\varsigma_i}-1}\leq \eps.
	\end{equation}
	Moreover, $U_\Phi$ can be implemented using a single ancilla qubit with $m$ uses of $U$ and $U^\dagger$, $m$ uses of C$_\Pi$NOT and $m$ uses of C$_{\widetilde{\Pi}}$NOT gates and $m$ single qubit gates.
\end{theorem}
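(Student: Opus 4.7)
The strategy is to reduce the problem to real quantum signal processing (Corollary~\ref{cor:matchingParity}) by constructing a single odd real polynomial $P \in \R[x]$ of degree $m = \bigO{\frac{\gamma}{\delta}\log(\frac{\gamma}{\eps})}$ with the following two properties: $|P(x)| \le 1$ for all $x \in [-1,1]$, and $|P(\varsigma)/(\gamma\varsigma) - 1| \le \eps$ for all $\varsigma \in [0,\frac{1-\delta}{\gamma}]$. Given such a $P$, Corollary~\ref{cor:matchingParity} (applied in the odd case) directly produces a $\Phi \in \R^m$ so that the LCU-style circuit sandwiched between $\ket{+}$ and $\bra{+}$ implements $P^{(SV)}(\widetilde{\Pi}U\Pi) = \sum_i P(\varsigma_i)\ket{w_i}\bra{v_i}$; restricting to the threshold projectors $\Pi_{\le (1-\delta)/\gamma}$ and $\widetilde{\Pi}_{\le (1-\delta)/\gamma}$ keeps only the sum over $i$ with $\varsigma_i \le \frac{1-\delta}{\gamma}$, which is exactly the claimed expression with $\tilde\varsigma_i := P(\varsigma_i)$. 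The gate-count statement then follows from Lemma~\ref{lemma:implementingPhasedSeq}.

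To build $P$, the plan is to take $P(x) := \gamma x \cdot R(x)$, where $R \in \R[x]$ is the even ``rectangle'' polynomial provided by Lemma~\ref{lemma:polyRect}. Specifically, I would invoke Lemma~\ref{lemma:polyRect} with parameters $t = \frac{1-3\delta/4}{\gamma}$, $\delta' = \frac{\delta}{4\gamma}$, and $\eps' = \min(\eps, 1/\gamma)/2$, so that $R$ has degree $\bigO{\frac{\gamma}{\delta}\log(\frac{\gamma}{\eps})}$, satisfies $R(x) \in [1-\eps,1]$ for $|x| \le \frac{1-\delta}{\gamma}$, and $R(x) \in [0, \eps'] \subseteq [0, \frac{1}{2\gamma}]$ for $|x| \ge \frac{1-\delta/2}{\gamma}$. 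Since $R$ is even and bounded by $1$ on $[-1,1]$, the product $P(x) = \gamma x R(x)$ is odd and has the required degree.

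The main verification is that $|P(x)| \le 1$ on $[-1,1]$, which I would split into three regions: (a) for $|x| \le \frac{1-\delta/2}{\gamma}$ we have $|\gamma x| \le 1 - \delta/2$, hence $|P(x)| \le |\gamma x| \cdot 1 \le 1$; (b) for $|x| \ge \frac{1-\delta/2}{\gamma}$ we have $|\gamma x R(x)| \le \gamma \cdot \eps' \le \frac{1}{2} \le 1$; and the two regions overlap enough to cover $[-1,1]$. For the approximation guarantee, for $\varsigma_i \in [0,\frac{1-\delta}{\gamma}]$ one has
\begin{equation*}
\left| \frac{P(\varsigma_i)}{\gamma\varsigma_i} - 1 \right| = |R(\varsigma_i) - 1| \le \eps,
\end{equation*}
which is exactly the target bound (and extends trivially to $\varsigma_i = 0$ by continuity / convention).

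The only step requiring real care is tuning the parameters of Lemma~\ref{lemma:polyRect}: the plateau width must exceed $\frac{1-\delta}{\gamma}$ so that the relative-error region is covered, the transition width $\delta'$ must shrink like $\delta/\gamma$ to keep $|\gamma x| \le 1-\delta/2$ throughout the plateau, and the outside tolerance $\eps'$ must be $\le 1/\gamma$ so that $|\gamma x R(x)| \le 1$ away from the plateau. Once these constraints are set as above, the degree becomes $\bigO{\log(1/\eps')/\delta'} = \bigO{\frac{\gamma}{\delta}\log(\frac{\gamma}{\eps})}$, matching the claim, and Corollary~\ref{cor:matchingParity} together with Lemma~\ref{lemma:implementingPhasedSeq} finishes the argument.
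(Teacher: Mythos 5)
Your proposal is correct and takes essentially the same route as the paper: construct an odd real polynomial of the form $P(x)=\gamma x\cdot R(x)$ where $R$ comes from Lemma~\ref{lemma:polyRect}, verify it is bounded by $1$ on $[-1,1]$ by a case split around $|x|\approx\frac{1}{\gamma}$, and then apply Corollary~\ref{cor:matchingParity} and Lemma~\ref{lemma:implementingPhasedSeq}. The only differences are cosmetic parameter choices in Lemma~\ref{lemma:polyRect} (the paper uses $t=\frac{1-\delta/2}{\gamma}$, $\delta'=\frac{\delta}{2\gamma}$, $\eps'=\frac{\eps}{\gamma}$), which yield the same asymptotic degree.
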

\begin{proof}
	Let us set in Lemma~\ref{lemma:polyRect} $t:=\frac{1-\delta/2}{\gamma}$, $\delta':=\frac{\delta}{2\gamma}$ and $\eps':=\frac\eps\gamma$ in order to get an even polynomial $P$ of degree $\bigO{\frac{\gamma}{\delta}\log\left(\frac\gamma\eps\right)}$ that is an $\frac\eps\gamma$-approximation of the rectangle function.
	Let $P_\Re(x):=\gamma\cdot x\cdot P(x)$, which is an odd polynomial of degree $m=\bigO{\frac{\gamma}{\delta}\log\left(\frac\gamma\eps\right)}$. It is easy to see that $P_\Re$ approximates the linear function $\gamma\cdot x$ with $\eps$-multiplicative-precision on the domain $\left[\frac{-1+\delta}{\gamma},\frac{1-\delta}{\gamma}\right]$, and observe that $|P_\Re(x)|\leq 1$ for all $x\in[-1,1]$, thereby it satisfies the requirements of Corollary~\ref{cor:matchingParity}. We use singular value transformation Corollary~\ref{cor:matchingParity} to construct a $\Phi\in\R^m$ such that
	$$(\bra+\otimes\widetilde{\Pi})U_\Phi\left(\ket+\otimes\Pi\right)=P_\Re^{(SV)}\!\left(\widetilde{\Pi}U\Pi\right)=\sum_{i}P_\Re(\sigma_i)\ket{w_i}\bra{v_i}$$
	which shows that equation \eqref{eq:linearAmp} is satisfied because $\frac{P_\Re(x)}{\gamma\cdot x}$ is $\eps$-close to $1$ on the domain $\left[\frac{-1+\delta}{\gamma},\frac{1-\delta}{\gamma}\right]$.
	We conclude the gate complexity using Lemma~\ref{lemma:implementingPhasedSeq}.
\end{proof}

Finally, note that if $\nrm{\Sigma}\leq\frac{1-\delta}{\gamma}$ in the above theorem then we get that $$\nrm{\gamma\widetilde{\Pi}U\Pi - (\bra+\otimes\widetilde{\Pi})U_\Phi\left(\ket+\otimes\Pi\right)}\leq \eps,$$
thereby this procedure gives an efficient way to magnify a projected unitary encoding.

\subsection{Singular value discrimination, quantum walks and the fast OR lemma}\label{subsec:singWalk}

First we show how to efficiently implement approximate singular value threshold projectors, which will be the main tool of this section.
\begin{theorem}[Implementing singular value threshold projectors]\label{thm:singularVecProjs}
	Let $U,\Pi, \widetilde{\Pi}$ be as in Theorem~\ref{thm:singValTransformation} and let $t,\delta>0$. Suppose that $\widetilde{\Pi}U\Pi=W \Sigma V^\dagger$ is a singular value decomposition. Then there is an $m= \bigO{\frac{\log(1/\eps)}{\delta}}$ and a $\Phi\in\R^m$ such that we have 
	$\nrm{\Pi_{\geq t+\delta}U_\Phi\Pi_{\geq t+\delta}-\Pi_{\geq t+\delta}}\leq \eps,$ and\textsuperscript{\ref{foot:plusQubitLocation}} $\nrm{\left(\bra{+}\otimes\Pi_{\leq t-\delta}\right)U_\Phi\left(\ket{+}\otimes\Pi_{\leq t-\delta}\right)}\leq \eps.$  Moreover, $U_\Phi$ can be implemented using a single ancilla qubit with $m$ uses of $U$ and $U^\dagger$, $m$ uses of C$_\Pi$NOT and $m$ uses of C$_{\widetilde{\Pi}}$NOT gates and $m$ single qubit gates.
\end{theorem}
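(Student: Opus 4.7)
My plan is to reduce the task to a single invocation of singular value transformation applied to a carefully chosen polynomial. I will apply Lemma~\ref{lemma:polyRect} with precision $\eps':=\eps^2/3$ to obtain an even polynomial $Q\in\R[x]$ of degree $m=\bigO{\log(1/\eps)/\delta}$ which $\eps'$-approximates the rectangle $\mathbf{1}_{[-t,t]}$, and then set $P_\Re(x):=1-Q(x)$. This yields an even degree-$m$ polynomial with $|P_\Re|\leq 1$ on $[-1,1]$ (as Lemma~\ref{lemma:polyRect} keeps $Q\in[0,1]$ on the whole interval when $t\ge\delta$), which is $\eps'$-close to $1$ on $[t+\delta,1]\cup[-1,-t-\delta]$ and $\eps'$-close to $0$ on $[-t+\delta,t-\delta]$. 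I then lift $P_\Re$ to a complex polynomial $P\in\C[x]$ via Corollary~\ref{cor:realP} satisfying the hypotheses of Corollary~\ref{cor:refAchievableP} with $\Re[P]=P_\Re$, and feed $P$ into Theorem~\ref{thm:singValTransformation} (even case) to produce $\Phi\in\R^m$ realizing $\Pi U_\Phi\Pi=P^{(SV)}(\widetilde{\Pi}U\Pi)$.

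The two bounds then follow by diagonalizing the polynomial transformation in the SVD basis, using $\Pi\,\Pi_{\geq t+\delta}=\Pi_{\geq t+\delta}$ and $\Pi\,\Pi_{\leq t-\delta}=\Pi_{\leq t-\delta}$. For the first inequality, Definition~\ref{def:PolySVTrans} gives $\Pi_{\geq t+\delta}U_\Phi\Pi_{\geq t+\delta}=\sum_{i\colon\varsigma_i\geq t+\delta}P(\varsigma_i)\ketbra{\psi_i}{\psi_i}$, so the operator-norm difference with $\Pi_{\geq t+\delta}$ is $\max_{i\colon\varsigma_i\geq t+\delta}|P(\varsigma_i)-1|$. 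The unitarity constraint $|P|\leq 1$ on $[-1,1]$ from Corollary~\ref{cor:refAchievableP} forces $\Im[P](\varsigma_i)^2\leq 1-P_\Re(\varsigma_i)^2\leq 2\eps'$, which combined with $|P_\Re(\varsigma_i)-1|\leq\eps'$ yields $|P(\varsigma_i)-1|^2\leq\eps'^2+2\eps'\leq 3\eps'=\eps^2$, as required.

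For the second inequality I switch to the real-polynomial description of Corollary~\ref{cor:matchingParity}, which under the ancilla convention of the footnote identifies $(\bra{+}\otimes\Pi)U_\Phi(\ket{+}\otimes\Pi)$ with $P_\Re^{(SV)}(\widetilde{\Pi}U\Pi)$. Sandwiching by $\Pi_{\leq t-\delta}$ collapses the expression to $\sum_{i\colon\varsigma_i\leq t-\delta}P_\Re(\varsigma_i)\ketbra{\psi_i}{\psi_i}$, whose operator norm is at most $\max_{i\colon\varsigma_i\leq t-\delta}|P_\Re(\varsigma_i)|\leq\eps'\leq\eps$. The stated gate and ancilla counts are immediate from Lemma~\ref{lemma:implementingPhasedSeq} applied to an alternating phase-modulation sequence of length $m$.

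The main subtlety I anticipate is the asymmetry between the two bounds: the first uses the raw complex polynomial $P$ (equivalently, the implementation of Lemma~\ref{lemma:implementingPhasedSeq} with its ancilla initialized to $\ket{0}$), while the second uses the $\ket{+}$-ancilla trick to symmetrize to $P_\Re$. Routing the first bound through $P$ rather than $P_\Re$ incurs a square-root loss via $\Im[P]^2\leq 1-P_\Re^2$, which I absorb by sharpening the rectangle-approximation precision in Lemma~\ref{lemma:polyRect} from $\eps$ to $\eps^2/3$; this only changes the degree by a constant factor, preserving $m=\bigO{\log(1/\eps)/\delta}$, and it allows a single $\Phi$ to satisfy both inequalities simultaneously.
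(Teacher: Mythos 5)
Your proof is correct and follows the same essential route as the paper's: approximate a rectangle-type function via Lemma~\ref{lemma:polyRect}, lift the real polynomial to a complex one via Corollary~\ref{cor:realP}, apply singular value transformation, and read the two bounds off the diagonalized form. Two details you handle with more care than the paper's own (quite terse) proof deserve mention. First, you explicitly form $P_\Re=1-Q$ rather than using $Q$ itself: the rectangle function of Lemma~\ref{lemma:polyRect} is $\approx 1$ on $[-t,t]$ and $\approx 0$ outside, which is the \emph{opposite} of what the theorem's two bounds require, and the paper's proof leaves this ``complementation'' implicit (it writes $P_\Re$ ``approximates the rectangle function''). Second, you correctly identify the asymmetry between the two bounds — the first is stated for the raw $U_\Phi$ (hence the complex $P$ with $|P|\le 1$) while the second uses the $\ket{+}$-sandwich (hence $P_\Re$) — and you trace the square-root loss $|\Im[P]|^2\le 1-P_\Re^2\le 2\eps'$ through to conclude $|P(\varsigma_i)-1|^2\le \eps'^2+2\eps'$, motivating $\eps'=\eps^2/3$; the paper uses $\eps^2/2$, which by the same arithmetic only gives $\approx\sqrt{5/4}\,\eps$ and is glossed over with ``up to precision $\eps$''. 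One small point to be explicit about: the inequality $0\le Q\le 1$ you rely on for $|P_\Re|\le 1$ does follow from the construction in Lemma~\ref{lemma:polyRect} in the transition region, but it uses (as you note parenthetically) the standing assumption $t\geq\delta$, which makes $x+t\geq\delta'$ there so that $P(x+t)\geq 1-\eps'/2$; for $t<\delta$ the statement is degenerate ($\Pi_{\leq t-\delta}=0$) and a one-sided polynomial suffices, so nothing is lost.
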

\begin{proof}
	By Lemma~\ref{lemma:polyRect} we can construct an even polynomial $P_\Re\in\R[x]$ of degree $m=\bigO{\frac{\log(1/\eps^2)}{\delta}}$ that approximates the rectangle function with $\eps^2/2$ precision on the domain $[-1,1]\setminus(-t-\delta,-t+\delta)\cup(t-\delta,t+\delta)$. By Corollary~\ref{cor:realP} we know that there exists a polynomial $P$ of the same degree as $P_\Re$ such that $\Re[P]=P_\Re$, moreover $P$ satisfies the conditions of Corollary~\ref{cor:refAchievableP}. Use singular value transformation Theorem~\ref{thm:singValTransformation} to construct a $\Phi\in\R^m$ such that $\widetilde{\Pi}U_\Phi\Pi=P^{(SV)}\left(\widetilde{\Pi}U\Pi\right)$ up to precision $\eps$ and observe that 
	$\nrm{\Pi_{\geq t+\delta}U_\Phi\Pi_{\geq t+\delta}-\Pi_{\geq t+\delta}}\leq \eps,$ and $\nrm{\left(\bra{+}\otimes\Pi_{\leq t-\delta}\right)U_\Phi\left(\ket{+}\otimes\Pi_{\leq t-\delta}\right)}\leq \eps.$
	Conclude the gate complexity using Lemma~\ref{lemma:implementingPhasedSeq}.
\end{proof}

Note that the above complexity can be improved up to quadratically in terms of scaling with $\delta$, when the threshold $t$ is close to $1$, see, e.g., Lemma~\ref{cor:DCpx}. For the error of the optimal polynomial approximation of the step function see the results of Eremenko and Yuditskii~\cite{eremenko2010PolyApxSgnTwoIntervals}.

We define the singular value discrimination problem as to find out whether a given quantum state has singular value at most $a$ or it at least $b$. As we indicated above whenever $a$ and $b$ are $\bigO{|a-b|}$ close to $1$ we can get a quadratic improvement. A simple way to achieve this quadratic improvement is to perform singular value projection on the complementary singular values rather than on the original ones, by replacing the matrix $\widetilde{\Pi}U\Pi$ by the complementary projection $(I-\widetilde{\Pi})U\Pi$. 
\begin{theorem}[Efficient singular value discrimination]\label{thm:singValDiscr}
	Let $0\leq a<b \leq 1$, and let $A=\widetilde{\Pi}U\Pi$ be a projected unitary encoding. Let $\ket{\psi}$ be a given unknown quantum state, with the promise that $\ket{\psi}$ is a right singular vector of $A$ with singular value at most $a$ or at least $b$. Then we can distinguish the two cases with error probability at most $\eps$ using singular value transformation of degree $$\bigO{\frac{1}{\max[b-a,\sqrt{1-a^2}-\sqrt{1-b^2}]}\log\left(\frac{1}{\eps}\right)}.$$
	Moreover, if $a=0$ or $b=1$ we can make the error one sided.
\end{theorem}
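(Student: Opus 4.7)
The plan is to realize an approximate singular value threshold projector around $t:=(a+b)/2$ via Theorem~\ref{thm:singularVecProjs} and use it as a two-outcome measurement on $\ket{\psi}$. Choosing margin $\delta:=(b-a)/2$ and internal polynomial accuracy $\bigO{\eps}$, Theorem~\ref{thm:singularVecProjs} produces a phase sequence $U_\Phi$ of degree $m=\bigO{\frac{1}{b-a}\log(1/\eps)}$ whose induced real polynomial $P_\Re$ (via Corollary~\ref{cor:matchingParity}) is $\bigO{\eps}$-close to $1$ on $[b,1]$ and to $0$ on $[0,a]$. I would prepare $\ket{+}\otimes\ket{\psi}$, run the controlled combination $\ketbra{0}{0}\otimes U_\Phi+\ketbra{1}{1}\otimes U_{-\Phi}$, and then measure the ancilla in the $\{\ket{+},\ket{-}\}$ basis together with the data register against the projector $\Pi$ (or $\widetilde{\Pi}$ in the odd-parity variant). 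The ``accept both'' probability is exactly $|P_\Re(\varsigma)|^2$, which lies within $\bigO{\eps}$ of $0$ or of $1$ in the two promise cases, so rescaling $\eps$ by a constant gives the claimed error bound.

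For the quadratic improvement when $a$ and $b$ are both close to $1$, I would instead run the same construction on the \emph{complementary} projected unitary encoding $A':=(I-\widetilde{\Pi})U\Pi$. By Lemma~\ref{lemma:singInvDec} each right singular vector $\ket{\psi_i}$ of $A$ with value $\varsigma_i$ is also a right singular vector of $A'$, but with value $\sqrt{1-\varsigma_i^2}$, since $U\ket{\psi_i}=\varsigma_i\ket{\tilde\psi_i}+\sqrt{1-\varsigma_i^2}\ket{\tilde\psi_i^\perp}$ with $\ket{\tilde\psi_i^\perp}\in\img{I-\widetilde{\Pi}}$. The promise then reads $\sqrt{1-\varsigma^2}\geq\sqrt{1-a^2}$ versus $\leq\sqrt{1-b^2}$, and running the above construction on $A'$ with threshold $(\sqrt{1-a^2}+\sqrt{1-b^2})/2$ costs $\bigO{\frac{1}{\sqrt{1-a^2}-\sqrt{1-b^2}}\log(1/\eps)}$ queries; selecting whichever of the two encodings is cheaper produces the $\max$ in the denominator. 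Note that the $\mathrm{C}_{I-\widetilde{\Pi}}\mathrm{NOT}$ gate reduces to $\mathrm{C}_{\widetilde{\Pi}}\mathrm{NOT}$ conjugated by $X$, so the implementation cost is the same.

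For the one-sided error clause, I would replace the even rectangle approximation by an \emph{odd} one in the relevant boundary case. When $a=0$: take the sign polynomial of Lemma~\ref{lemma:signApx}, suitably rescaled to approximate $1$ on $[b,1]$; being odd, it satisfies $P(0)=0$ exactly. Since Theorem~\ref{thm:singValTransformation} realizes odd-parity singular value transformations without any approximation slack at $0$, the acceptance probability at $\varsigma=0$ is exactly $|P(0)|^2=0$, so the entire error is confined to the $\varsigma\geq b$ branch. The case $b=1$ follows by the same trick applied to $A'$, where $\sqrt{1-\varsigma^2}=0$ exactly when $\varsigma=1$. The only place in the proof that needs real care is the bookkeeping around the complementary encoding: checking that $I-\widetilde{\Pi}$ inherits the projector axioms (immediate) and that the map $\varsigma\mapsto\sqrt{1-\varsigma^2}$ can be read off cleanly from the two-dimensional blocks in Lemma~\ref{lemma:singInvDec}; everything else is a direct invocation of the results of Sections~\ref{subsec:SingularvalTrans}--\ref{subsec:singWalk}.
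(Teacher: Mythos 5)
Your proposal is correct and follows essentially the same route as the paper: invoke Theorem~\ref{thm:singularVecProjs} at $t=(a+b)/2$, $\delta=(b-a)/2$ for the generic case, switch to the complementary encoding $A'=(I-\widetilde{\Pi})U\Pi$ to exploit the $\varsigma\mapsto\sqrt{1-\varsigma^2}$ map and obtain the second branch of the $\max$, and use an odd-degree transformation (which exactly preserves the zero singular value) to get one-sided error when $a=0$ or $b=1$. You are slightly more explicit than the paper in identifying the acceptance probability as exactly $|P_\Re(\varsigma)|^2$, which is a correct and clarifying observation.

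One small point: the paper closes the argument by checking that $a=0$ forces $b-a\geq\sqrt{1-a^2}-\sqrt{1-b^2}$ and $b=1$ forces the reverse inequality, so that each one-sided-error construction lives in the branch of the case distinction where it was introduced; your write-up would benefit from making this consistency check explicit, though it is a routine verification.
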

\begin{proof}
	Let us assume that $b-a\geq \sqrt{1-a^2}-\sqrt{1-b^2}$. First we apply an $\sqrt{\eps}$-approximate singular value projector on $\ket{\psi}$ using Theorem~\ref{thm:singularVecProjs}, with choosing $t:=\frac{a+b}{2}$ and $\delta:=\frac{b-a}{2}$, at the end measuring the projector $\ketbra{+}{+}\otimes \Pi$. If we find the state in the image of $\ketbra{+}{+}\otimes \Pi$ we conclude that the singular value is at least $b$, otherwise we conclude that it is at most $a$. The correctness and the complexity follows from Theorem~\ref{thm:singularVecProjs}. If $a=0$ then we make the error one-sided by using singular vector transformation Theorem~\ref{thm:singularVecTrans} with setting $\delta:=b$, and measuring $\widetilde{\Pi}$ at the end. Similarly as before if we find the state in the image of $\widetilde{\Pi}$ we conclude that the singular value is at lest $b$, otherwise we conclude that it is $0$. The error becomes one-sided because Theorem~\ref{thm:singularVecTrans} uses an odd-degree singular value transformation which always preserves $0$ singular values.
	
	The proof of the $b-a< \sqrt{1-a^2}-\sqrt{1-b^2}$ case works analogously just changing $\widetilde{\Pi}$ to $\Pi':=I-\widetilde{\Pi}$ in the proof, which leads to $A':=\Pi'U\Pi$. It is easy to see by Lemma~\ref{def:SVDInvSubspaces} that $\ket{\psi}$ is a singular vector of $A'$ with singular value at least $\sqrt{1-a^2}$ in the first case or with singular value at most $\sqrt{1-b^2}$ in the second case. Also if $b=1$ we can make the error one sided since then the corresponding singular value of $\ket{\psi}$ with respect to $A'$ is $0$.
	
	Finally note that if $a=0$, then $b-a=b\geq 1-\sqrt{1-b^2}=\sqrt{1-a^2}-\sqrt{1-b^2}$,
	and if $b=1$, then $b-a=1-a\leq \sqrt{1-a^2}=\sqrt{1-a^2}-\sqrt{1-b^2}$,
	therefore we covered all cases.
\end{proof}

The above result can also be used when the input state is promised to be a superposition of singular values, with the promised bounds. Also in order to distinguish the two cases with constant success probability it is enough if most of the amplitude is on singular vectors with singular vectors satisfying the promise.

\subsubsection{Relationship to quantum walks}

Now we show how to quickly derive the quadratic speed-ups of Markov chain  based search algorithms using our singular value transformation and discrimination results. Before doing so we introduce some definitions and notation for Markov Chains.

Let $\PM\in\R^{n\times n}$ be a time-independent Markov chain on discrete state space $X$ with $|X|=n$, which sends an element $x\in X$ to $y\in X$ with probability $p_{xy}$, thereby $\PM$ is a row-stochastic matrix. We say that $\PM$ is \emph{ergodic} if for a large enough $t\in \N$ all elements of $\PM^t$ are non-zero. For an ergodic $\PM$ there exists a unique stationary distribution $\pi$ such that $\pi \PM = \pi$, and we define the \emph{time-reversed} Markov chain as $\PM^*:=\diag{\pi}^{-1}\cdot\PM^T\cdot\diag{\pi}$. We say that $\PM$ is \emph{reversible} if it is ergodic and $\PM^*=\PM$. For an ergodic Markov chain $\PM$ we define the discriminant matrix $D(\PM)$ such that its $xy$ entry is $\sqrt{p_{xy}p^*_{yx}}$, where $p^*_{yx}$ stands for entries of the time-reversed chain. It is easy to see that $$D(\PM)=\diag{\pi}^{\frac12}\cdot\PM\cdot\diag{\pi}^{-\frac12}.$$
This form has several important consequences. First of all the spectrum of $\PM$ and $D(\PM)$ coincide, moreover the vector $\sqrt{\pi}$, that we get from $\pi$ by taking the square root element-wise, is a left eigenvector of $D(\PM)$ with eigenvalue $1$. Also from the definition $\sqrt{p_{xy}p^*_{yx}}$ of the $xy$ entry it follows that for reversible Markov chains $D(\PM)$ is a symmetric matrix, therefore its singular values and eigenvalues coincide after taking their absolute value.

In the literature~\cite{szegedy2004QMarkovChainSearch,magniez2006SearchQuantumWalk,krovi2010QWalkFindMarkedAnyGraph} quantum walk based search methods are usually analyzed with the help of this discriminant matrix. Here we directly use the discriminant matrix as opposed to the associated quantum walk, significantly simplifying the analysis. Before deriving our versions of the Markov chain speed-up results we introduce some definitions regarding sets of marked elements. 

For a set of marked element $M\subseteq X$, we denote by $D_M(\PM)$ the matrix that we get after setting to zero the rows and columns of $D(\PM)$ corresponding to the marked elements. For an ergodic Markov chain $\PM$ we define the hitting time $HT(\PM,M)$ as the expected number of step of the Markov chain before reaching the first marked element, if started from the stationary\footnote{Here we follow the convention of Szegedy~\cite[Equation (15)]{szegedy2004QMarkovChainSearch}, and define hitting time without conditioning on the stationary distribution to unmarked vertices.} distribution $\pi$. We denote the probability that an element is marked in the stationary distribution by $p_M:=\sum_{x\in M}\pi_x$. Now we invoke some results about the connection between the hitting time and the discriminant matrix, which are proven for example in \cite[Proposition 2]{krovi2010QWalkFindMarkedAnyGraph} and~\cite[Lemma 10]{gilyen2014MScThesis}.

\begin{lemma}[Relationship between the hitting time and the discriminant matrix]\label{lemma:hittingTime}
	Let $\PM$ be a reversible Markov chain and $M$ a set of marked elements. Let $(v_i,\lambda_i)$ be the eigenvector-eigenvalue pairs of $D_M(\PM)$, then 
	\begin{align}
		HT(\PM,M)=\sum_{i=1}^{n}\frac{|\left\langle v_i, \sqrt{\pi}\right\rangle|^2}{1-\lambda_i} - p_M, & & \text{ and }& & 
		\sum_{i=1}^{n}\frac{|\left\langle v_i, \sqrt{\pi}\right\rangle|^2}{1-|\lambda_i|}
		\leq 2\sum_{i=1}^{n}\frac{|\left\langle v_i, \sqrt{\pi}\right\rangle|^2}{1-\lambda_i}
	\end{align} 
\end{lemma}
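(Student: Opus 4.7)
I would split the two claims and treat them in sequence. For the hitting-time identity I begin with first-step analysis: write $h_x$ for the expected hitting time starting from $x$, with the convention $h_x=0$ when $x\in M$; the equations $h_x=1+\sum_y p_{xy}h_y$ for $x\notin M$ assemble into $h_U=\mathbf{1}_U+P_U h_U$, where $P_U$ is the substochastic matrix obtained by restricting $\PM$ to the unmarked states, so that $HT(\PM,M)=\pi_U^T h_U=\pi_U^T(I-P_U)^{-1}\mathbf{1}_U$. Reversibility yields the similarity $P_U=\diag{\pi_U}^{-1/2}\widetilde D\,\diag{\pi_U}^{1/2}$, where $\widetilde D$ is the restriction of $D(\PM)$ to the unmarked block, and substituting gives $HT(\PM,M)=\langle\sqrt{\pi_U},(I-\widetilde D)^{-1}\sqrt{\pi_U}\rangle$. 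Finally, the $n\times n$ matrix $D_M(\PM)$ decomposes as an orthogonal direct sum: the eigenpairs of $\widetilde D$ on the unmarked block, together with the eigenvalue $0$ on the marked block whose eigenspace is spanned by $\{e_x\}_{x\in M}$. The marked eigenvectors contribute $\sum_{x\in M}\pi_x=p_M$ to $\sum_i|\langle v_i,\sqrt\pi\rangle|^2/(1-\lambda_i)$ and the unmarked ones contribute $HT(\PM,M)$, which rearranges to the stated formula.

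For the second inequality, the natural starting point is the elementary pointwise bound
\[
\frac{1}{1-|\lambda|}=\frac{1+|\lambda|}{1-\lambda^2}\leq\frac{2}{1-\lambda^2}=\frac{1}{1-\lambda}+\frac{1}{1+\lambda},\qquad\lambda\in(-1,1),
\]
applied eigenvalue by eigenvalue. Denoting the two sums in the lemma by $S_1,S_2$ and setting $T:=\sum_i|\langle v_i,\sqrt\pi\rangle|^2/(1+\lambda_i)$, this yields $S_1\leq S_2+T$, so it suffices to prove $T\leq S_2$. Applying the same block decomposition to $T$ as in Part~1 (the marked block again contributes exactly $p_M$ since its eigenvalues are $0$), this reduces to
\[
S_2-T=\bigl\langle\sqrt{\pi_U},\bigl[(I-\widetilde D)^{-1}-(I+\widetilde D)^{-1}\bigr]\sqrt{\pi_U}\bigr\rangle=2\sum_{k\geq 0}\langle\sqrt{\pi_U},\widetilde D^{2k+1}\sqrt{\pi_U}\rangle\geq 0.
\]

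To establish this nonnegativity I would identify each summand as a probability: the similarity from Part~1 gives
\[
\langle\sqrt{\pi_U},\widetilde D^{k}\sqrt{\pi_U}\rangle=\pi_U^T P_U^{k}\mathbf{1}_U=\Pr_\pi[X_0,X_1,\ldots,X_k\in X\setminus M],
\]
which is the probability that the walk stays unmarked for $k$ consecutive steps starting from $\pi$, manifestly $\geq 0$ for every $k$. So the odd-power series is a sum of probabilities, $T\leq S_2$, and $S_1\leq 2S_2$ follows. The main technical point to watch is convergence and invertibility of $I\pm\widetilde D$: the $+1$ case is excluded by ergodicity of $\PM$ on the support of $\sqrt{\pi_U}$ (otherwise the hitting time itself is infinite), while a $-1$ eigenvalue of $\widetilde D$ could only arise from a purely bipartite component of the unmarked graph and is standardly handled by a lazy-walk reduction; in the regime where $S_1<\infty$ everything converges and the spectral manipulations above are valid.
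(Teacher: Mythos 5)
Your proof is correct and self-contained. For context: the paper does not actually supply a proof of this lemma; it invokes it by citation to Krovi et al.\ (Proposition~2) and Gilyén's MSc thesis (Lemma~10). Your argument is a clean, independent derivation, so the comparison is against a void rather than against the paper's own route.

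Two remarks on the details. First, a small attribution point: the similarity $P_U=\diag{\pi_U}^{-1/2}\widetilde D\,\diag{\pi_U}^{1/2}$ is not a consequence of reversibility per se --- it follows directly from the identity $D(\PM)=\diag{\pi}^{1/2}\PM\,\diag{\pi}^{-1/2}$, which the paper states for any ergodic chain. What reversibility actually buys you is the symmetry of $D(\PM)$ (and hence of $\widetilde D$ and $D_M(\PM)$), which is what makes the orthonormal eigendecomposition underlying both claims legitimate. You do need reversibility, just in the spectral-decomposition step rather than the similarity step. Second, your closing caveat about a possible $-1$ eigenvalue of $\widetilde D$ and a lazy-walk reduction is more caution than the lemma requires: the paper's definition of ``reversible'' builds in ergodicity (aperiodicity included), so $-1\notin\mathrm{spec}(\PM)$, hence $-1\notin\mathrm{spec}(D(\PM))$, and by Cauchy interlacing $-1\notin\mathrm{spec}(\widetilde D)$ either. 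The $+1$ exclusion via ergodicity on the support of $\sqrt{\pi_U}$ (equivalently, finiteness of the hitting time) is the genuinely needed hypothesis, and you handle it correctly.

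The rest is sound: the first-step analysis gives $HT(\PM,M)=\langle\sqrt{\pi_U},(I-\widetilde D)^{-1}\sqrt{\pi_U}\rangle$; the orthogonal block decomposition of $D_M(\PM)$ into $\widetilde D\oplus 0_{|M|}$ correctly produces the $p_M$ correction; and the pointwise bound $\tfrac{1}{1-|\lambda|}\le\tfrac{1}{1-\lambda}+\tfrac{1}{1+\lambda}$ together with the probabilistic interpretation $\langle\sqrt{\pi_U},\widetilde D^{\,m}\sqrt{\pi_U}\rangle=\pi_U^T P_U^{\,m}\mathbf 1_U\ge 0$ neatly proves $T\le S_2$. (An even quicker way to see that last nonnegativity: $\widetilde D$ has nonnegative entries and $\sqrt{\pi_U}$ is a nonnegative vector, so every power contributes a nonnegative quadratic form --- but your probability reading is more illuminating.)
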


The following result shows how the presence of marked elements can be detected quadratically faster using singular value discrimination compared to using the corresponding classical Markov chain. A slightly less general version of this result was proven by Szegedy~\cite{szegedy2004QMarkovChainSearch}.

\begin{corollary}[Detecting marked elements in a reversible Markov chain]
	Let $\PM$ be a reversible Markov chain, and $M\subseteq X$ a set of marked elements.
	Let $U$ be a unitary and $\tilde{\Pi},\Pi$ orthogonal projectors. Suppose that $B,\tilde{B}$ are orthogonal bases, such that representing the matrix of $\tilde{\Pi} U \Pi$ in the bases $B\rightarrow \tilde{B}$ we have that 
	$$\tilde{\Pi} U \Pi=\left[\begin{array}{cc} D_M(\PM) & 0 \\ 0 & .\end{array}\right].$$
	Suppose that we are given a copy of $\ket{\pi}:=\sum_{x\in X}\sqrt{\pi_x}\ket{x}$, where $\ket{x}\colon x \in X$ are the first $n$ basis elements in $B$. Then we can distinguish with constant one sided error the case $\mathrm{HT}(P,M)\leq K$ from the case $M=\emptyset$ (i.e., $\mathrm{HT}(P,M)=\infty$) with singular value transformation of degree $\bigO{\sqrt{K+1}}$.
\end{corollary}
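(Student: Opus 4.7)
\bigskip
\noindent\textbf{Proof proposal.} The plan is to reduce the problem to singular value discrimination applied to the state $\ket{\pi}$, where the spectral gap needed is extracted from the hitting time bound.

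First I would observe that in the $M = \emptyset$ case, $D_M(\PM) = D(\PM)$ and $\sqrt{\pi}$ is a $+1$ eigenvector, so $\ket{\pi}$ lies in $\img{\Pi}$ (in the basis $B$) and is a right singular vector of $\widetilde{\Pi} U \Pi$ with singular value exactly $1$. Since $\PM$ is reversible, $D_M(\PM)$ is symmetric, hence its singular values equal the absolute values of its eigenvalues $\lambda_i$, and the right singular vectors coincide with an eigenbasis $\{v_i\}$.

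Next I would expand $\ket{\pi} = \sum_i \alpha_i \ket{v_i}$ and use Lemma~\ref{lemma:hittingTime} to translate the hitting time bound into a spectral concentration statement. Assuming $\mathrm{HT}(\PM,M) \leq K$ (so in particular $M \neq \emptyset$), Lemma~\ref{lemma:hittingTime} gives
\begin{equation*}
\sum_{i} \frac{|\alpha_i|^2}{1 - |\lambda_i|} \leq 2\left(\mathrm{HT}(\PM,M) + p_M\right) \leq 2(K+1).
\end{equation*}
A Markov-type inequality then yields, for $\delta := \frac{1}{4(K+1)}$,
\begin{equation*}
\sum_{|\lambda_i| \geq 1-\delta} |\alpha_i|^2 \leq \delta \cdot 2(K+1) = \tfrac{1}{2},
\end{equation*}
so at least half of the amplitude of $\ket{\pi}$ is supported on singular vectors of $\widetilde{\Pi} U \Pi$ with singular value at most $a := 1-\delta$.

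Finally I would apply singular value discrimination (Theorem~\ref{thm:singValDiscr}) with threshold parameters $a = 1 - \frac{1}{4(K+1)}$ and $b = 1$. Since $b=1$ the theorem delivers one-sided error, so in the $M = \emptyset$ case (where $\ket{\pi}$ has singular value exactly $1$) the algorithm always reports ``large singular value''. In the $\mathrm{HT}(\PM,M) \leq K$ case, per the remark following Theorem~\ref{thm:singValDiscr} the discrimination works for superpositions provided most of the amplitude lies below $a$, which step~2 just established. The required degree is
\begin{equation*}
\mathcal{O}\!\left(\frac{1}{\sqrt{1-a^2} - \sqrt{1-b^2}}\right) = \mathcal{O}\!\left(\frac{1}{\sqrt{2\delta - \delta^2}}\right) = \mathcal{O}\!\left(\sqrt{K+1}\right),
\end{equation*}
where we exploit that thresholds close to $1$ give the quadratic improvement noted after Theorem~\ref{thm:singularVecProjs}.

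The main obstacle is the translation step: interpreting the hitting time bound as a statement about the amplitudes of $\ket{\pi}$ with respect to singular vectors of $\widetilde{\Pi}U\Pi$ near singular value $1$. Once Lemma~\ref{lemma:hittingTime} is invoked and reversibility is used to equate $|\lambda_i|$ with singular values, the Markov inequality makes this immediate, and the quadratic speed-up is then a direct consequence of the $b=1$ branch of Theorem~\ref{thm:singValDiscr}.
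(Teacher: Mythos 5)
Your proposal is correct and follows essentially the same route as the paper: invoke Lemma~\ref{lemma:hittingTime} to bound $\sum_i |\braket{v_i}{\pi}|^2/(1-|\lambda_i|)$, apply Markov's inequality to concentrate most of the amplitude of $\ket{\pi}$ below a threshold $\approx 1-\frac{1}{\Theta(K+1)}$, and then run singular value discrimination with $b=1$ to get one-sided error and degree $\bigO{\sqrt{K+1}}$ via the quadratic-improvement branch. The only cosmetic difference is the Markov threshold constant: you use $\delta = \frac{1}{4(K+1)}$ yielding amplitude split $\geq \frac12$, while the paper uses $\frac{1}{12(K+1)}$ yielding $\geq \frac56$; both are constants and suffice for constant-error discrimination.
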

\begin{proof}
	Suppose that $M\neq \emptyset$. Let $(\ket{v_i},\lambda_i)\colon i\in[n]$ be the eigenvector and eigenvalue pairs of the $D_M(\PM)$ block of $\tilde{\Pi} U \Pi$. By Lemma~\ref{lemma:hittingTime} we have that $$\sum_{i=1}^{n}\frac{|\braket{v_i}{\pi}|^2}{1-|\lambda_i|}\leq 2\mathrm{HT}(P,M)+2p_M\leq 2(K+1).$$ 
	By Markov's inequality we have that 
	$$\sum_{i\colon |\lambda_i|\geq 1-\frac{1}{12(K+1)}}|\braket{v_i}{\pi}|^2\leq \frac{1}{6},$$
	and so $\nrm{\Pi_{\leq 1-\frac{1}{12(K+1)}}\ket{\pi}}^2\geq \frac{5}{6}$. On the other hand if $M=\emptyset$, then $D_M(\PM)=D(\PM)$, and $\nrm{D(\PM)\ket{\pi}}=1$. Therefore we can apply our singular value discrimination result Theorem~\ref{thm:singValDiscr} to distinguish the two cases $M=\emptyset$ and $\mathrm{HT}(P,M)\leq K$ using singular value transformation of degree $\bigO{\sqrt{K+1}}$.
\end{proof}

The above result shows how to detect the presence of marked elements quadratically faster than the classical hitting time. In practice one usually also wants to find a marked element, and quantum walks are also good at solving this problem. In order to show the connection to the literature of quantum walk based search algorithms we define some additional notation. 

First we define the standard implementation procedures for Markov chains together with their associated costs following Magniez et al.~\cite{magniez2006SearchQuantumWalk}. We slightly generalize the usual approach fitting our singular value transformation framework. Let us fix an orthogonal basis $B$, such that the first $n$ elements of $B$ are labeled by $\ket{x}_d\colon x\in X$. We define the following costs an operations with their matrices represented in the basis $B$.
\begin{itemize}
	\item[$\mathsf{U}$]: Update cost $\mathsf{U}$. The cost of implementing C$_\Pi$NOT and $U$ gates such that
	\begin{equation}\label{eq:generalU}
	\Pi U \Pi=\left[\begin{array}{cc} D(\PM) & 0 \\ 0 & .\end{array}\right].
	\end{equation}
	\item[$\mathsf{C}$]: Checking cost $\mathsf{C}$. The cost of implementing a C$_{\Pi_M}$NOT gate such that  for all $x\in M\colon \Pi_M\ket{x}_d=\ket{x}_d$ and for all $x\in  X\setminus M\colon \Pi_M\ket{x}_d=0$. This implies that 
	$$\left(I-\Pi_M\right)\Pi U \Pi\left(I-\Pi_M\right)=\left[\begin{array}{cc} D_M(\PM) & 0 \\ 0 & .\end{array}\right].$$	
	\item[$\mathsf{S}$]: Setup cost $\mathsf{S}$. The cost of preparing the stationary state in basis $B$: $\ket{\pi}:=\sum_{x\in X}\sqrt{\pi_x}\ket{x}_d$.
\end{itemize}

First we would like to describe how these operators are usually implemented in the literature. Usually $\PM$ is represented using basis elements $\ket{x}_d=\ket{0}\ket{x}\ket{d_x}$, where the $\ket{d_x}$ register stores some data associated with the vertex $x$, which enables efficient implementation of the update procedure. The unitary $U$ is usually implemented using a product of state preparation unitaries $U=U_L^\dagger U_R$:
\begin{align}
U_R:\ket{0}\ket{x}\ket{d_x}\rightarrow \sum_{y\in [n]}\sqrt{p_{xy}}\ket{x}\ket{y}\ket{d_{xy}} \qquad\forall x \in X\label{eq:walkR}\\
U_L:\ket{0}\ket{y}\ket{d_y}\rightarrow \sum_{x\in [n]}\sqrt{p^*_{yx}}\ket{x}\ket{y}\ket{d_{xy}} \qquad\forall y \in X\label{eq:walkL}
\end{align}
We assume for simplicity that $0\notin X$, resulting in a helpful ``free'' label, and also let us assume that the first register is $n+1$ dimensional and the second register is $n$ dimensional. If the third register is one-dimensional (i.e., we can just trivially omit it), by equations \eqref{eq:walkR}-\eqref{eq:walkL} we get that 
\begin{equation*}
(\ketbra{0}{0}\otimes I) U (\ketbra{0}{0}\otimes I)=\left[\begin{array}{cc} D(\PM)  & 0 \\ 0 & 0\end{array}\right].
\end{equation*}
 
If the data structure register is non-trivial, we can still conclude that 
\begin{equation*}
(\ketbra{0}{0}\otimes I) U (\ketbra{0}{0}\otimes I)=\left[\begin{array}{cc} D(\PM)  & . \\ . & .\end{array}\right],
\end{equation*}
however we need a slightly stronger assumption about $U$. We assume that $U_L,U_R$ are implemented such that the block-matrices next to $D(\PM)$ are $0$ as in \eqref{eq:generalU}. This is implicitly assumed\footnote{This assumption is necessary for the correctness of the analysis in~\cite{magniez2006SearchQuantumWalk}, however it is not explicitly stated.} in~\cite{magniez2006SearchQuantumWalk}, and a sufficient condition is presented in the work of Childs et al.~\cite{childs2013arXivTimeEfficientQW3Distintness}.

Before solving the above problem, we invoke a useful polynomial approximation result due to Dolph~\cite{dolph1946ChebWindowFunc}. 
\begin{lemma}[Optimal polynomial approximation of a windowing function on a bounded interval]\label{cor:DCpx}
	For all $\eps\in(0,1]$ and $n\in \N$ we have that\footnote{The standard generalization of Chebyshev polynomials to non-integer degree $y$ is $T_{y}(x)\equiv\cosh\left(y\arccosh\left(x\right)\right)\equiv\cos\left(y\arccos\left(x\right)\right)$.} 
	\begin{align*}
		\underset{P\in \R[x]}{\operatorname{argmax}}\left(\max\big\{\lambda\colon\nrm{P(x)}_{[-\lambda,\lambda]}\leq \eps\big\}\right)=T_n(x T_{1/n}(1/\eps)),
	\end{align*}
	where $\operatorname{argmax}$ is over all real degree-$n$ polynomials satisfying
		$\nrm{P}_{[-1,1]}\leq 1, \text{ and } P(\pm 1)=(\pm 1)^n.$
	Moreover, for any $\delta\in(0,1)$ for some $n=\bigO{\frac{1}{\sqrt{\delta}}\log(\frac1\eps)}$ we have that 
	$$\nrm{T_n(x T_{1/n}(1/\epsilon))}_{[-1+\delta,1-\delta]}\le\eps.$$
\end{lemma}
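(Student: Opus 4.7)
The plan is to decompose the lemma into two parts --- the extremal characterization of the optimal polynomial and the asymptotic complexity estimate --- and handle each via a classical tool.

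For the extremality, I would reduce to Chebyshev's classical extremal theorem, which states that among degree-$n$ real polynomials $R$ with $\nrm{R}_{[-1,1]} \leq 1$, $T_n$ is uniquely (up to sign) the one maximizing $|R(y_0)|$ at any fixed $|y_0| > 1$. Given an admissible $P$, set $\mu := \nrm{P}_{[-\lambda,\lambda]}$ and consider the rescaled polynomial $\tilde{P}(y) := P(\lambda y)/\mu$, which has $\nrm{\tilde{P}}_{[-1,1]} \leq 1$ and, by the boundary condition $P(\pm 1) = (\pm 1)^n$, satisfies $|\tilde{P}(\pm 1/\lambda)| = 1/\mu$. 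Since $1/\lambda > 1$, Chebyshev's theorem yields $1/\mu \leq T_n(1/\lambda)$, with equality iff $\tilde{P} = \pm T_n$. Demanding $\mu \leq \eps$ therefore forces $T_n(1/\lambda) \geq 1/\eps$; inverting via the semigroup identity $T_n \circ T_{1/n} = T_1 = \mathrm{id}$ on $[1,\infty)$ gives the tight threshold $\lambda^* = 1/T_{1/n}(1/\eps)$, attained uniquely (up to normalization) by $P^*(x) \propto T_n(xT_{1/n}(1/\eps))$.

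For the complexity estimate, I would solve $1/T_{1/n}(1/\eps) \leq 1 - \delta$ asymptotically. Using the cosh form $T_{1/n}(1/\eps) = \cosh(\arccosh(1/\eps)/n)$ together with the elementary bounds $\cosh(y) \geq 1 + y^2/2$ and $1/(1-\delta) \leq 1 + 2\delta$ for $\delta$ small, it suffices to take $\arccosh(1/\eps)/n \leq 2\sqrt{\delta}$. Since $\arccosh(1/\eps) = \ln\!\big(1/\eps + \sqrt{1/\eps^2 - 1}\big) = \bigO{\log(1/\eps)}$, the choice $n = \bigO{\log(1/\eps)/\sqrt{\delta}}$ suffices.

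The main obstacle is the bookkeeping of normalization across the rescaling step: by the semigroup identity, $T_n(xT_{1/n}(1/\eps))$ takes the value $1/\eps$ at $x = \pm 1$, not $\pm 1$, so identifying it with the argmax over the stated feasible set requires an implicit rescaling by $\eps$. Tracking this constant carefully --- or equivalently formulating the extremal problem in a scale-invariant way, e.g.\ by considering the ratio $\nrm{P}_{[-\lambda,\lambda]}/\nrm{P}_{[-1,1]}$ --- is the delicate step; once it is in place, Chebyshev's equality case closes the argument.
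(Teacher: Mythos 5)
The paper itself does not prove this lemma --- it is stated as an invoked result of Dolph~\cite{dolph1946ChebWindowFunc} with no derivation given. Your argument via Chebyshev's extremal theorem is the canonical proof of exactly this fact and is correct in its essentials: rescaling an admissible $P$ to $\tilde P(y):=P(\lambda y)/\nrm{P}_{[-\lambda,\lambda]}$ places it in the hypotheses of the extremal theorem, whose conclusion at $y=1/\lambda>1$ gives $1/\nrm{P}_{[-\lambda,\lambda]}\le T_n(1/\lambda)$; demanding $\nrm{P}_{[-\lambda,\lambda]}\le\eps$ and inverting via $T_n\circ T_{1/n}=\mathrm{id}$ on $[1,\infty)$ bounds the achievable $\lambda$ by $1/T_{1/n}(1/\eps)$, and the uniqueness (equality case) in the extremal theorem pins down the optimizer.

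You are also right to flag the normalization mismatch, and in fact it is an imprecision in the lemma's own statement rather than a gap in your argument: since $T_n(x\, T_{1/n}(1/\eps))$ equals $(\pm1)^n/\eps$ at $x=\pm1$, it satisfies neither $\nrm{P}_{[-1,1]}\le1$ nor $P(\pm1)=(\pm1)^n$; the polynomial that satisfies the stated constraints and attains $\lambda^*=1/T_{1/n}(1/\eps)$ is $\eps\, T_n(x\,T_{1/n}(1/\eps))$, and correspondingly the ``moreover'' bound $\le\eps$ holds for this $\eps$-scaled polynomial (the unnormalized one has $\nrm{T_n(x\,T_{1/n}(1/\eps))}_{[-\lambda^*,\lambda^*]}=1$). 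One small slip in your asymptotic sketch: the condition you want is $\lambda^*=1/T_{1/n}(1/\eps)\ge1-\delta$, i.e.\ $\cosh(\arccosh(1/\eps)/n)\le1/(1-\delta)$, so you need an \emph{upper} bound on $\cosh$ (e.g.\ $\cosh(y)\le e^{y^2/2}$) paired with a \emph{lower} bound $1/(1-\delta)\ge1+\delta$, rather than the lower bound $\cosh(y)\ge1+y^2/2$ and upper bound $1/(1-\delta)\le1+2\delta$ that you quote; with the directions corrected, the same computation yields the stated $n=\bigO{\log(1/\eps)/\sqrt\delta}$.
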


Notably, the phase sequence required to implement this polynomial using quantum signal processing is expressed in closed-form in the work of Yoder et al.~\cite{yoder2014FixedPointSearch}.

\begin{theorem}[Quadratic speed-up for finding marked elements of a Markov chain]
	Let $\PM$ be a reversible Markov chain, such that the singular value gap\footnote{We define the singular value gap as the difference between the two largest singular values. For a reversible Markov chain the singular values of $D(\PM)$ are the same as the absolute values of the eigenvalues of $\PM$. Note however, that this is not strictly necessary, we could work with the eigenvalues too using eigenvalue transformation results, such as Theorem~\ref{thm:arbParity}.} of $D(\PM)$ is at least $\delta$, and the set of marked elements $M$ is such that $p_M\geq \eps$. Then we can find a marked element with high probability in complexity of order $\mathsf{S}+\frac{1}{\sqrt{\eps}}\left(\mathsf{C}+\sqrt{\frac{1}{\delta}}\log\left(\frac{1}{\eps}\right)\mathsf{U}\right)$.
\end{theorem}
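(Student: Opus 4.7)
My plan is to combine standard Grover-style amplitude amplification with an SVT-based approximate reflection about $\ket\pi$. After paying the one-time cost $\mathsf S$ to prepare $\ket\pi$, I run $N=\bigO{1/\sqrt{p_M}}=\bigO{1/\sqrt\eps}$ Grover-like rotations, each consisting of the reflection $I-2\Pi_M$ (cost $\mathsf C$ via the provided C$_{\Pi_M}$NOT gate) followed by an approximate reflection about $\ket\pi$ implemented by singular value transformation. Each SVT-based reflection applies Corollary~\ref{cor:matchingParity} to the projected unitary $\Pi U\Pi$ (whose top-left block is $D(\PM)$), so its cost is $n\cdot\mathsf U$ where $n$ is the polynomial degree.

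The polynomial I would use is an even real $P$ with $|P|\le 1$ on $[-1,1]$, $P(\pm 1)\approx 1$, and $P(x)\approx -1$ on $[-1+\delta,1-\delta]$ (all within $\bigO{\eps'}$). Such $P$ is built from Lemma~\ref{cor:DCpx}: the rescaled Chebyshev polynomial $Q(x):=\eps' T_n\bigl(x T_{1/n}(1/\eps')\bigr)$ of even degree $n=\bigO{\sqrt{1/\delta}\log(1/\eps')}$ satisfies $Q(\pm 1)=1$, $|Q|\le(\eps')^2$ on $[-1+\delta,1-\delta]$, and $|Q|\le 1$ on $[-1,1]$; setting $P:=(2Q-1)/(1+2\eps')$ then meets all three desiderata. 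Because the singular value gap of $D(\PM)$ is at least $\delta$, the singular value $1$ of $D(\PM)$ is uniquely attained by $\ket\pi$ and every other singular value lies in $[0,1-\delta]$, so $P^{(SV)}(D(\PM))$ is $\bigO{\eps'}$-close in operator norm to $2\ket\pi\bra\pi-I_{D(\PM)}$, i.e.\ to the reflection about $\ket\pi$ inside the $D(\PM)$ block. The direct-sum structure $\Pi U\Pi=D(\PM)\oplus M$, together with the fact that both the initial state $\ket\pi$ and the reflection $I-2\Pi_M$ live inside the $D(\PM)$ block, keeps the whole amplitude amplification trajectory inside this block, so the approximation on the $D(\PM)$-block is the only thing that matters.

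To control errors, I set $\eps'=\Theta(\sqrt\eps)$, making the triangle-inequality sum of per-rotation errors $\bigO{N\eps'}=\bigO{(1/\sqrt\eps)\sqrt\eps}=\bigO{1}$ stay below any prescribed constant. A computational-basis measurement after amplification then returns a marked element with constant probability, boostable to high probability by $\bigO{1}$ independent repetitions. With this $\eps'$ the SVT degree becomes $n=\bigO{\sqrt{1/\delta}\log(1/\eps)}$, giving the total cost $\mathsf S+\frac{1}{\sqrt\eps}\bigl(\mathsf C+\sqrt{1/\delta}\log(1/\eps)\,\mathsf U\bigr)$ as claimed.

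The main obstacle is getting the square-root scaling in $\delta$: a naive use of Theorem~\ref{thm:singularVecProjs} (via the sign-function approximation of Lemma~\ref{lemma:signApx}) would yield degree $\bigO{(1/\delta)\log(1/\eps')}$ and spoil the quantum speedup; only the Chebyshev-extremal construction of Lemma~\ref{cor:DCpx} gives the desired $\sqrt{1/\delta}$ dependence, exploiting that the ``keep'' value $1$ sits at the boundary of the spectral interval. A secondary bookkeeping point is the $\ket+$ ancilla introduced by Corollary~\ref{cor:matchingParity}: since each SVT application realizes an operator that is $\bigO{\eps'}$-close to a reflection of norm $1$ on the relevant input subspace, the ancilla stays $\bigO{\eps'}$-close to $\ket+$ after each rotation and can be carried through without re-preparation, so the standard amplitude amplification analysis transfers with only $\bigO{\eps'}$ per-rotation slippage.
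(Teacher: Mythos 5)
Your approach is essentially the same as the paper's, and the key insight (using the Chebyshev-extremal polynomial of Lemma~\ref{cor:DCpx} rather than a generic sign approximation, to get the $\sqrt{1/\delta}$ rather than $1/\delta$ scaling) is exactly right and correctly identified as the crux. The paper uses a minor variant of your outer loop: instead of building an approximate \emph{reflection} $2\ketbra\pi\pi-I$ and running classical Grover iterations, it builds an approximate \emph{projector} $\ketbra\pi\pi$ from the same Dolph polynomial, composes it with $\Pi_M$ to get a projected unitary encoding of $\sqrt{p_M}\ketbra{\pi_M}{\pi}$, and then applies singular vector transformation (Theorem~\ref{thm:singularVecTrans}, i.e.\ fixed-point amplitude amplification in the SVT framework) to that encoding, with robustness controlled by Lemma~\ref{lem:PolyNormDiff}. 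Both routes land on the same complexity.

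One piece of your error bookkeeping is slightly off, though it does not change the final bound. You assert that ``the ancilla stays $\bigO{\eps'}$-close to $\ket+$ after each rotation'' and hence set $\eps'=\Theta(\sqrt\eps)$. But if $P_\Re(\varsigma)=1-\bigO{\eps'}$, the ancilla-leakage \emph{amplitude} out of $\ket+$ scales as $\sqrt{1-P_\Re(\varsigma)^2}=\Theta(\sqrt{\eps'})$, not $\Theta(\eps')$: the implemented unitary differs from $\ketbra++\otimes R_\pi$ (plus whatever it does on $\ket-$) by $\Theta(\sqrt{\eps'})$ in operator norm on the relevant subspace. Summing over $N=\bigO{1/\sqrt\eps}$ iterations gives $\bigO{\sqrt{\eps'}/\sqrt\eps}$, so you need $\eps'=\Theta(\eps)$ rather than $\Theta(\sqrt\eps)$. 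This is precisely the same $\sqrt{\cdot}$ loss that Lemma~\ref{lem:PolyNormDiff} exhibits, and it is exactly the reason the paper invokes that lemma for the robustness step. Since $\log(1/\eps)$ and $\log(1/\sqrt\eps)$ have the same order, the stated complexity $\mathsf S+\frac{1}{\sqrt\eps}\big(\mathsf C+\sqrt{1/\delta}\log(1/\eps)\,\mathsf U\big)$ is unchanged. (A small typo in passing: with $Q(x):=\eps' T_n\big(x T_{1/n}(1/\eps')\big)$ one gets $|Q|\le\eps'$ on $[-1+\delta,1-\delta]$, not $(\eps')^2$; the degree is already chosen so the rescaled Chebyshev dips below $\eps'$.)
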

\begin{proof}
	First we apply singular value transform on $\Pi U\Pi$ using an $\eps$-approximation of the zero-function given by Lemma~\ref{cor:DCpx} in order to get $U_\Phi$ with all $\neq 1$ singular values of $D(\PM)$ shrunk below a level of $\bigO{\eps}$. Then the top-left block of $\Pi U_\Phi\Pi$ is $\bigO{\eps}$-close to $\ketbra{\pi}{\pi}$, and the implementation of $U_\Phi$ has complexity $\bigO{\sqrt{\frac{1}{\delta}}\log\left(\frac{1}{\eps}\right)\mathsf{U}}$. We pretend that the top-left block is $\ketbra{\pi}{\pi}$, in which case we seem to have that $\Pi_M\Pi U_\Phi \Pi=\sqrt{p_M}\ketbra{\pi_M}{\pi}$, where $\ket{\pi_M}:=\frac{\sum_{x\in M}\sqrt{\pi_x}\ket{x}_d}{\sqrt{p_M}}$. Then we apply singular vector transform to get a constant approximation of $\ketbra{\pi_M}{\pi}$ in the top-left block, and apply it to the state $\ket{\pi}$ in order to find a marked element with high probability. Finally, we use the robustness of singular value transformation Lemma~\ref{lem:PolyNormDiff} to show that we can indeed dismiss the $\bigO{\eps}$ discrepancy between $\Pi U_\Phi\Pi$ and  $\ketbra{\pi}{\pi}$.
\end{proof}

Note that the above algorithm is simpler and more efficient than the phase estimation based algorithm of~\cite{magniez2006SearchQuantumWalk}. However note, that Magniez et al.~\cite{magniez2006SearchQuantumWalk} showed how to remove the $\log(\frac1\eps)$ factor completely using a more involved procedure.

Finally note that it is known that a unique marked element can be found using a quantum walk quadratically faster than the hitting time. However, it is an open question whether in the presence of multiple marked elements the quadratic advantage can be retained.\footnote{One can show that $p_M=\Omega\left(\frac{1}{HT(\PM,M)}\right)$, therefore using amplitude amplification one can find a marked element quadratically faster, however with a large $\mathsf{S}$ cost. The appeal of the quantum walk algorithms is that they only use the setup procedure a very few times.} For more details see the work of Krovi et al.~\cite{krovi2010QWalkFindMarkedAnyGraph}. They use an interpolated matrix between $D(\PM)$ and $D_M(\PM)$ -- which is an idea very naturally fitting our framework, see for example Lemma~\ref{lem:linCombBlocks}. We believe that the algorithms of Krovi et al.~\cite{krovi2010QWalkFindMarkedAnyGraph} for finding marked elements can also be cast in our singular value transformation framework, but we leave the discussion of these algorithms for future work.

\subsubsection{Fast \texorpdfstring{$\mathsf{QMA}$}{QMA} amplification and fast quantum OR lemma}

We show how the fast $\mathsf{QMA}$ amplification result of Nagaj et al.~\cite{nagaj2009FastAmpQMA} follows directly from our singular value discrimination results. In order to state the result we invoke the definition of the language class $\mathsf{QMA}$.

\begin{definition}[The language class $\mathsf{QMA}$]\label{def:QMA}
	Let $L\subseteq \{0,1\}^*$ be a language of yes and no instances $L=L_{\text{yes}}\cupdot L_{\text{yes}}$. The language $L$ belongs to the class $\mathsf{QMA}$ if there exists a uniform family of quantum verifier circuits $V$ working on $n=\poly{|x|}$ qubits using $m=\poly{|x|}$ ancillae and two numbers $0\leq b < a\leq 1$ satisfying $\frac{1}{a-b}=\bigO{\poly{|x|}}$ such that for all $x$ in
	\begin{itemize}
		\item[$L_{\text{yes}}:$] there exists an $n$-qubit witness $\ket{\psi}$ such that upon measuring the state $V\ket{\psi}\ket{0}^{m}$ the probability of finding the first qubit in state $\ket{1}$ has probability at least $a$.
		\item[$L_{\text{no}}:$] for any $n$-qubit state $\ket{\phi}$ upon measuring the state $V\ket{\phi}\ket{0}^{m}$ the probability of finding the first qubit in state $\ket{1}$ has probability at most $b$.
	\end{itemize}
\end{definition}

Now we are ready to reprove the result of Nagaj et al.~\cite{nagaj2009FastAmpQMA}:

\begin{theorem}[Fast $\mathsf{QMA}$ amplification]
	Suppose that we have a language in $\mathsf{QMA}$ as in Definition~\ref{def:QMA}. We can modify the verifier circuit $V$ such that the acceptance probabilities become $a':=1-\eps$ and $b':=\eps$ using singular value transformation of degree $\bigO{\frac{1}{\max[\sqrt{a}-\sqrt{b},\sqrt{1-b}-\sqrt{1-a}]}\log\left(\frac{1}{\eps}\right)}$.
\end{theorem}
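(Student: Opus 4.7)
The plan is to observe that a QMA verifier fits naturally into the projected unitary encoding framework and that the amplified verifier is nothing but singular value discrimination (Theorem~\ref{thm:singValDiscr}) repurposed as the acceptance test. Concretely, take
$$\Pi := I_n\otimes \ketbra{0}{0}^{\otimes m},\qquad \widetilde{\Pi}:=\ketbra{1}{1}\otimes I_{n+m-1},$$
and let $A:=\widetilde{\Pi} V\Pi$. Then for any candidate witness $\ket{\phi}$ the acceptance probability of $V$ on $\ket{\phi}\ket{0}^{\otimes m}$ equals $\nrm{A\ket{\phi}\ket{0}^{\otimes m}}^{2}$, so the largest singular value of $A$ is exactly the square root of the optimal acceptance probability. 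In a yes instance the optimal witness $\ket{\psi^{*}}$ gives a right singular vector $\ket{\psi^{*}}\ket{0}^{\otimes m}$ of $A$ with singular value at least $\sqrt{a}$; in a no instance every singular value of $A$ is at most $\sqrt{b}$.

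The new verifier $V'$ then simply runs the singular value discrimination protocol of Theorem~\ref{thm:singValDiscr} on $\ket{\phi}\ket{0}^{\otimes m}$ with thresholds $\sqrt{b}$ and $\sqrt{a}$, and declares acceptance precisely when the discrimination outputs ``singular value $\geq \sqrt{a}$''. Theorem~\ref{thm:singValDiscr} realizes this using singular value transformation of degree
$$\bigO{\frac{1}{\max[\sqrt{a}-\sqrt{b},\,\sqrt{1-b}-\sqrt{1-a}]}\log\!\left(\tfrac{1}{\eps}\right)},$$
which is exactly the bound claimed. The better of the two gaps is obtained for free: when $\sqrt{1-b}-\sqrt{1-a}$ dominates, Theorem~\ref{thm:singValDiscr} internally switches to the complementary projector $I-\widetilde{\Pi}$, whose singular values are $\sqrt{1-\varsigma_i^{2}}$, and we re-label the measurement outcome so that ``large complementary singular value = reject'' becomes the new acceptance bit.

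The only step that needs a little care is the no-instance analysis, since Theorem~\ref{thm:singValDiscr} is phrased for inputs that are exact singular vectors of $A$, whereas a cheating prover may hand us an arbitrary $\ket{\phi}$. Decomposing $\ket{\phi}\ket{0}^{\otimes m}=\sum_i c_i\ket{\psi_i}$ in the right singular basis, the operator implemented by the discrimination protocol is $\eps$-close in operator norm to the projector onto $\img{\Pi_{\geq\sqrt{a}}}$; since every $\varsigma_i\leq \sqrt{b}$ in the no case, linearity of the underlying singular value transformation (together with the uniform $\sqrt{\eps}$ bound of the discrimination polynomial on $[0,\sqrt{b}]$) gives acceptance probability at most $\eps$, as the remark immediately following Theorem~\ref{thm:singValDiscr} anticipates. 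The yes case is immediate: the honest prover sends $\ket{\psi^{*}}$, a right singular vector with value $\geq\sqrt{a}$, and Theorem~\ref{thm:singValDiscr} accepts with probability at least $1-\eps$.
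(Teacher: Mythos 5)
Your proposal is correct and takes essentially the same approach as the paper: it identifies the projected unitary encoding $A=\widetilde{\Pi} V\Pi$ with $\Pi = I_n\otimes\ketbra{0}{0}^{\otimes m}$ and $\widetilde{\Pi} = \ketbra{1}{1}\otimes I_{n+m-1}$, observes that completeness/soundness translate into the largest singular value of $A$ being $\geq\sqrt{a}$ or $\leq\sqrt{b}$, and then invokes Theorem~\ref{thm:singValDiscr}. Your fleshed-out treatment of the no-case (decomposing an arbitrary cheating witness into the right singular basis and using the remark following Theorem~\ref{thm:singValDiscr}) is a detail the paper leaves implicit but does rely on, so your extra care is welcome rather than a deviation.
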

\begin{proof}
	Observe that by Definition~\ref{def:QMA} for all $x\in L_{\text{yes}}$ we have that 
	$$\nrm{\left(\ketbra{1}{1}\otimes I_{n+m-1}\right)V\left(I_n\otimes\ketbra{0}{0}^{\otimes m}\right)}\geq \sqrt{a},$$
	and for all $x\in L_{\text{no}}$ we have that  
	$$\nrm{\left(\ketbra{1}{1}\otimes I_{n+m-1}\right)V\left(I_n\otimes\ketbra{0}{0}^{\otimes m}\right)}\leq \sqrt{b}.$$
	After applying a singular value discrimination circuit for discriminating singular values below $\sqrt{b}$ and above $\sqrt{a}$ we get a circuit that in the former case accepts some witness $\ket{\psi}$ with probability at least $1-\eps$ and in the latter case rejects every state $\ket{\phi}$ with probability at least $1-\eps$.
\end{proof}

Finally we show how to quickly derive a slightly improved version of the fast quantum OR lemma of Brand{\~a}o et al.~\cite{brandao2017QSDPSpeedupsLearning}. We use the main ideas of the proof of the original quantum OR lemma of Harrow et al.~\cite{harrow2017SeqMeasPropTest} in a way similar to the approach of Brand{\~a}o et al.~\cite{brandao2017QSDPSpeedupsLearning}.

\begin{theorem}[Fast quantum OR lemma]\label{lem:fastQuantumOR}
	Let $m\in\N$, let $\Pi_i\colon i\in [m]$ be orthogonal projectors and let $\eta,\nu \in (0,\frac{1}{2}]$. Suppose we are given one copy of a quantum state $\rho$ with the promise that either 
	\begin{enumerate}[label=(\roman*)]
		\item there exists some $i\in [m]$ such $\tr{\rho \Pi_i}\geq 1-\eta$, or \label{it:bigProj}
		\item $\frac{1}{m}\sum_{j=1}^{m}\Tr[\rho\Pi_j]\leq\nu$.  \label{it:smallAvgProj}
	\end{enumerate}
	Suppose that we have access\footnote{Brand{\~a}o et al.~\cite{brandao2017QSDPSpeedupsLearning} assumes access to a multiply-controlled reflection operator instead of $V$, but it is easy to see that such an operator can be easily transformed to the operator required here using a single qubit by applying phase-kickback.} to an operator $V$ such that $(\bra{i}\otimes I)V(\ket{i}\otimes I)=$C$_{\Pi_i}$NOT for all $i\in [m]$. Then for all $\eps\in(0,\frac{1}{2}]$ we can construct an algorithm which, in case \ref{it:bigProj} accepts $\rho$ with probability at least $\frac{(1-\eta)^{2}}{4} - \eps$, and in case \ref{it:smallAvgProj} it accepts $\rho$ with probability at most $5m\nu + \eps$.
	Moreover, the algorithm uses $V$ and its inverse a total number of $\bigO{\sqrt{m}\log\left(\frac{1}{\eps}\right)}$ times and uses $\bigO{\sqrt{m}\log(m)\log\left(\frac{1}{\eps}\right)}$ other gates and $\bigO{\log(m)}$ ancilla qubits.
\end{theorem}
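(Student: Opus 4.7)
The plan is to reduce the fast OR to a singular-value-amplification task for the ``average projector'' $E:=\tfrac{1}{m}\sum_{i=1}^{m}\Pi_i$. I introduce a $\lceil\log_2 m\rceil$-qubit index register together with a single ``flag'' qubit and define $V':=V\bigl(H^{\otimes\lceil\log_2 m\rceil}\otimes I_{\text{flag}}\otimes I_{\text{sys}}\bigr)$, $\Pi:=\ket{0}\!\bra{0}_{\text{idx}}\otimes\ket{0}\!\bra{0}_{\text{flag}}\otimes I$ and $\widetilde\Pi:=I_{\text{idx}}\otimes\ket{1}\!\bra{1}_{\text{flag}}\otimes I$. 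Using the assumed block structure of $V$ (which applies $\mathrm{C}_{\Pi_i}\mathrm{NOT}$ conditioned on $\ket{i}$), a direct calculation gives $A:=\widetilde\Pi V'\Pi$ with $A^\dagger A\big|_{\img\Pi}=\ket{0}\!\bra{0}\otimes\ket{0}\!\bra{0}\otimes E$; in particular the singular values of $A$ on $\img\Pi$ are $\{\sqrt{\lambda}:\lambda\in\mathrm{spec}(E)\}$. Applying a singular value transformation by an odd real $P\in\R[x]$ (via Corollary~\ref{cor:matchingParity}) and measuring the flag qubit on input $\ket{0}\ket{0}\rho$ then yields acceptance probability $\tr{\rho\,\tilde P(E)}$, where $\tilde P(\lambda):=P(\sqrt{\lambda})^2$.

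Next, I choose $P$ of degree $n=O(\sqrt m\log(1/\epsilon))$, bounded by $1$ on $[-1,1]$ and odd, so that $\tilde P(\lambda)$ $\epsilon$-approximates the operator-monotone function $g(\lambda):=5m\lambda/(1+5m\lambda)$ uniformly on $[0,1]$. Such a polynomial exists by Corollary~\ref{cor:realP}: the target $x\mapsto \sqrt{5m}\,x/\sqrt{1+5m\,x^2}$ is odd, bounded by $1$, and analytic on the strip $|\mathrm{Im}(x)|<1/\sqrt{5m}$, so its truncated Chebyshev expansion achieves $\epsilon$-accuracy on $[-1,1]$ with $O(\sqrt m\log(1/\epsilon))$ terms (the same flavour as the $1/x$-polynomials used in Section~\ref{subsec:NonCommMeasAndSVE}). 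Implementing $P^{(SV)}(A)$ via Definition~\ref{def:phaseSeq} and Lemma~\ref{lemma:implementingPhasedSeq} then yields a circuit using $V$ and $V^\dagger$ a total of $O(\sqrt m\log(1/\epsilon))$ times, $O(\sqrt m\log m\log(1/\epsilon))$ other gates, and $O(\log m)$ ancillae, matching the claimed resource counts.

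The two acceptance bounds then follow from single-line pointwise and operator inequalities. For soundness~\ref{it:smallAvgProj}, the pointwise estimate $g(\lambda)\le 5m\lambda$ gives $\tr{\rho\,\tilde P(E)}\le 5m\,\tr{\rho E}+\epsilon=\sum_j\tr{\rho\Pi_j}+\epsilon\le 5m\nu+\epsilon$, as required. For completeness~\ref{it:bigProj}, the operator inequality $5mE\succeq 5\Pi_{i_0}$ combined with the operator monotonicity of $f(y):=y/(1+y)$ on $[0,\infty)$ yields $g(E)=f(5mE)\succeq f(5\Pi_{i_0})=\tfrac{5}{6}\Pi_{i_0}$; hence $\tr{\rho\,\tilde P(E)}\ge\tfrac{5}{6}\tr{\rho\Pi_{i_0}}-\epsilon\ge\tfrac{5(1-\eta)}{6}-\epsilon$, which is comfortably stronger than the advertised $(1-\eta)^2/4-\epsilon$ (since $\tfrac{5(1-\eta)}{6}\ge (1-\eta)^2/4$ whenever $\eta\le 1/2$).

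The main technical obstacle is thus the polynomial-approximation step: establishing that a degree-$O(\sqrt m\log(1/\epsilon))$ odd polynomial approximates $\sqrt{5m}\,x/\sqrt{1+5m\,x^2}$ to within $\epsilon$ on $[-1,1]$. I handle this via a standard Chebyshev-series truncation argument together with a Bernstein-ellipse analyticity bound: since the nearest singularities of the target function are at $x=\pm i/\sqrt{5m}$, the relevant ellipse parameter is $\rho\approx 1+1/\sqrt{5m}$, so the truncation error decays like $\rho^{-n}\approx e^{-n/\sqrt{5m}}$, giving the stated bound on $n$. Once this polynomial is in hand, both acceptance bounds are one-line consequences of the pointwise and operator inequalities above, and the operator-monotonicity argument in fact yields a significantly stronger completeness bound than the theorem asks for.
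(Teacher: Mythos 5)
Your approach is genuinely different from the paper's, and conceptually cleaner in several respects. The paper block-encodes the averaged \emph{complement} $\frac{1}{m}\sum_i(I-\Pi_i)$ and applies singular value \emph{discrimination} just below singular value $1$, importing the completeness bound $\tr{\rho\,\Pi_{\leq 1-\lambda}}\geq(1-\eta)^2/4$ from \cite[Corollary~11]{harrow2017SeqMeasPropTest}; you instead block-encode $E=\frac{1}{m}\sum_i\Pi_i$ directly and push through a smooth soft threshold $g(\lambda)=5m\lambda/(1+5m\lambda)$, getting soundness from the pointwise bound $g(\lambda)\leq 5m\lambda$ and completeness from operator-monotonicity of $y\mapsto y/(1+y)$ applied to $5mE\succeq 5\Pi_{i_0}$. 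This is fully self-contained (no external measurement lemma is needed), the reduction to $\tr{\rho\,\tilde P(E)}$ is correct, and, as you note, you actually obtain a stronger completeness constant, $\tfrac{5(1-\eta)}{6}$ versus the theorem's $\tfrac{(1-\eta)^2}{4}$.

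There is, however, a genuine gap in the resource count. The Bernstein-ellipse bound you invoke is $\nrm{g_0-P_n}_{[-1,1]}\leq\frac{2M\rho^{-n}}{\rho-1}$, and with $\rho-1=\Theta(1/\sqrt{m})$ the prefactor contributes $\Theta(\sqrt{m})$; solving for $n$ gives $n=O\bigl(\sqrt{m}\log(\sqrt{m}/\eps)\bigr)=O\bigl(\sqrt{m}(\log m+\log(1/\eps))\bigr)$, an extra $\log m$ over the claimed $O(\sqrt m\log(1/\eps))$. The paper avoids this because the sign approximant of Lemma~\ref{lemma:signApx} is built from the entire function $\mathrm{erf}$, whose Chebyshev coefficients decay super-geometrically, whereas your target $\sqrt{5m}\,x/\sqrt{1+5mx^2}$ has branch points at $\pm i/\sqrt{5m}$, so the $1/(\rho-1)$ prefactor is genuinely present. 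Plausible fixes: replace $g$ by an $\mathrm{erf}$-style soft threshold at scale $\sqrt m$ that still satisfies a pointwise bound $\leq Cm\lambda$ (for soundness) and an operator bound (for completeness); or exploit the $k^{-3/2}$ Chebyshev decay typical of square-root branch points to absorb the extra factor. As written, the degree bound is not established.

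Three smaller points. First, $H^{\otimes\lceil\log_2 m\rceil}\ket{0}$ yields a uniform superposition over $2^{\lceil\log_2 m\rceil}$ basis states, not over $[m]$, and the promise specifies $V$ only on $\ket{i}$ for $i\in[m]$; use a proper state-preparation unitary $\ket{0}^{a-1}\mapsto\frac{1}{\sqrt m}\sum_{i=1}^m\ket{i}$ as in the paper's proof. Second, $5m\,\tr{\rho E}=5\sum_j\tr{\rho\Pi_j}$, not $\sum_j\tr{\rho\Pi_j}$ — the final $\leq 5m\nu$ conclusion is unaffected. Third, the truncated Chebyshev series need not be bounded by $1$ in absolute value on $[-1,1]$; rescale by $1/(1+\eps')$ before invoking Corollary~\ref{cor:realP}.
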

\begin{proof}
	Let us define $A:=\frac{1}{m}\sum_{i=1}^{m}(I-\Pi_i)$. First observe that $I-\Pi_i=\left(\ketbra{0}{0}\otimes I\right)$C$_{\Pi_i}$NOT$\left(\ketbra{0}{0}\otimes I\right)$. Let $a:=\lceil\log_2(m+1)\rceil+1$ and let $U$ be a unitary that implements the map $\ket{0}^{a-1}\rightarrow \frac{1}{\sqrt{m}}\sum_{i=1}^{m}\ket{i}$, and let us define $\tilde{V}:=\left(U^\dagger\otimes I\right)V\left(U\otimes I\right)$ and $\Pi:=\ketbra{0}{0}^a\otimes I$. Then it is easy to see that $A=\Pi\tilde{V}\Pi$.
	
	Now let $\lambda:=\frac{1-\eta}{2m}$, in case \ref{it:bigProj} Harrow et al.~\cite[Corollary 11]{harrow2017SeqMeasPropTest} proved that
	\begin{equation}\label{eq:projLowerBound}
		\tr{\rho \Pi_{\leq 1-\lambda}}\geq (1-\eta)^2/4.
	\end{equation}
		
	On the other hand in case \ref{it:smallAvgProj} we have that $\tr{\rho A}\geq 1-\nu$. Using Markov's inequality we get
	\begin{equation}\label{eq:projUpperBound}
		\tr{\rho \Pi_{\leq 1- \frac{4}{5}\lambda}}\leq \frac{\nu}{\frac{4}{5}\lambda}=\frac{5m\nu}{2(1-\eta)}\leq 5m\nu.
	\end{equation}

	Finally, we apply $\eps$-precise singular value discrimination on $\rho$ with $a:=1-\lambda$ and $b:=1-\frac{4}{5}\lambda$.
	The correctness follows from \eqref{eq:projLowerBound}-\eqref{eq:projUpperBound} and Theorem~\ref{thm:singValDiscr}. The complexity statement follows from Theorem~\ref{thm:singValDiscr}, Lemma~\ref{lemma:implementingPhasedSeq} and the fact that $U$ can be implemented using $\bigO{\log(m)}$ one- and two-qubit gates.
\end{proof}

\subsection{``Non-commutative measurements'' and singular value estimation}\label{subsec:NonCommMeasAndSVE} 

Preparing ground states of local Hamiltonians is a notoriously hard problem. However, under the conditions of the quantum Lovász Local Lemma, the local Hamiltonian is guaranteed to be frustration-free as shown by Ambainis et al.~\cite{ambainis2009QLLL}. As shown by Sattath and Arad~\cite{sattath2013ConstrQLLLComm} and Schwarz et al.~\cite{schwarz2013QInfoProofCommQLLL} under same conditions the problem becomes efficiently solvable when the local Hamiltonian terms commute. The non-commuting case is more difficult, but under a gap-promise Gilyén and Sattath~\cite{gilyen2016PrepGapHamEffQLLL} showed that a ground state can be efficiently prepared.

Gilyén and Sattath~\cite{gilyen2016PrepGapHamEffQLLL} essentially reduced the state preparation problem to solving the following task: Given two (non-commuting) orthogonal projectors $\Pi^F$ and $\Pi_c$ and quantum state $\ket{\psi}\in \img{\Pi^F}$ perform a ``non-commutative'' measurement in the following sense. If $\ket{\psi}\in \img{\Pi^F}\cap \ker(\Pi_c)$ then output $0$ and leave the state intact, otherwise if $\ket{\psi}$ is a right singular vector of $\Pi_c\Pi^F$ with singular value greater than $0$, then output $1$ and ``rotate'' the state $\ket{\psi}$ to the corresponding left singular vector of $\Pi_c\Pi^F$ which in turn lies in $\img{\Pi_c}$. They showed how to implement such a quantum channel using a combination of weak measurements and the quantum Zeno effect, however their quantum channel does not preserve coherence between singular vectors with different singular values. The complexity of their implementation is essentially $\bigO{\frac{\log(1/\eps)}{\eps \varsigma^2}}$, where $\varsigma$ is the smallest non-zero singular value of $\Pi_c\Pi^F$, and $\eps$ is the desired maximum failure probability. 

Gilyén and Sattath~\cite{gilyen2016PrepGapHamEffQLLL} called exact quantum channel the procedure which solves the above problem, but also preserves coherence between singular vectors with different singular values. In their paper it was unclear how to efficiently implement such a ``non-commutative'' measurement. However note, that the techniques developed in this paper result in an efficient implementation. Indeed, by setting $A:= \Pi_c\Pi^F$ this task can be solved with maximal failure probability $\eps$ using singular value transformation Theorem~\ref{thm:singularVecTrans}, with $\bigO{\frac{\log(1/\eps)}{\varsigma}}$ uses of C$_{\Pi^F}$NOT, C$_{\Pi_c}$NOT and other two-qubit gates. This improves the $\eps$ dependence exponentially and improves the $\varsigma$ dependence quadratically, while solves a qualitatively stronger problem. These improvements also greatly improve the final complexity of the main algorithm presented in~\cite{gilyen2016PrepGapHamEffQLLL}.

Finally, we turn to the singular value estimation results of Kerenidis an Prakash~\cite{kerenidis2016QRecSys}. Kerenidis and Prakash mostly use singular value estimation in order to implement singular vector projectors, with similar complexity to that of Theorem~\ref{thm:singularVecProjs}. However, to our knowledge, there is an unresolved issue in their implementation procedure, stemming from the ambiguity of the phase labels produced by phase estimation. Using our techniques combined with ideas of Chakraborty et al.~\cite{chakraborty2018BlockMatrixPowers}, we show an alternative approach to singular value estimation.

Suppose that $A=\widetilde{\Pi} U \Pi$, and we would like to perform singular value estimation of $A$. The main idea is to first implement an operator $V$ such that $\left(I\otimes \Pi \right) V \left(I\otimes \Pi \right)=\sum_{t=0}^{2^n-1}\ketbra{t}{t}\otimes T_{2t}^{(SV)}(A)$. This can be done by using controlled quantum walk steps, i.e., using controlled alternating phase modulation sequences with phases as in Lemma~\ref{lemma:chebyshev}. 
Suppose that $\ket{\psi_j}\in\img{\Pi}$ is a right singular vector of $A$ with singular value $\cos(\theta_j)$, then 
$$\left(I\otimes \Pi \right) V \left(H^{\otimes n}\otimes I\right)\ket{0}^n\ket{\psi_j}
=\left(I\otimes \Pi \right) V \frac{1}{\sqrt{2^n}}\sum_{t=0}^{2^n-1}\ket{t}\ket{\psi_j}
=\frac{1}{\sqrt{2^n}}\sum_{t=0}^{2^n-1}\cos(2t\theta_j)\ket{t}\ket{\psi_j}.$$
One can show that the norm of this state $N_j$ is always bigger than some constant $c$. However, the problem is that the norm $N_j$ depends on the singular value $\cos(\theta_j)$. Fortunately we can use singular vector transformation using the projected unitary encoding $\left(I\otimes \Pi \right) V \left(H^{\otimes n}\otimes I\right)\left(\ketbra{0}{0}^{\otimes n}\otimes \Pi\right)$ by Theorem~\ref{thm:singularVecTrans} in order to $\eps$-approximate the map $\ket{0}^n\ket{\psi_j}\rightarrow \frac{1}{\sqrt{2^n}}\sum_{j=0}^{2^n-1}\frac{\cos(2t\theta_j)}{N_j}\ket{t}\ket{\psi_j}$. Applying a Fourier transform on the first (time) register, and taking half of the absolute value of the resulting estimation solves the singular value estimation problem. The correctness can be seen using the usual analysis of quantum phase estimation~\cite{cleve1997QAlgsRevisited} by utilizing the identity $\cos(x)=\frac{e^{i x}+e^{-i x}}{2}$. For more details see~\cite[version 2]{chakraborty2018BlockMatrixPowers}.

\subsection{Direct implementation of the Moore-Penrose pseudoinverse}\label{subsec:DirectPseudoInverse} 
Suppose $\widetilde{\Pi}U\Pi=A$ and $A=W \Sigma V^\dagger$ is a singular value decomposition. Then the pseudo-inverse of $A$ is $A^+=V\Sigma^{+}W^\dagger$, where $\Sigma^{+}$ contains the inverses of the diagonal elements of $\Sigma$, except for $0$ entries which remain $0$.

Now it is pretty straightforward to proceed using our singular value transformation methods. 
Suppose that all non-zero singular values are at least $\delta$. Let $P_\Re$ be an odd real polynomial that $\eps$-approximates the function $\delta/(2x)$ on the domain $[-1,1]\setminus(-\delta,\delta)$, then $P_\Re^{(SV)}(A^\dagger)=\Pi U^\dagger_\Phi\widetilde{\Pi}$ \linebreak $\eps$-approximates $\frac{\delta}{2} A^{+}$.
The only thing remaining is to construct a relatively low-degree odd polynomial $P_\Re$ that achieves the desired approximation, and which is bounded by $1$ in absolute value on $[-1,1]$, in order to be able to apply Corollary~\ref{cor:matchingParity}. Childs et al.~\cite[Lemmas 17-19]{childs2015QLinSysExpPrec} constructed a useful polynomial for improving the HHL algorithm, which we can use after some adjustments.
\begin{lemma}\emph{(Polynomial approximations of $1/x$, \cite[Lemmas 17-19]{childs2015QLinSysExpPrec})}\label{lemma:clidsPoly}
	Let $\kappa>1$ and $\eps\in(0,\frac12)$. For $b = \left\lceil\kappa^2\log(\kappa/\eps)\right\rceil$ the odd function 
	$$f(x) := \frac{1 - (1 - x^2)^b}{x}$$
	is $\eps$-close to $1/x$ on the domain $[-1,1]\setminus(-\frac1\kappa,\frac1\kappa)$.	
	Let $J:= \left\lceil\sqrt{b\log(4b/\eps)}\right\rceil$, then the $\bigO{\kappa\log(\frac\kappa\eps)}$-degree odd real polynomial  
	$$g(x) := 4\sum_{j=0}^{J}(-1)^j\left[\frac{\sum_{i=j+1}^{b}\binom{2b}{b+i}}{2^{2b}}\right]	T_{2j+1}(x)$$
	is $\eps$-close to $f(x)$ on the interval $[-1, 1]$, moreover $|P(x)|\leq 4 J=\bigO{\kappa\log(\frac\kappa\eps)}$ on this interval.
\end{lemma}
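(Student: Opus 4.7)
The plan is to handle the two claims of the lemma separately, starting with the easier one.

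For the first claim, I would observe that $f(x) - 1/x = -(1-x^2)^b/x$, so it suffices to bound $|(1-x^2)^b/x|$ on $[-1,1]\setminus(-1/\kappa,1/\kappa)$. On this domain we have $|x| \geq 1/\kappa$ and $|1-x^2| \leq 1 - 1/\kappa^2 \leq e^{-1/\kappa^2}$, hence $|(1-x^2)^b/x| \leq \kappa\, e^{-b/\kappa^2}$. Plugging in $b = \lceil\kappa^2 \log(\kappa/\eps)\rceil$ makes this $\leq \eps$, as required.

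For the second claim, the key step is to identify the Chebyshev expansion of $f$. Writing $x = \cos\theta$ we have $f(\cos\theta)\cos\theta = 1 - \sin^{2b}\theta$. Using the standard expansion
\begin{equation*}
\sin^{2b}\theta = \frac{1}{2^{2b}}\binom{2b}{b} + \frac{2}{2^{2b}}\sum_{j=1}^{b}(-1)^j\binom{2b}{b+j}\cos(2j\theta)
\end{equation*}
and the identity $\binom{2b}{b} + 2\sum_{j=1}^{b}\binom{2b}{b+j} = 2^{2b}$, the constant terms combine to give $1 - \sin^{2b}\theta = \frac{2}{2^{2b}}\sum_{j=1}^{b}\binom{2b}{b+j}\bigl(1 - (-1)^j\cos(2j\theta)\bigr)$. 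Now if we posit $f(\cos\theta) = \sum_{j\geq 0} c_j\cos((2j+1)\theta)$ and use $\cos((2j+1)\theta)\cos\theta = \tfrac12[\cos(2j\theta)+\cos((2j+2)\theta)]$, matching the $\cos(2j\theta)$ coefficients on both sides yields a two-term recurrence which I would solve by induction (starting from $c_0 = \frac{4}{2^{2b}}\sum_{i=1}^{b}\binom{2b}{b+i}$) to obtain
\begin{equation*}
c_j \;=\; \frac{4(-1)^j}{2^{2b}}\sum_{i=j+1}^{b}\binom{2b}{b+i}, \qquad j = 0,1,\ldots,b-1.
\end{equation*}
Thus $g(x)$ is precisely the degree-$(2J+1)$ truncation of the Chebyshev expansion of $f$.

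To bound the truncation error, since $|T_k(x)|\leq 1$ on $[-1,1]$, I would write $|f(x)-g(x)| \leq \sum_{j=J+1}^{b-1}|c_j| \leq \frac{4}{2^{2b}}\sum_{i\geq J+2}(i-J-1)\binom{2b}{b+i} \leq \frac{4b}{2^{2b}}\sum_{i>J+1}\binom{2b}{b+i}$, which is essentially $4b \cdot \Pr[\mathrm{Bin}(2b,1/2) \geq b+J+2]$. A Hoeffding tail bound gives $\leq 4b\, e^{-J^2/b}$, and the choice $J = \lceil\sqrt{b\log(4b/\eps)}\rceil$ makes this at most $\eps$. The degree bound $2J+1 = \bigO{\kappa\log(\kappa/\eps)}$ follows from $b = \bigO{\kappa^2\log(\kappa/\eps)}$. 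Finally, the sup-norm bound on $g$ follows from $|c_j|\leq \frac{4}{2^{2b}}\cdot\frac{2^{2b}}{2} = 2$ for each $j$, giving $|g(x)| \leq 2(J+1) \leq 4J$.

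The main obstacle is the bookkeeping in deriving the closed form for the Chebyshev coefficients $c_j$: it requires carefully expanding $\sin^{2b}\theta$ via $(2i\sin\theta)^{2b} = (e^{i\theta}-e^{-i\theta})^{2b}$, pairing complex-conjugate terms, and then solving the induced recurrence by telescoping. Once that expansion is in hand, the tail bound is a routine Chernoff-type argument on the binomial distribution.
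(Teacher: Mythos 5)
The paper does not actually prove this lemma itself---it imports it verbatim from Childs, Kothari and Somma (their Lemmas 17--19), so there is no ``paper's own proof'' to compare against. Your proof is a correct reconstruction of the argument behind that cited result, and it follows the same route: (1) bound $|f(x)-1/x| = |(1-x^2)^b/x| \leq \kappa\, e^{-b/\kappa^2}$ on $|x|\geq 1/\kappa$; (2) obtain the exact odd Chebyshev expansion $f(x)=\sum_{j=0}^{b-1}c_j T_{2j+1}(x)$ by substituting $x=\cos\theta$, expanding $\sin^{2b}\theta$ via $(e^{i\theta}-e^{-i\theta})^{2b}$, and solving the resulting two-term recurrence to get $c_j = \frac{4(-1)^j}{2^{2b}}\sum_{i=j+1}^b\binom{2b}{b+i}$; and (3) truncate at degree $2J+1$ and bound the tail $\sum_{j>J}|c_j|$ by a Chernoff/Hoeffding bound on binomial tails. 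I checked the recurrence, the identity $\binom{2b}{b}+2\sum_{j\geq 1}\binom{2b}{b+j}=2^{2b}$, and the tail bound $4b\,e^{-J^2/b}\leq\eps$ given $J=\lceil\sqrt{b\log(4b/\eps)}\rceil$: all correct. Two minor cosmetic points: your intermediate bound $\sum_{j=J+1}^{b-1}|c_j| \leq \frac{4}{2^{2b}}\sum_{i\geq J+2}(i-J-1)\binom{2b}{b+i}$ is in fact an equality (you sum $\binom{2b}{b+i}$ over the $i-J-1$ values of $j$ with $J+1\leq j\leq i-1$), and the ``$|P(x)|\leq 4J$'' in the lemma statement is a typo for $|g(x)|$; you handled both correctly.
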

\begin{theorem}[Implementing the Moore-Penrose pseudoinverse]\label{thm:pseudoinverse}
	Let $U,\Pi, \widetilde{\Pi}$ be as in Theorem~\ref{thm:singValTransformation} and let $0<\eps\leq\delta\leq \frac12$. Suppose that $A=\widetilde{\Pi}U\Pi=W \Sigma V^\dagger$ is a singular value decomposition. Let $\Pi_{0,\geq\delta}:=\Pi_{=0}+\Pi_{\geq\delta}$ and similarly $\widetilde{\Pi}_{0,\geq\delta}:=\widetilde{\Pi}_{=0}+\widetilde{\Pi}_{\geq\delta}$. Then there is an $m= \bigO{\frac{1}{\delta}\log(\frac{1}{\eps})}$ and an efficiently computable $\Phi\in\R^m$ such that\textsuperscript{\ref{foot:plusQubitLocation}}
	\begin{equation}\label{eq:pseudoinverse}
	\nrm{\Big(\bra+\otimes\Pi_{0,\geq\delta}\Big)U_\Phi \Big(\ket+\otimes\widetilde{\Pi}_{0,\geq\delta}\Big)-\Pi_{0,\geq\delta} \left(\frac{\delta}{2}\cdot A^+\!\right) \widetilde{\Pi}_{0,\geq\delta}}\leq \eps.
	\end{equation}
	Moreover, $U_\Phi$ can be implemented using a single ancilla qubit with $m$ uses of $U$ and $U^\dagger$, $m$ uses of C$_\Pi$NOT and $m$ uses of C$_{\widetilde{\Pi}}$NOT gates and $m$ single qubit gates.
\end{theorem}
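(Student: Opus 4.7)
The plan is to construct a real odd polynomial $P_\Re\in\R[x]$ of degree $m=\bigO{(1/\delta)\log(1/\eps)}$ that is bounded by $1$ in absolute value on $[-1,1]$ and that $\eps$-approximates $\delta/(2x)$ on $[-1,-\delta]\cup[\delta,1]$; then applying Corollary~\ref{cor:matchingParity} to the projected unitary encoding $A^\dagger=\Pi U^\dagger\widetilde{\Pi}$ of $A^\dagger$ (i.e., swapping the roles of $\widetilde{\Pi}$ and $\Pi$ and using $U^\dagger$ in place of $U$ throughout Definition~\ref{def:phaseSeq}) yields $\Phi\in\R^m$ and $U_\Phi$ with $(\bra{+}\otimes\Pi)U_\Phi(\ket{+}\otimes\widetilde{\Pi})=P_\Re^{(SV)}(A^\dagger)$, while Lemma~\ref{lemma:implementingPhasedSeq} supplies the stated circuit count.

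For the polynomial, I start from the Childs--Kothari--Somma polynomial $g$ of Lemma~\ref{lemma:clidsPoly} at $\kappa:=1/\delta$, which is odd, has degree $\bigO{(1/\delta)\log(1/(\delta\eps))}=\bigO{(1/\delta)\log(1/\eps)}$ using $\eps\leq\delta$, and approximates $1/x$ to any chosen precision $\eps_g$ on $[-1,-\delta]\cup[\delta,1]$. The principal obstacle is that $|g|$ can reach $\Theta(\kappa\sqrt{\log(\kappa/\eps)})$ on $(-\delta,\delta)$, so the na\"ive rescaling $(\delta/2)g$ violates the $\|\cdot\|_{[-1,1]}\leq 1$ requirement of Corollary~\ref{cor:matchingParity}. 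I kill the spike by multiplying by a ``smoothed indicator'' $\tilde{R}$ of $\{|x|\geq\delta\}$: by Lemma~\ref{lemma:polyRect} (say with $t:=3\delta/4$, $\delta':=\delta/4$, and $\eps':=\bigO{\eps/\sqrt{\log(1/\eps)}}$) build an even $\tilde{R}$ of the same asymptotic degree, taking values in $[0,1]$ on $[-1,1]$, with $\tilde{R}\leq\eps'$ on $(-\delta/2,\delta/2)$ and $\tilde{R}\geq 1-\eps'$ on $\{|x|\geq\delta\}$, and set $P_\Re(x):=\frac{\delta}{2}g(x)\tilde{R}(x)$, which is odd and has the claimed degree.

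A three-region case analysis verifies both properties of $P_\Re$. On $|x|\geq\delta$, $\tilde{R}$ is essentially $1$ and $|g(x)|\leq 1/|x|+\bigO{\eps_g}$, so $|P_\Re(x)-\delta/(2x)|\leq\eps$ (taking $\eps_g=\bigO{\eps/\delta}$) and $|P_\Re(x)|\leq 1/2+\bigO{\eps'}\leq 1$. On the transition $\delta/2\leq|x|\leq\delta$, $\tilde{R}\in[0,1]$ but direct evaluation of $f(x)=(1-(1-x^2)^b)/x$ with $b=\bigO{\kappa^2\log(\kappa/\eps)}$ gives $|g|\leq 2\kappa+\bigO{\eps_g}$, keeping $|P_\Re|\leq 1$. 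On $|x|\leq\delta/2$, the factor $\tilde{R}(x)\leq\eps'$ tames the $\Theta(\kappa\sqrt{\log(\kappa/\eps)})$ spike of $g$, so $|P_\Re(x)|\leq\bigO{\eps'\sqrt{\log(1/\eps)}}\leq\eps\leq 1$ for the chosen $\eps'$.

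With such a $P_\Re$ in hand, the operator-norm bound \eqref{eq:pseudoinverse} unfolds from Definition~\ref{def:PolySVTrans}: for odd $P_\Re$ one has $P_\Re^{(SV)}(A^\dagger)=\sum_i P_\Re(\varsigma_i)\ketbra{\psi_i}{\tilde\psi_i}$, and the threshold projectors $\Pi_{0,\geq\delta}=\Pi_{=0}+\Pi_{\geq\delta}$ and $\widetilde{\Pi}_{0,\geq\delta}$ of Definition~\ref{def:sinValThrProj} restrict the sum to those indices with $\varsigma_i\in\{0\}\cup[\delta,1]$. The $\varsigma_i=0$ terms give $P_\Re(0)=0$, matching the vanishing of $(\delta/2)A^+$ on the kernel; the $\varsigma_i\geq\delta$ terms satisfy $|P_\Re(\varsigma_i)-\delta/(2\varsigma_i)|\leq\eps$, which transfers termwise into the operator-norm bound. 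The hard part throughout is the polynomial design above; once $P_\Re$ is produced, the rest is a black-box invocation of results already developed.
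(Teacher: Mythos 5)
Your construction is essentially the paper's own proof: rescale the Childs–Kothari–Somma polynomial for $1/x$ (Lemma~\ref{lemma:clidsPoly}) by $\delta/2$, multiply by a smooth cutoff from Lemma~\ref{lemma:polyRect} (the paper writes $P\cdot(1-P')$; your $\tilde R$ is implicitly $1-P'$) to kill the spike near zero, and apply Corollary~\ref{cor:matchingParity} to $A^\dagger=\Pi U^\dagger\widetilde{\Pi}$, reading off the circuit cost from Lemma~\ref{lemma:implementingPhasedSeq}. One small slip worth noting: you quote the peak of $g$ as $\Theta(\kappa\sqrt{\log(\kappa/\eps)})$, but Lemma~\ref{lemma:clidsPoly} only bounds it by $4J=\bigO{\kappa\log(\kappa/\eps)}$, so your claimed bound $|P_\Re|\leq\bigO{\eps'\sqrt{\log(1/\eps)}}$ near zero should read $\bigO{\eps'\log(1/\eps)}$; with your $\eps'=\bigO{\eps/\sqrt{\log(1/\eps)}}$ this is $\bigO{\eps\sqrt{\log(1/\eps)}}$ rather than $\bigO{\eps}$, which is still $\leq 1$ (the only property actually needed there) but not the $\leq\eps$ you assert — the paper avoids this by choosing $\eps':=\min(\eps/3,1/P_{\max})$ directly in terms of $P_{\max}$.
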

\begin{proof}
	Using Lemma~\ref{lemma:clidsPoly} we can construct an odd polynomial $P(x)$ of degree $\bigO{\log(1/\eps)/\delta}$ that $\frac\eps3$-approximates the function $\frac{\delta}{2x}$ on the domain $[-1,1]\setminus(-\frac\delta2,\frac\delta2)$, and is less than $1$ on this domain. Let us define $P_{\max}:=\max_{x\in[-1,1]}|P(x)|$ and observe that $P_{\max}=\bigO{\log(\frac1\eps)}$. Let us also construct an even polynomial $P'$ of degree $\bigO{\log(1/\eps)/\delta}$ using Lemma~\ref{lemma:polyRect} setting $t:=\frac34\delta$, $\delta':=\frac\delta4$ and $\eps':=\min\left(\frac\eps3,\frac{1}{P_{\max}}\right)$ that $\eps'$-approximates the rectangle function. Finally let $P_\Re:=P\cdot (1-P')$, which is and odd real polynomial of degree $m=\bigO{\log(1/\eps)/\delta}$. It is easy to see that $P_\Re$ $\eps$-approximates $\frac{\delta}{2x}$ on the domain $[-1,1]\setminus(-\frac\delta2,\frac\delta2)$, moreover $P_\Re$ is bounded by $1$ in absolute value on $[-1,1]$. Finally, we apply real singular value transformation on $A^\dagger=\Pi U^\dagger\widetilde{\Pi}$ using the polynomial $P_\Re$ by Corollary~\ref{cor:matchingParity}, and conclude the gate complexity using Lemma~\ref{lemma:implementingPhasedSeq}.
\end{proof}

Note that the $\eps\leq\delta$ assumption in the above statement is quite natural, but it is not necessary, and can be removed by using our general polynomial approximation results, see Corollary~\ref{cor:oneOverX}.

\subsection{Applications in quantum machine learning}
\label{subsec:QML}
The ability to transform singular values is central to the operation of many popular machine learning methods.  Quantum machine learning methods such as quantum support vector machines~\cite{rebentrost2014QSVM}, principal component analysis~\cite{lloyd2013QPrincipalCompAnal,wiebe2017hardeningQMLAgainstAdv}, regression~\cite{harrow2009QLinSysSolver,wiebe2012QDataFitting,childs2015QLinSysExpPrec,chakraborty2018BlockMatrixPowers}, slow feature analysis~\cite{kerenidis2018SlowFeatureAnalysis}, Gibbs sampling~\cite{chowdhury2016QGibbsSampling,apeldoorn2017QSDPSolvers} and in turn training Boltzmann machines~\cite{amin2016QBoltzMachine,kieferova2016tomographyTrainQBoltzMachine} all hinge upon this idea.  These results are among the most celebrated in quantum machine learning, showing that singular value transformation has substantial impact on this field of quantum computing.  

Many quantum algorithms for basic machine learning problems, such as ordinary least squares, weighted least squares, generalized least squares, were studied in a series of works \cite{harrow2009QLinSysSolver,wiebe2012QDataFitting,childs2015QLinSysExpPrec,chakraborty2018BlockMatrixPowers}. We do not examine these problems case-by-case, but point out that they can all be reduced to implementing the Moore-Penrose pseudoinverse and matrix multiplication, therefore they can be straightforwardly implemented by Theorem~\ref{thm:pseudoinverse} and Lemma~\ref{lemma:disjointAncillaProduct} (to be discussed in Subsection~\ref{subsec:mult}) within our framework.

We demonstrate how to apply our singular value transformation techniques to quantum machine learning by developing a new quantum algorithm for principal component regression.  This machine learning algorithm is closely related to principal component analysis (PCA), which is a tool that is commonly used to reduce the effective dimension of a model by excising irrelevant features from it.  The PCA method is quite intuitive, it simply involves computing the covariance matrix for a data set and then diagonalizing it.  The eigenvectors of the covariance matrix then represent the independent directions of least or greatest variation in the data.  Dimension reduction can be achieved by culling out any components that have negligibly small variation over them.  This technique has many applications ranging from anomaly detection to quantitative finance.  Quantum algorithms are known for this task and can lead to substantial speed-ups under appropriate assumptions about the data and the oracles used to provide it to the algorithm~\cite{lloyd2013QPrincipalCompAnal,kimmel2016hamiltonian}.

Principal component regression is identical in spirit to principal component analysis.  Rather than simply truncating small eigenvalues of the covariance matrix for a data set, principal component regression aims to approximately reconstruct a target vector on a domain/range that is spanned by the right or left singular vectors with large singular values.  A least-squares type estimation of the target vector within this subspace of the data can be found by performing a singular vector transformation.  This can provide a more flexible and powerful approach to dimensionality reduction than ordinary principal component analysis permits.

	 The problem of principal component regression can be formally stated as follows \cite{frostig2016PrincipalCompProjwoPCA}: given a matrix $A\in\R^{n\times d}$, a vector $b\in\R^{n}$ and a threshold value $0<\varsigma$, find $x\in\R^d$ such that
	\begin{equation}\label{eq:defPCR}
	x=\mathrm{argmin}_{x\in\R^d}\nrm{\widetilde{\Pi}_{\geq\varsigma}A\Pi_{\geq\varsigma}x-b},
	\end{equation}
	where $\widetilde{\Pi}_{\geq\varsigma}, \Pi_{\geq\varsigma}$ denote left and right singular value threshold projectors. A closed-form expression for the optimal solution of \eqref{eq:defPCR} is given by $x=\Pi_{\geq\varsigma}A^{+}\widetilde{\Pi}_{\geq\varsigma}b=A^{+}\widetilde{\Pi}_{\geq\varsigma}b$. 
	
	As the following corollary shows, our singular value transformation techniques give rise to an efficient quantum algorithm for implementing $\Pi_{\geq\varsigma}A^{+}\widetilde{\Pi}_{\geq\varsigma}$, and thus principal component regression. 
	
	\begin{corollary}[Implementing the threshold pseudoinverse]\label{cor:thresholdinverse}
		Let $U,\Pi, \widetilde{\Pi}$ form a projected unitary encoding of the matrix $A$, and let $\eps,\delta\in(0,\frac12]$ and $0<\varsigma<1$. Suppose that $A=\widetilde{\Pi}U\Pi=W \Sigma V^\dagger$ is a singular value decomposition of the projected unitary encoding of $A$. 
		Then there is an $m= \bigO{\frac{1}{\delta}\log(\frac{1}{\eps})}$ and an efficiently computable $\Phi\in\R^m$ such that\textsuperscript{\ref{foot:plusQubitLocation}}
		\begin{equation}\label{eq:thresholdinverse}
		\nrm{\Big(\bra+\otimes\big(\Pi-\Pi_{[\varsigma-\delta,\varsigma+\delta]}\big)\Big)U_\Phi \Big(\ket+\otimes\big(\widetilde{\Pi}-\widetilde{\Pi}_{[\varsigma-\delta,\varsigma+\delta]}\big)\Big)-\Pi_{\geq\varsigma} \left(\frac{\varsigma}{2}A^+\right) \widetilde{\Pi}_{\geq\varsigma}}\leq \eps.
		\end{equation}
		Moreover, $U_\Phi$ can be implemented using a single ancilla qubit with $m$ uses of $U$ and $U^\dagger$, $m$ uses of C$_\Pi$NOT and $m$ uses of C$_{\widetilde{\Pi}}$NOT gates and $m$ single qubit gates.
	\end{corollary}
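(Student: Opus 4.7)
My plan is to adapt the proof of Theorem~\ref{thm:pseudoinverse} by replacing its cutoff around $x=0$ with a \emph{shifted} soft-rectangle damp centred at $\pm\varsigma$, so that the polynomial approximating the inverse is nonzero only on singular values above the threshold $\varsigma$. Real singular value transformation (Corollary~\ref{cor:matchingParity}) applied to the projected unitary encoding $A^\dagger=\Pi U^\dagger\widetilde{\Pi}$ with such an odd real polynomial $P_\Re$ will yield the claimed circuit $U_\Phi$, and sandwiching by the two threshold projectors will produce the stated approximation of $\Pi_{\ge\varsigma}(\tfrac{\varsigma}{2}A^+)\widetilde{\Pi}_{\ge\varsigma}$.

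Concretely, I would set $P_\Re(x):=P(x)(1-P'(x))$, where $P$ is the odd Childs--Kothari--Somma polynomial of Lemma~\ref{lemma:clidsPoly} (rescaled by $\varsigma/2$, with $\kappa$ of order $1/\varsigma$) that $(\eps/3)$-approximates $\varsigma/(2x)$ on $[-1,1]\setminus(-\varsigma+\delta,\varsigma-\delta)$, and $1-P'$ is the complementary soft-rectangle from Lemma~\ref{lemma:polyRect} with centre $t=\varsigma$, half-width $\delta$, and inner tolerance $\eps':=\min(\eps/3,\,1/\nrm{P}_{[-1,1]})$. The latter choice is the critical bookkeeping step that keeps $|P_\Re|\le 1$: on the damp region $|x|\le\varsigma-\delta$ one has $|1-P'|\le\eps'\le 1/\nrm{P}_{[-1,1]}$, so $|P_\Re|\le 1$; on the pass region $|x|\ge\varsigma+\delta$ the factor $|P(x)|$ is close to $\varsigma/(2|x|)\le 1/2$ (under the natural regime $\varsigma\ge 2\delta$) and $|1-P'|\le 1$, again giving $|P_\Re|\le 1$; the transition zone $|x|\in[\varsigma-\delta,\varsigma+\delta]$ interpolates between these two bounds. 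Parity is odd$\times$even$=$odd, and under the same $\varsigma\ge 2\delta$ regime the combined degree is $m=\bigO{\log(1/\eps)/\delta}$, as claimed.

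I would then invoke Corollary~\ref{cor:matchingParity} to obtain an efficiently computable $\Phi\in\R^m$ realizing $(\bra+\otimes\Pi)U_\Phi(\ket+\otimes\widetilde{\Pi})=P_\Re^{(SV)}(A^\dagger)$. Writing the SVD $A=\sum_i\varsigma_i\ket{\tilde\psi_i}\bra{\psi_i}$, Definition~\ref{def:PolySVTrans} gives $P_\Re^{(SV)}(A^\dagger)=\sum_i P_\Re(\varsigma_i)\ket{\psi_i}\bra{\tilde\psi_i}$. Sandwiching by the projectors $\Pi-\Pi_{[\varsigma-\delta,\varsigma+\delta]}$ and $\widetilde{\Pi}-\widetilde{\Pi}_{[\varsigma-\delta,\varsigma+\delta]}$ annihilates every mode with $\varsigma_i\in[\varsigma-\delta,\varsigma+\delta]$; the remaining modes with $\varsigma_i<\varsigma-\delta$ satisfy $|P_\Re(\varsigma_i)|\le\eps$, while those with $\varsigma_i>\varsigma+\delta$ satisfy $|P_\Re(\varsigma_i)-\varsigma/(2\varsigma_i)|\le\eps$, so the triangle inequality yields the required $\eps$-closeness to $\Pi_{\ge\varsigma}(\tfrac{\varsigma}{2}A^+)\widetilde{\Pi}_{\ge\varsigma}$ (after rescaling $\eps$ by a constant at the outset). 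The gate-count bound follows immediately from Lemma~\ref{lemma:implementingPhasedSeq}, and efficient computability of $\Phi$ from Corollary~\ref{cor:realP}.

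The hard part will be the same fragile norm-bookkeeping as in Theorem~\ref{thm:pseudoinverse}: the inverse-approximation $P$ grows near the origin, and without the $\eps'\sim1/\nrm{P}_\infty$ calibration the product $P_\Re$ would fail the $|P_\Re|\le 1$ hypothesis of Corollary~\ref{cor:refAchievableP}; one must also verify carefully that this choice of $\eps'$ is compatible with the stated degree bound, which is why the $\varsigma\gtrsim\delta$ regime is natural. Once these verifications are in place, the remainder is a routine combination of the polynomial constructions and transformation results already developed in the paper.
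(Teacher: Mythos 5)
Your overall plan matches the paper's (extremely terse) proof: replace the naive two-step threshold-then-pseudoinverse by a single polynomial built as a product of an odd Childs--Kothari--Somma inverse approximation and an even soft-rectangle damp shifted to $\pm\varsigma$, then invoke Corollary~\ref{cor:matchingParity} on $A^\dagger$. The paper simply says ``take the polynomials used in Theorem~\ref{thm:singularVecProjs} and Theorem~\ref{thm:pseudoinverse} and multiply them''; your direct construction $P_\Re = P\cdot(1-P')$ with a single shifted rectangle is a reasonable and arguably cleaner instantiation of the same idea.

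However, there is a concrete bookkeeping gap in the damp region. You set $\eps' := \min(\eps/3,\,1/\nrm{P}_{[-1,1]})$ and correctly argue this gives $|P_\Re|\le \nrm{P}_{[-1,1]}\cdot\eps'\le 1$ for $|x|\le\varsigma-\delta$. But later you assert that ``the remaining modes with $\varsigma_i<\varsigma-\delta$ satisfy $|P_\Re(\varsigma_i)|\le\eps$.'' That stronger bound does not follow from your choice of $\eps'$: with $\nrm{P}_{[-1,1]}=\Theta(\log(1/(\varsigma\eps)))$, which grows without bound as $\eps\to 0$, the estimate $\nrm{P}_{[-1,1]}\cdot\eps'$ is only bounded by $\min(\nrm{P}_{[-1,1]}\,\eps/3,\,1)$ and this is not $\le\eps$ in general. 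The $\eps$-smallness on the damp region is genuinely needed here (unlike in Theorem~\ref{thm:pseudoinverse}, where the ``small'' modes are excluded by $\Pi_{0,\ge\delta}$), because the projector $\Pi-\Pi_{[\varsigma-\delta,\varsigma+\delta]}$ does \emph{not} kill the modes with $0<\varsigma_i<\varsigma-\delta$, while the target $\Pi_{\ge\varsigma}(\tfrac{\varsigma}{2}A^+)\widetilde{\Pi}_{\ge\varsigma}$ vanishes on them, so each such mode contributes $|P_\Re(\varsigma_i)|$ to the operator-norm error. The fix is simple but must be stated: take $\eps' = \Theta\bigl(\eps/\nrm{P}_{[-1,1]}\bigr)$ rather than $\min(\eps/3, 1/\nrm{P}_{[-1,1]})$. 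Then $|P_\Re|\le\eps$ on the damp region and $|P_\Re|\le 1$ there a fortiori, and the degree bound is unharmed since $\log(1/\eps') = \bigO{\log(1/\eps) + \log\log(1/\eps)} = \bigO{\log(1/\eps)}$. As written, the step ``the triangle inequality yields the required $\eps$-closeness'' is not supported by the earlier estimates.

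One minor remark: your $|P_\Re|\le 1$ claim on the transition band $|x|\in[\varsigma-\delta,\varsigma+\delta]$ ``interpolates between these two bounds'' is a bit hand-wavy --- $|P|$ can exceed $1$ slightly below $\varsigma+\delta$ unless you take $P$ to be $\varsigma/(2x)$-approximating already from $|x|\ge\varsigma-\delta$ (as you in fact do) and argue $\varsigma/(2(\varsigma-\delta))\le 1$ under $\varsigma\ge 2\delta$; spelling this out would make the argument airtight, and the $\varsigma\ge 2\delta$ caveat you already flag is indeed needed for the stated degree $\bigO{\frac{1}{\delta}\log(1/\eps)}$ to subsume the Childs polynomial's degree.
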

	\begin{proof}
	We can implement this operator by first applying a singular value threshold projector $\widetilde{\Pi}_{\geq\varsigma}$ according to Theorem~\ref{thm:singularVecProjs}, followed by performing the Moore-Penrose pseudoinverse $A^+$ as in Theorem~\ref{thm:pseudoinverse}.
	
	Implementing these two operations separately is actually suboptimal. In order to get the stated result we simply take the polynomials used for singular value transformation in Theorem~\ref{thm:singularVecProjs} and Theorem~\ref{thm:pseudoinverse}, then take their product and implement singular value transformation according to the product polynomial. The complexity statement can be proven similarly to the proofs of Theorem~\ref{thm:singularVecProjs} and Theorem~\ref{thm:pseudoinverse}.
	\end{proof}
	
	Given a unitary preparing a quantum state $\ket{b}$ we can approximately solve principal component regression by applying an approximation of $\Pi_{\geq\varsigma}\left(\frac{\varsigma}{2}A^+\right)\widetilde{\Pi}_{\geq\varsigma}$ to $\ket{b}$, and then applying amplitude amplification to get $\ket{x}$. Strictly speaking, in order for this to work as required by \eqref{eq:defPCR}, we would need to have the promise that $\ket{b}$ does not have an overlap with left singular vectors that have eigenvalues in $[\varsigma-\delta,\varsigma+\delta]$, while it does have a non-negligible overlap with left singular vectors having singular value $>\varsigma+\delta$. In fact, due to the nature of singular value transformation, for a left singular vector $\ket{w_j}$ with singular value $\varsigma_j\in[\varsigma-\delta,\varsigma+\delta]$ the procedure still performs a meaningful operation: it maps $\ket{w_j}\rightarrow f(\varsigma_j)\ket{v_j}$, such that $f(\varsigma_j)\in[-1,1]$. 
	It is plausible to believe that the transition behavior on $[\varsigma-\delta,\varsigma+\delta]$ would in practice not significantly degrade the performance of typical machine learning applications, therefore the promise of not having singular values in $[\varsigma-\delta,\varsigma+\delta]$ is probably not crucial.
	Also note that an essentially quadratic improvement to the runtime of the above procedure can be achieved using variable-time amplitude amplification techniques~\cite{ambainis2010VTAA,childs2015QLinSysExpPrec,chakraborty2018BlockMatrixPowers}.
	
	Finally, we briefly discuss a recently developed quantum machine learning algorithm which is significantly more complex then the previous algorithm, but can still be easily fitted to our framework. Kerenidis and Luongo recently proposed a quantum algorithm for slow feature analysis~\cite{kerenidis2018SlowFeatureAnalysis}. The main ingredient of their algorithm is to apply a threshold projection on some input state, i.e., to project onto the subspace spanned by the singular vectors of a matrix with singular values smaller than a certain threshold. Their algorithm is based on singular value estimation, whereas our Theorem~\ref{thm:singularVecProjs} approaches the same problem in a more direct way, by transforming the singular values according to a threshold function. 
	
	In the first step of the quantum algorithm of Kerenidis and Luongo, the task is to implement $Y:=V\Sigma^{-1} V^\dagger$ for a given input matrix $X=W\Sigma V^\dagger$. In our framework this can be performed analogously to Theorem~\ref{thm:pseudoinverse} using singular value transformation; the only difference is that one needs to use an even polynomial approximation of $\frac{1}{x}$, for example given by Corollary~\ref{cor:negatiwePower}. In the next step, one needs to implement singular value threshold projection using the matrix $\dot{X}Y$ for a given ``derivative'' matrix $\dot{X}$. Taking the product\footnote{In case we would have a subnormalized version of $\dot{X}$, in order to get maximal efficiency, it usually worth amplifying $\dot{X}$ using Theorem~\ref{thm:singularValAmp} before taking the product $\dot{X}Y$.} of the two matrices can be implemented using Lemma~\ref{lemma:disjointAncillaProduct}, after which we can use our Theorem~\ref{thm:singularVecProjs} to implement singular value threshold projection. 
	
\section{Matrix Arithmetics using blocks of unitaries}
\label{sec:matArith}

In this section we describe a generic toolbox for implementing matrix calculations on a quantum computer in an operational way, representing the vectors as quantum states. The matrix arithmetics methodology we propose carries out all calculations in an operational way, such that the matrices are represented by blocks of unitary operators of the quantum system, thereby can in principle result in exponential speed-ups in terms of the dimension of the matrices. The methodology we describe is a distilled version of the results of a series of works on quantum algorithms \cite{harrow2009QLinSysSolver,berry2014HamSimTaylor,childs2015QLinSysExpPrec,low2016HamSimQubitization,apeldoorn2017QSDPSolvers,chakraborty2018BlockMatrixPowers}.

We present the results in an intuitively structured way. First we define how to represent arbitrary matrices as blocks of unitaries, and show how to efficiently encode various matrices this way. Then we show how to implement addition and subtraction of these matrices, and finally show how to efficiently obtain products of block-encoded matrices. In order to make the results maximally reusable we also give bounds on the propagation of errors arising from inaccurate encodings.

\subsection{Block-encoding}
We introduce a definition of block-encoding which we are going to work with in the rest of the paper.
The main idea is to represents a subnormalized matrix as the upper-left block of a unitary.
$$ U=\left[\begin{array}{cc} A/\alpha & . \\ . & .\end{array}\right] \kern10mm\Longrightarrow\kern10mm A =\alpha (\bra{0}\otimes I)U(\ket{0}\otimes I)$$
\begin{definition}[Block-encoding]\label{def:standardForm}
	Suppose that $A$ is an $s$-qubit operator, $\alpha,\eps\in\R_+$ and $a\in \N$, then we say that the $(s+a)$-qubit unitary $U$ is an $(\alpha,a,\eps)$-block-encoding of $A$, if 
	$$ \nrm{A - \alpha(\bra{0}^{\otimes a}\otimes I)U(\ket{0}^{\otimes a}\otimes I)}\leq \eps. $$
\end{definition}
Note that since $\nrm{U}=1$ we necessarily have $\nrm{A}\leq \alpha+\eps$. Also note that using the above definition it seems that we can only represent square matrices of size $2^s\times 2^s$. However, this is not really a restriction. Suppose that $A\in\C^{n\times m}$, where $n,m\leq 2^s$. Then we can define an embedding matrix denoted by $A_e\in\C^{2^s\times2^s}$ such that the top-left block of $A_e$ is $A$ and all other elements are $0$. This embedding is a faithful representation of the matrices. Suppose that $A,B\in\C^{n\times m}$ are matrices, then $A_e+B_e=(A+B)_e$. Moreover, suppose $C\in\C^{m\times k}$ for some $k\leq 2^s$, then $A_e \cdot C_e=(A\cdot C)_e$.

The above defined block-encoding is a special case of the projected-encoding of Definition~\ref{def:singDec}, therefore we can later apply our singular value transformation results for block-encoded matrices. In this manner the advantage of block-encoding is that the C$_\Pi$NOT gate which is required in order to implement the gates of Figure~\ref{fig:qubitization} is just a Toffoli gate on $a+1$ qubits, which can be implemented by $\bigO{a+1}$ two-qubit gates and using a single additional ancilla qubit~\cite{he2017ToffoliLinearGateComplexity}.

\subsection{Constructing block-encodings}	

\begin{definition}[Trivial block-encoding]
	A unitary matrix is a $(1,0,0)$-block-encoding of itself. 
\end{definition}	
If we $\eps$-approximately implement a unitary $U$ using $a$ ancilla qubits via a unitary $\tilde{U}$ acting jointly on the system and the ancilla qubits, then $\tilde{U}$ is an $(1,a,\eps)$-block-encoding of $U$. This is also a rather trivial encoding. Note that we make a slight distinction between ancilla qubits that are exactly returned to their original state after the computation and the ones that might pick up some error. The latter qubits we will treat as part of the encoding, and the former qubits we usually treat separately as purely ancillary qubits.

Now we present some non-trivial ways for constructing block-encodings, which will serve as a toolbox for efficiently inputting and representing matrices for arithmetic computations on a quantum computer. We will denote by $I_w$ a $w$-qubit identity operator, and let $\mathrm{SWAP}_w$ denote the swap operation of two $w$-qubit register. We denote by $\mathrm{CNOT}$ the controlled not gate that targets the first qubit. When clear from the context we use simply notation $\ket{0}$ to denote $\ket{0}^{\otimes w}$.

First we show following Low and Chuang~\cite{low2016HamSimQubitization}, how to create a block-encoding of a purified density operator. This technique can be used in combination with the optimal block-Hamiltonian simulation result Theorem~\ref{thm:blockHamSim}, in order to get much better simulation performance, compared to density matrix exponentiation techniques~\cite{lloyd2013QPrincipalCompAnal,kimmel2016hamiltonian} which does not use purification. This result can be generalized for subnormalized density operators too, for more details see~\cite{apeldoorn2018ImprovedQSDPSolving}.
\begin{lemma}[Block-encoding of density operators]
	Suppose that $\rho$ is an $s$-qubit density operator and $G$ is an $(a+s)$-qubit unitary that on the $\ket{0}\ket{0}$ input state prepares a purification $\ket{0}\ket{0}\rightarrow \ket{\rho}$, s.t. $\mathrm{Tr}_{a}{\ketbra{\rho}{\rho}}=\rho$. Then $(G^\dagger\otimes I_s)(I_a\otimes \mathrm{SWAP}_s)(G\otimes I_s)$ is a $(1,a+s,0)$-block-encoding of $\rho$.
\end{lemma}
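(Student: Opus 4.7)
The plan is to verify the block-encoding condition of Definition~\ref{def:standardForm} directly, namely that
$$ \bigl(\bra{0}^{\otimes(a+s)}\otimes I_s\bigr)\,U\,\bigl(\ket{0}^{\otimes(a+s)}\otimes I_s\bigr) \;=\; \rho, $$
where $U := (G^\dagger\otimes I_s)(I_a\otimes \mathrm{SWAP}_s)(G\otimes I_s)$. Since the claim is an exact $(1,a+s,0)$-encoding, there is no approximation error to control; the entire content is a register-bookkeeping calculation. I will test the identity on an arbitrary $s$-qubit state $\ket{\psi}$ occupying the bottom $s$-register (the ``system''), and show that the top $(a+s)$-qubit register returns to $\ket{0}^{\otimes(a+s)}$ exactly when the system register ends up carrying $\rho\ket{\psi}$.

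To carry out the calculation I would first fix a Schmidt decomposition $\ket{\rho} = \sum_i \sqrt{p_i}\,\ket{\phi_i}_a\,\ket{\psi_i}_s$ with orthonormal families $\{\ket{\phi_i}\}$ and $\{\ket{\psi_i}\}$, so that by the purification hypothesis $\rho = \sum_i p_i \ketbra{\psi_i}{\psi_i}$. Applying $G\otimes I_s$ to $\ket{0}^{a}\ket{0}^{s}\ket{\psi}$ yields $\sum_i \sqrt{p_i}\,\ket{\phi_i}_a\ket{\psi_i}_s\ket{\psi}$; the middle $I_a\otimes\mathrm{SWAP}_s$ exchanges the second and third $s$-registers, producing $\sum_i \sqrt{p_i}\,\ket{\phi_i}_a\ket{\psi}_s\ket{\psi_i}$; and $G^\dagger\otimes I_s$ acts on the first $a+s$ qubits.

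Projecting the first $a+s$ qubits onto $\ket{0}^{\otimes(a+s)}$ then uses the key identity $\bra{0}^{\otimes(a+s)}G^\dagger = \bra{\rho}$. Since
$$ \bra{\rho}\bigl(\ket{\phi_i}_a\ket{\psi}_s\bigr) \;=\; \sum_j \sqrt{p_j}\,\bra{\phi_j}\phi_i\rangle\bra{\psi_j}\psi\rangle \;=\; \sqrt{p_i}\,\braket{\psi_i}{\psi}, $$
the remaining amplitude on the system register is $\sum_i \sqrt{p_i}\cdot\sqrt{p_i}\,\braket{\psi_i}{\psi}\,\ket{\psi_i} = \rho\ket{\psi}$, as desired. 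Since $\ket{\psi}$ was arbitrary this establishes the required operator identity.

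The only subtlety is keeping the three $s$-qubit registers and the one $a$-qubit register distinguished throughout: the ``purification partner'' $s$-register (internal to $G$) must not be confused with the ``system'' $s$-register on which $\rho$ is encoded. There is no real analytic obstacle — the purification assumption plus the orthonormality of the Schmidt basis makes the collapse of the sum automatic.
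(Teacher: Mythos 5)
Your proof is correct and follows essentially the same argument as the paper: both use a Schmidt decomposition $\ket{\rho}=\sum_i\sqrt{p_i}\ket{\phi_i}\ket{\psi_i}$, the identity $\bra{0}^{\otimes(a+s)}G^\dagger=\bra{\rho}$, and orthonormality to collapse the double sum to $\rho$. The paper verifies the matrix elements $\braketbra{\psi_i}{\,\cdot\,}{\psi_j}$ while you verify the action on an arbitrary $\ket{\psi}$, but this is the same calculation in a slightly different dress.
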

\begin{proof}
	Let $r$ be the Schmidt-rank of $\rho$, let $\{\ket{\psi_k}\colon k\in [2^s]\}$ be an orthonormal basis, let $\{\ket{\phi_k}\colon k\in [r]\}$ be an orthonormal system and let $p\in[0,1]^{2^s}$ be such that $\ket{\rho}=\sum_{k=1}^{r}\sqrt{p_k}\ket{\phi_k}\ket{\psi_k}$ and $p_\ell=0$ for all $\ell \in [2^s]\setminus [r]$. Then for all $i,j\in [2^s]$ we have that
	\begin{align*}
		&\bra{0}^{\otimes a+s}\bra{\psi_i}(G^\dagger\otimes I_s)(I_a\otimes \mathrm{SWAP}_s)(G^\dagger\otimes I_s)\ket{0}^{\otimes a+s}\ket{\psi_j}=\\
		&\kern40mm=\bra{\rho}\bra{\psi_i}(I_a\otimes \mathrm{SWAP}_s)\ket{\rho}\ket{\psi_j}\\
		&\kern40mm=\left(\sum_{k=1}^{r}\sqrt{p_k}\bra{\phi_k}\bra{\psi_k}\right)\bra{\psi_i}(I_a\otimes \mathrm{SWAP}_s)\left(\sum_{\ell=1}^{r}\sqrt{p_\ell}\ket{\phi_\ell}\ket{\psi_\ell}\right)\ket{\psi_j}\\
		&\kern40mm=\left(\sum_{k=1}^{r}\sqrt{p_k}\bra{\phi_k}\bra{\psi_k}\bra{\psi_i}\right)\left(\sum_{\ell=1}^{r}\sqrt{p_\ell}\ket{\phi_\ell}\ket{\psi_j}\ket{\psi_\ell}\right)\\	
		&\kern40mm=\sqrt{p_j p_i}\delta_{ij}\\	
		&\kern40mm=\bra{\psi_i}\rho\ket{\psi_j}.\\				
	\end{align*}
	\vskip-12mm
\end{proof}

Apeldoorn and Gilyén~\cite{apeldoorn2018ImprovedQSDPSolving} recently also showed that an implementation scheme for a POVM operator can also be easily transformed to block-encoding of the POVM operators. By an implementation scheme we mean a quantum circuit $U$ that given input $\rho$ and $a$ ancilla qubits, it sets a flag qubit to $0$ with probability $\tr{\rho M}$.
\begin{lemma}[Block-encoding of POVM operators]
	Suppose that $U$ is an $a+s$ qubit unitary, which implements a POVM operator $M$ with $\eps$-precision such that for all $s$-qubit density operator $\rho$
	\begin{equation}\label{eq:POVMApx}
	\left|\tr{\rho M}-\tr{U\left(\ketbra{0}{0}^{\otimes a}\otimes \rho\right)U^\dagger\left(\ketbra{0}{0}^{\otimes 1}\otimes I_{a+s-1}\right)}\right|\leq\eps. 	
	\end{equation}
	Then $(I_1\otimes U^\dagger)(\mathrm{CNOT}\otimes I_{a+s-1})(I_1\otimes U)$ is a $(1,1+a,\eps)$-block-encoding of the matrix $M$.
\end{lemma}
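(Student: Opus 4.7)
The plan is to compute the top-left block of $V := (I_1\otimes U^\dagger)(\mathrm{CNOT}\otimes I_{a+s-1})(I_1\otimes U)$ explicitly, and then use the hypothesis \eqref{eq:POVMApx} to compare it with $M$. First I would label the qubits from left to right as: the single new ancilla ``flag'' qubit (which is targeted by the CNOT), the first qubit of $U$'s register (which is the control of the CNOT and plays the role of the flag qubit in the POVM implementation scheme), the remaining $a-1$ ancilla qubits of $U$, and finally the $s$ system qubits.

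Next I would trace through what $V$ does to an input of the form $\ket{0}\otimes\ket{0}^{\otimes a}\otimes\ket{\psi}$. Writing the action of $U$ on $\ket{0}^{\otimes a}\otimes\ket{\psi}$ in block form as $U(\ket{0}^{\otimes a}\otimes\ket{\psi})=\ket{0}\otimes\ket{\chi_0}+\ket{1}\otimes\ket{\chi_1}$, the first $I_1\otimes U$ produces $\ket{0}\otimes(\ket{0}\otimes\ket{\chi_0}+\ket{1}\otimes\ket{\chi_1})$. Applying $\mathrm{CNOT}\otimes I_{a+s-1}$ copies the flag into the new ancilla, giving $\ket{0}\otimes\ket{0}\otimes\ket{\chi_0}+\ket{1}\otimes\ket{1}\otimes\ket{\chi_1}$, and then $I_1\otimes U^\dagger$ is applied. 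Projecting the leftmost $1+a$ qubits onto $\ket{0}^{\otimes 1+a}$ kills the $\ket{1}\otimes\cdots$ branch and yields, after a short calculation,
\[
(\bra{0}^{\otimes 1+a}\otimes I_s)\,V\,(\ket{0}^{\otimes 1+a}\otimes I_s)
= (\bra{0}^{\otimes a}\otimes I_s)\,U^\dagger\,(\ketbra{0}{0}\otimes I_{a+s-1})\,U\,(\ket{0}^{\otimes a}\otimes I_s) =: M'.
\]
The matrix $M'$ is of the form $B^\dagger B$ with $B = (\ketbra{0}{0}\otimes I_{a+s-1})\,U\,(\ket{0}^{\otimes a}\otimes I_s)$, hence positive semidefinite and in particular Hermitian.

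Finally I would translate the POVM-implementation hypothesis \eqref{eq:POVMApx} into an operator-norm bound on $M-M'$. Using cyclicity of the trace, for every $s$-qubit density operator $\rho$,
\[
\tr{U(\ketbra{0}{0}^{\otimes a}\otimes\rho)U^\dagger(\ketbra{0}{0}\otimes I_{a+s-1})} = \tr{\rho M'},
\]
so \eqref{eq:POVMApx} becomes $|\tr{\rho(M-M')}|\leq \eps$ for all density operators $\rho$. Since $M$ is a POVM element (Hermitian) and $M'$ is Hermitian, $H:=M-M'$ is Hermitian; evaluating on the rank-one density matrix associated to a normalized eigenvector of $H$ with extreme eigenvalue shows $\nrm{H}=\nrm{M-M'}\leq \eps$. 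Combined with the identity computed in the previous paragraph, this is exactly the $(1,1+a,\eps)$-block-encoding condition of Definition~\ref{def:standardForm}, concluding the proof. The only mildly subtle step is the last implication (trace inequality against all density operators implies operator-norm bound), which genuinely uses Hermiticity of $M-M'$; everything else is a direct block-matrix computation.
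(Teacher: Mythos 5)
Your proof is correct and follows essentially the same route as the paper's: a direct block computation identifying the top-left corner of $(I_1\otimes U^\dagger)(\mathrm{CNOT}\otimes I_{a+s-1})(I_1\otimes U)$ with $M':=(\bra{0}^{\otimes a}\otimes I)U^\dagger(\ketbra{0}{0}\otimes I_{a+s-1})U(\ket{0}^{\otimes a}\otimes I)$, cyclicity of trace to rewrite the hypothesis as $|\tr{\rho(M-M')}|\le\eps$ for all $\rho$, and then the conclusion $\nrm{M-M'}\le\eps$. The one place you are actually more careful than the paper is the final step: the paper simply asserts that the trace bound against all density operators ``is equivalent to'' the operator-norm bound, while you correctly flag that this equivalence relies on $M-M'$ being Hermitian (for a non-Hermitian operator the numerical radius can be strictly smaller than the operator norm), and you verify Hermiticity by observing $M'=B^\dagger B\succeq 0$. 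That extra justification is a genuine, if small, improvement in rigor; otherwise the argument matches the paper's.
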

\begin{proof}
	First observe that by the cyclicity of trace we have that
	\begin{align*}
		\tr{U\left(\ketbra{0}{0}^{\otimes a}\otimes \rho\right)U^\dagger\left(\ketbra{0}{0}\otimes I_{a+s-1}\right)}
		&=\tr{U\left(\ket{0}^{\otimes a}\otimes I\right) \rho \left(\bra{0}^{\otimes a}\otimes I\right) U^\dagger\left(\ketbra{0}{0}\otimes I_{a+s-1}\right)}\\
		&=\tr{\rho \left(\bra{0}^{\otimes a}\otimes I\right) U^\dagger\left(\ketbra{0}{0}\otimes I_{a+s-1}\right)U\left(\ket{0}^{\otimes a}\otimes I\right) }.		
	\end{align*}
	Together with \eqref{eq:POVMApx} this implies that for all $\rho$ density operator
	\begin{equation*}
		\left|\tr{\rho\left(M- \left(\bra{0}^{\otimes a}\otimes I\right) U^\dagger\left(\ketbra{0}{0}\otimes I_{a+s-1}\right)U\left(\ket{0}^{\otimes a}\otimes I\right)\right) }\right|\leq \eps,
	\end{equation*}
	which is equivalent to saying that $\nrm{M- \left(\bra{0}^{\otimes a}\otimes I\right) U^\dagger\left(\ketbra{0}{0}\otimes I_{a+s-1}\right)U\left(\ket{0}^{\otimes a}\otimes I\right)}\leq \eps$. We can conclude by observing that 
	\begin{align*}
	&\left(\bra{0}^{\otimes a}\otimes I\right) U^\dagger\left(\ketbra{0}{0}\otimes I_{a+s-1}\right)U\left(\ket{0}^{\otimes a}\otimes I\right)=\\
	&\kern48mm =\left(\bra{0}^{\otimes 1+a}\otimes I\right) \left(I_1\otimes U^\dagger\right)\left(\mathrm{CNOT}\otimes I_{a+s-1}\right)\Big(I_1\otimes U\Big)\left(\ket{0}^{\otimes 1+a}\otimes I\right).
	\end{align*}	
\end{proof}

Now we turn to a more traditional way of constructing block-encodings via state preparation. This is a common technique for example to implement quantum walks. Note that we introduce the notation $[n]-1$ to denote the set $\{0,1,\ldots, n-1\}$.

\begin{lemma}[Block-encoding of Gram matrices by state preparation unitaries]\label{lemma:GramBolck}
	Let $U_L$ and $U_R$ be ``state preparation'' unitaries acting on $a+s$ qubits preparing the vectors $\{\ket{\psi_i}\colon i\in[2^s]-1\}$, $\{\ket{\phi_j}\colon j\in[2^s]-1\}$, s.t.
	\begin{align*}
		&U_L\colon \ket{0}\ket{i}\rightarrow \ket{\psi_i}\\ 
		&U_R\colon \ket{0}\ket{j}\rightarrow \ket{\phi_j}. 		
	\end{align*}
	Then $U=U_L^\dagger U_R$ is an $(1,a,0)$-block-encoding of the Gram matrix $A$ such that $A_{ij}=\braket{\psi_i}{\phi_j}$.
\end{lemma}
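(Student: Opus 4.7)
The plan is to verify directly the defining equation of Definition~\ref{def:standardForm} with parameters $\alpha=1$ and $\eps=0$. Since $U_L^\dagger U_R$ is automatically a unitary on $a+s$ qubits, the only content of the lemma is the matrix identity $(\bra{0}^{\otimes a}\otimes I_s)\,U_L^\dagger U_R\,(\ket{0}^{\otimes a}\otimes I_s)=A$, which I would establish by computing the entries of the top-left block in the computational basis of the $s$-qubit system register.

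The key step is to rewrite the preparation relation $U_L\ket{0}^{\otimes a}\ket{i}=\ket{\psi_i}$ in adjoint form as $\bra{\psi_i}=\bra{0}^{\otimes a}\bra{i}U_L^\dagger$. Combining this with $U_R\ket{0}^{\otimes a}\ket{j}=\ket{\phi_j}$ immediately yields
\[
(\bra{0}^{\otimes a}\otimes\bra{i})\,U_L^\dagger U_R\,(\ket{0}^{\otimes a}\otimes\ket{j}) \;=\; \braket{\psi_i}{\phi_j} \;=\; A_{ij}
\]
for every pair $i,j\in[2^s]-1$. Assembling these entries and summing against $\sum_{i,j}\ket{i}\!\bra{i}\otimes\ket{j}\!\bra{j}$ on the $s$-qubit register gives exactly $(\bra{0}^{\otimes a}\otimes I_s)\,U\,(\ket{0}^{\otimes a}\otimes I_s)=A$.

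There is essentially no obstacle: the entire argument is a single inner-product computation followed by reinterpreting the resulting scalar as the $(i,j)$-entry of $A$. The block-encoding parameters $(1,a,0)$ come out automatically—$\alpha=1$ because no rescaling is needed, $a$ is fixed as the shared ancilla count of $U_L$ and $U_R$, and $\eps=0$ because the identity is exact. The only convention worth noting is the embedding discussion from the start of this subsection: if either state-preparation unitary produces garbage on some indices, the corresponding rows or columns of $A$ are defined as those garbage inner products, and the Gram-matrix interpretation applies to whatever set of vectors is in fact prepared.
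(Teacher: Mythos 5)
Your proof is correct, and it is the straightforward direct verification the authors clearly have in mind (the paper omits the proof, treating the lemma as an immediate consequence of the definitions). The one-line computation $(\bra{0}^{\otimes a}\otimes\bra{i})\,U_L^\dagger U_R\,(\ket{0}^{\otimes a}\otimes\ket{j})=\braket{\psi_i}{\phi_j}=A_{ij}$ is exactly what is needed, and your remarks about the parameters $(1,a,0)$ falling out automatically are accurate.
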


Based on the above idea one can efficiently implement block-encodings of sparse-access matrices.

\begin{lemma}[Block-encoding of sparse-access matrices]\label{lemma:sparseMatrixBlock}
	Let $A\in\C^{2^w\times 2^w}$ be a matrix that is $s_r$-row-sparse and $s_c$-column-sparse, and each element of $A$ has absolute value at most $1$. Suppose that we have access to the following sparse-access oracles acting on two $(w+1)$ qubit registers
	\begin{align*}
		&\mathrm{O}_r\colon \ket{i}\ket{k} \rightarrow \ket{i}\ket{r_{ik}}& &\kern-30mm\forall i\in[2^w]-1, k\in [s_r],\text{ and}\\
		&\mathrm{O}_c\colon \ket{\ell}\ket{j} \rightarrow \ket{c_{\ell j}}\ket{j}& &\kern-30mm\forall \ell\in [s_c], j\in[2^w]-1,\text{ where}		
	\end{align*}
	$r_{ij}$ is the index for the $j$-th non-zero entry of the $i$-th row of $A$, or if there are less than $i$ non-zero entries, then it is $j+2^w$, and similarly $c_{ij}$ is the index for the $i$-th non-zero entry of the $j$-th column of $A$, or if there are less than $j$ non-zero entries, then it is $i+2^w$.
	Additionally assume that we have access to an oracle $\mathrm{O}_A$ that returns the entries of $A$ in a binary description
	\begin{align*}
		&\mathrm{O}_A\colon \ket{i}\ket{j}\ket{0}^{\!\otimes b} \rightarrow \ket{i}\ket{j}\ket{a_{ij}}& &\kern-30mm\forall i,j\in[2^{w}]-1,\text{ where}		
	\end{align*}	
	$a_{ij}$ is a $b$-bit binary description\footnote{For simplicity we assume here that the binary representation is exact.} of the $ij$-matrix element of $A$. 
	Then we can implement a $(\sqrt{s_r s_c},w+3,\eps)$-block-encoding of $A$ with a single use of $\mathrm{O}_r, \mathrm{O}_c$, two uses of $\mathrm{O}_A$ and additionally using $\bigO{w+\log^{2.5}(\frac{s_r s_c}{\eps})}$ one and two qubit gates while using $\bigO{b,\log^{2.5}(\frac{s_r s_c}{\eps})}$ ancilla qubits.
\end{lemma}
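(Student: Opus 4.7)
The plan is to reduce this to the Gram matrix block-encoding of Lemma~\ref{lemma:GramBolck}. The first step is to define state-preparation unitaries $U_L$ and $U_R$ acting on $a = w+3$ ancilla qubits plus the $w$-qubit index register, preparing
\begin{align*}
U_L\colon \ket{0}\ket{i} & \mapsto \ket{\psi_i} := \frac{1}{\sqrt{s_r}}\sum_{k=1}^{s_r}\ket{r_{ik}}\ket{i}\Bigl(a_{i,r_{ik}}\ket{0}+\sqrt{1-|a_{i,r_{ik}}|^2}\ket{1}\Bigr),\\
U_R\colon \ket{0}\ket{j} & \mapsto \ket{\phi_j} := \frac{1}{\sqrt{s_c}}\sum_{\ell=1}^{s_c}\ket{j}\ket{c_{\ell j}}\ket{0},
\end{align*}
where unused indices (when a row/column has fewer than $s_r$/$s_c$ non-zeros) point to ``dummy'' basis states outside the index range, so that they contribute zero to inner products. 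A direct computation then shows $\langle\psi_i|\phi_j\rangle = a_{ij}/\sqrt{s_r s_c}$: the $\ket{r_{ik}}\langle j|$ inner product forces $r_{ik} = j$, the $\ket{i}\langle c_{\ell j}|$ inner product forces $c_{\ell j} = i$, and the final qubit contributes the factor $a_{ij}$. By Lemma~\ref{lemma:GramBolck} the unitary $U_L^\dagger U_R$ is then exactly a $(\sqrt{s_r s_c},w+3,0)$-block-encoding of $A$; the $\epsilon$ tolerance in the claim arises only from imprecision in synthesizing the amplitude-encoding rotation, not from the Gram-matrix reduction itself.

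Next I would describe how to implement $U_L$ and $U_R$ from the given oracles. For $U_R$, starting from $\ket{0}\ket{j}$, one prepares a uniform superposition $\frac{1}{\sqrt{s_c}}\sum_\ell \ket{\ell}$ on the first ancilla subregister (using standard Hadamards if $s_c$ is a power of two, or otherwise the approximate uniform-state-preparation technique of Shende--Bullock--Markov, costing $\mathcal{O}(\log s_c)$ gates), and then applies $\mathrm{O}_c$ to transform $\ket{\ell}\ket{j}\mapsto\ket{c_{\ell j}}\ket{j}$. The last qubit register is left in $\ket{0}$, so $U_R$ uses one call to $\mathrm{O}_c$ and $\mathcal{O}(w)$ additional gates. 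For $U_L$ the construction is analogous but has the extra amplitude-encoding step: after producing $\frac{1}{\sqrt{s_r}}\sum_k \ket{k}\ket{i}$ and applying $\mathrm{O}_r$ to obtain $\frac{1}{\sqrt{s_r}}\sum_k \ket{r_{ik}}\ket{i}$, I would use one call to $\mathrm{O}_A$ to load the $b$-bit description $\ket{a_{i,r_{ik}}}$ into a scratch register, then a controlled rotation based on these bits to prepare $a_{i,r_{ik}}\ket{0}+\sqrt{1-|a_{i,r_{ik}}|^2}\ket{1}$ on the last ancilla qubit, and finally a second call to $\mathrm{O}_A$ (or rather its inverse) to uncompute the scratch register.

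The main obstacle is synthesizing the controlled amplitude-encoding rotation $\ket{a}\ket{0}\mapsto\ket{a}\bigl(a\ket{0}+\sqrt{1-|a|^2}\ket{1}\bigr)$ with additive error $\epsilon/\sqrt{s_r s_c}$, so that the overall block-encoding error is $\epsilon$. Using a binary-controlled sequence of single-qubit rotations with angles read from the $b$-bit binary description, followed by Solovay--Kitaev compilation of each rotation into Clifford+$T$ to accuracy $\epsilon/\sqrt{s_r s_c}$, yields a cost of $\mathcal{O}(\log^{2.5}(s_r s_c/\epsilon))$ gates and $\mathcal{O}(b,\log^{2.5}(s_r s_c/\epsilon))$ ancilla qubits; this accounts for the polylog factors in the stated complexity. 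The only other subtlety is handling the ``missing'' rows/columns: the sparse-access oracles already encode them by mapping to indices outside $[2^w]$, so choosing the ancilla register wide enough to accept these out-of-range labels (the extra qubits beyond $w$ in $U_L,U_R$) ensures they do not interfere with the valid-index subspace. Combining these pieces, $U_L^\dagger U_R$ gives the claimed $(\sqrt{s_r s_c},w+3,\epsilon)$-block-encoding with the stated gate and query counts.
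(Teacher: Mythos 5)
Your proposal follows the same route as the paper: reduce to the Gram-matrix block-encoding (Lemma~\ref{lemma:GramBolck}) by building state-preparation unitaries from a uniform superposition over the sparsity slots, the oracles $\mathrm{O}_r,\mathrm{O}_c$, and a controlled amplitude-encoding rotation fed by $\mathrm{O}_A$ (computed and then uncomputed, giving the two $\mathrm{O}_A$ queries). The padding of the index register to $w+1$ qubits to absorb the dummy out-of-range labels, the uniform-state preparation cost, and the Solovay--Kitaev accounting for the rotation all match the paper's argument.

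One detail needs fixing. You place the amplitude-encoding rotation inside $U_L$, so $\ket{\psi_i}$ carries the factor $a_{i,r_{ik}}\ket{0}+\sqrt{1-|a_{i,r_{ik}}|^2}\ket{1}$ and $\ket{\phi_j}$ ends in $\ket{0}$. The Gram-matrix entry is $\ip{\psi_i}{\phi_j}$, and the bra conjugates the coefficient: the last-qubit contribution is $\bigl(\bar a_{ij}\bra{0}+\sqrt{1-|a_{ij}|^2}\bra{1}\bigr)\ket{0}=\bar a_{ij}$, so as written you block-encode the entrywise conjugate of $A$ rather than $A$ itself. The paper sidesteps this by putting the amplitude rotation in $U_R$, so $a_{ij}$ sits on the ket side and pairs with $\bra{0}$ from $U_L^\dagger$ unconjugated; alternatively, simply load $\bar a_{i,r_{ik}}$ in your $U_L$ rotation. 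Either fix is one line. A smaller point: the register layout as you describe it is not internally consistent about which sub-register is the $w$-qubit system and which are the $w+3$ ancilla qubits; the paper's explicit structure with two $(w+1)$-qubit registers, one flag qubit, and a $\mathrm{SWAP}_{w+1}$ in $V_L$ (to place $i$ where $\mathrm{O}_r$ expects it) keeps the bookkeeping clean, but the final counts you obtain agree.
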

\begin{proof}
	We proceed by constructing state preparation unitaries in the spirit of Lemma~\ref{lemma:GramBolck}. We will work with $3$-registers the first of which is a single qubit register, and the other two registers have $(w+1)$ qubits. Let $D_{s}$ be a $(w+1)$-qubit unitary that implements the map $\ket{0}\rightarrow \sum_{k=1}^{s}\frac{\ket{k}}{\sqrt{s}}$, it is known that this operator $D_{s}$ can be implemented with $\bigO{w}$ quantum gates using $\bigO{1}$ ancilla qubits. Then we define the $2(w+1)$ qubit unitary
	$V_L:= \mathrm{O}_r(I_{w+2}\otimes D_{s_r}) \mathrm{SWAP}_{w+1}$ such that
	\begin{align*}
		V_L\colon \ket{0}^{w+2}\ket{i}\rightarrow \sum_{k=1}^{s_r}\frac{\ket{i}\ket{r_{ik}}}{\sqrt{s_r}} \qquad\forall i\in[2^w]-1.
	\end{align*}
	We implement the operator $V_R:= \mathrm{O}_c(D_{s_c}\otimes I_{w+1})$ in a similar way acting as
	\begin{align*}
		V_R\colon \ket{0}^{w+2}\ket{j}\rightarrow \sum_{\ell=1}^{s_c}\frac{\ket{c_{\ell j}}\ket{j}}{\sqrt{s_c}} \qquad\forall j\in[2^w]-1.
	\end{align*}
	It is easy to see that the above unitaries are such that 
	\begin{align*}
		\bra{0}^{w+2}\bra{i}V_L^\dagger V_R\ket{0}^{w+2}\ket{j}=\frac{1}{\sqrt{s_r s_c}}\text{ if } a_{ij}\neq 0 \text{ and } 0 \text{ otherwise.}
	\end{align*}
	Now we define $U_L:=I_1\otimes V_L$ and define $U_R$ as performing the unitary $I_1\otimes V_R$ followed by some extra computation. After performing $V_R$ we get a superposition of index pairs $\ket{i}\ket{j}$. Given an index pair $\ket{i}\ket{j}$ we query the matrix element $\ket{a_{ij}}$ using the oracle $\mathrm{O}_A$. Then we do some elementary computations in order to implement a single qubit gate $\ket{0}\rightarrow a_{ij}\ket{0}+\sqrt{1-|a_{ij}|^2}\ket{1}$ on the first qubit, with precision $\bigO{\poly{\frac{\eps}{s_r s_c}}}$. This can be executed with the stated complexity, for more details see, e.g., the work of Berry et al.~\cite{berry2015HamSimNearlyOpt}. Finally we also need to uncompute everything which requires one more use of $\mathrm{O}_A$. This way we get a good approximation of 
	\begin{align*}
		U_R\colon \ket{0}^{w+3}\ket{j}\rightarrow \sum_{\ell=1}^{s_c}\frac{\left(a_{c_{\ell j}j}\ket{0}+\sqrt{1-|a_{c_{\ell j}j}|^2}\ket{1}\right)\ket{c_{\ell j}}\ket{j}}{\sqrt{s_c}} \qquad\forall j\in[2^w]-1.
	\end{align*}
	\vskip-5mm	
\end{proof}

Note that in the above method the matrix gets subnormalized by a factor of $\frac{1}{\sqrt{s_r s_c}}$. If we would know that for example $\nrm{A}\leq \frac{1}{2}$, then we could amplify the block-encoding in order to remove this unwanted subnormalization using singular value amplification Theorem~\ref{thm:singularValAmp} using the block-encoding roughly $\sqrt{s_r s_c}$ times. However, under some circumstances one can defeat the subnormalization more efficiently by doing an amplification at the level of the state preparation unitaries. The idea comes from Low and Chuang~\cite{low2017HamSimUnifAmp}, who called this technique ``Uniform spectral gap amplification''. We generalize their results combining with ideas of Kerenidis and Prakash~\cite{kerenidis2017QGradDesc} and Chakraborty et al.~\cite{chakraborty2018BlockMatrixPowers}, who used similar ideas but assumed QROM-access to matrices rather than sparse-access.

\begin{lemma}[Preamplified block-encoding of sparse-access matrices]
	Let $A\in\C^{2^w\times 2^w}$ be a matrix that is $s_r$-row-sparse and $s_c$-column-sparse, and is given using the input oracles defined in Lemma~\ref{lemma:sparseMatrixBlock}. Let $a_{i.}$ denote the $i$-th row of $A$ and similarly $a_{.j}$ the $j$-th column. Let $q\in [0,2]$ and suppose that $n_r\in[1,s_r]$ is an upper bound on $\nrm{a_{i.}}_q^q$ and $n_c\in[1,s_c]$ is an upper bound on $\nrm{a_{.j}}_{2-q}^{2-q}$.
	
	Let $m=\max[\frac{s_r}{n_r},\frac{s_c}{n_c}]$. Then we can implement a $\big(\sqrt{\frac{1}{2n_r n_c}},w+6,\eps\big)$-block-encoding of $A$ with $\bigO{\sqrt{\frac{s_r}{n_r}}\log(\frac{s_r s_c}{\eps})}$ uses of $\mathrm{O}_r$, $\bigO{\sqrt{\frac{s_c}{n_c}}\log(\frac{s_r s_c}{\eps})}$ uses of $\mathrm{O}_c$, $\bigO{\sqrt{m}\log(\frac{s_r s_c}{\eps})}$ uses of $\mathrm{O}_A$, and additionally using $\bigO{\sqrt{m}\left(w\log(\frac{s_r s_c}{\eps})+\log^{3.5}(\frac{s_r s_c}{\eps})\right)}$ one and two qubit gates while using $\bigO{b,\log^{2.5}(\frac{s_r s_c}{\eps})}$ ancilla qubits.
\end{lemma}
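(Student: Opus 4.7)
The plan is to sharpen the construction of Lemma~\ref{lemma:sparseMatrixBlock} by using \emph{weighted} state preparation unitaries that encode the matrix entries directly into the amplitudes, and then amplifying the two state preparations separately with Theorem~\ref{thm:singularValAmp}. As a sanity check on the target normalization, a discrete Schur test shows $\nrm{A}\leq\sqrt{n_r n_c}$ under the hypotheses: writing $a_{ij}=|a_{ij}|^{q/2}\cdot\mathrm{sign}(a_{ij})\,|a_{ij}|^{1-q/2}$ and applying Cauchy-Schwarz gives $|\sum_j a_{ij} x_j|^2\leq(\sum_j|a_{ij}|^q)(\sum_j|a_{ij}|^{2-q}|x_j|^2)\leq n_r\sum_j|a_{ij}|^{2-q}|x_j|^2$, and summing over $i$ with the column bound yields $\nrm{Ax}^2\leq n_r n_c\nrm{x}^2$. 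So a block-encoding with $\alpha=\sqrt{2n_r n_c}$ is consistent with $\nrm{A}\leq\alpha$ (the stated $\sqrt{1/(2n_r n_c)}$ appears to be a typo for $\sqrt{2n_r n_c}$, which is what the construction below produces).

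First I would modify the state preparation unitaries of Lemma~\ref{lemma:sparseMatrixBlock} to carry the matrix entries as amplitudes, using \emph{two} separate flag qubits (one for each unitary, so that cross-term junk cancels cleanly in the final overlap). For the left unitary $V_L$, after creating $D_{s_r}\ket{0}$ and applying $\mathrm{O}_r$, query $\mathrm{O}_A$ into an ancilla register, apply a rotation controlled by $a_{i,r_{ik}}$ that maps $\ket{0}\mapsto|a_{i,r_{ik}}|^{q/2}\ket{0}+\sqrt{1-|a_{i,r_{ik}}|^q}\ket{1}$ on the first flag qubit (to precision $\poly{\eps/(s_rs_c)}$, implementable as in~\cite{berry2015HamSimNearlyOpt}), and uncompute the query with a second $\mathrm{O}_A$; $V_L$ acts trivially on the second flag. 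The resulting unitary approximately satisfies
\begin{equation*}
V_L\ket{0}^{w+4}\ket{i}\approx\frac{1}{\sqrt{s_r}}\sum_{k}|a_{i,r_{ik}}|^{q/2}\ket{0}\ket{0}\ket{i}\ket{r_{ik}}+\ket{1}\ket{0}\otimes\ket{\mathrm{junk}_i},
\end{equation*}
so that projecting both flags onto $\ket{0}$ produces a state of norm $\sqrt{\nrm{a_{i.}}_q^q/s_r}\leq\sqrt{n_r/s_r}$. Construct $V_R$ analogously, acting on the second flag with weight $\mathrm{sign}(a_{c_{\ell j},j})|a_{c_{\ell j},j}|^{1-q/2}$; its good part has norm at most $\sqrt{n_c/s_c}$. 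Since the two flags are disjoint, a direct overlap computation yields $\bra{0}^{w+4}\bra{i}V_L^\dagger V_R\ket{0}^{w+4}\ket{j}=a_{ij}/\sqrt{s_r s_c}$, so $V_L^\dagger V_R$ is already a $(\sqrt{s_r s_c},w+4,\eps')$-block-encoding of $A$.

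The real improvement comes from amplifying $V_L$ and $V_R$ \emph{separately}. Regard $V_L$ itself as a projected unitary with $\Pi=\ket{0}^{w+4}\bra{0}^{w+4}\otimes I_w$ and $\widetilde\Pi=\ket{0}\bra{0}\otimes\ket{0}\bra{0}\otimes I_{2w+2}$; its singular values are precisely the good-part norms $\sqrt{\nrm{a_{i.}}_q^q/s_r}$, uniformly bounded above by $\sqrt{n_r/s_r}$. Applying Theorem~\ref{thm:singularValAmp} with amplification factor $\gamma_L=\sqrt{s_r/(2n_r)}$ and a small constant $\delta$ produces $\tilde V_L$ using $\bigO{\sqrt{s_r/n_r}\log(s_rs_c/\eps)}$ queries to $V_L$, accounting for all $\mathrm{O}_r$ queries and a corresponding portion of the $\mathrm{O}_A$ queries, and introducing one further ancilla qubit. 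Do the same for $V_R$ with $\gamma_R=\sqrt{s_c/n_c}$, contributing the $\mathrm{O}_c$ queries and the remaining $\mathrm{O}_A$ queries; the combined $\mathrm{O}_A$ count is $\bigO{(\sqrt{s_r/n_r}+\sqrt{s_c/n_c})\log(s_rs_c/\eps)}=\bigO{\sqrt{m}\log(s_rs_c/\eps)}$. The product $\tilde V_L^\dagger\tilde V_R$ then has $a_{ij}/\sqrt{2n_r n_c}$ in its top-left block, delivering the desired $(\sqrt{2n_r n_c},w+6,\eps)$-block-encoding: the two extra ancillae above $w+4$ come from the two invocations of Theorem~\ref{thm:singularValAmp}.

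The principal obstacle I expect is error propagation. The finite-precision rotation implementing amplitude $|a|^{q/2}$ introduces an initial discrepancy in $V_L$ (and similarly in $V_R$), which must then be tracked through singular value amplification via the robustness bounds of Lemmas~\ref{lem:PolyNormDiff}--\ref{lem:PolyNormDiff2}; one needs to verify that a rotation precision of $\poly{\eps/(s_rs_c)}$, implementable in $\polylog(s_rs_c/\eps)$ gates and ancillae as in~\cite{berry2015HamSimNearlyOpt}, suffices to keep the final block-encoding error below $\eps$. A secondary subtlety is that Theorem~\ref{thm:singularValAmp} demands a \emph{uniform} upper bound on the singular values of the projected unitary; this is precisely why the hypotheses require the uniform bounds $\nrm{a_{i.}}_q^q\leq n_r$ and $\nrm{a_{.j}}_{2-q}^{2-q}\leq n_c$ in $i$ and $j$, and why the amplification factors must sit strictly below $\sqrt{s_r/n_r}$ and $\sqrt{s_c/n_c}$ by a constant margin, accounting for the $\sqrt{2}$ in the final normalization.
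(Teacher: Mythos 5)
Your proposal follows essentially the same route as the paper: encode the matrix entries as amplitudes split $|a|^{q/2}$ / $|a|^{1-q/2}$ onto two disjoint flag qubits in weighted state-preparation unitaries, so that the good-part norm of the left (right) unitary is uniformly bounded by $\sqrt{n_r/s_r}$ (resp.\ $\sqrt{n_c/s_c}$), then apply Theorem~\ref{thm:singularValAmp} to each state-preparation unitary separately and multiply. The complexity count and ancilla accounting match. Your Schur-test sanity check ($\nrm{A}\leq\sqrt{n_r n_c}$) is a useful addition, and you are correct that the stated normalization $\sqrt{1/(2n_r n_c)}$ in the lemma should read $\sqrt{2n_r n_c}$ to be consistent with Definition~\ref{def:standardForm}. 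One small technical caveat: you set the amplification factors asymmetrically as $\gamma_L=\sqrt{s_r/(2n_r)}$ and $\gamma_R=\sqrt{s_c/n_c}$; the latter saturates the constraint of Theorem~\ref{thm:singularValAmp}, since $\gamma_R\cdot\sqrt{n_c/s_c}=1$ leaves no $\delta$ margin — you acknowledge this at the end, but to be safe one should split the $\sqrt{2}$ symmetrically (as the paper does with $\gamma_{r/c}=\sqrt{s_{r/c}/(\sqrt{2}\,n_{r/c})}$, giving $\delta=1-2^{-1/4}$ on each side), or otherwise leave a constant margin on both factors.
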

\begin{proof}
	The idea is very similar to the proof of Lemma~\ref{lemma:sparseMatrixBlock}, we implement the unitaries $V_L,V_R$ the same way.
	However, we define $U_R,U_L$ slightly differently. Using a similar method than in Lemma~\ref{lemma:sparseMatrixBlock}, we implement $\bigO{\poly{\frac{\eps}{s_r s_c}}}$-approximations of the maps
	\begin{align*}
		&U_L\colon \ket{0}^{w+4}\ket{i}\rightarrow \sum_{k=1}^{s_r}\frac{\left(|a_{ir_{ik}}|^{\frac{q}2}\ket{0}+\sqrt{1-|a_{ir_{ik}}|^{q}}\ket{1}\right)\ket{0}\ket{i}\ket{r_{ik}}}{\sqrt{s_r}} &\forall i\in[2^w]-1,\\
		&U_R\colon \ket{0}^{w+4}\ket{j}\rightarrow \sum_{\ell=1}^{s_c}\frac{\frac{a_{c_{\ell j}j}}{|a_{c_{\ell j}j}|}\ket{0}\left(|a_{c_{\ell j}j}|^{1-\frac{q}2}\ket{0}+\sqrt{1-|a_{c_{\ell j}j}|^{2-q}}\ket{1}\right)\ket{c_{\ell j}}\ket{j}}{\sqrt{s_c}} &\forall j\in[2^w]-1.		
	\end{align*}
	It is easy to see that the above unitaries are such that 
	\begin{align*}
	\bra{0}^{w+4}\bra{i}U_L^\dagger U_R\ket{0}^{w+4}\ket{j}=\frac{a_{ij}}{\sqrt{s_r s_c}}
	\qquad\forall i,j\in[2^w]-1.		
	\end{align*}
	We can see that for all $i\in [2^w]-1$ the modified row vector $\sum_{k=1}^{s_r}\frac{|a_{ir_{ik}}|^{\frac{q}2}\ket{0}\ket{0}\ket{i}\ket{r_{ik}}}{\sqrt{s_r}}$ has squared norm at most $\frac{n_r}{s_r}$, and a similar $\frac{n_c}{s_c}$ upper bound holds for the squared norm of the modified column vector.
	Also observe that 
	\begin{align*}
		(\ketbra{0}{0}\otimes I_{2w+3})U_L(\ketbra{0}{0}^{w+4}\otimes I_{w})=\sum_{j=0}^{2^w-1}\left(\sum_{k=1}^{s_r}\frac{|a_{ir_{ik}}|^{\frac{q}2}\ket{0}\ket{0}\ket{i}\ket{r_{ik}}}{\sqrt{s_r}}\right)\bra{0}^{w+4}\bra{j},
	\end{align*}
	which is a singular value decomposition with the singular values being the modified row norms. Therefore we can apply singular value amplification Theorem~\ref{thm:singularValAmp} to with amplification $\gamma_r=\sqrt{\frac{s_r}{\sqrt{2}n_r}}$ and precision $\bigO{\poly{\frac{\eps}{s_r s_c}}}$ resulting in an $\bigO{\poly{\frac{\eps}{s_r s_c}}}$ approximation of $\tilde{U}_L$ such that 
	\begin{align*}
	(\bra{+}\otimes \ketbra{0}{0}\otimes I_{2w+3})\tilde{U}_L(\ket{+}\otimes\ketbra{0}{0}^{w+4}\otimes I_{w})=\gamma_r\sum_{j=0}^{2^w-1}\left(\sum_{k=1}^{s_r}\frac{|a_{ir_{ik}}|^{\frac{q}2}\ket{0}\ket{0}\ket{i}\ket{r_{ik}}}{\sqrt{s_r}}\right)\bra{0}^{w+4}\bra{j}.
	\end{align*}
	Similarly we apply singular value amplification with amplification $\gamma_c=\sqrt{\frac{s_c}{\sqrt{2}n_c}}$ and precision $\bigO{\poly{\frac{\eps}{s_r s_c}}}$ resulting in a $\bigO{\poly{\frac{\eps}{s_r s_c}}}$ approximation of $\tilde{U}_R$  such that 
	\begin{align*}
	\bra{++}\bra{0}^{w+4}\bra{i}\tilde{U}_L^\dagger \tilde{U}_R\ket{++}\ket{0}^{w+4}\ket{j}=\gamma_r\gamma_c\frac{a_{ij}}{\sqrt{s_r s_c}}=\frac{a_{ij}}{\sqrt{2n_r n_c}}
	\qquad\forall i,j\in[2^w]-1.		
	\end{align*}
	Finally adding $4$ Hadamard gates we can change the $\ket{+}$ states above to $\ket{0}$ states, resulting in the $\big(\sqrt{\frac{1}{2n_r n_c}},w+6,\eps\big)$-block-encoding of $A$.
	The complexity statement follows similarly as in the proof of Lemma~\ref{lemma:sparseMatrixBlock}, with the extra observation that the singular value amplifications of $U_L$ and $U_R$ can be performed using degree $\bigO{\gamma_r\log(\frac{s_r s_c}{\eps})}$ and $\bigO{\gamma_c\log(\frac{s_r s_c}{\eps})}$ singular value transformations respectively.
\end{proof}

Finally, for completeness we invoke the results of Kerenidis and Prakash~\cite{kerenidis2017QGradDesc} and Chakraborty et al.~\cite{chakraborty2018BlockMatrixPowers}, who showed how to efficiently implement block-encodings of matrices that are stored in a clever quantum data structures in a quantum accessible RAM.

For $q\in [0,2]$ let us define $\mu_q(A)=\sqrt{n_{q}(A)n_{(2-q)}(A^T)}$, where $n_{q}(A):=\max_i\nrm{a_{i.}}_q^q$ is the $q$-th power of the maximum $q$-norm of the rows of $A$. Let $A^{(q)}$ denote the matrix of the same dimensions as $A$, with\footnote{For complex values we define these non-integer powers using the principal value of the complex logarithm function.} $A^{(q)}_{ij}=\sqrt{a_{ij}^q}$. The following was proven in \cite{kerenidis2017QGradDesc}, although not in the language of block-encodings, and was stated in this form by Chakraborty et al.~\cite{chakraborty2018BlockMatrixPowers}.

\begin{lemma}[Block-encodings of matrices stored in quantum data structures]\label{lem:kp}
	Let $A\in\C^{2^w\times 2^w}$.
	\begin{enumerate}
		\item Fix $q\in [0,2]$. If $A^{(q)}$ and $(A^{(2-q)})^\dagger$ are both stored in quantum accessible data structures\footnote{\label{foot:precData}Here we assume that the data-structure stores the matrices with sufficient precision, cf.~\cite{chakraborty2018BlockMatrixPowers}.}, then there exist unitaries $U_R$ and $U_L$ that can be implemented in time $\bigO{\poly{w\log(1/\eps)}}$ such that 
		$U_R^\dagger U_L$ is a $(\mu_q(A),w+2,\eps)$-block-encoding of $A$. 
		\item On the other hand, if $A$ is stored in a quantum accessible data structure\textsuperscript{\emph{\ref{foot:precData}}}, then there exist unitaries $U_R$ and $U_L$ that can be implemented in time $\bigO{\poly{w\log(1/\eps)}}$ such that $U_R^\dagger U_L$ is an $(\nrm{A}_F,w+2,\eps)$-block-encoding of $A$.
	\end{enumerate}
\end{lemma}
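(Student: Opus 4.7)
The plan is to build two state-preparation unitaries $U_R$ and $U_L$ and invoke the Gram-matrix block-encoding of Lemma~\ref{lemma:GramBolck}. The quantum accessible data structure of Kerenidis and Prakash is essentially a classical binary ``norm tree'' of the stored matrix; a standard log-depth traversal using controlled rotations on the amplitudes stored along a root-to-leaf path implements the required amplitude-encoded state preparations in time $\bigO{\poly{w\log(1/\eps)}}$.

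For part 1, the key pointwise identity is
\begin{equation*}
\overline{A^{(q)}_{ij}}\cdot A^{(2-q)}_{ij} = a_{ij},
\end{equation*}
which holds on the principal branch used in the definition of $A^{(q)}$. Writing $n_q := \max_i \nrm{a^{(q)}_{i.}}_2^2$ and $n'_{2-q} := \max_j \nrm{a^{(2-q)}_{.j}}_2^2$, I would build, from the data structures of $A^{(q)}$ and $(A^{(2-q)})^\dagger$ respectively,
\begin{align*}
U_R\ket{0}^{\otimes(w+2)}\ket{i} &= \frac{1}{\sqrt{n_q}}\sum_k \overline{a_{ik}^{q/2}}\,\ket{0}\ket{i}\ket{k} \;+\; \ket{1}\ket{\mathrm{garb}_i},\\
U_L\ket{0}^{\otimes(w+2)}\ket{j} &= \frac{1}{\sqrt{n'_{2-q}}}\sum_k a_{kj}^{(2-q)/2}\,\ket{0}\ket{k}\ket{j} \;+\; \ket{1}\ket{\mathrm{garb}'_j},
\end{align*}
where the flag-$1$ branches are arbitrary unit vectors orthogonal to the flag-$0$ subspace. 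The two norm trees expose exactly the partial row- and column-$\ell_2$-norms needed to implement these conditional rotations by a $\bigO{w}$-depth recursion. A direct inner-product computation then gives
\begin{equation*}
\bra{0}^{\otimes(w+2)}\bra{i}\,U_R^\dagger U_L\,\ket{0}^{\otimes(w+2)}\ket{j} = \frac{a_{ij}}{\sqrt{n_q\,n'_{2-q}}} = \frac{a_{ij}}{\mu_q(A)},
\end{equation*}
so Lemma~\ref{lemma:GramBolck} identifies $U_R^\dagger U_L$ as an exact $(\mu_q(A), w+2, 0)$-block-encoding of $A$ in the idealised, infinite-precision case.

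Part 2 is handled analogously: the single data structure for $A$ supports preparation of the state $\frac{1}{\nrm{A}_F}\sum_{i,k} a_{ik}\ket{i}\ket{k}$ by traversing a two-level norm tree (leaves store entries of $A$, root stores $\nrm{A}_F^2$), and splitting this into row- and column-indexed state preparations in the style above yields Gram-matrix entries $a_{ij}/\nrm{A}_F$, hence a $(\nrm{A}_F, w+2, 0)$-block-encoding of $A$.

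The main technical obstacle is the error analysis, not the combinatorial construction. Amplitudes in the classical data structure are stored to finite precision, the controlled rotations traversing the norm trees are themselves compiled only approximately, and these per-step errors must be propagated through a $\bigO{w}$-depth preparation circuit and absorbed into the $\eps$ of Definition~\ref{def:standardForm}. Verifying that $\bigO{\polylog(1/\eps)}$ bits of precision per rotation suffice—and hence that the full construction runs in time $\bigO{\poly{w\log(1/\eps)}}$—is the part that requires careful bookkeeping.
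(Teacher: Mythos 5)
The paper does not actually prove Lemma~\ref{lem:kp}: it is invoked as a cited result of Kerenidis--Prakash~\cite{kerenidis2017QGradDesc}, restated in block-encoding form by Chakraborty et al.~\cite{chakraborty2018BlockMatrixPowers}, so there is no internal proof to compare your argument against. Your reconstruction via Lemma~\ref{lemma:GramBolck} together with a norm-tree state preparation is the right strategy and agrees with what the cited sources do.

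One genuine error in the write-up: the stated ``key pointwise identity'' $\overline{A^{(q)}_{ij}}\cdot A^{(2-q)}_{ij}=a_{ij}$ is false for complex entries. With $a_{ij}=re^{i\theta}$ on the principal branch, one has $\overline{a_{ij}^{q/2}}\cdot a_{ij}^{(2-q)/2}=r\,e^{i(1-q)\theta}$, which equals $a_{ij}$ only when $\theta=0$ or $q=0$. Your construction nonetheless yields the correct block, because the Gram entry $\braket{\psi_i}{\phi_j}$ conjugates the coefficients of $\ket{\psi_i}$ once more; having placed $\overline{a_{ik}^{q/2}}$ in the prepared state, the resulting block entry is $a_{ij}^{q/2}\cdot a_{ij}^{(2-q)/2}\big/\sqrt{n_q n'_{2-q}}=a_{ij}/\mu_q(A)$. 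In other words the identity actually invoked is $A^{(q)}_{ij}\cdot A^{(2-q)}_{ij}=a_{ij}$, without the overline; the spurious bar, taken at face value, would break the complex case. Beyond that, the proposal is incomplete exactly where you flag it: the $\eps$ in the $(\mu_q(A),w{+}2,\eps)$-encoding comes from propagating finite-precision storage and gate-compilation errors through the $\bigO{w}$-depth norm-tree traversal, and you should also make explicit that the flag-$1$ garbage branches of $U_R$ and $U_L$ are engineered to contribute nothing to the flag-$0$ block, and that the $(w{+}2)$ ancilla count (rather than the bare $w{+}1$ from flag plus index register) is where the extra qubit for that padding lives. None of this is verified in the proposal.
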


\subsection{Addition and subtraction: Linear combination of block-encoded matrices}	
\label{subsec:add}

We use a simple but powerful method for implementing linear combinations of unitary operators on a quantum computer.
This technique was introduced by Berry et al.~\cite{berry2014HamSimTaylor} for exponentially improving the precision of Hamiltonian simulation. Later it was adapted by Childs et al.~\cite{childs2015QLinSysExpPrec} for exponentially improving the precision of quantum linear equation solving. Here we present this method from the perspective of block-encoded matrices.

First we define state preparation unitaries in order to conveniently state our the result in the following lemma.

\begin{definition}[State preparation pair]
	Let $y\in \C^m$ and $\nrm{y}_1\leq \beta$, the pair of unitaries $(P_L,P_R)$ is called a $(\beta, b, \eps)$-state-preparation-pair if $P_L\ket{0}^{\otimes b}= \sum_{j=0}^{2^b-1} c_j \ket{j}$ and  $P_R\ket{0}^{\otimes b}= \sum_{j=1}^{2^b-1} d_j \ket{j}$ such that $\sum_{j=0}^{m-1}|\beta (c^*_j d_j) -y_j | \leq \eps$ and for all $j\in m,\ldots, 2^b-1$ we have $c^*_j d_j =0$.
\end{definition}

Now we show how to implement a block-encoding of a linear combination of block-encoded operators.

\begin{lemma}[Linear combination of block-encoded matrices]\label{lem:linCombBlocks}
	Let $A=\sum_{j=1}^{m}y_j A_j$ be an $s$-qubit operator and $\eps\in \R_+$. Suppose that $(P_L,P_R)$ is a $(\beta, b, \eps_1)$-state-preparation-pair for $y$, $W=\sum_{j=0}^{m-1} \ketbra{j}{j}\otimes U_j+((I-\sum_{j=0}^{m-1}\ketbra{j}{j})\otimes I_a\otimes I_s)$ is an $s+a+b$ qubit unitary such that for all $j\in 0,\ldots, m$ we have that $U_j$ is an $(\alpha,a,\eps_2)$-block-encoding of $A_j$. Then we can implement a $(\alpha\beta,a+b,\alpha\eps_1+\alpha\beta\eps_2)$-block-encoding of $A$, with a single use of $W$, $P_R$ and $P_L^\dagger$.
\end{lemma}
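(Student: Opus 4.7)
My plan is to define the candidate block-encoding of $A$ as
\[
\tilde W \;:=\; (P_L^\dagger \otimes I_a \otimes I_s)\, W\, (P_R \otimes I_a \otimes I_s),
\]
which uses $a+b$ ancilla qubits in total (the $b$ ``selection'' qubits acted on by $P_L,P_R$ together with the $a$ block-encoding ancilla qubits) and invokes $P_R$, $W$, and $P_L^\dagger$ each once, as claimed. The key observation is that the $b$-register behaves exactly like the coefficient register in the standard linear-combination-of-unitaries trick, while on the remaining $a+s$ qubits $W$ is diagonal in the $b$-register basis and applies $U_j$ (for $j<m$) or the identity (for $j\geq m$).

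Next, I would compute the top-left $(a+b)$-block of $\tilde W$ directly. Writing $P_R\ket{0}^{\otimes b} = \sum_j d_j\ket{j}$ and $P_L\ket{0}^{\otimes b} = \sum_j c_j\ket{j}$ and using the definition of $W$, a one-line calculation gives
\[
(\bra{0}^{\otimes(a+b)}\otimes I_s)\,\tilde W\,(\ket{0}^{\otimes(a+b)}\otimes I_s)
\;=\; \sum_{j=0}^{2^b-1} c_j^{*} d_j \,\tfrac{1}{\alpha}\tilde A_j,
\]
where $\tilde A_j := \alpha(\bra{0}^{\otimes a}\otimes I_s) U_j (\ket{0}^{\otimes a}\otimes I_s)$ is the matrix that $U_j$ actually block-encodes. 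The hypothesis $c_j^{*} d_j = 0$ for $j\geq m$ kills the ``garbage'' contribution coming from the identity part of $W$, so the sum really only runs over $j\in\{0,\dots,m-1\}$, and the top-left block of $\alpha\beta\tilde W$ is exactly $\beta\sum_{j=0}^{m-1} c_j^{*} d_j\,\tilde A_j$.

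The only remaining step is the error estimate, which I would obtain by a single triangle-inequality split:
\[
\Big\|A - \beta\!\sum_{j} c_j^{*} d_j\,\tilde A_j\Big\|
\;\leq\; \Big\|\sum_{j} y_j (A_j-\tilde A_j)\Big\| + \Big\|\sum_{j}(y_j - \beta c_j^{*} d_j)\,\tilde A_j\Big\|.
\]
For the first term, $\nrm{A_j-\tilde A_j}\leq \eps_2$ by the block-encoding hypothesis and $\sum_j |y_j|\leq \beta$ give a bound of $\beta\eps_2$. For the second term, $\nrm{\tilde A_j}\leq \alpha$ (since $U_j$ is unitary) combined with the state-preparation-pair condition $\sum_j|\beta c_j^{*} d_j - y_j|\leq \eps_1$ gives a bound of $\alpha\eps_1$. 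Altogether the error is at most $\alpha\eps_1+\beta\eps_2\leq \alpha\eps_1+\alpha\beta\eps_2$, establishing the $(\alpha\beta,a+b,\alpha\eps_1+\alpha\beta\eps_2)$-block-encoding claim. There is no real obstacle here; the only point that requires attention is verifying that the $j\geq m$ terms really do vanish, which is precisely why the ``$c_j^{*}d_j=0$ for $j\geq m$'' clause was built into the definition of state-preparation pair.
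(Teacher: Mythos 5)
Your proof follows essentially the same route as the paper's: both define $\widetilde W = (P_L^\dagger\otimes I_a\otimes I_s)\,W\,(P_R\otimes I_a\otimes I_s)$, observe that the top-left $(a{+}b)$-ancilla block of $\widetilde W$ equals $\sum_{j} c_j^* d_j(\bra{0}^{\otimes a}\otimes I)U_j(\ket{0}^{\otimes a}\otimes I)$ (the $j\geq m$ contributions vanishing by the $c_j^*d_j=0$ clause), and split the error by a triangle inequality into a coefficient-error term $\alpha\eps_1$ and a block-encoding-error term. One small remark: your derivation actually yields the slightly tighter bound $\alpha\eps_1+\beta\eps_2$ (since $\nrm{A_j-\alpha(\bra{0}^{\otimes a}\otimes I)U_j(\ket{0}^{\otimes a}\otimes I)}\leq\eps_2$ multiplied by $\sum_j|y_j|\leq\beta$), and your final step $\beta\eps_2\leq\alpha\beta\eps_2$ tacitly assumes $\alpha\geq 1$; this is the typical regime, and the paper's stated constant $\alpha\eps_1+\alpha\beta\eps_2$ implicitly relies on the same assumption.
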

\begin{proof}
	Observe that $\widetilde{W}=(P_L^\dagger\otimes I_a\otimes I_s) W  (P_R\otimes I_a\otimes I_s)$ is a	$(\alpha\beta,a+b,\alpha\eps_1+\alpha\beta\eps_2)$-block-encoding of $A$:
	\begin{align*}
	\kern-1mm
	\nrm{A-\alpha\beta(\bra{0}^{\otimes b}\otimes\bra{0}^{\otimes a}\otimes I)\widetilde{W}(\ket{0}^{\otimes b}\otimes\ket{0}^{\otimes a}\otimes I)}
	&=\nrm{A -\alpha\sum_{j=0}^{m-1}\beta (c^*_j d_j) (\bra{0}^{\otimes a}\otimes I)U_j(\ket{0}^{\otimes a}\otimes I)}\\
	&\leq\alpha\eps_1 +\nrm{A -\alpha\sum_{j=0}^{m-1}y_j(\bra{0}^{\otimes a}\otimes I)U_j(\ket{0}^{\otimes a}\otimes I)}\kern-1mm\\
	&\leq\alpha\eps_1 +\alpha\sum_{j=0}^{m-1}y_j\nrm{A_j-(\bra{0}^{\otimes a}\otimes I)U_j(\ket{0}^{\otimes a}\otimes I)}\kern-3mm\\		
	&\leq\alpha\eps_1 +\alpha\sum_{j=0}^{m-1}y_j\eps_2\\			
	&\leq\alpha\eps_1 +\alpha\beta\eps_2.
	\end{align*}
	\vskip-6mm
\end{proof} 

\subsection{Multiplication: Product of block-encoded matrices}	
\label{subsec:mult}

In general if we want to take the product of two block encoded matrices we need to treat their ancilla qubits separately. In this case as the following lemma shows the errors simply add up and the block encoding does not introduce any additional errors.
 
\begin{lemma}[Product of block-encoded matrices]\label{lemma:disjointAncillaProduct}
	If $U$ is an $(\alpha,a,\delta)$-block-encoding of an $s$-qubit operator $A$, and $V$ is an $(\beta,b,\eps)$-block-encoding of an $s$-qubit operator $B$ then\footnote{The identity operators act on each others ancilla qubits, which is hard to express properly using simple tensor notation, but the reader should read this tensor product this way.} $(I_b\otimes U)(I_a\otimes V)$ is an $(\alpha\beta,a+b,\alpha\eps+\beta\delta)$-block-encoding of $AB$.
\end{lemma}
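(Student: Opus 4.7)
The plan is to verify the block-encoding claim by direct computation: identify the top-left ancilla block of the composed unitary, show that it factors as a product of the individual blocks thanks to the disjoint ancilla structure, and then control the approximation error with a short triangle-inequality argument.

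First I would fix notation: write $\tilde{A} := \alpha (\bra{0}^{\otimes a}\otimes I_s) U (\ket{0}^{\otimes a}\otimes I_s)$ and $\tilde{B} := \beta (\bra{0}^{\otimes b}\otimes I_s) V (\ket{0}^{\otimes b}\otimes I_s)$, so that $\nrm{A-\tilde{A}}\leq \delta$ and $\nrm{B-\tilde{B}}\leq\eps$ by Definition~\ref{def:standardForm}. Because $U$ acts trivially on the $b$-qubit ancilla register of $V$ and vice versa, the projectors $\ketbra{0}{0}^{\otimes a}$ and $\ketbra{0}{0}^{\otimes b}$ belong to commuting tensor factors; inserting them and composing yields
\begin{equation*}
\alpha\beta\,(\bra{0}^{\otimes a+b}\otimes I_s)(I_b\otimes U)(I_a\otimes V)(\ket{0}^{\otimes a+b}\otimes I_s)=\tilde{A}\tilde{B},
\end{equation*}
so the top-left $(a+b)$-ancilla block of the proposed unitary, rescaled by $\alpha\beta$, is exactly $\tilde{A}\tilde{B}$.

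It then remains to bound $\nrm{AB-\tilde{A}\tilde{B}}$. Adding and subtracting $A\tilde{B}$ and applying the triangle inequality gives
\begin{equation*}
\nrm{AB-\tilde{A}\tilde{B}}\leq \nrm{A}\cdot\nrm{B-\tilde{B}}+\nrm{A-\tilde{A}}\cdot\nrm{\tilde{B}}\leq \alpha\eps+\beta\delta,
\end{equation*}
using $\nrm{\tilde{B}}\leq\beta$ (since $\tilde{B}/\beta$ is a subblock of the unitary $V$, so has operator norm at most $1$) and $\nrm{A}\leq \alpha$ (the natural normalization of an $(\alpha,a,\eps)$-block-encoding; if one instead uses only the guaranteed bound $\nrm{A}\leq\alpha+\delta$, a harmless second-order $\delta\eps$ term appears, which can be absorbed).

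The main obstacle, such as it is, is purely notational: making the tensor-factor placements in $(I_b\otimes U)(I_a\otimes V)$ fully rigorous when the $a$-qubit ancilla of $U$ and the $b$-qubit ancilla of $V$ occupy different positions in the overall Hilbert space. Once one adopts a fixed ordering of registers and interprets the symbols $I_b\otimes U$ and $I_a\otimes V$ as operators acting by identity on each other's ancilla, the identity for the top-left block reduces to a direct multiplication, and the proof consists of that one line together with the one-line triangle-inequality estimate above.
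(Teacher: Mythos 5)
Your proof follows essentially the same approach as the paper's: identify that the top-left ancilla block of $(I_b\otimes U)(I_a\otimes V)$ rescaled by $\alpha\beta$ is exactly $\tilde A\tilde B$, then insert a mixed term and apply the triangle inequality, the only cosmetic difference being that you add and subtract $A\tilde B$ where the paper uses $\tilde A B$. In fact you are slightly more careful than the paper, since you flag that the definition only guarantees $\nrm{A}\leq\alpha+\delta$ rather than $\nrm{A}\leq\alpha$, whereas the paper's proof silently uses the mirror-image estimate $\nrm{B}\leq\beta$.
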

\begin{proof}
	\begin{align*}
	&\nrm{AB - \alpha\beta(\bra{0}^{\otimes a+b}\otimes I)(I_b\otimes U)(I_a\otimes V)(\ket{0}^{\otimes a+b}\otimes I)}\\
	=&\Big\lVert AB - 
	\underset{\tilde{A}}{\underbrace{\alpha(\bra{0}^{\otimes a}\otimes I)U(\ket{0}^{\otimes a}\otimes I)}}
	\underset{\tilde{B}}{\underbrace{\beta(\bra{0}^{\otimes b}\otimes I)V(\ket{0}^{\otimes b}\otimes I)}}\Big\rVert\\
	=&\nrm{AB -\tilde{A}B+\tilde{A}B-\tilde{A}\tilde{B}}\\
	=&\nrm{(A-\tilde{A})B+\tilde{A}(B-\tilde{B})}\\
	\leq&\nrm{A-\tilde{A}}\beta+\alpha\nrm{B-\tilde{B}}\\		
	\leq& \alpha\eps+\beta\delta.
	\end{align*}
	\vskip-6mm
\end{proof}

In the special case when the encoded matrices are unitaries and their block-encoding does not use any extra scaling factor, then we might reuse the ancilla qubits, however it introduces an extra error term, which can be bounded by the geometrical mean of the two input error bounds.

\begin{lemma}[Product of two block-encoded unitaries]
	If $U$ is an $(1,a,\delta)$-block-encoding of an $s$-qubit unitary operator $A$, and $V$ is an $(1,a,\eps)$-block-encoding of an $s$-qubit unitary operator $B$ then $UV$ is a $(1,a,\delta+\eps+2\sqrt{\delta\eps})$-block-encoding of the unitary operator $AB$.
\end{lemma}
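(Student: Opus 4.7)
The plan is to expand $U$ and $V$ as $2{\times}2$ operator block matrices with respect to the direct sum decomposition $\ket 0^{\otimes a}\!\oplus(I-\ketbra 0 0^{\otimes a})$ on the ancilla register, compute the top-left block of $UV$ explicitly, and bound each contribution separately. Writing
$$U=\begin{bmatrix}\tilde A & C\\ D & E\end{bmatrix},\qquad V=\begin{bmatrix}\tilde B & C'\\ D' & E'\end{bmatrix},$$
where $\tilde A=(\bra 0^{\otimes a}\!\otimes I)U(\ket 0^{\otimes a}\!\otimes I)$ and similarly for $\tilde B$, the top-left block of $UV$ is $\tilde A\tilde B + CD'$. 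Hence the quantity to bound is
$$\|AB - \tilde A\tilde B - CD'\| \;\le\; \|AB - \tilde A\tilde B\| + \|C\|\cdot\|D'\|.$$

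The first term is the ``disjoint-ancilla'' error and is controlled just as in the preceding lemma: using $\|A\|=\|B\|=1$ (unitarity) and $\|\tilde A\|\le 1$ (block of a unitary), a triangle inequality gives $\|AB-\tilde A\tilde B\|\le \|A-\tilde A\|\,\|B\| + \|\tilde A\|\,\|B-\tilde B\| \le \delta + \eps$.

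For the cross-block contribution I would exploit the unitarity of $U$ and $V$. The identity $UU^\dagger = I$, restricted to the top-left block, yields $\tilde A\tilde A^\dagger + CC^\dagger = I$, so $\|C\|^2 = 1 - \sigma_{\min}(\tilde A)^2$. Weyl's inequality for singular values, together with $\sigma_{\min}(A)=1$ (since $A$ is unitary) and $\|A-\tilde A\|\le\delta$, gives $\sigma_{\min}(\tilde A)\ge 1-\delta$, hence $\|C\|\le \sqrt{2\delta-\delta^2}$. The symmetric argument applied to $V^\dagger V=I$ gives $\|D'\|\le\sqrt{2\eps-\eps^2}$. Multiplying and simplifying, using $\sqrt{(2\delta-\delta^2)(2\eps-\eps^2)}=\sqrt{\delta\eps(2-\delta)(2-\eps)}\le 2\sqrt{\delta\eps}$ (valid as soon as $\delta,\eps\in[0,1]$, which is the only regime of interest), yields $\|CD'\|\le 2\sqrt{\delta\eps}$. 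Summing the two contributions gives exactly $\delta+\eps+2\sqrt{\delta\eps}$.

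The main conceptual point — and the only step that is not a routine triangle inequality — is the bound $\|C\|\le\sqrt{2\delta-\delta^2}$: it captures the fact that when the encoded matrix is unitary, the ``leakage'' out of the encoded block is not only bounded by $1$ (as for a generic block of a unitary) but is actually forced to be small by the unitarity of the target. This is what makes ancilla sharing affordable here, whereas for general block-encoded operators one must keep the ancillas disjoint as in Lemma~\ref{lemma:disjointAncillaProduct}.
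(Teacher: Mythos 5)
Your proof is correct and follows the same conceptual plan as the paper's: split the top-left block of $UV$ into the product-of-blocks term (bounded by $\delta+\eps$ exactly as in the disjoint-ancilla lemma) and a cross/leakage term (bounded by $2\sqrt{\delta\eps}$), using the unitarity of $A$ and $B$ to force the leakage blocks to be small. The only difference is presentational: you phrase the leakage bound at the operator level via the block-matrix identity $\tilde A\tilde A^\dagger + CC^\dagger = I$ together with Weyl's inequality, whereas the paper carries out the equivalent estimate state-by-state using Cauchy--Schwarz and the reverse triangle inequality (from which Weyl's bound on $\sigma_{\min}$ is itself derived), so the two routes collapse to the same calculation.
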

\begin{proof}
	It is enough to show that for all $s$-qubit pure states $\ket{\phi},\ket{\psi}$ we have that 
	$$ \left|\bra{\phi}AB\ket{\psi}-\bra{\phi}(\bra{0}^{\otimes a}\otimes I)UV(\ket{0}^{\otimes a}\otimes I)\ket{\psi}\right|\leq \delta+\eps +2\sqrt{\delta\eps}.$$
	Observe that 
	\begin{align*}
	&\bra{\phi}(\bra{0}^{\otimes a}\otimes I)UV(\ket{0}^{\otimes a}\otimes I)\ket{\psi}\\
	=&\bra{\phi}(\bra{0}^{\otimes a}\otimes I)U\left((\ketbra{0}{0}^{\otimes a}\otimes I)+\left(\left(I-\ketbra{0}{0}^{\otimes a}\right)\otimes I\right)\right)V(\ket{0}^{\otimes a}\otimes I)\ket{\psi}\\
	=&\bra{\phi}(\bra{0}^{\otimes a}\otimes I)U(\ket{0}^{\otimes a}\otimes I)(\bra{0}^{\otimes a}\otimes I)V(\ket{0}^{\otimes a}\otimes I)\ket{\psi}\\
	&+\bra{\phi}(\bra{0}^{\otimes a}\otimes I)U\left(\left(I-\ketbra{0}{0}^{\otimes a}\right)\otimes I\right)V(\ket{0}^{\otimes a}\otimes I)\ket{\psi}\\
	\end{align*}
	Now we can see that similarly to the proof of Lemma~\ref{lemma:disjointAncillaProduct} we have
	\begin{align*}
	&\left|\bra{\phi}AB\ket{\psi}-\bra{\phi}(\bra{0}^{\otimes a}\otimes I)U(\ket{0}^{\otimes a}\otimes I)(\bra{0}^{\otimes a}\otimes I)V(\ket{0}^{\otimes a}\otimes I)\ket{\psi}\right|\\
	=&\left|\bra{\phi}\left(AB-(\bra{0}^{\otimes a}\otimes I)U(\ket{0}^{\otimes a}\otimes I)(\bra{0}^{\otimes a}\otimes I)V(\ket{0}^{\otimes a}\otimes I)\right)\ket{\psi}\right|\\
	\leq&\nrm{AB-(\bra{0}^{\otimes a}\otimes I)U(\ket{0}^{\otimes a}\otimes I)(\bra{0}^{\otimes a}\otimes I)V(\ket{0}^{\otimes a}\otimes I)}\\
	\leq& \delta+\eps.
	\end{align*}	
	Finally note that
	\begin{align*}
	&\left|\bra{\phi}(\bra{0}^{\otimes a}\otimes I)U\left(\left(I-\ketbra{0}{0}^{\otimes a}\right)\otimes I\right)V(\ket{0}^{\otimes a}\otimes I)\ket{\psi}\right|\\
	=&\left|\bra{\phi}(\bra{0}^{\otimes a}\otimes I)U\left(\left(I-\ketbra{0}{0}^{\otimes a}\right)\otimes I\right)^2V(\ket{0}^{\otimes a}\otimes I)\ket{\psi}\right|\\
	\leq&\nrm{\left(\left(I-\ketbra{0}{0}^{\otimes a}\right)\otimes I\right)U(\ket{0}^{\otimes a}\otimes I)\ket{\phi}}
	\cdot \nrm{\left(\left(I-\ketbra{0}{0}^{\otimes a}\right)\otimes I\right)V(\ket{0}^{\otimes a}\otimes I)\ket{\psi}}\\
	=&\sqrt{1-\nrm{\left(\ketbra{0}{0}^{\otimes a}\otimes I\right)U(\ket{0}^{\otimes a}\otimes I)\ket{\phi}}^2}
	\cdot \sqrt{1-\nrm{\left(\ketbra{0}{0}^{\otimes a}\otimes I\right)V(\ket{0}^{\otimes a}\otimes I)\ket{\psi}}^2}\\	
	\leq&\sqrt{1-(1-\delta)^2}
	\cdot \sqrt{1-(1-\eps)^2}\\			
	\leq& 2\sqrt{\delta\eps}.
	\end{align*}
	\vskip-6mm
\end{proof}

The following corollary suggest that if we multiply together multiple block-encoded unitaries, the error may grow super-linearly, but it increases at most quadratically with the number of factors in the product.

\begin{corollary}[Product of multiple block-encoded unitaries]\label{cor:blockProductPrecision}
	Suppose that $U_j$ is an $(1,a,\eps)$-block-encoding of an $s$-qubit unitary operator $W_j$ for all $j\in [K]$.
	Then $\prod_{j=1}^K U_j$ is an $(1,a,4K^2\eps)$-block-encoding of $\prod_{j=1}^K W_j$.
\end{corollary}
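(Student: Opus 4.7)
The plan is a straightforward induction on $K$ using the preceding two-factor lemma, exploiting the key algebraic identity
\[
\delta + \eps + 2\sqrt{\delta\eps} = \left(\sqrt{\delta} + \sqrt{\eps}\right)^{\!2},
\]
which linearizes the error recurrence and makes the $K^2$ scaling transparent.

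Let $\eps_K$ denote the best error bound we can prove for $\prod_{j=1}^{K} U_j$ as a block-encoding of $\prod_{j=1}^{K} W_j$ using the two-factor lemma iteratively. For the base case $\eps_1 = \eps$. For the induction step, I view the partial product $V_K := \prod_{j=1}^K U_j$ as a $(1,a,\eps_K)$-block-encoding of the unitary $\prod_{j=1}^K W_j$ (which is unitary, so the two-factor lemma applies), and $U_{K+1}$ as a $(1,a,\eps)$-block-encoding of the unitary $W_{K+1}$. Applying the preceding lemma to the product $U_{K+1} V_K$ gives
\[
\eps_{K+1} \;\leq\; \eps_K + \eps + 2\sqrt{\eps_K\,\eps} \;=\; \left(\sqrt{\eps_K} + \sqrt{\eps}\right)^{\!2}.
\]
Taking square roots yields the linear recurrence $\sqrt{\eps_{K+1}} \leq \sqrt{\eps_K} + \sqrt{\eps}$, whose solution is $\sqrt{\eps_K} \leq K\sqrt{\eps}$, i.e. $\eps_K \leq K^2 \eps \leq 4K^2\eps$.

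The only subtlety is making sure the two-factor lemma genuinely applies at each step: it requires both factors to be block-encodings of \emph{unitaries}. This holds because $\prod_{j=1}^K W_j$ is unitary as a product of unitaries, so there is no obstacle. No other steps are delicate, and the stated bound $4K^2\eps$ is in fact a loose form of the tighter $K^2\eps$ that this argument actually yields.
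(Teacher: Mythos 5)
Your proof is correct, but it takes a genuinely different route from the paper's. The paper proceeds by a binary-tree recursion: it observes that two factors with equal error $\eps$ combine to error $4\eps$, then applies this recursively for $K=2^k$ (giving exactly $4^k\eps = K^2\eps$), and handles general $K$ by padding the product with identity factors up to the next power of two, which introduces the slack factor of $4$ and yields $4K^2\eps$. Your argument instead multiplies one factor at a time and notices that the error recurrence $\eps_{K+1}\le\eps_K+\eps+2\sqrt{\eps_K\eps}=(\sqrt{\eps_K}+\sqrt{\eps})^2$ becomes linear after taking square roots, so $\sqrt{\eps_K}\le K\sqrt{\eps}$ and hence $\eps_K\le K^2\eps$ for every $K$, not just powers of two. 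This is both simpler (no padding needed) and gives a tighter constant ($K^2\eps$ instead of $4K^2\eps$). The only point worth stating explicitly, which you did, is that the two-factor lemma requires both blocks to be unitary, and this is preserved since a product of unitaries is unitary.
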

\begin{proof}
	First observe that for the product of two matrices we get the precision bound $4\eps$ by the above lemma. 
	If $K=2^k$ for some $k\in \N$. Then we can apply the above observation in a recursive fashion in a binary tree structure, to get the upper bound $4^k \eps$ on the precision, and observe that $4^k=K^2$.
	
	If $2^{k-1}\leq K< 2^{k}$ we can just add identity operators so that we have $2^k$ matrices to multiply, which gives the precision bound $4^k\eps\leq 4^{1+\log_2 K}\eps= 4K^2\eps$. 
\end{proof}

\section{Implementing smooth functions of Hermitian matrices}
\label{sec:smooth}
	In the previous section we developed an efficient methodology to perform basic matrix arithmetics, such as addition and multiplication. In principle all smooth functions of matrices can be approximated arbitrarily precisely using such basic arithmetic operations. 
	In this section we show how to more efficiently transform Hermitian matrices according to smooth functions using singular value transformation techniques. The key observation is that for a Hermitian matrix $A$ we have that $P^{(SV)}(A)=P(A)$, i.e., singular value transformation and eigenvalue transformation coincide.
	
	The following theorem is our improvement of Corollary~\ref{cor:matchingParity} removing the counter-intuitive parity constraint at the expense of a subnormalization factor $1/2$, which is not a problem in most applications.
	
	\begin{theorem}[Polynomial eigenvalue transformation of arbitrary parity]\label{thm:arbParity}
		Suppose that $U$ is an $(\alpha,a,\eps)$-encoding of a Hermitian matrix $A$. 
		If $\delta\geq 0$ and $ P_{\Re}\in\R[x]$ is a degree-$d$ polynomial satisfying that
		\begin{itemize}
			\item for all $x\in[-1,1]\colon$ $| P_{\Re}(x)|\leq \frac12$.
		\end{itemize}
		Then there is a quantum circuit $\tilde{U}$, which is an $(1,a+2,4d\sqrt{\eps/\alpha}+\delta)$-encoding of $ P_{\Re}(A/\alpha)$, and consists of $d$ applications of $U$ and $U^\dagger$ gates, a single application of controlled-$U$ and $\bigO{(a+1)d}$ other one- and two-qubit gates.
		Moreover we can compute a description of such a circuit with a classical computer in time $\bigO{\poly{d,\log(1/\delta)}}$.
	\end{theorem}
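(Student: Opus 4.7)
The plan is to reduce the arbitrary-parity case to the matched-parity case of Corollary~\ref{cor:matchingParity} by splitting $P_\Re$ into even and odd parts and combining them via a linear combination of block-encodings. Set $P_e(x):=(P_\Re(x)+P_\Re(-x))/2$ and $P_o(x):=(P_\Re(x)-P_\Re(-x))/2$; since $|P_\Re|\le\tfrac12$ on $[-1,1]$ we also have $|P_e|,|P_o|\le\tfrac12$ there, so $2P_e$ and $2P_o$ are real polynomials of degree at most $d$ satisfying the hypotheses of Corollary~\ref{cor:matchingParity} (matched parity and bounded by $1$ in absolute value). Because $A$ is Hermitian, for any polynomial of matched parity the singular value transformation coincides with eigenvalue transformation (the singular values of $A/\alpha$ are $|\lambda_i/\alpha|$ and the parity absorbs the signs), so Corollary~\ref{cor:matchingParity} applied to $2P_e$ and $2P_o$ yields phased alternating sequences which constitute $(1,a+1,\cdot)$-block-encodings of $2P_e(A/\alpha)$ and $2P_o(A/\alpha)$ respectively.

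Next I will combine the two sequences using Lemma~\ref{lem:linCombBlocks} with coefficient vector $y=(1/2,1/2)$ and state-preparation pair $(H,H)$ on one further ancilla (so $\beta=1$); this yields a $(1,a+2,\cdot)$-block-encoding of $\tfrac12(2P_e(A/\alpha))+\tfrac12(2P_o(A/\alpha))=P_\Re(A/\alpha)$. The key efficiency observation is that the two parity-matched polynomials have degrees differing by at most one, so their phased alternating sequences can be aligned so that all but one of the $U$/$U^\dagger$ applications coincide. The single discrepancy is implemented as a controlled-$U$ whose control is the LCU selector ancilla, yielding the claimed resource count of $d$ uncontrolled applications of $U$ and $U^\dagger$ together with one controlled-$U$, plus $\mathcal{O}((a+1)d)$ additional one- and two-qubit gates for the intervening $e^{i\phi_j(2\Pi-I)}$ phase rotations (each of which is a Toffoli on $a+1$ qubits around a single-qubit $e^{-i\phi\sigma_z}$, cf.\ Lemma~\ref{lemma:implementingPhasedSeq}).

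For the error bound there are two contributions. The block-encoding approximates $A/\alpha$ only to within $\eps/\alpha$ in operator norm, and this propagates through a matched-parity singular value transformation of degree at most $d$ to a per-branch error $4d\sqrt{\eps/\alpha}$ by Lemma~\ref{lem:PolyNormDiff}. Since the LCU coefficient vector has $\ell_1$-norm $1$, Lemma~\ref{lem:linCombBlocks} preserves this per-branch error (with $\eps_1=0$ and $\alpha=\beta=1$); the classical phase-angle computation of Corollary~\ref{cor:realP} adds a further $\delta$ in classical time $\mathrm{poly}(d,\log(1/\delta))$, summing to the claimed $4d\sqrt{\eps/\alpha}+\delta$.

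The main obstacle I expect is the bookkeeping for the shared-$U$ construction: one must check that the alternating phase sequences for $2P_e$ and $2P_o$ (coming from Corollary~\ref{cor:matchingParity} for parities of opposite kind) can indeed be aligned so that all but one of their $U$/$U^\dagger$ uses coincide, and that the one remaining discrepancy can be correctly absorbed into a single controlled-$U$ controlled on the LCU ancilla without disturbing the block-encoding structure of either branch. Once this alignment is in hand, the rest is a direct composition of Corollary~\ref{cor:matchingParity}, Lemma~\ref{lem:linCombBlocks}, Lemma~\ref{lem:PolyNormDiff}, and Corollary~\ref{cor:realP}.
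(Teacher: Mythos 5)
Your proposal is correct and follows essentially the same route as the paper's proof: split $P_\Re$ into even and odd parts, apply Corollary~\ref{cor:matchingParity} to each (using $P^{(SV)}(A)=P(A)$ for Hermitian $A$), combine via an LCU with Hadamard state-preparation, and invoke Lemma~\ref{lem:PolyNormDiff} for the $4d\sqrt{\eps/\alpha}$ error propagation and Corollary~\ref{cor:realP} for the $\delta$ term. In fact your write-up is more explicit than the paper's very terse argument (which even mis-references Lemma~\ref{lem:PolyNormDiff} where the LCU Lemma~\ref{lem:linCombBlocks} is meant), and you correctly flag the one detail both accounts leave implicit, namely aligning the two alternating sequences of lengths differing by one so that the extra $U$/$U^\dagger$ is absorbed into a single controlled-$U$.
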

	\begin{proof}
		First note that for a Hermitian matrix $A$ and for any even/odd polynomial $P\in\C[x]$ we have that $P^{(SV)}(A)=P(A)$.
		Now let $P^{(\text{even})}_\Re(x):=P_\Re(x)+P_\Re(-x)\leq 1$, and let $P^{(\text{odd})}_\Re(x):=P_\Re(x)-P_\Re(-x)\leq 1$. Then we can implements both $P^{(\text{even})}_\Re(x)$ and $P^{(\text{odd})}_\Re(x)$ using Corollary~\ref{cor:matchingParity} and we can take an equal $\frac12$ linear combination of them by Lemma~\ref{lem:PolyNormDiff}. Using the notation of Figure~\ref{fig:qubitization} the final circuit is simply $(H\otimes H\otimes I)U_{\Phi^{(c)}}(H\otimes H\otimes I)$, where $H$ denotes the Hadamard gate.
	\end{proof}

	Note that a similar statement can be proven for arbitrary $P\in\C[x]$ that satisfy $|P(x)|\leq \frac14$ for all $x\in[-1,1]$. The only difference is that for implementing the complex part one needs to add a (controlled) phase $e^{i\frac\pi2}$. The $\leq\frac14$ constraint comes form the fact that the implementation is a sum of $4$ different terms (even/odd component of the real/imaginary part).
	
\subsection{Optimal Hamiltonian simulation}
\label{subsec:opt}

	Let us define for $t\in \R_+$ and $\eps\in(0,1)$ the number $r(t,\eps)\geq t$ as the solution to the equation
	\begin{equation}\label{eq:lambertR}
		\eps = \left(\frac{t}{r}\right)^{\!\!r} \colon r\in(t,\infty).
	\end{equation}
	This equation is closely related to the Lambert-$W$ function, and unfortunately we cannot give the solution in terms of elementary functions.
	However, one can see that the function $\left(\frac{t}{r}\right)^{\!r}$ is strictly monotone decreasing for $r\in[t,\infty)$ and in the limit $r\rightarrow \infty$ it tends to $0$. Since for $r=t$ the function value is $1$, therefore the equation \eqref{eq:lambertR} has a unique solution.
	In particular for any $r,R \in[t,\infty)$ such that $\left(\frac{t}{r}\right)^{\!r} \geq \eps \geq \left(\frac{t}{R}\right)^{\!R} $ we have that $r\leq r(t,\eps) \leq R$.
	This is an important expression for this section, since Low and Chuang proved~\cite{low2016HamSimQSignProc,low2016HamSimQubitization} that the complexity of Hamiltonian simulation for time $t$ with precision $\eps$ is $\Theta(r(|t|,\eps))$.
	
	Low and Chuang also claimed~\cite{low2016HamSimQSignProc} that for all $t\geq 1$ one gets $r(t,\eps)=\Theta\left(t+\frac{\log(1/\eps)}{\log(\log(1/\eps))}\right)$, which led to their complexity statement. 
	Note that we found a subtle issue in their calculations making this formula invalid for some range of values of $t$. We show in Lemma~\ref{lemma:lamgertR} how to correct the formula by a slight modification of the $\log(\log(1/\eps))$ term. This is the reason why we need to give more complicated expressions for the complexity of block-Hamiltonian simulation. First we show what is the connection between equation \eqref{eq:lambertR} and the complexity of Hamiltonian simulation by constructing polynomials of degree $\bigO{r(|t|,\eps)}$, which $\eps$-approximate trigonometric functions with $t$-times rescaled argument.
	
	\begin{lemma}[Polynomial approximations of trigonometric functions]\label{lemma:polyTrig}
		Let $t\in\R\setminus\{0\}$, $\eps\in(0,\frac{1}{e})$, and let $R=\left\lfloor r\left(\frac{e|t|}{2},\frac{5}{4}\eps\right)/2 \right\rfloor$, then the following $2R$ and $2R+1$ degree polynomials satisfy
		\begin{align*}
			&\nrm{\cos(tx)-J_{0}(t)+2\sum_{k=1}^{R}(-1)^k J_{2k}(t)T_{2k}(x)}_{[-1,1]}\leq\eps,\text{ and }\\
			&\nrm{\sin(tx)-2\sum_{k=0}^{R}(-1)^k J_{2k+1}(t)T_{2k+1}(x)}_{[-1,1]}\leq\eps,
		\end{align*}
		where $J_m(t)\colon m\in\N$ denote Bessel functions of the first kind.
	\end{lemma}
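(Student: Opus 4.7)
My plan is to deduce the approximations from the Jacobi--Anger expansion
\begin{equation*}
e^{it\cos\theta} = J_0(t) + 2\sum_{n=1}^{\infty} i^n J_n(t)\cos(n\theta),
\end{equation*}
which one obtains by combining $e^{it\cos\theta}=\sum_{n\in\Z}i^n J_n(t)e^{in\theta}$ with $J_{-n}(t)=(-1)^n J_n(t)$. Taking real and imaginary parts and substituting $x=\cos\theta$ (using $T_n(x)=\cos(n\theta)$), and noting that $i^{2k}=(-1)^k$ while $i^{2k+1}=i(-1)^k$, immediately yields the two Chebyshev expansions
\begin{align*}
\cos(tx) &= J_0(t) + 2\sum_{k=1}^{\infty}(-1)^k J_{2k}(t)T_{2k}(x),\\
\sin(tx) &= 2\sum_{k=0}^{\infty}(-1)^k J_{2k+1}(t)T_{2k+1}(x).
\end{align*}

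The main work is then to show that truncating each of these series at $k=R$ introduces at most $\eps$ error. Since $|T_n(x)|\le1$ on $[-1,1]$, the truncation error in either case is bounded by $2\sum_{n\ge 2R+1}|J_n(t)|$. I will use the standard tail bound $|J_n(t)|\le(|t|/2)^n/n!$, and Stirling in the form $n!\ge(n/e)^n$, which gives $|J_n(t)|\le\bigl(e|t|/(2n)\bigr)^{\!n}$. By the definition of the function $r$ via \eqref{eq:lambertR}, setting $r^*:=r(e|t|/2,\,5\eps/4)$ gives $\bigl(e|t|/(2r^*)\bigr)^{r^*}=5\eps/4$, and the map $n\mapsto\bigl(e|t|/(2n)\bigr)^{n}$ is strictly decreasing beyond $e|t|/2$.

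The technical heart of the proof is showing that $R=\lfloor r^*/2\rfloor$ suffices. Because $2R+1\ge r^*$, each term $\bigl(e|t|/(2n)\bigr)^n$ for $n\ge 2R+1$ is at most $5\eps/4$, and moreover a short calculation shows that consecutive ratios $\bigl(e|t|/(2(n+1))\bigr)^{n+1}/\bigl(e|t|/(2n)\bigr)^{n}$ are eventually bounded by a constant $<1$ once $n$ is comparable to $r^*$; this yields a geometric-type tail. Summing gives $\sum_{n\ge 2R+1}|J_n(t)|\le \eps/2$, hence the truncation error is at most $\eps$ in each case.

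The main obstacle I foresee is getting the constants in the geometric tail bound to close cleanly so that the factor $5/4$ in the argument of $r$ genuinely absorbs the sum-of-tail overhead; a naive bound loses an extra factor of $2$ that barely fails. I would therefore compare the tail to an integral or use a slightly sharper form of Stirling, and carefully split the sum into the first term (of size $\le 5\eps/4$) plus a geometric remainder summing to at most $\eps/4$, in order to arrive at the claimed bound. The rest of the argument is routine bookkeeping on the polynomial degree: the $\cos$ expansion is even of degree $2R$, and the $\sin$ expansion is odd of degree $2R+1$, matching the lemma statement.
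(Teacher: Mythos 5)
Your high-level plan matches the paper's proof exactly: Jacobi--Anger expansion to get the Chebyshev series, the bound $|J_m(t)|\le (|t|/2)^m/m!$, Stirling's approximation, and a geometric-series estimate of the tail controlled through the definition of $r(\cdot,\cdot)$. You have also correctly sensed the danger point — the constants are tight and a naive geometric bound really does overshoot by a factor of roughly $\tfrac32$. However, the remedy you sketch does not close the gap, and the actual fix is a specific structural observation you haven't identified.

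The loss comes from your step ``the truncation error in either case is bounded by $2\sum_{n\ge 2R+1}|J_n(t)|$''. This sums over \emph{all} indices, but the cosine expansion involves only even-indexed Bessel functions and the sine expansion only odd-indexed ones. The correct tails are $2\sum_{\ell\ge 0}|J_{q+2\ell}(t)|$ with $q=2R+2$ (cos) or $q=2R+3$ (sin) — i.e.\ terms spaced by $2$. This matters because the ratio between \emph{consecutive} terms of the full sequence $(|t|/2)^n/n!$ is only bounded by $\tfrac12$ (once $n\ge|t|-1$), whereas the ratio between terms two apart is bounded by $\tfrac14$. The paper exploits this to get
$2\sum_{\ell\ge 0}\frac{(|t|/2)^{q+2\ell}}{(q+2\ell)!}\le 2\cdot\frac{4}{3}\cdot\frac{(|t|/2)^{q}}{q!}=\frac{8}{3}\cdot\frac{(|t|/2)^{q}}{q!}$,
and then Stirling with $q\ge 2$ gives $\frac{8}{3\sqrt{2\pi q}}\le\frac{8}{3\sqrt{4\pi}}<\frac{4}{5}$, so the final chain closes as $\frac{4}{5}\cdot\frac{5}{4}\eps=\eps$. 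With your all-indices sum the corresponding constant is $4$ rather than $\frac{8}{3}$, and $\frac{4}{\sqrt{4\pi}}\approx 1.13>\frac{4}{5}$, so the chain can genuinely fail (e.g.\ near $q=2$ with $(e|t|/(2q))^q$ close to $\frac54\eps$ the bound evaluates to roughly $1.4\eps$). Neither ``comparing the tail to an integral'' nor ``a slightly sharper Stirling'' recovers the missing factor, and the split you propose cannot possibly sum to $\le\eps$ as stated, since ``first term $\le 5\eps/4$'' already exceeds the budget on its own. You also need to note, as the paper does, that $q\ge 2R+2 > r(e|t|/2,\tfrac54\eps)\ge e|t|/2 > |t|-1$, which is the hypothesis under which the step-two ratio is $\le\tfrac14$ and under which $q\ge 2$.
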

	\begin{proof}
		We use the Fourier-Chebyshev series of the trigonometric functions given by the Jacobi-Anger expansion~\cite[9.1.44-45]{abramowitz1974HandbookMathFuns}:
		\begin{align*}
			\cos(tx)&=J_{0}(t)+2\sum_{k=1}^{\infty}(-1)^k J_{2k}(t)T_{2k}(x)\\
			\sin(tx)&=2\sum_{k=0}^{\infty}(-1)^k J_{2k+1}(t)T_{2k+1}(x).
		\end{align*}
		The Jacobi-Anger expansion implies that
		\begin{align}
			\nrm{\cos(tx)-J_{0}(t)+2\sum_{k=1}^{R}(-1)^k J_{2k}(t)T_{2k}(x)}_{[-1,1]}
			&=\nrm{2\sum_{k=R+1}^{\infty}(-1)^k J_{2k}(t)T_{2k}(x)}_{[-1,1]}\nonumber\\
			&\leq 2 \sum_{k=R+1}^{\infty}(-1)^k |J_{2k}|\nonumber\\
			&=2 \sum_{\ell=0}^{\infty}(-1)^k |J_{2R+2+2\ell}|,\label{eq:trunCos}
		\end{align}	
		and similarly we can derive that 
		\begin{align}
			\nrm{\sin(tx)-2\sum_{k=0}^{R}(-1)^k J_{2k+1}(t)T_{2k+1}(x)}_{[-1,1]}
			&\leq2 \sum_{\ell=0}^{\infty}(-1)^k |J_{2R+3+2\ell}|.\label{eq:trunSin}
		\end{align}			
		It is known~\cite[9.1.62]{abramowitz1974HandbookMathFuns} that for all $m\in \N_+$ and $t\in \R$ we have
		\begin{align}\label{eq:BesselBound}
			|J_{m}(t)|\leq \frac{1}{m!}\left|\frac{t}{2}\right|^{\!m}.
		\end{align}
		Following~\cite{berry2015HamSimNearlyOpt} we show that it implies that for any positive integer $q\geq |t|-1$ we have that 
		\begin{align}\label{eq:trunr}
			2\sum_{\ell=0}^\infty|J_{(q+2\ell)}(t)|
			\overset{\eqref{eq:BesselBound}}{\leq} 2\sum_{\ell=0}^\infty\frac{|t/2|^{(q+2\ell)}}{(q+2\ell)!} 
			\leq 2\frac{|t/2|^{q}}{q!} \sum_{\ell=0}^\infty\left(\frac{1}{4} \right)^{\!\!\ell}
			= \frac{8}{3}\frac{|t/2|^{q}}{q!}\leq\frac{1.07}{\sqrt{q}} \left(\frac{e|t|}{2 q}\right)^{\!\!q},
		\end{align}				
		where in the last inequality we used that by Stirling's approximation $q!\geq \sqrt{2\pi q} \left(\frac{q}{e}\right)^{\!q}$.
		In the inequalities \eqref{eq:trunCos}-\eqref{eq:trunSin} we can apply the bound of \eqref{eq:trunr} with $q\geq 2(R+1)\geq r\left(\frac{e|t|}{2},\eps\right)$, so we get the upper bound
		\begin{align*}
			\frac{1.07}{\sqrt{q}} \left(\frac{e|t|}{2 q}\right)^{\!\!q}
			\leq \frac{1.07}{\sqrt{2}} \left(\frac{e|t|}{2 q}\right)^{\!\!q}
			\leq \frac{5}{4} \left(\frac{e|t|}{2 q}\right)^{\!\!q}
			\leq \eps.
		\end{align*}	
	\end{proof}	

	Now we are ready to prove the optimal block-Hamiltonian simulation result of Low and Chuang. The optimality is discussed in an earlier work of the same authors~\cite{low2016HamSimQSignProc}. 

	\begin{theorem}\emph{(Optimal block-Hamiltonian simulation \cite{low2016HamSimQubitization})}\label{thm:blockHamSim}
		Let $t\in\R\setminus\{0\}$, $\eps\in(0,1)$ and let $U$ be an $(\alpha,a,0)$-block-encoding of the Hamiltonian $H$. Then we can implement an $\eps$-precise Hamiltonian simulation unitary $V$ which is an $(1,a+2,\eps)$-block-encoding of $e^{itH}$, with $3r\left(\frac{e\alpha|t|}{2},\frac{\eps}{6}\right)$ uses of $U$ or its inverse, $3$ uses of controlled-$U$ or its inverse and with $\bigO{ar\left(\frac{e\alpha|t|}{2},\frac{\eps}{6}\right)}$ two-qubit gates and using $\bigO{1}$ ancilla qubits.
	\end{theorem}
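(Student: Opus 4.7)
The plan is to approximate $e^{itH}$ by implementing real polynomial approximations of $\cos(t\alpha x)$ and $\sin(t\alpha x)$ via singular value transformation, combining them linearly, and then boosting the resulting subnormalization by oblivious amplitude amplification. Since $H$ is Hermitian and the approximating polynomials have definite parity, singular value transformation coincides with eigenvalue transformation on $H/\alpha$, so the composition gives polynomials of $H/\alpha$ directly.

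First I invoke Lemma~\ref{lemma:polyTrig} with parameter $t\alpha$ and precision $\eps/6$, obtaining an even real polynomial $P_C$ of degree $\leq r := r(e\alpha|t|/2,\eps/6)$ with $\nrm{P_C(x)-\cos(t\alpha x)}_{[-1,1]} \leq \eps/12$, and analogously an odd real polynomial $P_S$ of degree $\leq r$ approximating $\sin(t\alpha x)$. A mild rescaling $P_C \mapsto P_C/(1+\eps/12)$, $P_S \mapsto P_S/(1+\eps/12)$ ensures $|P_C|,|P_S| \leq 1$ on $[-1,1]$ while only slightly increasing the approximation errors. Corollary~\ref{cor:matchingParity} then produces phase sequences $\Phi_C$ (of even length) and $\Phi_S$ (of odd length) such that applying singular value transformation yields $(1,a+1,\delta)$-block-encodings of $P_C(H/\alpha) \approx \cos(tH)$ and $P_S(H/\alpha) \approx \sin(tH)$ for small $\delta$, each using at most $r$ applications of $U$/$U^\dagger$ together with $\bigO{1}$ controlled-$U$ operations.

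Next I use Lemma~\ref{lem:linCombBlocks} with coefficients $y=(1,i)$ (so $\beta=2$) to combine these into a $(2,a+2,\cdot)$-block-encoding of $P_C(H/\alpha)+iP_S(H/\alpha) \approx e^{itH}$; equivalently, the top-left block of the resulting unitary approximates $e^{itH}/2$. Since $e^{itH}$ is unitary, i.e., an isometry on $\img{\Pi}$ for $\Pi=\ketbra{0}{0}^{\otimes(a+2)}\otimes I$, and since $\sin(\pi/(2n))=1/2$ for the integer $n=3$, I apply robust oblivious amplitude amplification (Theorem~\ref{cor:oblivious}) with $n=3$. This promotes the encoding to a $(1,a+2,\eps)$-block-encoding of $e^{itH}$, provided the initial approximation error $\delta$ in the ``$1/2$-scaled'' block-encoding is set to $\bigO{\eps/n}=\bigO{\eps}$.

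The main obstacle is matching the stated query count of exactly $3r$ uses of $U/U^\dagger$ and only $3$ uses of controlled-$U$. The natural implementation of the two phase sequences $\Phi_C,\Phi_S$ would double the cost, since they differ. The key efficiency point, following Lemma~\ref{lemma:implementingPhasedSeq}, is that the two sequences share identical applications of $U$ and $U^\dagger$ and differ only in the interleaved single-qubit phase gates $e^{i\phi_j\sigma_z}$: controlling the phase rotations (not the much larger $U$'s) by the LCU ancilla that selects cos vs.\ sin merges both sequences into a single controlled alternating phased sequence of length $r$ with \emph{uncontrolled} $U$'s. Three iterations of oblivious amplitude amplification then use this block three times, giving the claimed $3r$ uncontrolled $U$ queries, while the handful of controlled-$U$ gates arise only from reconciling the parity mismatch between the even-length $\Phi_C$ and odd-length $\Phi_S$ sequences at their endpoints. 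The remaining error bookkeeping is routine: approximation errors from Lemma~\ref{lemma:polyTrig}, the LCU step, and the $n=3$ OAA step each contribute only constant multiples of $\eps/6$, summing to at most $\eps$.
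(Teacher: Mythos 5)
Your proposal is correct and takes essentially the same route as the paper's proof: approximate $\cos(\alpha t\cdot)$ and $\sin(\alpha t\cdot)$ via Lemma~\ref{lemma:polyTrig}, combine them as in Theorem~\ref{thm:arbParity} into an $(1,a+2,\bigO{\eps})$-block-encoding of $e^{itH}/2$, and then boost by robust oblivious amplitude amplification (Corollary~\ref{cor:oblivious}) with $n=3$ using $\sin(\pi/6)=\tfrac12$. You correctly flag two details the paper leaves implicit — the mild rescaling needed to keep $|P_C|,|P_S|\le 1$ on $[-1,1]$, and the reason the query count is only $3r$ uncontrolled $U$'s plus $3$ controlled-$U$'s, namely that only the single-qubit phase gates (not $U$ itself) need to be controlled by the LCU ancilla, with a single controlled-$U$ per OAA round absorbing the even/odd parity mismatch between $\Phi_C$ and $\Phi_S$.
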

	\begin{proof}
		Use the polynomials of Lemma~\ref{lemma:polyTrig} and combine the even real polynomial $\frac{\eps}{6}$-approximating $\cos(\alpha t)$ with the odd imaginary polynomial $\frac{\eps}{6}$-approximating $i\cdot \sin(\alpha t)$ using the same method as Theorem~\ref{thm:arbParity} in order to get an $(1,a+2,\frac{\eps}{6})$-block-encoding of $e^{itH}/2$. Then use robust oblivious amplitude amplification Corollary~\ref{cor:oblivious} in order get an $(1,a+2,\eps)$-block-encoding of $e^{itH}$.
	\end{proof}

	Now we prove some bounds on $r(t,\eps)$, in order to make the above result more accessible.
	\begin{lemma}[Bounds on $r(t,\eps)$]\label{lemma:lamgertR}
		For $t\in \R_+$ and $\eps\in(0,1)$ 
		$$ r(t,\eps)=\Theta\left(t+\frac{\ln(1/\eps)}{\ln(e+\ln(1/\eps)/t)}\right). $$
		Moreover, for all $q\in \R_+$ we have that
		$$r(t,\eps)<e^q t+\frac{\ln(1/\eps)}{q}.$$
	\end{lemma}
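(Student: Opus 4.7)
The plan is to prove the ``moreover'' part first and then use it for both the upper and lower bounds in the $\Theta$-estimate. Set $L:=\ln(1/\eps)$, so that by definition $r(t,\eps)$ is the unique solution in $(t,\infty)$ of $r\ln(r/t)=L$, and recall that on this interval $f(r):=r\ln(r/t)$ is strictly increasing (its derivative is $\ln(r/t)+1>0$). Hence to show $r(t,\eps)<R$ it suffices to verify $R\ln(R/t)>L$.

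For the moreover statement, let $R:=e^q t+L/q$. Then $R/t\geq e^q$, so $\ln(R/t)\geq q$, and therefore
\[
R\ln(R/t)\;\geq\;qR\;=\;qe^q t+L\;>\;L,
\]
which by monotonicity of $f$ gives $r(t,\eps)<R$, as claimed.

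For the upper bound in the $\Theta$-estimate, I would apply the moreover part with the choice $q:=\tfrac{1}{2}\ln(e+L/t)\geq\tfrac{1}{2}$, yielding
\[
r(t,\eps)\;<\;t\sqrt{e+L/t}\;+\;\frac{2L}{\ln(e+L/t)}.
\]
The second summand already has the desired form, so the only thing to check is that $t\sqrt{e+L/t}=O(t+L/\ln(e+L/t))$. This splits into two easy cases: when $L\leq t$, both $t\sqrt{e+L/t}\leq t\sqrt{e+1}$ and $L/\ln(e+L/t)\leq L$ are $O(t)$; when $L>t$, $t\sqrt{e+L/t}\leq\sqrt{2Lt}=L\sqrt{2t/L}$, and since the function $\ln(e+x)/\sqrt{x}$ is bounded on $[1,\infty)$, we obtain $\sqrt{2Lt}=O(L/\ln(e+L/t))$. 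Together these give $r(t,\eps)=O(t+L/\ln(e+L/t))$.

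For the lower bound, apply the moreover part with $q:=1$ to get $r(t,\eps)<et+L$, so $\ln(r(t,\eps)/t)<\ln(e+L/t)$. Combined with the defining identity $r(t,\eps)\ln(r(t,\eps)/t)=L$, this yields $r(t,\eps)>L/\ln(e+L/t)$, and together with the trivial bound $r(t,\eps)>t$ we conclude
\[
r(t,\eps)\;\geq\;\tfrac{1}{2}\bigl(t+L/\ln(e+L/t)\bigr).
\]
No step is really a serious obstacle; the only thing demanding care is the case-split verifying $t\sqrt{e+L/t}=O(t+L/\ln(e+L/t))$ in the upper-bound argument, which hinges on the elementary fact that $\ln(e+x)/\sqrt{x}$ stays bounded on $[1,\infty)$.
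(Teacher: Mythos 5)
Your proof is correct and takes a genuinely different route from the paper. The paper proves the $\Theta$-estimate directly by a two-case analysis: when $t\geq L/e$ it shows $r\leq et$ by checking $(1/e)^{et}\leq\eps$; when $t\leq L/e$ it substitutes $x=r/t$, $c=L/t$ and analyzes the equation $x\ln x = c$, sandwiching $x$ between $c/\ln(e+c)$ and $\tfrac{5}{3}(e+c)/\ln(e+c)$ by monotonicity. The ``moreover'' part is then proved separately at the end. You invert this logic: you establish the closed-form bound $r<e^q t + L/q$ first (via monotonicity of $r\mapsto r\ln(r/t)$, which is the same observation as the paper's monotonicity of $(t/r)^r$, repackaged) and then derive \emph{both} the upper and lower $\Theta$-bounds as corollaries — the upper bound by choosing $q=\tfrac12\ln(e+L/t)$, and the lower bound by choosing $q=1$ and feeding the resulting estimate $\ln(r/t)<\ln(e+L/t)$ back into the defining identity $r\ln(r/t)=L$. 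This is a nice unification: it avoids the paper's case split entirely and gets more mileage from the explicit bound, at the cost of a somewhat less explicit constant in the upper bound (yours comes from the boundedness of $\ln(e+x)/\sqrt{x}$ on $[1,\infty)$, whereas the paper tracks a concrete factor $4$).

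One small imprecision you should patch: in the case $L>t$ you write $t\sqrt{e+L/t}\leq\sqrt{2Lt}$, but the inequality $e+L/t\leq 2L/t$ requires $L/t\geq e$, not merely $L/t>1$. The fix is trivial — for $L\geq t$ you have $e+L/t\leq(e+1)L/t$, hence $t\sqrt{e+L/t}\leq\sqrt{(e+1)Lt}$, and the rest of your argument (boundedness of $\ln(e+x)/\sqrt{x}$ on $[1,\infty)$) goes through with the constant $\sqrt{e+1}$ in place of $\sqrt{2}$. Alternatively, split at $L=et$ rather than $L=t$. Either way the big-$O$ conclusion is unaffected, so this does not threaten the proof; it is only a matter of stating the intermediate inequality in a form that actually holds on the claimed range.
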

	\begin{proof}
		First consider the case $t\geq \frac{\ln(1/\eps)}{e}$ and set $r:=e t$, then we get that 
		\begin{align}
			\forall t\geq \frac{\ln(1/\eps)}{e} \colon \left(\frac{t}{et}\right)^{\!\!et}=  \left(\frac{1}{e}\right)^{\!\!et}\leq \eps
			\qquad \Longrightarrow\qquad  \forall t\geq \frac{\ln(1/\eps)}{e} \colon r(t,\eps) \leq et.
			\label{eq:larget}
		\end{align}
		Now we turn to the case $t\leq \frac{\ln(1/\eps)}{e}$, and try to find $r=r(t,\eps)$.
		\begin{align}
			\left(\frac{t}{r}\right)^{\!\!r}= \eps  \qquad \Longleftrightarrow \qquad  
			\left(\frac{r}{t}\right)^{\!\!\frac{r}{t}}= \left(\frac{1}{\eps}\right)^{\!\!\frac{1}{t}}
			\qquad \Longleftrightarrow \qquad  			
			\frac{r}{t}\ln\left(\frac{r}{t}\right)= \ln\left(\frac{1}{\eps}\right)\frac{1}{t}.
			\label{eq:lambertForm}
		\end{align}	
		Let us define $x:=\frac{r}{t}\geq 1$ and $c:=\ln\left(\frac{1}{\eps}\right)\frac{1}{t}\geq e$. We will examine the solution of the equation $x\ln(x)=c$ for $c\geq e$. We see that the function $x\ln(x)$ is monotone increasing on $[1,\infty)$, takes value $0$ at $1$ and in the $x\rightarrow \infty$ limit it tends to infinity, therefore the equation $x\ln(x)=c$ has a unique solution for all $c\in\R_+$. 
		Moreover, if $b,B\in [1,\infty)$ are such that $b\ln(b)\leq c \leq B\ln(B)$, then $b\leq x \leq B$. Therefore we can see that $\frac{c}{\ln(c)}\leq x$ since 
		$$\frac{c}{\ln(c)}\ln\left(\frac{c}{\ln(c)}\right)
		=\frac{c}{\ln(c)}\left(\ln(c)-\ln(\ln(c))\right)
		=c\left(1-\frac{\ln(\ln(c))}{\ln(c)}\right)
		\leq c.$$
		By a similar argument we can see that $x\leq \frac{5}{3}\frac{e+c}{\log(e+c)}\leq \frac{4c}{\log(e+c)}$, since 
		\begin{align*}
			\frac{5}{3}\frac{e+c}{\ln(e+c)}\ln\left(\frac{5}{3}\frac{e+c}{\ln(e+c)}\right)
			&> \frac{5}{3}\frac{e+c}{\ln(e+c)}\ln\left(\frac{e+c}{\ln(e+c)}\right)\\
			&=\frac{5}{3}\frac{e+c}{\ln(e+c)}\left(\ln(e+c)-\ln(\ln(e+c))\right)\\
			&=\frac{5}{3}(e+c)\left(1-\frac{\ln(\ln(e+c))}{\ln(e+c)}\right)\\			
			&\geq\frac{5}{3}(e+c)\left(1-\frac{1}{e}\right)\tag*{$\left(\forall y\in \R_+\colon \frac{\ln(y)}{y}\leq\frac{1}{e}\right)$}\\				
			&> e+c\\
			&> c.
		\end{align*}
		Thus for $x\geq 1 ,\, c\geq e$ we get that the solution of the equation $x\log(x)=c$ satisfies 
		\begin{equation}\label{eq:xcsolution}
			\frac{c}{\log(e+c)}	\leq \frac{c}{\ln(c)}\leq x\leq \frac{4c}{\log(e+c)}.
		\end{equation}
		Using $x=\frac{r}{t}\Rightarrow r= t x$ and $c=\ln\left(\frac{1}{\eps}\right)\frac{1}{t}$ from \eqref{eq:lambertForm}-\eqref{eq:xcsolution} we get that
		\begin{align}
			\forall t\leq \frac{\ln(1/\eps)}{e} \colon \frac{\ln(1/\eps)}{\ln(e+\ln(1/\eps)/t)} \leq r(t,\eps) \leq \frac{4\ln(1/\eps)}{\ln(e+\ln(1/\eps)/t)}.
			\label{eq:smallt}	
		\end{align}
		Combining \eqref{eq:larget} and \eqref{eq:smallt} while observing $t\leq r(t,\eps)$ and $\frac{\ln(1/\eps)}{\ln(e+\ln(1/\eps)/t)}\leq \ln(1/\eps)$, we get that 
		\begin{equation}\label{eq:lagrangesolution}
			\forall \eps\in(0,1) \forall t\in \R_+ \colon r(t,\eps) = \Theta\left(t+ \frac{\ln(1/\eps)}{\ln(e+\ln(1/\eps)/t)} \right).
		\end{equation}		
		Finally note that for $r_q:=e^q t+\ln(1/\eps)/q$, then we get
		\begin{align*}
			\left(\frac{t}{r_1}\right)^{\!\!r_q}
			\leq \left(e^{-q}\right)^{\!r_q}
			\leq e^{-\ln(1/\eps)}
			=\eps \qquad \Longrightarrow\qquad r(t,\eps)\leq r_q.
		\end{align*}
	\end{proof}	

	This enables us to conclude the complexity of block-Hamiltonian simulation. Note that for $t\leq \eps$ Hamiltonian simulation with $\eps$-precision is trivial if $\nrm{H}\leq 1$, therefore we should assume that $t=\Omega(\eps)$ in order to avoid this trivial situation. Apart from this we can conclude the complexity of block-Hamiltonian simulation for entire range of interesting parameters.
	\begin{corollary}[Complexity of block-Hamiltonian simulation]
		Let $\eps\in(0,\frac{1}{2})$, $t\in\R$ and $\alpha\in\R_+$. 
		Let $U$ be an $(\alpha,a,0)$-block-encoding of the unknown Hamiltonian $H$. In order to implement an $\eps$-precise Hamiltonian simulation unitary $V$ which is an $(1,a+2,\eps)$-block-encoding of $e^{itH}$, it is necessary and sufficient to use the unitary $U$ a total number of times $$\Theta\left(\alpha|t|+\frac{\log(1/\eps)}{\log(e+\log(1/\eps)/(\alpha |t|))}\right).$$ 
	\end{corollary}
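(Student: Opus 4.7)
The plan is to prove the two directions separately, with both reducing to careful bookkeeping of $r(t,\eps)$.

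For the upper bound, I would apply Theorem~\ref{thm:blockHamSim} directly: it produces an $(1,a+2,\eps)$-block-encoding of $e^{itH}$ using $3r(e\alpha|t|/2,\eps/6)$ invocations of $U$ (plus a constant number of controlled-$U$ calls and $\bigO{a\cdot r(e\alpha|t|/2,\eps/6)}$ two-qubit gates). It then remains to translate this bound into the claimed closed-form expression. The tight asymptotic estimate
\[
r(t',\eps') = \Theta\left(t' + \frac{\ln(1/\eps')}{\ln(e+\ln(1/\eps')/t')}\right)
\]
established in Lemma~\ref{lemma:lamgertR} does exactly this work. Substituting $t'=e\alpha|t|/2$ and $\eps'=\eps/6$ yields the upper bound
\[
\bigO{\alpha|t| + \frac{\log(1/\eps)}{\log(e+\log(1/\eps)/(\alpha|t|))}},
\]
since absorbing the constants $e/2$ and $1/6$ only changes the implicit multiplicative constant in the $\Theta$-notation.

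For the lower bound, I would split into two independent lower bounds whose maximum matches the claimed expression. The first part, $\Omega(\alpha|t|)$, is the well-known ``no-fast-forwarding'' bound: if one could simulate $e^{itH}$ for $H$ of norm at most $\alpha$ using $o(\alpha|t|)$ queries to the block-encoding, one could violate parity-like query lower bounds for computing $e^{itH}$ on specific sparse Hamiltonians (as in Berry-Ahokas-Cleve-Sanders and Childs). The second part, $\Omega\!\left(\frac{\log(1/\eps)}{\log(e+\log(1/\eps)/(\alpha|t|))}\right)$, comes from the precision-dependent lower bound for approximating $e^{itH}$ by a polynomial, which must match the Jacobi-Anger tail analysis used above. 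I would invoke the lower bound results of Section~\ref{sec:lowerBd} on implementing functions of Hermitian matrices (which gives lower bounds on singular value / eigenvalue transformations, and specifies the approximate degree needed for $\cos,\sin$). Taking the maximum of the two lower bounds and observing that the maximum is (up to constants) the sum completes the matching lower bound.

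The main obstacle will be the lower bound: making sure the precision-dependent part does not get washed out when $\alpha|t|$ is large, and conversely that the $\Omega(\alpha|t|)$ part is recovered when $\eps$ is not tiny. Concretely one must verify that the expression inside the logarithm, $\log(e+\log(1/\eps)/(\alpha|t|))$, correctly interpolates between the two regimes: when $\alpha|t| \gg \log(1/\eps)$ the denominator is $\Theta(1)$ and the precision term becomes $\bigO{\log(1/\eps)}$, dominated by $\alpha|t|$; when $\alpha|t| \ll \log(1/\eps)$ the denominator is $\Theta(\log\log(1/\eps)/(\alpha|t|))$ and one recovers the classical $\Theta(\log(1/\eps)/\log\log(1/\eps))$ behavior. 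This case analysis, together with combining two separate lower bound instances (one for forcing large $\alpha|t|$, one for forcing small $\eps$), should complete the proof.
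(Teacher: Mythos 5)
Your upper-bound argument matches the paper exactly: Theorem~\ref{thm:blockHamSim} gives $3r\left(\frac{e\alpha|t|}{2},\frac{\eps}{6}\right)$ uses of $U$, and Lemma~\ref{lemma:lamgertR} converts this to the stated closed form (the constant rescalings $e/2$ and $1/6$ are harmless inside $\Theta(\cdot)$).

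The lower-bound route you sketch has a genuine gap. You propose to obtain the precision-dependent part $\Omega\!\left(\frac{\log(1/\eps)}{\log(e+\log(1/\eps)/(\alpha|t|))}\right)$ from Theorem~\ref{thm:lowerBoundEVT} in Section~\ref{sec:lowerBd}. But that theorem is a Lipschitz-type bound: it bounds $T$ from below by (roughly) a difference quotient $\frac{|f(x)-f(y)|-2\eps}{\nrm{R(x)-R(y)}}$. Applied to $f(x)=\cos(\alpha t x)$ or $\sin(\alpha t x)$, whose derivative is $\mathcal{O}(\alpha|t|)$, this can at best recover $T=\Omega(\alpha|t|)$ --- it cannot force $T$ to grow with $\log(1/\eps)$ once $\alpha|t|$ is held fixed, because the numerator is clipped by $2\eps$ and the slope is bounded. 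In other words, a Lipschitz bound cannot distinguish between a $T=\Omega(L)$ algorithm and a $T=\Omega(L\log(1/\eps))$ algorithm for a function of Lipschitz constant $L$. The precision-dependent part of the lower bound requires a fundamentally different argument based on the polynomial method: after $T$ queries to $U$ the relevant amplitudes are degree-$T$ polynomials in the matrix entries, and the Jacobi--Anger/Bessel tail analysis shows that degree $T$ cannot approximate $e^{i\alpha t x}$ on $[-1,1]$ to error $\eps$ unless $T\geq r(\alpha|t|,\eps)$. This is the content of the lower-bound argument in \cite{low2016HamSimQSignProc}, which the paper explicitly invokes (together with Lemma~\ref{lemma:lamgertR} to re-express $r(\cdot,\cdot)$). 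Your ``no-fast-forwarding'' $\Omega(\alpha|t|)$ component and the observation that $\max$ of the two parts recovers the sum up to constants are both correct, but the precision half must be sourced from the polynomial-method argument rather than Theorem~\ref{thm:lowerBoundEVT}.
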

	\begin{proof}
		The upper bound follows from Theorem~\ref{thm:blockHamSim} and Lemma~\ref{lemma:lamgertR}. The lower bound follows from the argument laid out in \cite{low2016HamSimQSignProc} using Lemma~\ref{lemma:lamgertR}.
	\end{proof}

	Note that the above corollary also covers the range $t\ll 1$, unlike the result of Low and Chuang~\cite{low2016HamSimQubitization} who assumed $t=\Omega(1)$. Also note that this result does not entirely match the complexity stated by Low and Chuang~\cite{low2016HamSimQSignProc,low2016HamSimQubitization}. For example in the case $t=\frac{\log(1/\eps)}{\log(\log(1/\eps))}$ the above corollary shows that the complexity is $\Theta\left(\frac{\log(1/\eps)}{\log(\log(\log(1/\eps)))}\right)$, whereas the expression of \cite{low2016HamSimQSignProc,low2016HamSimQubitization} claims complexity $\bigO{\frac{\log(1/\eps)}{\log(\log(1/\eps))}}$.

	The following lemma of Chakraborty et al.~\cite[Appendix A]{chakraborty2018BlockMatrixPowers} helps us to understand error accumulation in Hamiltonian simulation, which enables us to present a slightly improved claim in Theorem~\ref{cor:blockHamSimRob}.
	\begin{lemma}\label{lemma:hamSimDiff}
		Let $t\in\R$ and $H,H'\in\C^{n\times n}$ Hermitian operators, then 
		$$\nrm{e^{it H}-e^{it H'}}\leq |t|\nrm{H-H'}.$$
	\end{lemma}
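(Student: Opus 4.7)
The plan is to use a standard interpolation (Duhamel's formula) between the two one-parameter unitary groups. Define the smooth path $s \mapsto F(s) := e^{itsH}\, e^{it(1-s)H'}$ for $s \in [0,1]$, which interpolates between $F(0) = e^{itH'}$ and $F(1) = e^{itH}$. Then write
$$e^{itH} - e^{itH'} = F(1) - F(0) = \int_0^1 F'(s)\, ds.$$

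The key calculation is the derivative. Since $H$ commutes with $e^{itsH}$ (and likewise $H'$ with $e^{it(1-s)H'}$), the product rule yields
$$F'(s) = (itH)\,e^{itsH}\,e^{it(1-s)H'} - e^{itsH}\,(itH')\,e^{it(1-s)H'} = it\,e^{itsH}\,(H-H')\,e^{it(1-s)H'}.$$
Substituting back and applying the triangle inequality for integrals together with submultiplicativity of the operator norm gives
$$\nrm{e^{itH} - e^{itH'}} \leq \int_0^1 |t|\,\nrm{e^{itsH}}\,\nrm{H - H'}\,\nrm{e^{it(1-s)H'}}\,ds.$$
Finally, since $H$ and $H'$ are Hermitian, the operators $e^{itsH}$ and $e^{it(1-s)H'}$ are unitary and hence have operator norm $1$, so the right-hand side collapses to $|t|\,\nrm{H-H'}$, as claimed.

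There is no real obstacle here: the only subtle point is making sure one invokes the commutativity of $H$ with its own exponential (so that the product rule produces the clean factored form above), but this is immediate from the power series definition of $e^{itsH}$. The argument is short, and the Hermiticity assumption enters exactly once, at the very end, to guarantee the unitarity of the interpolating factors.
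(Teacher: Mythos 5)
Your proof is correct. Note that the paper does not actually prove this lemma — it simply cites it from Chakraborty et al.\ (Appendix~A of \cite{chakraborty2018BlockMatrixPowers}), so there is no internal proof to compare against. The Duhamel-style interpolation you give is the standard argument and is valid as written: the derivative computation $F'(s) = it\, e^{itsH}(H-H')e^{it(1-s)H'}$ uses only that $H$ commutes with $e^{itsH}$ (and likewise for $H'$), and Hermiticity enters exactly where you say, to make the interpolating factors unitary so their operator norms are $1$.
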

	
	Now we prove a robust version of Theorem~\ref{thm:blockHamSim} using Lemma~\ref{lemma:hamSimDiff}, and also substitute a simple expression of Lemma~\ref{lemma:lamgertR} bounding $r(t,\eps)$ in order to get explicit constants.
	\begin{corollary}\emph{(Robust block-Hamiltonian simulation)}\label{cor:blockHamSimRob}
		Let $t\in\R$, $\eps\in(0,1)$ and let $U$ be an $(\alpha,a,\eps/|2t|)$-block-encoding of the Hamiltonian $H$. Then we can implement an $\eps$-precise Hamiltonian simulation unitary $V$ which is an $(1,a+2,\eps)$-block-encoding of $e^{itH}$, with $6\alpha|t|+9\log(12/\eps)$
		uses of $U$ or its inverse, $3$ uses of controlled-$U$ or its inverse, using $\bigO{a\left(\alpha|t|+\log(2/\eps)\right)}$ two-qubit gates and using $\bigO{1}$ ancilla qubits.
	\end{corollary}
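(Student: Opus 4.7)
The plan is to reduce the robust case to the exact block-encoding case handled by Theorem~\ref{thm:blockHamSim}, absorbing the input error into a Hamiltonian perturbation that is controlled via Lemma~\ref{lemma:hamSimDiff}. First I would let $H':=\alpha(\bra{0}^{\otimes a}\otimes I)U(\ket{0}^{\otimes a}\otimes I)$ be the Hermitian operator that $U$ exactly block-encodes in its top-left corner. By construction $U$ is an $(\alpha,a,0)$-block-encoding of $H'$, and the hypothesis gives
\[\nrm{H-H'}\leq \frac{\eps}{2|t|}.\]

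Next I apply Theorem~\ref{thm:blockHamSim} to $U$ with target precision $\eps/2$ to obtain a unitary $V$ that is an $(1,a+2,\eps/2)$-block-encoding of $e^{itH'}$. Invoking Lemma~\ref{lemma:hamSimDiff},
\[\nrm{e^{itH}-e^{itH'}}\leq |t|\,\nrm{H-H'}\leq \frac{\eps}{2},\]
so the triangle inequality promotes $V$ to an $(1,a+2,\eps)$-block-encoding of $e^{itH}$, establishing correctness. The controlled-$U$ count, the $\bigO{a(\alpha|t|+\log(1/\eps))}$ two-qubit gate count, and the $\bigO{1}$ ancilla count then come directly from the corresponding bounds in Theorem~\ref{thm:blockHamSim}.

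The remaining task is to repackage the $U$-query count $3\,r(e\alpha|t|/2,\eps/12)$ supplied by Theorem~\ref{thm:blockHamSim} into the advertised closed-form expression $6\alpha|t|+9\log(12/\eps)$. For this I use the elementary bound $r(t,\eps)<e^q t + \ln(1/\eps)/q$ from Lemma~\ref{lemma:lamgertR} with the choice $q=1/3$, yielding
\[r\!\left(\tfrac{e\alpha|t|}{2},\tfrac{\eps}{12}\right) < \tfrac{e^{4/3}}{2}\,\alpha|t| + 3\log(12/\eps) \leq 2\alpha|t| + 3\log(12/\eps),\]
since $e^{4/3}/2<2$. Multiplying through by $3$ gives the stated bound of $6\alpha|t|+9\log(12/\eps)$ uses of $U$ or its inverse.

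There is no genuine mathematical obstacle here; the main care is bookkeeping around constants. One has to pick $q$ so that simultaneously $3e^{q+1}/2\leq 6$ and $3/q\leq 9$, i.e., $q\in[1/3,\ln 4-1]$, and $q=1/3$ is a clean representative. Everything else is a direct quote of prior results: Theorem~\ref{thm:blockHamSim} for the exact-encoding simulation, Lemma~\ref{lemma:hamSimDiff} for propagation of input error through $e^{it\cdot}$, and Lemma~\ref{lemma:lamgertR} for an explicit upper bound on the degree parameter $r$.
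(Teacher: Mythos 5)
Your proof is correct and follows essentially the same approach as the paper's: define $H'$ as the exactly block-encoded Hermitian operator, apply Theorem~\ref{thm:blockHamSim} to $U$ at precision $\eps/2$, control the perturbation $\nrm{e^{itH}-e^{itH'}}$ via Lemma~\ref{lemma:hamSimDiff}, and then bound $r(e\alpha|t|/2,\eps/12)$ using Lemma~\ref{lemma:lamgertR} with $q=1/3$. The only (immaterial) difference is that you spell out the range of admissible $q$, whereas the paper just states $q=1/3$.
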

	\begin{proof}
		Let $H'=\alpha(\bra{0}^{\otimes a}\otimes I)U(\ket{0}^{\otimes a}\otimes I)$, then $\nrm{H'-H}\leq \eps/|2t|$.
		By Theorem~\ref{thm:blockHamSim} we can implement $V$ an $(1,a+2,\eps/2)$-block-encoding of $e^{itH'}$, with $3r\left(\frac{e\alpha|t|}{2},\frac{\eps}{12}\right)$ uses of $U$ or its inverse, $3$ uses of controlled-$U$ or its inverse and with $\bigO{ar\left(\frac{e\alpha|t|}{2},\frac{\eps}{12}\right)}$ two-qubit gates and using $\bigO{1}$ ancilla qubits.
		By Lemma~\ref{lemma:hamSimDiff} we get that $V$ is an $(1,a+2,\eps)$-block-encoding of $e^{itH}$.
		Finally by Lemma~\ref{lemma:lamgertR} choosing $q:=\frac13$ we get that 
		$$r\left(\frac{e\alpha|t|}{2},\frac{\eps}{12}\right)
		\leq e^q \frac{e\alpha|t|}{2}+\frac{\ln(12/\eps)}{q}
		\leq 2\alpha|t| + 3 \ln(12/\eps).$$
	\end{proof}	
	
	\subsection{Bounded polynomial approximations of piecewise smooth functions}\label{subsec:polyApprox}

	We begin by invoking a slightly surprising result showing how to efficiently approximate monomials on the interval $[-1,1]$ with essentially quadratically smaller degree polynomials than the monomial itself. The following theorem can be found in the survey of Sachdeva and Vishnoi~\cite[Theorem 3.3]{sachdeva2014FasterAlgsViaApxTheory}.
	\begin{theorem}[Efficient approximation of monomials on ${[-1,1]}$]\label{thm:momialApx}
		For any positive integers $s$ and $d$, there exists an efficiently computable degree-$d$ polynomial $P_{s,d}\in\R[x]$ that satisfies
		$$\nrm{P_{s,d}(x) - x^s}_{[-1,1]} \leq 2e^{-d^2/(2s)}.$$
	\end{theorem}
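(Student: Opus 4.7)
My plan is to build $P_{s,d}$ by truncating the Chebyshev expansion of $x^s$, and to control the truncation error by interpreting the coefficients as probabilities in a symmetric random walk so that Hoeffding's inequality applies.

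First, I would substitute $x=\cos\theta$ and use the identity $\cos\theta=(e^{i\theta}+e^{-i\theta})/2$ to obtain the binomial expansion
\begin{equation*}
x^s=\cos^s\theta=\frac{1}{2^s}\sum_{k=0}^{s}\binom{s}{k}e^{i(s-2k)\theta}=\frac{1}{2^s}\sum_{k=0}^{s}\binom{s}{k}T_{|s-2k|}(\cos\theta),
\end{equation*}
using $\cos(m\theta)=T_{|m|}(\cos\theta)$. This is the exact Chebyshev series of $x^s$, and by construction it has parity matching that of $s$.

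Next, I would define
\begin{equation*}
P_{s,d}(x):=\frac{1}{2^s}\sum_{k\colon |s-2k|\leq d}\binom{s}{k}T_{|s-2k|}(x),
\end{equation*}
which is a polynomial of degree at most $d$. Since $\nrm{T_m}_{[-1,1]}\leq 1$ for every $m$, the triangle inequality gives
\begin{equation*}
\nrm{P_{s,d}-x^s}_{[-1,1]}\leq \frac{1}{2^s}\sum_{k\colon |s-2k|>d}\binom{s}{k}.
\end{equation*}
The right-hand side is exactly $\Pr[\,|S_s|>d\,]$ where $S_s=\sum_{i=1}^{s}X_i$ is a sum of independent uniformly random $\pm 1$ variables, because $\Pr[S_s=s-2k]=\binom{s}{k}/2^s$. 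Applying Hoeffding's inequality to $S_s$ yields $\Pr[\,|S_s|>d\,]\leq 2e^{-d^2/(2s)}$, which gives the claimed bound.

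The key step — the only nontrivial one — is identifying the tail of the Chebyshev coefficients with a binomial tail; after that Hoeffding does all the work. Efficient computability is immediate: the coefficients $\binom{s}{k}/2^s$ are explicit rationals, and evaluating a truncated Chebyshev series of degree $d$ takes $\operatorname{poly}(s,d)$ time.
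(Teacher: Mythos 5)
Your proof is correct and coincides with the argument given for this theorem in the cited reference (Sachdeva and Vishnoi, ``Faster algorithms via approximation theory'', Theorem 3.3), which the paper simply invokes without reproving: binomial expansion of $\cos^s\theta$, identification of the Chebyshev tail with the tail of a sum of $s$ Rademacher variables, and Hoeffding's inequality. The only computations worth double-checking are the Chebyshev normalization $\nrm{T_m}_{[-1,1]}=1$ and the Hoeffding constant for $\pm 1$ variables, $\Pr[\,|S_s|\geq d\,]\leq 2e^{-2d^2/(4s)}=2e^{-d^2/(2s)}$, and both are as you state them.
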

	
	If one wants to approximate smooth functions on the entire $[-1,1]$ interval this result gives essentially quadratic savings. For example one can easily derive Corollary~\ref{cor:exponentialApx} using the above result as shown in~\cite{sachdeva2014FasterAlgsViaApxTheory}.
	
	\begin{corollary}[Polynomial approximations of the exponential function]\label{cor:exponentialApx}
		Let $\beta\in\R_+$ and $\eps\in(0,\frac{1}{2}]$. There exists an efficiently constructable polynomial $P\in\R[x]$ such that 
		$$ \nrm{e^{-\beta(1-x)}-P(x)}_{[-1,1]}\leq \eps, $$
		and the degree of $P$ is $\bigO{\sqrt{\max\left[\beta,\log(\frac1\eps)\right]\log(\frac1\eps)}}$.
	\end{corollary}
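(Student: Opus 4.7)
The plan is to combine a truncated Taylor expansion of $e^{-\beta(1-x)}$ with the monomial-approximation result of Theorem~\ref{thm:momialApx}. Writing
\begin{equation*}
e^{-\beta(1-x)} \;=\; e^{-\beta}\sum_{k=0}^{\infty} \frac{\beta^k}{k!}\, x^k,
\end{equation*}
I would first choose a truncation degree $K$ large enough that the Taylor tail is small on $[-1,1]$: since $|x|\leq 1$, the tail is bounded by $e^{-\beta}\sum_{k>K}\beta^k/k!$, which by standard Poisson-tail estimates is at most $\eps/2$ as soon as $K = \bigO{\max[\beta,\log(1/\eps)]}$ (the interesting regime being $\beta$ small, where the series itself converges almost immediately; for $\beta$ large, one uses that the mode of the Poisson weights sits near $\beta$ and tails decay doubly-exponentially past $c\beta + \log(1/\eps)$).

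Next I would replace each monomial $x^k$ in the truncated sum by the efficiently computable degree-$d$ polynomial $P_{k,d}$ from Theorem~\ref{thm:momialApx}, and define
\begin{equation*}
P(x) \;:=\; e^{-\beta}\sum_{k=0}^{K} \frac{\beta^k}{k!}\, P_{k,d}(x).
\end{equation*}
The substitution error on $[-1,1]$ is at most $e^{-\beta}\sum_{k=0}^{K}\frac{\beta^k}{k!}\cdot 2 e^{-d^2/(2k)}$. Monotonicity in $k$ of $e^{-d^2/(2k)}$ lets me bound this uniformly by $2 e^{-d^2/(2K)}$ after extending the sum to infinity and using $\sum_k \beta^k/k! = e^{\beta}$. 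Picking $d := \lceil\sqrt{2K\log(4/\eps)}\rceil$ drives this term below $\eps/2$ as well, and yields the claimed overall degree
\begin{equation*}
\deg(P) \;=\; d \;=\; \bigO{\sqrt{K\log(1/\eps)}} \;=\; \bigO{\sqrt{\max[\beta,\log(1/\eps)]\,\log(1/\eps)}}.
\end{equation*}
Efficient constructability transfers directly from the efficient constructability of each $P_{k,d}$.

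The only subtlety (and thus the mildly delicate step) is the tail estimate when $\beta \gg \log(1/\eps)$: one must verify that a cutoff of $K = c\,\beta$ for a modest absolute constant $c>1$ already suppresses the tail below $\eps/2$, so that the $\beta$ appearing inside the square root in the final complexity is not inflated. This follows from a Chernoff-type bound on the Poisson distribution, e.g. $\Pr[\mathrm{Poi}(\beta)>c\beta] \leq e^{-\beta(c\ln c - c + 1)}$, which is smaller than $\eps/2$ once $c$ is chosen so that $c\ln c - c + 1 \geq \log(1/\eps)/\beta$; in the regime $\beta \leq \log(1/\eps)$ one simply falls back on $K = \bigO{\log(1/\eps)}$, which the $\max$ in the statement accommodates.
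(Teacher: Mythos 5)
Your proposal is correct and follows essentially the same route the paper takes: the paper explicitly states that Corollary~\ref{cor:exponentialApx} ``can easily be derived using the above result [Theorem~\ref{thm:momialApx}] as shown in~\cite{sachdeva2014FasterAlgsViaApxTheory},'' and the Sachdeva--Vishnoi argument is exactly your Taylor-truncate-then-substitute strategy, including the Poisson/Chernoff tail bound at cutoff $K=\bigO{\max[\beta,\log(1/\eps)]}$ and the choice $d=\Theta\big(\sqrt{K\log(1/\eps)}\,\big)$. The details you give (extending the finite sum to $e^{\beta}$, monotonicity of $e^{-d^2/(2k)}$ in $k$, falling back to $K=\bigO{\log(1/\eps)}$ when $\beta$ is small) all check out; the only trivial point worth noting is that Theorem~\ref{thm:momialApx} is stated for positive integers $s$, so the $k=0$ term should just be left as the constant $1$ rather than replaced.
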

	
	However, we often want to implement functions that are smooth only on some compact subset of $C\subseteq[-1,1]$, which requires different techniques. The main difficulty is to achieve a good approximation on $C$, while keeping the norm of the approximating polynomial bounded on the whole $[-1,1]$ interval. We overcome this difficulty by using Fourier approximations on $C$, which give rise to bounded functions naturally. Later we convert these Fourier series to a polynomial using Lemma~\ref{lemma:polyTrig} from the previous subsection.
	
	Apeldoorn et al~\cite[Appendix B]{apeldoorn2017QSDPSolvers} developed techniques that make it possible to implement smooth-functions of a Hamiltonian $H$, based on Fourier series decompositions and using the Linear Combinations of Unitaries (LCU) Lemma~\cite{berry2015HamSimNearlyOpt}. The techniques developed in~\cite[Appendix B]{apeldoorn2017QSDPSolvers} access $H$ only through controlled-Hamiltonian simulation, which step can be omitted using singular value transformation techniques by constructing the corresponding bounded low-degree polynomials.

	Now we invoke one of the main technical results of~\cite[Lemma~37]{apeldoorn2017QSDPSolvers} about approximating smooth functions by low-weight Fourier series. By low weight we mean that the $1$-norm of the coefficients is small. A notable property of the following result is that the bounds on the Fourier series do not depend on the degrees of the polynomials terms. This can however be expected since the terms that have large degree make negligible contribution due to the restricted domain $x\in [-1+\delta, 1-\delta]$, and therefore we can drop them without loss of generality.
	
	\begin{lemma}[Low weight approximation by Fourier series]\label{lemma:LowWeightAPX}
		Let $\delta,\eps\in\!(0,1)$ and $f:\mathbb{R}\rightarrow \mathbb{C}$ s.t. $\left|f(x)\!-\!\sum_{k=0}^K a_k x^k\right|\leq \eps/4$ for all $x\in\![-1+\delta,1-\delta]$.
		Then $\exists\, c\in\mathbb{C}^{2M+1}$ such that 
		$$
		\left|f(x)-\sum_{m=-M}^M c_m e^{\frac{i\pi m}{2}x}\right|\leq \eps
		$$ 
		for all $x\in\![-1+\delta,1-\delta]$, where $M=\max\left(2\left\lceil \ln\left(\frac{4\nrm{a}_1}{\eps}\right)\frac{1}{\delta} \right\rceil,0\right)$ and $\nrm{c}_1\leq \nrm{a}_1$. Moreover $c$ can be efficiently calculated on a classical computer in time $\text{poly}(K,M,\log(1/\eps))$.
	\end{lemma}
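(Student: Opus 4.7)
The plan is to combine a truncated Fourier expansion of the polynomial $p(x) := \sum_{k=0}^K a_k x^k$ with a careful smooth-extension argument. The reduction step is immediate: since $|f - p| \leq \eps/4$ on $[-1+\delta, 1-\delta]$ by hypothesis, it suffices to approximate $p$ to within $3\eps/4$ by a Fourier series in $e^{i\pi m x/2}$ while maintaining the coefficient bound $\nrm{c}_1 \leq \nrm{a}_1$.

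To construct such a Fourier series, I would extend $p$ to a 4-periodic function $P$ on $\R$ by multiplying by a Gevrey-class bump function $\varphi$ satisfying $\varphi \equiv 1$ on $[-1+\delta, 1-\delta]$ and $\varphi \equiv 0$ near the boundary of $[-2,2]$. The Fourier coefficients $c_m = \frac{1}{4}\int_{-2}^2 P(x)\, e^{-i\pi m x/2}\, dx$ then decay at rate $e^{-c|m|\delta}$ by repeated integration by parts, using the controlled derivatives of $\varphi$ within the $\delta$-margin. Truncating at $|m| \leq M$ contributes a tail error of order $\nrm{P}_{\infty}\, e^{-cM\delta}$; since $\nrm{P}_{\infty} \leq \nrm{p}_{[-1,1]} \leq \nrm{a}_1$, choosing $M = 2\lceil \ln(4\nrm{a}_1/\eps)/\delta \rceil$ yields tail error at most $3\eps/4$.

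The main obstacle is obtaining the bound $\nrm{c}_1 \leq \nrm{a}_1$ \emph{exactly}, rather than merely $\nrm{c}_1 \leq C\,\nrm{a}_1$ for some constant. I would address this via a complementary construction: approximate the identity function $x \mapsto x$ on $[-1+\delta, 1-\delta]$ by a Fourier series $\tilde{x}(x) = \sum_m d_m e^{i\pi m x/2}$ with $\nrm{d}_1 \leq 1$ and $|\tilde{x} - x| \leq \delta_0$, then substitute into $p$ to get $\tilde{p}(x) := \sum_k a_k \tilde{x}(x)^k$. Since $e^{i\pi m x/2}\cdot e^{i\pi m' x/2} = e^{i\pi (m+m')x/2}$, products of Fourier series correspond to convolutions of coefficient sequences, so $\tilde{x}^k$ has coefficient $\ell_1$-norm at most $\nrm{d}_1^k \leq 1$, and hence $\tilde{p}$ has coefficient $\ell_1$-norm at most $\nrm{a}_1$. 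The telescoping identity $|x^k - \tilde{x}^k| \leq k\,|x - \tilde{x}|$ on the target interval (using that $\nrm{d}_1 \leq 1$ forces $|\tilde{x}| \leq 1$) bounds the substitution error by $K\,\nrm{a}_1\,\delta_0$.

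The substitution route cleanly gives the $\ell_1$ bound but introduces a spurious factor of $K$ into the number of modes. The final step of the plan is to reconcile the two constructions: I would build $\tilde{x}$ itself from the Gevrey-smoothed extension above (so that its Fourier tail decays at rate $e^{-c|m|\delta}$), and then argue that the convolution structure of the substitution does not amplify the support growth beyond $M = O(\log(\nrm{a}_1/\eps)/\delta)$ — either by observing that the high-frequency Fourier tails of $\tilde{x}^k$ are dominated by a geometric series in $k$ independent of $K$, or by directly applying the smooth-extension argument to $p$ and invoking the substitution only to certify the coefficient-norm bound. I expect this last reconciliation, rather than either construction in isolation, to be the most delicate part of the argument.
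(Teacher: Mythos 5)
The core of your plan — reduce to approximating $p(x):=\sum_k a_k x^k$ and get the $\ell_1$-bound via a substitution that preserves $\ell_1$-norms under convolution — is in the right spirit, but two steps fail as written.

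\textbf{Gevrey cut-offs cannot give exponential Fourier decay.} Your step of multiplying $p$ by a compactly supported Gevrey bump $\varphi$ and claiming $|c_m|\lesssim e^{-c|m|\delta}$ from repeated integration by parts is incorrect. A nontrivial periodic function with Fourier coefficients that decay like $e^{-c|m|}$ necessarily extends to a bounded analytic function on a strip $\{|\mathrm{Im}\,y|<h\}$; such a function cannot be identically $1$ on one subinterval and $0$ on another (analyticity plus vanishing on an interval forces vanishing). Gevrey-$s$ cut-offs with $s>1$ give at best $e^{-c|m|^{1/s}}$, and the constant $c$ degrades as $s\to 1$. So the truncation threshold $M=O\big(\log(\nrm{a}_1/\eps)/\delta\big)$ does not follow from this construction; with a genuine Gevrey window you would need $M$ much larger than logarithmic in $1/\eps$. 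Moreover, even setting aside the decay rate, this route gives at best $\nrm{c}_1\le C\,\nrm{a}_1$ for an unspecified constant $C$ — you correctly flag this in your own step 5, but the two constructions you propose do not in fact reconcile (see below).

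\textbf{The substitution must be applied to the composed function, not mode-by-mode.} The key to the cited reference's proof (Lemma~37 of \cite{apeldoorn2017QSDPSolvers}) is the \emph{exact} identity $x=\tfrac{2}{\pi}\arcsin\!\big(\sin(\tfrac{\pi x}{2})\big)$ for $x\in[-1,1]$, combined with the facts that the Taylor coefficients of $\tfrac{2}{\pi}\arcsin(y)$ at $y=0$ are nonnegative and sum to $1$, and that each power $\sin^j(\tfrac{\pi x}{2})=\big(\tfrac{e^{i\pi x/2}-e^{-i\pi x/2}}{2i}\big)^j$ is a trigonometric polynomial whose Fourier coefficients have $\ell_1$-norm exactly $1$. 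One sets $q(y):=p\big(\tfrac{2}{\pi}\arcsin(y)\big)=\sum_j q_j y^j$, so that $q\big(\sin(\tfrac{\pi x}{2})\big)=p(x)$ on $[-1,1]$. Because the nonnegative coefficients of $\big(\tfrac{2}{\pi}\arcsin(y)\big)^k$ sum to $1$, one gets $\sum_j|q_j|\le \nrm{a}_1$; this is the exact $\ell_1$-bound you were after, achieved without any cut-off window at all. The truncated Taylor polynomial $\tilde q(y):=\sum_{j\le D}q_j y^j$ then satisfies $\nrm{c}_1\le\sum_{j\le D}|q_j|\le\nrm{a}_1$ after substituting $y=\sin(\tfrac{\pi x}{2})$, and its modes lie in $|m|\le D$, with $D$ controlled by the tail of $q$ on $[-\cos(\tfrac{\pi\delta}{2}),\cos(\tfrac{\pi\delta}{2})]$ and hence independent of $K$. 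This is exactly the reconciliation your step 7 is groping toward, but it is obtained by expanding $p\circ\tfrac{2}{\pi}\arcsin$ once and truncating, not by substituting a truncated $\tilde x$ into $p$.

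\textbf{Why your $K$-fix does not close.} If you first truncate $\tilde x$ to a degree-$L$ trigonometric polynomial and then form $\sum_k a_k\tilde x^k$, the result has modes up to $KL$, and there is no geometric decay in $k$ that saves you: the $\ell_1$-mass of $\tilde x^k$ spreads over $|m|\le kL$ and the Hoeffding-type tail $\sum_{|m|>M}|(d^{*k})_m|\le 2e^{-M^2/(2kL^2)}$ worsens with $k$. The observation that "high-frequency tails of $\tilde x^k$ are dominated by a geometric series in $k$ independent of $K$" does not hold in this generality. Expanding the composed function $q$ and truncating once is the correct move and eliminates the $K$-dependence from the start.Two steps of your plan are genuinely broken, and they are connected.

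\textbf{Gevrey cut-offs cannot give exponential Fourier decay.} The claim that a compactly supported Gevrey bump $\varphi$, via integration by parts, yields Fourier coefficients decaying as $e^{-c|m|\delta}$ is false in principle: a periodic function whose Fourier coefficients decay exponentially extends to a bounded analytic function on a strip $\{|\mathrm{Im}\,y|<h\}$, and no such function can be identically $1$ on one subinterval and $0$ on another (analyticity plus vanishing on an interval forces global vanishing). Gevrey-$s$ cut-offs with $s>1$ give only $e^{-c|m|^{1/s}}$, with $c$ degrading as $s\to 1$. So your truncation threshold $M=O(\log(\nrm{a}_1/\eps)/\delta)$ does not follow from this construction. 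On top of that, even if it did, this route only yields $\nrm{c}_1\le C\nrm{a}_1$ for an unspecified constant $C$, which you correctly flag as a problem in step 5.

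\textbf{The substitution must be applied to the composed function, not pushed mode-by-mode through $p$.} The proof in the cited reference (\cite[Lemma~37]{apeldoorn2017QSDPSolvers}) uses the exact identity $x=\frac{2}{\pi}\arcsin\big(\sin(\frac{\pi x}{2})\big)$ on $[-1,1]$, together with two facts: the Taylor coefficients of $\frac{2}{\pi}\arcsin(y)$ at $y=0$ are nonnegative and sum to $1$; and each power $\sin^j(\frac{\pi x}{2})=\big(\frac{e^{i\pi x/2}-e^{-i\pi x/2}}{2i}\big)^j$ is a trigonometric polynomial of degree $j$ whose Fourier coefficients have $\ell_1$-norm exactly $1$. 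Setting $q(y):=p\big(\frac{2}{\pi}\arcsin(y)\big)=\sum_j q_j y^j$ gives $q\big(\sin(\frac{\pi x}{2})\big)=p(x)$ exactly, and nonnegativity of the $\arcsin$ coefficients (together with $\sum=1$) yields $\sum_j|q_j|\le\nrm{a}_1$. Truncating the Taylor series of $q$ to degree $D$ gives a Fourier series with modes $|m|\le D$ and $\ell_1$-coefficient-norm $\le\sum_{j\le D}|q_j|\le\nrm{a}_1$; the tail error is controlled by how fast $\sum_{j>D}|q_j|\,|\sin(\frac{\pi x}{2})|^j$ vanishes on $[-1+\delta,1-\delta]$. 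Crucially, $D$ here is independent of $K$: the composition $p\circ\frac{2}{\pi}\arcsin$ is expanded and truncated once. This is exactly the reconciliation your final paragraph is groping toward, and it also completely sidesteps the need for a compactly supported window.

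\textbf{Why your proposed $K$-fix does not close.} If instead you first truncate $\tilde x$ to a degree-$L$ trigonometric polynomial and then form $\sum_k a_k\tilde x^k$, the modes spread to $|m|\le kL$, and there is no geometric decay in $k$ that rescues the tail: by a Hoeffding-type bound the $\ell_1$-mass of $\tilde x^k$ outside $|m|\le M$ behaves like $e^{-M^2/(2kL^2)}$, which worsens as $k$ grows. The conjecture that "high-frequency Fourier tails of $\tilde x^k$ are dominated by a geometric series in $k$ independent of $K$" does not hold. Expanding $q=p\circ\frac{2}{\pi}\arcsin$ and truncating once is the correct move; it makes the $K$-dependence disappear from the outset and supplies the exact $\ell_1$-bound without any mollifier.
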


	Our main idea is to combine this result with the polynomial approximations of the trigonometric functions as in Lemma~\ref{lemma:polyTrig}. The low weights are useful because they let us reduce the precision required for approximating the Fourier terms.
	
	\begin{corollary}[Bonded polynomial approximations based on a local Taylor series] \label{cor:boundedTaylorApx}
		Let $x_0\in[-1,1]$, $r\in(0,2]$, $\delta\in(0,r]$ and let  $f\colon [-x_0-r-\delta,x_0+r+\delta]\rightarrow \C$ and be such that $f(x_0+x)=\sum_{\ell=0}^{\infty} a_\ell x^\ell$ for all $x\in\![-r-\delta,r+\delta]$. 
		Suppose $B>0$ is such that $\sum_{\ell=0}^{\infty}(r+\delta)^\ell|a_\ell|\leq B$. 
		Let $\eps\in\!\left(0,\frac{1}{2B}\right]$, then there is an efficiently computable polynomial $P\in \C[x]$ of degree $\bigO{\frac{1}{\delta}\log\left(\frac{B}{\eps}\right)}$ such that
		\begin{align}
			\nrm{ f(x)- P(x) }_{[x_0-r,x_0+r]}&\leq \eps\label{eq:truncApx}\\
			\nrm{P(x)}_{[-1,1]}&\leq  \eps + \nrm{f(x)}_{[x_0-r-\delta/2,x_0+r+\delta/2]}\leq \eps + B\\
			\nrm{P(x)}_{[-1,1]\setminus [x_0-r-\delta/2,x_0+r+\delta/2]}&\leq \eps.\label{eq:truncBnd}
		\end{align}
	\end{corollary}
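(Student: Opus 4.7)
The plan is to rescale to the standard interval $[-1,1]$ and then build the polynomial by layering Lemmas~\ref{lemma:LowWeightAPX}, \ref{lemma:polyTrig}, and \ref{lemma:polyRect}. Substitute $y:=(x-x_0)/(r+\delta)$ so that $g(y):=f(x_0+(r+\delta)y)$ has Taylor expansion $g(y)=\sum_{\ell}b_\ell y^\ell$ with $b_\ell:=(r+\delta)^\ell a_\ell$ and $\sum_\ell |b_\ell|\leq B$. Letting $\delta':=\delta/(r+\delta)$, the target interval $[x_0-r,x_0+r]$ becomes $[-(1-\delta'),1-\delta']$ in $y$ and the forbidden region $[-1,1]\setminus[x_0-r-\delta/2,x_0+r+\delta/2]$ becomes $[-1,-1+\delta'/2]\cup[1-\delta'/2,1]$; since $r+\delta\leq 4$ we have $\delta'\geq\delta/4$, so any $\bigO{1/\delta'}$ bound is also $\bigO{1/\delta}$.

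The polynomial is built in four steps. First, truncate the Taylor series at degree $K=\bigO{\log(B/\eps)/\delta'}$, so that $|\tilde g(y)-g(y)|\leq B(1-\delta'/2)^K\leq\eps/32$ for $|y|\leq 1-\delta'/2$, where $\tilde g:=\sum_{\ell\leq K}b_\ell y^\ell$ has $1$-norm at most $B$. Second, apply Lemma~\ref{lemma:LowWeightAPX} to $\tilde g$ with separation $\delta'/2$ and precision $\eps/8$, producing a Fourier series $F(y)=\sum_{m=-M}^M c_m e^{i\pi m y/2}$ with $M=\bigO{\log(B/\eps)/\delta'}$, $\nrm{c}_1\leq B$, and $|F-g|\leq\eps/8$ on $[-1+\delta'/2,1-\delta'/2]$. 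Third, invoke Lemma~\ref{lemma:polyTrig} on $\cos(\pi my/2)$ and $\sin(\pi my/2)$ to approximate each $e^{i\pi my/2}$ on $[-1,1]$ to precision $\eps/(8B)$ by a polynomial $Q_m$ of degree $\bigO{|m|+\log(B/\eps)}$, and linearly combine to get $Q(y):=\sum_m c_m Q_m(y)$, a polynomial of total degree $\bigO{\log(B/\eps)/\delta'}$ with $|Q-g|\leq\eps/4$ on $[-1+\delta'/2,1-\delta'/2]$ and $\nrm{Q}_{[-1,1]}\leq B+\eps/8$. Fourth, use Lemma~\ref{lemma:polyRect} with parameters $t:=1-3\delta'/4$, $\delta_L:=\delta'/4$, $\eps_R:=\eps/(3B)$ to produce an even polynomial $P_{\mathrm{rect}}$ of degree $\bigO{\log(B/\eps)/\delta'}$ that is $\eps_R$-close to $1$ on $[-(1-\delta'),1-\delta']$ and at most $\eps_R$ on $[-1,-1+\delta'/2]\cup[1-\delta'/2,1]$; set $P(y):=P_{\mathrm{rect}}(y)Q(y)$ and substitute back to obtain a polynomial in $x$ of the claimed degree $\bigO{\log(B/\eps)/\delta}$.

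It remains to verify the three conclusions. On $[-(1-\delta'),1-\delta']$: $|P-g|\leq|P_{\mathrm{rect}}|\cdot|Q-g|+|Q|\cdot|1-P_{\mathrm{rect}}|\leq\eps/4+(B+\eps/8)\eps_R\leq\eps$. On the outer strip: $|P|\leq\eps_R(B+\eps/8)\leq\eps$. For the middle interval $[-1+\delta'/2,1-\delta'/2]$, instead of using the crude $|Q|\leq B+\eps/8$, bound $|Q|\leq|g|+\eps/4\leq\nrm{f}_{[x_0-r-\delta/2,x_0+r+\delta/2]}+\eps/4$, giving $|P|\leq\nrm{f}_{[x_0-r-\delta/2,x_0+r+\delta/2]}+\eps/4$; together with the outer strip bound this yields $\nrm{P}_{[-1,1]}\leq\eps+\nrm{f}_{[x_0-r-\delta/2,x_0+r+\delta/2]}$. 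The main technical obstacle will be coordinating the error budget across the four nested approximations so that the final constants simultaneously satisfy all three conclusions; in particular, the sharper norm bound $\eps+\nrm{f}_{[x_0-r-\delta/2,x_0+r+\delta/2]}$ (rather than the naive $\eps+B$) requires this case split on the interior via the pointwise comparison $|Q|\leq|g|+\bigO{\eps}$, while the outer strip relies on the rectangle's suppression.
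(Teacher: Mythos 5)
Your rescaling and the layering of Lemmas~\ref{lemma:LowWeightAPX}, \ref{lemma:polyTrig}, and \ref{lemma:polyRect} match the spirit of the paper's proof, but there is a genuine gap in where you apply the last two lemmas, and it breaks the two bounds that must hold on all of $[-1,1]$ in $x$.

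You assert that under $y=(x-x_0)/(r+\delta)$ the forbidden region $[-1,1]\setminus[x_0-r-\delta/2,\,x_0+r+\delta/2]$ becomes $[-1,-1+\delta'/2]\cup[1-\delta'/2,1]$ in $y$. This is false unless $r+\delta\geq 1+|x_0|$, which the hypotheses do not guarantee (and which fails in the paper's own applications, e.g.\ Corollary~\ref{cor:negatiwePower} with $r=1-\delta$). In general $x=\pm 1$ maps to $y=(\pm1-x_0)/(r+\delta)$, which can have magnitude far larger than $1$. Consequently you build $Q(y)$ (via Lemma~\ref{lemma:polyTrig}) and $P_{\mathrm{rect}}(y)$ (via Lemma~\ref{lemma:polyRect}) with control only for $y\in[-1,1]$; these are truncated Chebyshev-type expansions which grow \emph{exponentially} for $|y|>1$. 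After substituting back, $P(x)=P_{\mathrm{rect}}(L(x))\,Q(L(x))$ is therefore completely uncontrolled for $x\in[-1,1]$ outside $[x_0-r-\delta,x_0+r+\delta]$, so neither $\nrm{P}_{[-1,1]}\leq\eps+B$ nor $\nrm{P}_{[-1,1]\setminus[x_0-r-\delta/2,x_0+r+\delta/2]}\leq\eps$ follows.

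The paper avoids this by doing the Fourier-to-polynomial step and the rectangle cutoff \emph{in the original $x$-coordinate}. After obtaining the Fourier series in $y$, it substitutes $y=L(x)$ to get terms $\tilde{c}_m e^{-i\pi m x_0/(2(r+\delta))}\,e^{i\pi m x/(2(r+\delta))}$, and then applies Lemma~\ref{lemma:polyTrig} with $t=\pi m/(2(r+\delta))$ to each $e^{i t x}$ on $x\in[-1,1]$ (a slightly down-scaled version so $|Q_m|\leq 1$ on $[-1,1]$). The frequency $t$ is larger by a factor $1/(r+\delta)$, but $M=\bigO{(r+\delta)\log(B/\eps)/\delta}$, so the degree remains $\bigO{\frac1\delta\log(B/\eps)}$. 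Likewise the rectangle-type polynomial is constructed directly in $x$ (a shifted version of Lemma~\ref{lemma:polyRect}, following its proof), flat on $[x_0-r,x_0+r]$, small on $[-1,1]\setminus[x_0-r-\delta/2,x_0+r+\delta/2]$, and bounded by $1$ on all of $[-1,1]$. To repair your argument, you would need to perform these two steps in the $x$-coordinate exactly as the paper does; the rest of your error-budget bookkeeping, with a few constant adjustments, then goes through.
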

 	\begin{proof}
 		We proceed similarly to the proof of~\cite[Theorem~40]{apeldoorn2017QSDPSolvers}.
		Let $L(x):=\frac{x-x_0}{r+\delta}$ be the linear transformation taking $[x_0-r-\delta,x_0+r+\delta]$ to $[-1,1]$.
 		Let $g(y):=f(L^{-1}(y))$, and $b_\ell:=a_\ell(r+\delta)^\ell$ such that $g(y)=\sum_{\ell=0}^{\infty} b_\ell y^\ell$.
 		Let $\delta':=\frac{\delta}{2(r+\delta)}$ and let $J=\left\lceil\frac{1}{\delta'}\log(\frac{12 B}{\eps})\right\rceil$, then for all $y\in[-1,1]$ we have that 
		\begin{align*}
			\left|g(y)-\sum_{j=0}^{J-1}b_j y^j\right| 
			= \left|\sum_{j=J}^{\infty}b_j y^j\right|
			\leq \sum_{j=J}^{\infty}\left|b_j (1-\delta')^j\right|
			\leq (1-\delta')^J \sum_{j=J}^{\infty}\left|b_j\right|		
			\leq \left(1-\delta'\right)^{J}B	
			\leq e^{-\delta'J}B
			\leq \frac{\eps}{12}.					
		\end{align*}
		Now we construct a Fourier-approximation of $g$ for all $y\in[-1+\delta',1-\delta']$, with precision $\frac{\eps}{3}$. 
		Let $b':=(b_0,b_1,\ldots,b_{J-1})$ and observe that $\nrm{b'}_1\leq\nrm{b}_1\leq B$. We apply Lemma~\ref{lemma:LowWeightAPX} to the function $g$, using the polynomial approximation corresponding to the truncation to the first $J$ terms, i.e., using the coefficients in $b'$. 	
		Then we obtain a Fourier $\frac{\eps}{3}$-approximation $\tilde{g}(y):=\sum_{m=-M}^{M}\tilde{c}_m e^{\frac{i\pi m}{2}y}$ of $g$, with 
		$$
		M=\bigO{\frac{1}{\delta'}\log\left(\frac{\nrm{b'}_1}{\eps'}\right)}=\bigO{\frac{r}{\delta}\log\left(\frac{B}{\eps}\right)},
		$$ 
		such that the vector of coefficients $\tilde{c}\in\mathbb{C}^{2M+1}$ satisfies $\nrm{\tilde{c}}_1\leq\nrm{b'}_1 \leq \nrm{b}_1 \leq B$.   Let
		$$\tilde{f}(x):=\tilde{g}\left(L(x)\right)=\tilde{g}\left(\frac{x-x_0}{r+\delta}\right)=\sum_{m=-M}^{M}\tilde{c}_m e^{\frac{i\pi m}{2(r+\delta)}(x-x_0)}=\sum_{m=-M}^{M}\tilde{c}_m e^{-\frac{i\pi m }{2(r+\delta)}x_0} e^{\frac{i\pi m}{2(r+\delta)}x}.$$ 
		Since $g(y)=f(L^{-1}(y))$ we have that $f(x)=g\left(L(x)\right)$ thus we can see that $\tilde{f}$ is an $\frac{\eps}{3}$-precise Fourier approximation of $f$ on the interval $[x_0-r-\frac{\delta}{2},x_0+r+\frac{\delta}{2}]$. Now we define $\tilde{P}$ as the polynomial that we get by replacing each of the Fourier terms $e^{\frac{i\pi m}{2(r+\delta)}x}$ by $\frac{\eps}{3B}$-approximating polynomials given by Lemma~\ref{lemma:polyTrig}. Using a tiny rescaling we can assure that the polynomial approximations of $e^{\frac{i\pi m}{2(r+\delta)}x}$ have absolute value at most $1$ on $[-1,1]$. Moreover by Lemma~\ref{lemma:lamgertR} we know that the degree of these polynomials are
		$\bigO{\frac{M}{r+\delta}+\log\left(\frac{B}{\eps}\right)}=\bigO{\frac{1}{\delta}\log\left(\frac{B}{\eps}\right)}$.
		Since $\nrm{\tilde{c}}\leq B$, we get that the absolute value of the polynomial $\tilde{P}$ is bounded by $B$ on the interval $[-1,1]$.
		Finally we define $P$ as the product of $\tilde{P}$ and an approximation polynomial of the rectangle function that is $\frac{\eps}{3B}$-close to $1$ on the interval $[x_0-r,x_0+r]$, and is $\frac{\eps}{3B}$-close to $0$ on the interval $[-1,1]\setminus [x_0-r-\frac{\delta}{2},x_0+r+\frac{\delta}{2}]$, finally which is bounded by $1$ on the interval $[-1,1]$ in absolute value. By Lemma~\ref{lemma:polyRect} we can construct such a polynomial of degree $\bigO{\frac{1}{\delta}\log\left(\frac{B}{\eps}\right)}$.
		As we can see $P$ has degree $\bigO{\frac{1}{\delta}\log\left(\frac{B}{\eps}\right)}$, and by construction satisfies the required properties \eqref{eq:truncApx}-\eqref{eq:truncBnd}.
 	\end{proof}

	Combining this polynomial approximation result with Theorem~\ref{thm:arbParity} we can efficiently implement smooth functions of Hermitian matrices. As an application, motivated by the work of Chakraborty et al.~\cite{chakraborty2018BlockMatrixPowers} we show how to construct low-degree polynomial approximations of power functions.
	\begin{corollary}[Polynomial approximations of negative power functions]\label{cor:negatiwePower}
		Let $\delta,\eps\in(0,\frac{1}{2}]$, $c>0$ and let $f(x):=\frac{\delta^c}{2}x^{-c}$, then there exist even/odd polynomials $P,P'\in\R[x]$, such that $\nrm{P-f}_{[\delta,1]}\leq\eps$, $\nrm{P}_{[-1,1]}\leq1$ and similarly $\nrm{P'-f}_{[\delta,1]}\leq\eps$, $\nrm{P'}_{[-1,1]}\leq1$, moreover the degree of the polynomials are $\bigO{\frac{\max[1,c]}{\delta}\log\left(\frac{1}{\eps}\right)}$.
	\end{corollary}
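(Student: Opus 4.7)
The plan is to apply Corollary~\ref{cor:boundedTaylorApx} to the local Taylor expansion of $f$ at the midpoint $x_0 := (1+\delta)/2$ of the target interval $[\delta,1]$, and then symmetrize the resulting polynomial to obtain the even (or odd) version. First I would expand $f$ around $x_0$ via the binomial series: since $f(x_0+y) = \tfrac{\delta^c}{2}\,x_0^{-c}(1+y/x_0)^{-c}$, we get $f(x_0+y)=\sum_{\ell\geq 0} a_\ell\,y^\ell$ with $|a_\ell|=\tfrac{\delta^c}{2} x_0^{-c-\ell}\binom{c+\ell-1}{\ell}$, and the radius of convergence is $x_0$. Using the identity $\sum_{\ell\geq 0}\binom{c+\ell-1}{\ell}z^\ell=(1-z)^{-c}$ for $|z|<1$, the weighted tail bound required by Corollary~\ref{cor:boundedTaylorApx} evaluates in closed form: for $r := (1-\delta)/2$ and any buffer $\delta''\in(0,\delta)$, one has $\sum_{\ell\geq 0}|a_\ell|(r+\delta'')^\ell = \tfrac{\delta^c}{2}(\delta-\delta'')^{-c} =: B$, and similarly $\nrm{f}_{[\delta-\delta''/2,\,1]}=\tfrac12(1-\delta''/(2\delta))^{-c}$.

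Next I would choose $\delta'' := \delta/(6\max[1,c])$, so that by $(1-t/k)^{-k}\le e^{t/(1-t)}$ (applied with $k=\max[1,c]$) both $(1-\delta''/\delta)^{-c}$ and $(1-\delta''/(2\delta))^{-c}$ are bounded by an absolute constant, yielding in particular $\nrm{f}_{[\delta-\delta''/2,\,1]}\le \tfrac{1}{2}e^{1/11}<0.56$. Applying Corollary~\ref{cor:boundedTaylorApx} at precision $\eps/3$ with $x_0,r,\delta''$ as above then produces a polynomial $P_0\in\C[x]$ of degree $\bigO{\tfrac{1}{\delta''}\log(B/\eps)} = \bigO{\tfrac{\max[1,c]}{\delta}\log(1/\eps)}$ satisfying $\nrm{P_0-f}_{[\delta,1]}\le \eps/3$, $\nrm{P_0}_{[-1,1]\setminus[\delta-\delta''/2,\,1]}\le \eps/3$, and $\nrm{P_0}_{[-1,1]}\le \eps/3 + 0.56$. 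Since the Taylor coefficients of the real function $f$ are real, we may take $P_0\in\R[x]$.

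Finally, I would symmetrize: set $P(x):=P_0(x)+P_0(-x)$ (even) and $P'(x):=P_0(x)-P_0(-x)$ (odd). For $x\in[\delta,1]$ we have $-x\in[-1,-\delta]$, which lies outside $[\delta-\delta''/2,\,1]$, so $|P_0(-x)|\le \eps/3$, giving $|P(x)-f(x)|\le|P_0(x)-f(x)|+|P_0(-x)|\le 2\eps/3\le \eps$, and analogously for $P'$. For the $L^\infty$ bound on $[-1,1]$ I would split into three zones: (i) on $[\delta,1]$ and its reflection, $|P|\le(f(x)+\eps/3)+\eps/3\le \tfrac12+2\eps/3$; (ii) on the interior $[-(\delta-\delta''/2),\delta-\delta''/2]$, both $|P_0(\pm x)|\le\eps/3$ so $|P|\le 2\eps/3$; (iii) in the boundary sliver $[\delta-\delta''/2,\delta]$ (and its mirror), $|P_0(x)|\le \eps/3+0.56$ while the reflected argument still lies in the small-value zone so $|P_0(-x)|\le\eps/3$, yielding $|P|\le 2\eps/3+0.56<1$ under the hypothesis $\eps\le 1/2$. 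The same case analysis, with a sign flip in the last line, handles $P'$.

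The main obstacle is calibrating $\delta''$: it must be small enough that the ``overshoot'' $\nrm{f}_{[\delta-\delta''/2,\,1]}$ stays bounded strictly below $1/2$ even after doubling from symmetrization and adding the $2\eps/3$ approximation slack, yet not so small that the degree $1/\delta''$ blows past $\bigO{\max[1,c]/\delta}$. The key analytic fact resolving this tension is that $(1-\alpha/k)^{-k}$ stays bounded by an absolute constant whenever $\alpha=O(1)$, so the choice $\delta''=\Theta(\delta/\max[1,c])$ simultaneously achieves both goals.
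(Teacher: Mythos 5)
Your proof is correct and follows essentially the same route as the paper's: expand $f$ via the generalized binomial series, apply Corollary~\ref{cor:boundedTaylorApx} with the buffer parameter calibrated to $\Theta(\delta/\max[1,c])$ so that the weighted coefficient sum stays $O(1)$, then symmetrize to impose parity. The only structural difference is where you center the expansion: you take $x_0=(1+\delta)/2$, so that $[x_0-r,x_0+r]=[\delta,1]$ is exactly the target interval, while the paper expands at the endpoint with $r=1-\delta$ (the paper writes $x_0:=0$, but the coefficient formula $a_k=\tfrac{\delta^c}{2}\binom{-c}{k}$ and the constraint that $f$ be analytic around $x_0$ only make sense for $x_0=1$, since $f$ has no power series at $0$); both choices give the same degree bound. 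A point in your favor: your explicit construction $P:=P_0(x)+P_0(-x)$ and the zone-by-zone verification of $\nrm{P}_{[-1,1]}\leq 1$ (hinging on the overshoot control $\nrm{f}_{[\delta-\delta''/2,\,1]}<0.56$) supplies a detail the paper elides --- the paper's closing sentence ``define $P$ as the even real part of $\tilde P$,'' read literally as $\tfrac{1}{2}(P_0(x)+P_0(-x))$, would approximate $f/2$ rather than $f$ on $[\delta,1]$, and your unhalved sum together with the explicit sup-norm check is exactly what is needed to make that step go through.
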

	\begin{proof}
		First note that for all $y\in (-1,1)$ we have that $(1+y)^{-c}=\sum_{k=0}^\infty \binom{-c}{k}y^k$. We first find a polynomial $\tilde{P}\in\C[x]$ such that $\nrm{\tilde{P}-f}_{[\delta,1]}\leq \frac{\eps}{2}$, $\nrm{\tilde{P}}_{[-1,0]}\leq \frac{\eps}{2}$ and $\nrm{\tilde{P}}_{[-1,1]}\leq 1$. We construct such a polynomial of degree $\bigO{\frac{\max[1,c]}{\delta}\log\left(\frac{1}{\eps}\right)}$ using Corollary~\ref{cor:boundedTaylorApx}, with choosing $x_0:=0$, $r:=1-\delta$, $\delta':=\frac{\delta}{2\max[1,c]}$ and $B:=1$. The choice of $B$ is justified by the observation that 
		\begin{align*}
			\frac{\delta^{c}}{2}\sum_{k=0}^\infty\left|\binom{-c}{k}\right|(r+\delta')^k
			&=\frac{\delta^{c}}{2}\sum_{k=0}^\infty\binom{-c}{k}(-r-\delta')^k
			=\frac{\delta^{c}}{2}(1-r-\delta')^{-c}\\			
			&=\frac{\delta^{c}}{2}(\delta-\delta')^{-c}	
			=\frac{1}{2}\left(1-\frac{\delta'}{\delta}\right)^{\!\!-c}\\		
			&=\frac{1}{2}\left(1-\frac{1}{2\max[1,c]}\right)^{\!\!-c}
			\leq 1.\\	
		\end{align*}
		Finally, we define $P$ as the even real part of $\tilde{P}$, and define $P'$ as the odd real part of $\tilde{P}$.
	\end{proof}

	Given a $(1,a,0)$-block-encoding of $A$, with the promise that the spectrum of $A$ lies in $[\delta,1]$, using the above polynomials and Theorem~\ref{thm:arbParity} we can implement a $(1,a+2,\eps)$-block-encoding of $\frac{\delta^c}{2}A^{-c}$ with $\bigO{\frac{\max[1,c]}{\delta}\log\left(\frac{1}{\eps}\right)}$ uses of the block-encoding of $A$. Since the derivative of the function $\frac{\delta^c}{2}x^{-c}$ at $x=\delta$ is $-\frac{c}{2\delta}$, we get by Theorem~\ref{thm:lowerBoundEVT} that the $\delta$ and $c$ dependence of the complexity of this procedure is optimal.

	Finally we develop a theorem that is analogous to  \cite[Corollary~42]{apeldoorn2017QSDPSolvers}, and shows that any function that has quickly converging local Taylor-series can in principle be $\eps$-approximated with complexity $\propto \log\left(\frac{1}{\eps}\right)$.

	\begin{theorem}[Bounded polynomial approximation based on multiple local Taylor series]\label{thm:boundedTaylorApx}
		Let $J\in \N$, $(x_j,r_j,\delta_j)\in[-1,1]^J\times(0,2]^J\times(0,1]^J$, such that $x_j\colon j\in [J]$ is monotone increasing, and $\delta_j\leq r_j$ for all $j\in[J]$. Let $I:=\bigcup_{j\in [J]}[x_j-r_j,x_j+r_j]$ be the union of the intervals $[x_j-r_j,x_j+r_j]$, and suppose that for all $i<j\in[J]$ such that $j-i\geq 2$ we have that $r_j+r_j< x_j-x_j$. Let $\delta=\min\left[\min_{j\in [J]} \delta_j,\min_{j\in [J-1]} |x_{j+1}-x_{j} - (r_{j+1} + r_j )|\right]$. Let $f:I+[-\frac{\delta}2,\frac{\delta}2]\rightarrow \C$, $B\in \R_+$ be such that for all $j\in[J]$ we have $f(x_j+x)=\sum_{k=0}^{\infty}a^{(j)}_k x^k$ for all $x\in [x_j-r_j-\frac{\delta_j}2,x_j+r_j+\frac{\delta_j}2]$ and $\sum_{k=0}^{\infty}(r_j+\delta_j)^k |a^{(j)}_k|\leq B$. Let $\eps\in\!\left(0,\frac{1}{2BJ}\right]$, then there is an efficiently computable polynomial $P\in \C[x]$ of degree $\bigO{\frac{J}{\delta}\log\left(\frac{BJ}{\eps}\right)}$ such that
		\begin{align*}
		\nrm{ f(x)- P(x) }_{I}&\leq \eps\\
		\nrm{P(x)}_{[-1,1]}&\leq  \nrm{f(x)}_{I+[-\delta/2,\delta/2]}\\
		\nrm{P(x)}_{[-1,1]\setminus \left(I+[-\delta/2,\delta/2]\right)}&\leq \eps.
		\end{align*}
	\end{theorem}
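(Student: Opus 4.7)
The plan is to apply Corollary~\ref{cor:boundedTaylorApx} separately on each of the $J$ sub-intervals and sum the resulting polynomials. Concretely, for each $j\in[J]$ I set $x_0:=x_j$, $r:=r_j$, $\delta:=\delta_j$ and target precision $\eps':=\eps/J$. Since by hypothesis $\sum_{k=0}^{\infty}(r_j+\delta_j)^k|a^{(j)}_k|\leq B$, the corollary produces an efficiently computable polynomial $P_j\in\C[x]$ of degree $\bigO{\frac{1}{\delta_j}\log(\tfrac{BJ}{\eps})}=\bigO{\frac{1}{\delta}\log(\tfrac{BJ}{\eps})}$ satisfying
$\nrm{f-P_j}_{[x_j-r_j,\,x_j+r_j]}\leq \eps/J$, \
$\nrm{P_j}_{[-1,1]}\leq \eps/J+\nrm{f}_{[x_j-r_j-\delta_j/2,\,x_j+r_j+\delta_j/2]}$, and \
$\nrm{P_j}_{[-1,1]\setminus[x_j-r_j-\delta_j/2,\,x_j+r_j+\delta_j/2]}\leq \eps/J$. \
Then I define $P:=\sum_{j=1}^{J}P_j$; the degree of a sum of polynomials is at most the max of the degrees, so $\deg(P)=\bigO{\frac{1}{\delta}\log(\tfrac{BJ}{\eps})}$, comfortably within the claimed $\bigO{\frac{J}{\delta}\log(\tfrac{BJ}{\eps})}$.

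The verification hinges on a disjointness observation: for any $x\in[-1,1]$, at most one index $j$ has $x$ inside its \emph{enlarged} core interval $[x_j-r_j-\delta_j/2,\,x_j+r_j+\delta_j/2]$. For adjacent $j=i\pm 1$ this follows immediately from $|x_{i+1}-x_i-(r_{i+1}+r_i)|\geq \delta\geq (\delta_i+\delta_{i+1})/2$; for non-adjacent $j$ the spacing hypothesis gives an even larger gap. Hence if $x\in [x_i-r_i,x_i+r_i]\subseteq I$ then every $P_{j\neq i}$ evaluates within the ``tail'' regime where $|P_j(x)|\leq \eps/J$. The triangle inequality then gives
$|P(x)-f(x)|\leq |P_i(x)-f(x)|+\sum_{j\neq i}|P_j(x)|\leq \eps/J+(J-1)\eps/J\leq \eps$,
which is the first required bound.

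For the sup-norm bounds on $[-1,1]$, the same disjointness argument applies: given $x\in[-1,1]$, at most one term $P_j$ can contribute the ``large'' bound $\nrm{f}_{I+[-\delta/2,\delta/2]}+\eps/J$, while the remaining $J-1$ terms each contribute at most $\eps/J$. This gives $\nrm{P}_{[-1,1]}\leq \nrm{f}_{I+[-\delta/2,\delta/2]}+\eps$ (an additive $\eps$ slack which can be absorbed by starting the construction with $\eps/2$ at no cost to the asymptotic degree). When $x$ lies outside every enlarged core interval, all $J$ terms are in their tail regime and we obtain $|P(x)|\leq J\cdot \eps/J=\eps$, matching the third inequality.

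The only real subtlety, and the main place where the ``independent pieces'' philosophy could fail, is verifying that the enlarged core intervals do not mutually overlap; this is exactly where the definition $\delta\leq \min_{j}|x_{j+1}-x_j-(r_{j+1}+r_j)|$ is used, together with the non-adjacent-separation hypothesis. Once that geometric fact is in hand, everything reduces mechanically to Corollary~\ref{cor:boundedTaylorApx} applied $J$ times with precision $\eps/J$ and a union bound via the triangle inequality. No new analytic machinery is introduced beyond what was already developed for the single-patch case.
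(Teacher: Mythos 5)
Your argument hinges on a disjointness claim that the hypotheses do not actually guarantee. You assert that at most one index $j$ can have $x$ in the enlarged core interval $[x_j-r_j-\delta_j/2,\,x_j+r_j+\delta_j/2]$, and you try to derive this from $|x_{j+1}-x_j-(r_{j+1}+r_j)|\geq\delta$. But the absolute value there is not superfluous: the quantity $x_{j+1}-x_j-(r_{j+1}+r_j)$ is permitted to be \emph{negative}, i.e.\ adjacent core intervals $[x_j-r_j,x_j+r_j]$ and $[x_{j+1}-r_{j+1},x_{j+1}+r_{j+1}]$ may overlap (by at least $\delta$), making $I$ a connected interval covered by several patches with different radii of convergence. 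In that regime the enlarged cores overlap even more, and your bare sum $P=\sum_j P_j$ produces roughly $2f(x)$ on the overlap — both $P_j$ and $P_{j+1}$ are close to $f$ there — so the first required bound fails. The fact that your argument yields degree $\bigO{\frac{1}{\delta}\log(BJ/\eps)}$ rather than the theorem's $\bigO{\frac{J}{\delta}\log(BJ/\eps)}$ should itself be a red flag: the hypotheses must be allowing something that a plain sum cannot handle.

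The paper instead builds a smooth partition of unity: it inductively sets $f_{[1,j+1]}:=\frac{1-S}{2}f_{[1,j]}+\frac{1+S}{2}f_{j+1}$ with $S$ an approximation of $\sign{x-\frac{x_j+x_{j+1}}{2}}$ of transition width $\delta$. Because this is a pointwise convex combination, both cases are handled uniformly: where adjacent cores overlap, both pieces approximate $f$ near the midpoint and so does any convex combination; where they are separated, both pieces are near zero near the midpoint. Each of the $J-1$ blending steps adds degree $\bigO{\frac{1}{\delta}\log(BJ/\eps)}$, which is exactly where the $J$ in the paper's degree bound comes from. Your approach becomes correct (and gives a genuinely tighter bound) if you strengthen the hypotheses so that adjacent enlarged cores are disjoint, e.g.\ $x_{j+1}-x_j-r_{j+1}-r_j\geq \delta_j+\delta_{j+1}$ for all $j$; as stated, however, the theorem covers overlapping patches and your proof does not.
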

	\begin{proof}
		Use Corollary \ref{cor:boundedTaylorApx} to construct polynomials $f_j \colon j\in [J]$ of degree $\bigO{\frac{1}{\delta}\log\left(\frac{BJ}{\eps}\right)}$ such that 
		\begin{align*}
			\nrm{ f(x)- f_j(x) }_{[x_j-r_j,x_j+r_j]}&\leq \frac{\eps}{4J}\\
			\nrm{f_j(x)}_{[-1,1]}&\leq   \nrm{f(x)}_{I+[-\delta/2,\delta/2]}\\
			\nrm{f_j(x)}_{[-1,1]\setminus \left([x_j-r_j,x_j+r_j]+[-\delta/2,\delta/2]\right)}&\leq \eps.
		\end{align*}
		Let us introduce a notation for the union of the intervals $[x_i-r_i,x_i+r_i]\colon i\in \{j,j+1,\ldots,k\}$ as  
		$$I_{[j,k]}:=\bigcup_{i\in \{j,j+1,\ldots,k\}}[x_i-r_i,x_i+r_i].$$
		We show inductively how to construct polynomials $f_{[j,k]}$  of degree $\bigO{\frac{k-j+1}{\delta}\log\left(\frac{BJ}{\eps}\right)}$
		such that
		\begin{align}
			\nrm{ f(x)- f_{[j,k]}(x) }_{I_{[j,k]}}&\leq \frac{2(k-j+1)\eps}{2J}\label{eq:complP1}\\
			\nrm{f_{[j,k]}(x)}_{[-1,1]}&\leq  \nrm{f(x)}_{I+[-\delta/2,\delta/2]}\label{eq:complP2}\\
			\nrm{f_{[j,k]}(x)}_{[-1,1]\setminus \left(I_{[j,k]}+[-\delta/2,\delta/2]\right)}&\leq \eps.\label{eq:complP3}
		\end{align}
		We already showed how to construct $f_{[j,j]}:=f_j\colon j\in[J]$. Suppose that we already constructed $f_{[1,j]}$, then we construct  $f_{[1,j+1]}$ as follows. We take a polynomial $S(x)$ of degree $\bigO{\frac{1}{\delta}\log\left(\frac{BJ}{\eps}\right)}$ that approximates the shifted sign function s.t.  $\nrm{S(x)-\sign{x-\frac{x_i+x_j}{2}}}_{[-1,1]\setminus \left[\frac{x_i+x_j-\delta}{2},\frac{x_i+x_j+\delta}{2}\right]}\leq \frac{\eps}{8BJ}$, moreover $\nrm{S}_{[-1,1]}\leq 1$. Then we define $f_{[1,j+1]}:=\frac{1-S(x)}{2}f_{[1,j]}+\frac{1+S(x)}{2}f_{[j+1,j+1]}$. It satisfies \eqref{eq:complP2}-\eqref{eq:complP3}, since $f_{[1,j+1]}$ is a point-wise convex combination of $f_{[1,j]}$ and $f_{[j+1,j+1]}$. Similarly \eqref{eq:complP1} is also easy to verify. Therefore by induction we can finally construct $P:=f_{[1,J]}$, which satisfies \eqref{eq:complP1}-\eqref{eq:complP3} and therefore also the requirements of the theorem.\footnote{Note that this approach could be further improved to produce a degree $\bigO{\frac{\log(J)}{\delta}\log\left(\frac{B\log(J)}{\eps}\right)}$ approximating polynomial by combining the polynomial approximations on the different intervals in a binary tree structure. Since $J=\bigO{\frac{1}{\delta}}$, $\log(J)=\log\left(\frac{1}{\delta}\right)$ and then this gives at most a logarithmic overhead.}
	\end{proof}
	
	A direct corollary of this theorem is for example that for all $\eps\in(0,\frac12]$ the function $\frac{\delta}{x}$ can be $\eps$-approximated on the domain $[-1,1]\setminus [-\delta,\delta]$ with a polynomial of degree $\bigO{\frac{1}{\delta}\log\left(\frac{1}{\eps}\right)}$. Although this also follows from Corollary~\ref{cor:negatiwePower}, we prove it directly using Theorem~\ref{thm:boundedTaylorApx}, in order to illustrate its usefulness.
	
	\begin{corollary}[Polynomial approximations of $\frac{1}{x}$]\label{cor:oneOverX}
		Let $\eps,\delta \in(0,\frac12]$, then there is an odd polynomial $P\in\R[x]$ of degree $\bigO{\frac{1}{\delta}\log\left(\frac{1}{\eps}\right)}$ that is $\eps$-approximating $f(x)=\frac{3}{4}\frac{\delta}{x}$ on the domain $I=[-1,1]\setminus[-\delta,\delta]$, moreover it is bounded $1$ in absolute value.
	\end{corollary}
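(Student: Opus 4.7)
The plan is to apply Theorem~\ref{thm:boundedTaylorApx} with two intervals centered symmetrically around $\pm(1+\delta)/2$, each of radius $(1-\delta)/2$, so that their union is exactly $I=[-1,-\delta]\cup[\delta,1]$, and then to symmetrize the resulting polynomial to make it odd. Since the singularity of $f(x)=3\delta/(4x)$ is at $0$, the Taylor series of $f$ around each center has a comfortable radius of convergence, and the factor $3/4$ in the definition of $f$ gives exactly the slack needed to keep the sup-norm on $[-1,1]$ below $1$.

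More precisely, I would set $J=2$, $x_1=-(1+\delta)/2$, $x_2=(1+\delta)/2$, $r_1=r_2=(1-\delta)/2$, and $\delta_1=\delta_2=\delta/2$. The gap between the two intervals is $(x_2-r_2)-(x_1+r_1)=2\delta$, so the effective parameter in the theorem becomes $\min[\delta/2,2\delta]=\delta/2$. Expanding $1/x=x_2^{-1}\sum_{k\geq 0}(-1)^k((x-x_2)/x_2)^k$ gives $|a_k^{(2)}|=(3\delta/4)\,x_2^{-(k+1)}$, and since $r_2+\delta_2=1/2<x_2=(1+\delta)/2$, a geometric sum yields
\begin{equation*}
\sum_{k=0}^\infty (r_2+\delta_2)^k |a_k^{(2)}|
=\frac{3\delta/4}{x_2-(r_2+\delta_2)}=\frac{3\delta/4}{\delta/2}=\frac{3}{2},
\end{equation*}
so I may take $B=3/2$. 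Theorem~\ref{thm:boundedTaylorApx} then produces a polynomial $Q\in\C[x]$ of degree $\bigO{(J/\delta_{\text{eff}})\log(BJ/\eps)}=\bigO{(1/\delta)\log(1/\eps)}$ with $\nrm{Q-f}_I\leq \eps$ and $\nrm{Q}_{[-1,1]}\leq \nrm{f}_{I+[-\delta/4,\delta/4]}$. The inflated region (intersected with $[-1,1]$) is $[-1,-3\delta/4]\cup[3\delta/4,1]$, on which $|3\delta/(4x)|\leq 1$, giving $\nrm{Q}_{[-1,1]}\leq 1$.

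Finally, I would form $P(x):=\Re\!\left[(Q(x)-Q(-x))/2\right]$. Since $I$ is symmetric and $f$ is real and odd, both $Q(x)$ and $-Q(-x)$ are $\eps$-approximations of $f$ on $I$, so their average (and its real part) still approximates $f$ to within $\eps$. The triangle inequality preserves the bound $\nrm{P}_{[-1,1]}\leq 1$, and by construction $P$ is an odd real polynomial of the same asymptotic degree. The regime $\eps>1/(2BJ)=1/6$ excluded by Theorem~\ref{thm:boundedTaylorApx} is handled trivially by invoking the result at $\eps=1/6$.

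The only mild subtlety is the interplay between the inflation width in the norm bound and the factor $3/4$ in $f$: the inflation is $\delta_{\text{eff}}/2=\delta/4$, shrinking the forbidden interval from $[-\delta,\delta]$ to $[-3\delta/4,3\delta/4]$, and it is exactly the factor $3/4$ that makes $|f|\leq 1$ on the resulting inflated region. Choosing $\delta_j$ any larger would break this bound, and choosing it smaller would not help, so the parameters $\delta_j=\delta/2$ are essentially forced.
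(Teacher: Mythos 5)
Your proof is correct and follows essentially the same route as the paper: both invoke Theorem~\ref{thm:boundedTaylorApx} with $J=2$ Taylor expansions of $f$ around symmetric centers, verify the resulting sup-norm bound on $[-1,1]$ using the inflation by $\delta/4$, and then pass to the odd real part of the resulting polynomial. The only difference is cosmetic parametrization — the paper centers at $x_j=\pm 1$ with $r_j=1-\delta$ (intervals overhanging $[-1,1]$) while you center at $\pm(1+\delta)/2$ with $r_j=(1-\delta)/2$ (intervals covering $I$ exactly) — and your careful computation of $B=3/2$ is the correct one (the paper's own parameters also give $B=3/2$; the value $B=1$ stated there is a small arithmetic slip), which does not affect the asymptotic degree bound.
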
	
	\begin{proof}
		Take $J:=2$, $(x_1:=-1,r_1:=1-\delta,\delta_1:=\frac{\delta}{2})$, $(x_2:=1,r_2:=1-\delta,\delta_2:=\frac{\delta}{2})$ and $B=1$ in Theorem~\ref{thm:boundedTaylorApx}, observing that $f(1+x)=\frac{3\delta}{4}\sum_{k=0}^{\infty}-(1)^k x^k=-f(-1+x)$. Define $P$ as the odd real part of the polynomial given by Theorem~\ref{thm:boundedTaylorApx}.
	\end{proof}

	\subsection{Applications: fractional queries and Gibbs sampling}
	\label{subsec:applications}
	Scott Aaronson listed as one of ``The ten most annoying questions in quantum computing''~\cite{aaronson2006TenMostAnnoying} the following problem: given a unitary $U$, can we implement $\sqrt{U}$? This was positively answered by Sheridan et al.~\cite{sheridan2008ApxFractTimeQEvol}. We show how to improve the complexity of the result of Sheridan et al. exponentially in terms of the error dependence. We proceed follow ideas of Low and Chuang~\cite{low2017HamSimUnifAmp}.
	
	Suppose that we have access to a unitary $U=e^{iH}$, where $H$ is a Hamiltonian of norm at most $\frac{1}{2}$. Low and Chuang~\cite{low2017HamSimUnifAmp} showed how to get a $(1,2,\eps)$-block-encoding of $H$ with $\bigO{\log\left(\frac{1}{\eps}\right)}$ uses of $U$. We reprove this result; our proof becomes quite simple thanks to Corollary~\ref{cor:boundedTaylorApx}.
	
	\begin{lemma}[Polynomial approximations of $\arcsin(x)$] 
		Let $\delta,\eps\in(0,\frac12]$, there is an efficiently computable odd real polynomial $P\in\R[x]$ of degree $\bigO{\frac{1}{\delta}\log\left(\frac{1}{\eps}\right)}$ such that $\nrm{P}_{[-1,1]}\leq 1$ and 
		$$\nrm{P(x)-\frac{2}{\pi}\arcsin(x)}_{[-1+\delta,1-\delta]}\leq \eps.$$
	\end{lemma}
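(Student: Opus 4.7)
The plan is to invoke Corollary~\ref{cor:boundedTaylorApx} on a slightly contracted copy of $\frac{2}{\pi}\arcsin$, and then symmetrize the resulting polynomial to make it odd and real. The key point is that the Maclaurin series
$$\arcsin(x)=\sum_{k=0}^{\infty} \frac{(2k)!}{4^k (k!)^2 (2k+1)}\, x^{2k+1}$$
has only non-negative (odd) coefficients, so a bound on its $\ell_1$-weight at any $r<1$ is immediate from non-negativity: $\sum_{k}|a_{2k+1}|r^{2k+1}=\arcsin(r)\le \pi/2$. This lets me apply Corollary~\ref{cor:boundedTaylorApx} with $x_0=0$, radius $r=1-\delta$, and the Corollary's ``$\delta$'' parameter set to $\delta/2$.

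Concretely, I define $f(x):=(1-\eps/3)\frac{2}{\pi}\arcsin(x)$, so that the Corollary's weight parameter $B$ can be taken equal to $1$ since
$$\sum_{k=0}^{\infty} (1-\delta/2)^{2k+1}\bigl|[x^{2k+1}]f(x)\bigr|=(1-\eps/3)\frac{2}{\pi}\arcsin(1-\delta/2)\leq 1.$$
Applying Corollary~\ref{cor:boundedTaylorApx} with precision $\eps/3$ then gives an efficiently computable polynomial $\tilde P\in\C[x]$ of degree $\bigO{\tfrac1\delta\log(\tfrac1\eps)}$ satisfying $\|\tilde P-f\|_{[-1+\delta,1-\delta]}\leq \eps/3$ together with
$$\|\tilde P\|_{[-1,1]}\leq \|f\|_{[-1+\delta/4,1-\delta/4]}+\frac{\eps}{3}\leq(1-\tfrac{\eps}{3})\cdot 1+\tfrac{\eps}{3}=1.$$
Finally I set $P(x):=\Re\!\left[\tfrac{\tilde P(x)-\tilde P(-x)}{2}\right]$; since this ``odd real part'' operation is a sup-norm contraction and since $f$ is itself odd and real, $P$ is an odd real polynomial of the same degree with $\|P\|_{[-1,1]}\leq \|\tilde P\|_{[-1,1]}\leq 1$ and, by the triangle inequality combined with $\|f-\tfrac{2}{\pi}\arcsin\|_{[-1,1]}\leq \eps/3$,
$$\bigl\|P(x)-\tfrac{2}{\pi}\arcsin(x)\bigr\|_{[-1+\delta,1-\delta]}\leq \tfrac{\eps}{3}+\tfrac{\eps}{3}<\eps.$$

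The only mildly delicate point — and the main potential obstacle — is the sup-norm bound $\|P\|_{[-1,1]}\leq 1$. One cannot directly feed $\tfrac{2}{\pi}\arcsin$ into Corollary~\ref{cor:boundedTaylorApx}, because although $\|\tfrac{2}{\pi}\arcsin\|_{[-1+\delta/4,1-\delta/4]}<1$, the gap is only of order $\sqrt{\delta}$, which may be smaller than the approximation error $\eps$. The $(1-\eps/3)$ contraction factor introduces exactly the slack needed to absorb the additive $\eps/3$ error from the Corollary into the norm bound, while costing only a harmless $\eps/3$ in accuracy and leaving the asymptotic degree unchanged.
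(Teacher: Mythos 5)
Your proposal follows the same high-level route as the paper: compute the Maclaurin series of $\arcsin$, observe its coefficients are non-negative with $\ell_1$-weight $1$, feed it into Corollary~\ref{cor:boundedTaylorApx}, and take the odd real part. The refinement you add — pre-multiplying by $(1-\eps/3)$ before applying the corollary — is genuinely useful and closes a small gap that the paper leaves unaddressed. The paper's proof simply asserts that the result ``immediately follows'' from Corollary~\ref{cor:boundedTaylorApx} with $B=1$, but (as you correctly observe) the corollary's sup-norm guarantee only yields $\nrm{P}_{[-1,1]} \le \eps + \frac{2}{\pi}\arcsin(1-\delta/2)$, and since $\frac{2}{\pi}\arcsin(1-\delta/2) = 1 - \Theta(\sqrt{\delta})$, this need not be $\le 1$ when $\eps = \Omega(\sqrt{\delta})$ — a regime allowed by the lemma's hypotheses $\delta,\eps\in(0,\tfrac12]$. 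Your contraction factor introduces precisely the multiplicative slack $\eps/3$ that absorbs the additive $\eps/3$ term from the corollary and makes the norm bound $\nrm{P}_{[-1,1]}\le 1$ go through cleanly, at the negligible cost of an extra $\eps/3$ in accuracy. One cosmetic slip: after setting the corollary's $\delta$-parameter to $\delta/2$, the interval appearing in the norm bound is $[-1+3\delta/4,\,1-3\delta/4]$ rather than $[-1+\delta/4,\,1-\delta/4]$; since the former is strictly contained in the latter, your inequality $\nrm{f}\le 1-\eps/3$ still holds a fortiori and nothing is affected.
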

	\begin{proof}
		Observe that $\frac{2}{\pi}\arcsin(x)=\sum_{\ell=0}^{\infty}\binom{2\ell}{\ell}\frac{2^{-2\ell}}{2\ell+1}\frac{2}{\pi}x^{2\ell+1}$ for all $x\in[-1,1]$. Therefore we also have $\sum_{\ell=0}^{\infty}\left|\binom{2\ell}{\ell}\frac{2^{-2\ell}}{2\ell+1}\frac{2}{\pi}\right|=1$.
		The result immediately follows by taking the odd real part of the polynomial given by Corollary~\ref{cor:boundedTaylorApx}.
	\end{proof}

	\begin{corollary}[Implementing the logarithm of unitaries]\label{cor:logarithm}
		Suppose that $U=e^{iH}$, where $H$ is a Hamiltonian of norm at most $\frac{1}{2}$. Let $\eps\in(0,\frac12]$, then we can implement a $(\frac{2}{\pi},2,\eps)$-block-encoding of $H$ with $\bigO{\log\left(\frac{1}{\eps}\right)}$ uses of controlled-$U$ and its inverse, using $\bigO{\log\left(\frac{1}{\eps}\right)}$ two-qubit gates and using a single ancilla qubit.
	\end{corollary}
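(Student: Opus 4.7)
The plan is to mirror Low and Chuang's strategy: first build a block-encoding of $\sin(H)$ out of $U=e^{iH}$, and then invert $\sin$ via the polynomial approximation of $\arcsin$ from the preceding lemma, applied through singular value transformation.

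First I would construct a $(1,1,0)$-block-encoding of $\sin(H)$ using one query each to controlled-$U$ and controlled-$U^\dagger$. The block-diagonal unitary $\ketbra{0}{0}\otimes U + \ketbra{1}{1}\otimes U^\dagger$ on one ancilla plus system, sandwiched between suitable single-qubit Hadamard-plus-phase gates on the ancilla, produces a unitary $V$ that acts on the eigenbasis of $H$ as $\ket{0}\ket{\lambda}\mapsto (\cos\lambda\,\ket{0}+\sin\lambda\,\ket{1})\ket{\lambda}$. A final ancilla-$X$ then moves $\sin(H)$ into the top-left block, producing the desired block-encoding with a constant number of auxiliary single-qubit gates.

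Next I would invoke the $\arcsin$-polynomial lemma with $\delta := 1 - \sin(1/2) > 0$, which is a positive constant, producing an odd real polynomial $P_\Re$ of degree $d = \bigO{\log(1/\eps)}$ with $\|P_\Re\|_{[-1,1]} \leq 1$ that approximates $(2/\pi)\arcsin$ to within $\eps$ on $[-1+\delta,1-\delta]$. Since $\|H\| \leq 1/2$, the spectrum of $\sin(H)$ lies in $[-\sin(1/2),\sin(1/2)] \subset [-1+\delta,1-\delta]$, so the approximation is valid throughout the spectrum. Applying Corollary~\ref{cor:matchingParity} in its odd-$n$ form to $V$ with polynomial $P_\Re$ yields an $\eps$-close block-encoding of
\begin{equation*}
P_\Re^{(SV)}(\sin H) \;=\; P_\Re(\sin H) \;\approx\; \tfrac{2}{\pi}\arcsin(\sin H) \;=\; \tfrac{2}{\pi}\,H ,
\end{equation*}
which is the desired $(2/\pi, 2, \eps)$-block-encoding of $H$. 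The extra ancilla comes from the $\ket{+}$ qubit of Corollary~\ref{cor:matchingParity}, bringing the total ancilla count to $2$; the complexity bound then follows from Lemma~\ref{lemma:implementingPhasedSeq}: $d$ uses of the $\sin(H)$-block-encoding $V$ and $V^\dagger$ yield $\bigO{\log(1/\eps)}$ controlled-$U^{\pm 1}$ queries together with $\bigO{\log(1/\eps)}$ additional two-qubit gates.

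The only substantive step is the first one---pinning down a concrete Hadamard/phase sandwich that places $\sin(H)$ cleanly into a block, rather than some skew combination like $ie^{iH/2}\sin(H/2)$ that would foil the subsequent inversion. The second step is a routine application of the machinery already developed; the only care needed there is to match the factor $2/\pi$ to the normalization convention of Definition~\ref{def:standardForm} and to propagate $\eps$ through Corollary~\ref{cor:matchingParity}.
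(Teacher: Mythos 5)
Your proposal matches the paper's proof in both structure and substance: build a projected-unitary/block-encoding of $\sin(H)$ using one controlled-$U$ and one controlled-$U^\dagger$ per invocation (the paper writes $\sin(H)=-i(\bra{+}\otimes I)\, cU^\dagger\,(ZX\otimes I)\, cU\, (\ket{+}\otimes I)$, equivalent to your Hadamard/phase-sandwiched block-diagonal construction up to basis change), then apply Corollary~\ref{cor:matchingParity} with the bounded odd polynomial from the $\arcsin$-approximation lemma. One small parameter slip: you set $\delta := 1-\sin(1/2)\approx 0.52$, but the $\arcsin$ lemma requires $\delta\in(0,\tfrac12]$; taking $\delta=\tfrac12$ (as the paper does, approximating on $[-\tfrac12,\tfrac12]$) still covers $[-\sin(\tfrac12),\sin(\tfrac12)]$ since $\sin(\tfrac12)<\tfrac12$, so the fix is trivial.
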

	\begin{proof}
		Let $cU$ denote the controlled version of $U$ controlled by the first qubit. Then $$\sin(H)=-i(\bra{+}\otimes I) cU^\dagger\left(ZX\otimes I\right)cU (\ket{+}\otimes I).$$
		Now we apply singular value transformation Corollary~\ref{cor:matchingParity} using an $\eps$-approximating polynomial of $\frac{2}{\pi}\arcsin(x)$ on the domain $[-\frac{1}{2},\frac{1}{2}]$.
	\end{proof}
	Combining the above result with block-Hamiltonian simulation techniques Corollary~\ref{cor:blockHamSimRob} we can implement fractional queries of unitaries with complexity $\bigO{\log^2\left(\frac{1}{\eps}\right)}$. As we show in the following corollary this complexity can be reduced to $\bigO{\log\left(\frac{1}{\eps}\right)}$ by directly implementing\footnote{The method we describe uses block-encoding formalism, but in fact one could implement it more directly using a Fourier series based approach similarly to the one used for Hamiltonian simulation by Low and Chuang~\cite{low2016HamSimQSignProc}.} Hamiltonian simulation using a block-encoding of $\sin(H)$ rather than $H$. 

	\begin{corollary}[Implementing fractional queries]\label{cor:fractionalQuery}
		Suppose that $U=e^{iH}$, where $H$ is a Hamiltonian of norm at most $\frac{1}{2}$. Let $\eps\in(0,\frac12]$ and $t\in[-1,1]$, then we can implement an $\eps$-approximation of $U^t=e^{itH}$ with $\bigO{\log\left(\frac{1}{\eps}\right)}$ uses of controlled-$U$ and its inverse, using $\bigO{\log\left(\frac{1}{\eps}\right)}$ two-qubit gates and using $\bigO{1}$ ancilla qubits.
	\end{corollary}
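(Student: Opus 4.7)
The plan is to avoid the ``$\log^2(1/\eps)$'' route of first block-encoding $H$ via Corollary~\ref{cor:logarithm} and then simulating it via Corollary~\ref{cor:blockHamSimRob}. Instead, I will use the $(1,2,0)$-block-encoding of $\sin(H)$ that is implicitly constructed in the proof of Corollary~\ref{cor:logarithm} -- namely $V := -i(I\otimes\bra{+})cU^\dagger(ZX\otimes I)cU(I\otimes\ket{+})$ satisfies $(\bra{0}^{\otimes 2}\otimes I)V(\ket{0}^{\otimes 2}\otimes I)=\sin(H)$ and uses only $\mathcal{O}(1)$ queries to controlled-$U$. Since $\sin(H)$ and $H$ share eigenvectors, we have $e^{itH}=g^{(SV)}(\sin(H))$ where $g(y):=e^{it\arcsin(y)}=c(y)+is(y)$ with $c(y):=\cos(t\arcsin(y))$ even in $y$ and $s(y):=\sin(t\arcsin(y))$ odd in $y$, both bounded by $1$ on $[-1,1]$. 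So I only need low-degree, parity-matched polynomial approximations of $c/2$ and $s/2$ and then eigenvalue transformation of $\sin(H)$.

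The next step is to bound the approximation degree. Because $\nrm{H}\leq\tfrac12$, the spectrum of $\sin(H)$ is contained in $I:=[-\sin(\tfrac12),\sin(\tfrac12)]\subset(-1,1)$. The function $\arcsin(y)$ is holomorphic on the open unit disk with Taylor expansion at $0$ having radius of convergence $1$, hence $c(y)$ and $s(y)$ are holomorphic on $|y|<1$. Fix $\eta:=(1-\sin(\tfrac12))/3$, and apply Corollary~\ref{cor:boundedTaylorApx} with $x_0=0$, $r=\sin(\tfrac12)+\eta$, and $\delta'=\eta$: on the disk $|y|\le r+\delta'<1$, $|\arcsin(y)|$ is bounded by a constant $C$, so $|g(y)|\le e^{|t|C}\le e^C$ and hence by Cauchy's estimate the Taylor coefficients $a_k^{(c)},a_k^{(s)}$ of $c$ and $s$ satisfy $\sum_k|a_k^{(c)}|(r+\delta')^k,\sum_k|a_k^{(s)}|(r+\delta')^k\le B$ for some absolute constant $B$. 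Corollary~\ref{cor:boundedTaylorApx} then yields polynomials $\widetilde{P}_c$ (even) and $\widetilde{P}_s$ (odd) of degree $d=\mathcal{O}(\log(1/\eps))$ that are $\eps/16$-close to $c$ and $s$ on $I$ and bounded by $B+\eps/16$ on $[-1,1]$; rescaling, $P_c:=\widetilde{P}_c/(4B)$ and $P_s:=\widetilde{P}_s/(4B)$ satisfy the hypotheses of Corollary~\ref{cor:matchingParity} (bounded by $1/2$, matching parity) and approximate $c/(4B)$, $s/(4B)$ on $I$ with error $\eps/(64B)$.

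Assembling the pieces: apply Corollary~\ref{cor:matchingParity} to $V$ with $P_c$ and $P_s$ to obtain $(1,3,\eps/(32B))$-block-encodings of $\cos(tH)/(4B)$ and $\sin(tH)/(4B)$, each using $d=\mathcal{O}(\log(1/\eps))$ calls to $V$ (hence to controlled-$U$). Combine them by Lemma~\ref{lem:linCombBlocks} with state-preparation-pair implementing $y=(1,i)$ (so $\beta=2$) to obtain a $(1/(2B),4,\eps/(4B))$-block-encoding $W$ of $e^{itH}/(4B)$. Finally, since $W$ encodes $\sin(\pi/(2n))\cdot e^{itH}$ for an appropriate odd $n=\mathcal{O}(1)$ (choose $n$ with $\sin(\pi/(2n))=1/(4B)$, or use the nearest one and adjust constants), apply robust oblivious amplitude amplification (Corollary~\ref{cor:oblivious}) with this $n$ to produce a unitary that is $\eps$-close to $e^{itH}$, using a further $\mathcal{O}(1)$ copies of $W$. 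The total controlled-$U$ count is $\mathcal{O}(d)=\mathcal{O}(\log(1/\eps))$.

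The main obstacle is the analytic estimate showing that $c$ and $s$ admit uniformly-in-$t\in[-1,1]$ bounded Taylor expansions at $0$ on a disk of radius strictly exceeding $\sin(\tfrac12)$; once that is in hand the degree bound is immediate from Corollary~\ref{cor:boundedTaylorApx}. A secondary bookkeeping issue is aligning the $\sin(\pi/(2n))$-subnormalization of Corollary~\ref{cor:oblivious} with the constant $1/(4B)$; this is handled either by choosing $n$ so that these match or by absorbing the mismatch into a slightly different rescaling of $P_c,P_s$. Neither issue affects the asymptotic query count.
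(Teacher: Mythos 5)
Your overall route matches the paper's: block-encode $\sin(H)$ with $O(1)$ controlled-$U$ queries, express $e^{itH}=g(\sin(H))$ with $g(y)=\cos(t\arcsin y)+i\sin(t\arcsin y)$, approximate the even/odd parts by bounded low-degree polynomials via the Taylor-series machinery, and finish with robust oblivious amplitude amplification. The substantive difference is how you bound the weighted $1$-norm of the Taylor coefficients, and that is where a genuine gap sits.

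You invoke Cauchy's estimate: from $|g(y)|\le e^C$ on $|y|\le r+\delta'$ you conclude $\sum_k |a_k|(r+\delta')^k\le B$. Cauchy's estimate only gives the pointwise bound $|a_k|(r+\delta')^k\le e^C$, and the sum over $k$ can diverge — e.g.\ $\log(1+y/\rho)$ is bounded on $|y|\le\rho$ yet has $\sum|a_k|\rho^k=\sum 1/k=\infty$. The conclusion you want requires applying Cauchy at a radius \emph{strictly larger} than $r+\delta'$, which your own choice $\eta=(1-\sin\tfrac12)/3$ leaves room for: $g$ is holomorphic and bounded on $|y|\le 1-\eta/2 > r+\delta'=1-\eta$, so $|a_k|\le M(1-\eta/2)^{-k}$ gives $\sum|a_k|(1-\eta)^k\le M\sum\big(\tfrac{1-\eta}{1-\eta/2}\big)^k=O(M/\eta)$. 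As written, though, the argument is incorrect.

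The paper avoids analytic estimates entirely with a slicker observation: the Taylor coefficients of $\arcsin(y)$ at $0$ are nonnegative and sum to $\arcsin(1)=\pi/2$, so the $1$-norm of coefficients of $t\arcsin(y)$ is $|t|\pi/2$, and composing with $e^z$ (all positive coefficients) bounds the $1$-norm of coefficients of $e^{it\arcsin(y)}$ by $e^{|t|\pi/2}$. The paper then restricts to $|t|\le 2/\pi$ to make this $\le e$, and handles $|t|\le1$ by applying $U^{t/2}$ twice. A further simplification available in both routes: since $|c|,|s|\le1$ pointwise, Theorem~\ref{thm:boundedTaylorApx} already yields polynomials bounded by $1$ in absolute value, so no $1/(4B)$ rescaling or $n$-tuning in Corollary~\ref{cor:oblivious} is needed — the standard $1/2$-subnormalization of Theorem~\ref{thm:blockHamSim} with $n=3$ applies verbatim.
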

	\begin{proof}
		As we have shown in the proof of Corollary~\ref{cor:logarithm}, one can implement a block-encoding of $\sin(H)$ with a constant number of queries to $U$.
		Let us look at the Taylor series of $e^{it\arcsin(x)}$. One can see that the $1$-norm of the coefficients of the Taylor series of $t\arcsin(x)$ is $|t|\arcsin(1)=|t|\frac{\pi}{2}$. Therefore, for $t\in[-\frac{2}{\pi},\frac{2}{\pi}]$ we get that the $1$-norm of the Taylor series of $e^{it\arcsin(x)}$ is at most $e^1=e$. Thereby, using Theorem~\ref{thm:boundedTaylorApx} we can construct polynomial $\bigO{\eps}$-approximations of $\sin(t\arcsin(x))$ and $\cos(t\arcsin(x))$ of degree $\bigO{\log\left(\frac{1}{\eps}\right)}$, which are bounded by $1$ in absolute value on the interval $[-1,1]$. We can combine these polynomials in a similar way as in the proof of Theorem~\ref{thm:blockHamSim}. This way we can implement an $\eps$-approximation of $U^t$ for all $t\in[-\frac{2}{\pi},\frac{2}{\pi}]$ with complexity $\bigO{\log\left(\frac{1}{\eps}\right)}$. Implementing $U^t$ for all $t\in[-1,1]$ can be achieved by implementing $U^{\frac{t}{2}}$ twice an taking their product.
	\end{proof}
	
	Note that the above technique can be combined with an initial phase estimation in order to implement fractional queries under the weaker promise $\nrm{H}\leq \pi-\delta$. It suffices to perform a $\delta$-precise phase estimation with success probability $1-\poly{\eps}$, then implement fractional queries using Corollary~\ref{cor:fractionalQuery} and then undo the initial phase estimation. This leads to complexity $\bigO{\frac{1}{\delta}\log\left(\frac{1}{\eps}\right)}$, which exponentially improves the complexity $\bigO{\max[\frac{1}{\delta},\frac{1}{\eps}]\log\left(\frac{1}{\eps}\right)}$ of Sheridan et al.~\cite{sheridan2008ApxFractTimeQEvol} in the case of $\delta=\Theta(1)$. Note that Sheridan et al.~\cite{sheridan2008ApxFractTimeQEvol} also proved a lower bound on this problem, which shows that the $\delta$ dependence of this algorithm is actually optimal. We believe that the $\log(\frac{1}{\eps})$ dependence in the runtime is also necessary, therefore this algorithm is probably fairly close to optimal.
	
	After implementing a fractional query, such that $\nrm{H}\leq\frac12$ is satisfied, one can use Corollary~\ref{cor:fractionalQuery} to implement the logarithm of the unitary. Also note the gap promise on the spectrum of $U$ is necessary for implementing the fractional queries, but it is not important that the gap is exactly at $e^{i\pi}$, one can just add a phase gate for example to $U$ in order to rotate the spectrum. 
	
	Finally, we briefly describe how these techniques can be used for Gibbs sampling.  If one first prepares a maximally entangled state on two registers and applies the map $e^{-\frac{\beta}{2} (H+I)}$ on the first register, then one gets a subnormalized Gibbs state $e^{\beta (H+I)}$ on the first register.  Then, using (fixed-point) amplitude amplification one gets a purification of a Gibbs-state.  Each of these steps can be compactly performed using singular value transformation techniques, providing an efficient implementation. 
	
	An $\eps$-approximation of the map $e^{-\frac{\beta}{2} (H+I)}$ can be implemented using Theorem~\ref{thm:arbParity} and Corollary~\ref{cor:exponentialApx} with query complexity $\bigO{\sqrt{\beta}\log{(1/\epsilon)}}$, and it suffices to use $\bigO{\sqrt{\frac{n}{\mathcal{Z}}}}$ amplitude amplification steps in order to prepare the Gibbs-state with constant success probability, where $n$ is dimension of $H$ and $\mathcal{Z}:=\tr{e^{-\beta (H+I)}}$ is the partition function.
	In the case when $H$ does not have an eigenvalue close to $-1$, but say $\lambda_{\min}$ is the smallest eigenvalue, then one should implement an approximation of $e^{-\beta (H - \lambda_{\min} I)}$ on the domain $[\lambda_{\min},1]$ in order to avoid unnecessary subnormalization.  However note, that this in general increases to complexity and gives a linear dependence on $\beta$.  For more details see, e.g., the work of Appeldoorn et al.~\cite{apeldoorn2017QSDPSolvers,apeldoorn2018ImprovedQSDPSolving}.
    
    In the special case when one has access to the square root of $H$, and $H$ has an eigenvalue close to $0$, then one can still achieve quadratically improved scaling with $\beta$ as shown by Chowdhury and Somma~\cite{chowdhury2016QGibbsSampling}. This can be easily shown using our techniques observing that $e^{-\beta H}=e^{-\beta \left(\sqrt{H}\right)^2}\!,$ and that the function $e^{-\beta x^2}$ can be $\eps$-approximated on the interval $[0,1]$ using a polynomial of degree $\bigO{\sqrt{\beta}\log\left(\frac{1}{\eps}\right)}$ as follows from Theorem~\ref{thm:momialApx} or Corollary~\ref{cor:exponentialApx}.

	\section{Limitations of the smooth function techniques}
	\label{sec:lowerBd}
	In the classical literature there are many good techniques for lower bounding the degrees of approximation polynomials~\cite{sachdeva2014FasterAlgsViaApxTheory}. There is a intimate relationship between the degrees of approximation polynomials quantum query complexity~\cite{beals2001QLowerBoundPoly}. In a recent result Arunachalam et al.~\cite{arunachalam2017QuQueryAlgComplBond} showed that for discrete problems certain polynomial approximations characterize the quantum query complexity. 
	There are also some result about lower bounds for continuous problems~\cite{aaronson2009QCopyProt,belovs2015GeneralAdv,gilyen2017OptQOptAlgGrad}, however the literature to this end is much more sparse.
	
	To advance the knowledge on lower bounds in the continuous regime, we prove a conceptually simple lower bound on eigenvalue transformations, which guides our intuition about what sort of transformations are possible. Intuitively speaking if a function has derivative $d$ on the domain of interest then we need to use the block-encoding $\Omega(d)$-times in order to implement the eigenvalue transformation corresponding to $f$. This suggests that Theorem~\ref{thm:boundedTaylorApx} applied together with Theorem~\ref{thm:arbParity} often gives optimal results, since the $\delta$ parameter usually turns out to be $\propto \frac{1}{d}$, where $d$ is the maximal derivative of the function on the domain of interest.
	
	\begin{theorem}[Lower bound for eigenvalue transformation]\label{thm:lowerBoundEVT}
		Let $I\subseteq[-1,1]$, $a\geq 1$ and suppose $U$ is a $(1,a,0)$-block-encoding of an unknown Hermitian matrix $H$ with the only promise that the spectrum of $H$ lies in $I$. Let $f:I\rightarrow \R$, and suppose that we have a quantum circuit $V$ that implements a $(1,b,\eps)$-block-encoding of $f(H)$ using $T$ applications of $U$, for all $U$ fulfilling the promise. Then for all $x\neq y\in I\cap[-\frac12,\frac12]$ we have that 
		$$
			T=\Omega\left(\frac{|f(x)-f(y)|-2\eps}{|x-y|}\right).
		$$
		More precisely for all $x,y\in I$ we have that 
		\begin{align}
			T&\geq \frac{\max\left[f(x)-f(y)-2\eps,\sqrt{1-(f(y)-\eps)^2}-\sqrt{1-(f(x)+\eps)^2}\right]}{\sqrt{2} \sqrt{1-x y-\sqrt{\left(1-x^2\right) \left(1-y^2\right)}}}\label{eq:preciseQueryBound}\\
			&\geq \frac{\max\left[f(x)-f(y)-2\eps,\sqrt{1-(f(y)-\eps)^2}-\sqrt{1-(f(x)+\eps)^2}\right]}{\sqrt{2}\max\left[|x-y|,\left|\sqrt{1-x^2}-\sqrt{1-y^2}\right|\right]}.\label{eq:lessPreciseQueryBound}
		\end{align}		
	\end{theorem}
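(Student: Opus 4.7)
The core idea is an adversary argument specialized to one-dimensional Hamiltonians. I fix a pair $x\neq y\in I$ and let $H_x:=(x)$, $H_y:=(y)$ be the corresponding $1\times 1$ Hermitian ``matrices'', both of which satisfy the spectrum promise. I use the reflection block-encodings $U_x:=R(x)$ and $U_y:=R(y)$ from equation~\eqref{eq:2DReflection}; these are legitimate $(1,1,0)$-block-encodings, and if the algorithm demands $a>1$ ancillas I simply tensor with $I$ on the extra qubits, which does not change any spectral norms. Because $R(x)$ is a reflection we have $U_x^\dagger=U_x$, and the controlled versions satisfy $\|\mathrm{c}\text{-}U_x-\mathrm{c}\text{-}U_y\|=\|U_x-U_y\|$, so every one of the $T$ queries (including inverses and controlled versions) contributes the same amount in a triangle inequality.

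The first concrete step is to compute $\|U_x-U_y\|$ exactly. The difference is the symmetric $2\times 2$ matrix with $\pm(x-y)$ on the diagonal and $\sqrt{1-x^2}-\sqrt{1-y^2}$ off-diagonal, whose eigenvalues are $\pm\sqrt{(x-y)^2+(\sqrt{1-x^2}-\sqrt{1-y^2})^2}$. Expanding the square gives
\[
\|U_x-U_y\|\;=\;\sqrt{2\bigl(1-xy-\sqrt{(1-x^2)(1-y^2)}\bigr)},
\]
which is precisely the denominator in~\eqref{eq:preciseQueryBound}. A routine triangle inequality over the $T$ query slots of $V$, replacing each $U$-call successively from $U_x$ to $U_y$, then yields $\|V_x-V_y\|\leq T\,\|U_x-U_y\|$.

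The main work is to lower-bound $\|V_x-V_y\|$ from the $\eps$-approximation guarantee. Let $\tilde f_x,\tilde f_y$ be the top-left entries of $V_x,V_y$; the block-encoding hypothesis gives $|\tilde f_x-f(x)|\leq\eps$ and $|\tilde f_y-f(y)|\leq\eps$. I extract two independent bounds on $\|V_x-V_y\|$. The ``entry'' bound is immediate:
\[
\|V_x-V_y\|\;\geq\;|\tilde f_x-\tilde f_y|\;\geq\;f(x)-f(y)-2\eps.
\]
The ``complementary'' bound uses unitarity of $V_x,V_y$: the non-top part of the first column has norm $\sqrt{1-|\tilde f_x|^2}$ and $\sqrt{1-|\tilde f_y|^2}$ respectively, so by the reverse triangle inequality
\[
\|V_x-V_y\|\;\geq\;\Bigl|\sqrt{1-|\tilde f_x|^2}-\sqrt{1-|\tilde f_y|^2}\Bigr|\;\geq\;\sqrt{1-(f(y)-\eps)^2}-\sqrt{1-(f(x)+\eps)^2},
\]
where the second inequality follows from $||\tilde f_\bullet|-|f(\bullet)||\leq\eps$. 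Taking the maximum of the two bounds and combining with $\|V_x-V_y\|\leq T\|U_x-U_y\|$ yields~\eqref{eq:preciseQueryBound}. The second inequality~\eqref{eq:lessPreciseQueryBound} is then an elementary consequence of $\sqrt{a^2+b^2}\leq\sqrt{2}\max(|a|,|b|)$ applied to $a=x-y$, $b=\sqrt{1-x^2}-\sqrt{1-y^2}$, and the $T=\Omega(\cdot)$ statement follows by restricting to $x,y\in[-\tfrac12,\tfrac12]$ where $|\sqrt{1-x^2}-\sqrt{1-y^2}|\leq|x-y|/\sqrt{3}$, so that the maximum in~\eqref{eq:lessPreciseQueryBound} is simply $|x-y|$ up to a constant.

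The main technical delicacy I anticipate is the ``complementary'' bound: the expression $\sqrt{1-(f(y)-\eps)^2}-\sqrt{1-(f(x)+\eps)^2}$ is the correct extremal choice only when $|f(y)-\eps|\leq|f(x)+\eps|$, and $\tilde f_x,\tilde f_y$ may carry small imaginary parts even when $f$ is real. Both issues are handled by working with $|\tilde f_\bullet|$ instead of $\tilde f_\bullet$ and a short case analysis over the signs of $f(x),f(y)$, but writing this carefully is the only non-mechanical part of the argument; the rest of the proof is a direct assembly of the triangle inequality, a $2\times 2$ eigenvalue calculation, and unitarity of $V_x,V_y$.
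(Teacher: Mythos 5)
Your proof follows essentially the same route as the paper's: reduce to the scalar Hamiltonian $H_z = z\,I$ (you take the $1\times 1$ case; the paper takes a $d\times d$ multiple of identity, which is the same argument after a direct-sum/tensor reshuffling, and the paper handles $a>1$ ancillas exactly the way you do), use $R(z)$ as the $(1,1,0)$-block-encoding, compute $\nrm{R(x)-R(y)}$ via the traceless $2\times 2$ eigenvalue calculation, apply a hybrid/triangle argument over the $T$ query slots, and lower-bound $\nrm{V_x-V_y}$ in two ways — once from the top-block matrix element and once from the norm of the remainder of the first block-column using unitarity. Your algebraic identity $(x-y)^2+\bigl(\sqrt{1-x^2}-\sqrt{1-y^2}\bigr)^2=2\bigl(1-xy-\sqrt{(1-x^2)(1-y^2)}\bigr)$ and the step $\sqrt{a^2+b^2}\leq\sqrt2\max(|a|,|b|)$ both match the paper's equation \eqref{eq:normForm}.

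You correctly sensed a delicacy in the complementary bound, but the resolution is different from what you propose. From $\varsigma_{\max}\bigl[(\bra{0}\otimes I)V_y(\ket{0}\otimes I)\bigr]\leq f(y)+\eps$ the valid conclusion is $\sqrt{1-\varsigma_{\max}^2}\geq\sqrt{1-(f(y)+\eps)^2}$, and from $\varsigma_{\min}\bigl[(\bra{0}\otimes I)V_x(\ket{0}\otimes I)\bigr]\geq f(x)-\eps$ one gets $\sqrt{1-\varsigma_{\min}^2}\leq\sqrt{1-(f(x)-\eps)^2}$; so the provable lower bound is $\sqrt{1-(f(y)+\eps)^2}-\sqrt{1-(f(x)-\eps)^2}$, with the $\pm\eps$ attached the opposite way from what you (and the published theorem and its proof) wrote. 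A case analysis over the sign of $f$ does not repair this: take $f\equiv c>0$ constant, $x\neq y$; the as-written expression $\sqrt{1-(c-\eps)^2}-\sqrt{1-(c+\eps)^2}$ is strictly positive, yet $T=0$ suffices. This is an error inherited verbatim from the source, so it does not indicate a misunderstanding on your part, and the asymptotic $T=\Omega\bigl((|f(x)-f(y)|-2\eps)/|x-y|\bigr)$ — which depends only on the first term of the max — is unaffected.
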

	\begin{proof}
		First let us examine the case when $H$ is a $d\times d$ matrix, $a=1$ and $U$ is of size $2d\times 2d$.
		Recall that in \eqref{eq:2DReflection} we defined the two-dimensional reflection operator
		$$R(x)=\left[\begin{array}{cc} x  & \sqrt{1-x^2} \\ \sqrt{1-x^2} & -x\end{array}\right],$$
		and note, that for all $x,y\in[0,1]$ we have that 
		\begin{equation}\label{eq:normForm}
			\nrm{R(x)-R(y)}=\sqrt{2} \sqrt{1-x y-\sqrt{\left(1-x^2\right) \left(1-y^2\right)}}\leq \sqrt{2}\max\left[|x-y|,\left|\sqrt{1-x^2}-\sqrt{1-y^2}\right|\right].
		\end{equation}
		For all $z\in[0,1]$ let $U_z:=\bigoplus_{i=1}^d R(z)$, where the direct sum structure is arranged in such a way that $U_z$ is a $(1,1,0)$-block-encoding of $H_z:=z I$. Let $V[U_z]$ denote the circuit $V$ when using the input unitary $U_z$.
		Since $V[U_z]$ uses $U_z$ a total number of $T$ times we have that 
		\begin{equation}\label{eq:VupperBound}
			\nrm{V[U_x]-V[U_y]}\leq T \nrm{U_x-U_y}= T\nrm{R(x)-R(y)}.
		\end{equation}
		By the promise on $V$ we get that $V[U_z]$ is a $(1,b,\eps)$-block-encoding of $f(H_z)=f(z)I$. Let $\varsigma_{\max/\min}$ denote the maximal/minimal singular value of a matrix. Using this notation we get that 
		\begin{align}
			\varsigma_{\max}\left[(\bra{0}^{\otimes b}\!\otimes\! I)V[U_y](\ket{0}^{\!\otimes b}\!\otimes\! I)\right]&\leq f(y)+\eps,\label{eq:VyupperBound}\\
			\varsigma_{\min}\left[(\bra{0}^{\otimes b}\!\otimes\! I)V[U_x](\ket{0}^{\!\otimes b}\!\otimes\! I)\right]&\geq f(x)-\eps.\label{eq:VxupperBound}		
		\end{align}
		Let use introduce the notation $\Pi_{\overline{\ketbra{0}{0}}}:=\left(\left(I_b-\ketbra{0}{0}^{\otimes b}\right)\otimes I\right)$, then by \eqref{eq:VyupperBound}-\eqref{eq:VxupperBound} we have that	
		\begin{align}
			\nrm{V[U_x]-V[U_y]}&\geq \nrm{(\ketbra{0}{0}^{\otimes b}\otimes I)V[U_x](\ketbra{0}{0}^{\otimes b}\otimes I)-(\ketbra{0}{0}^{\otimes b}\otimes I)V[U_y](\ketbra{0}{0}^{\otimes b}\otimes I)}\nonumber\\
			&\geq \varsigma_{\min}\left[(\bra{0}^{\otimes b}\!\otimes\! I)V[U_x](\ket{0}^{\!\otimes b}\!\otimes\! I)\right]-\varsigma_{\max}\left[(\bra{0}^{\otimes b}\!\otimes\! I)V[U_y](\ket{0}^{\!\otimes b}\!\otimes\! I)\right]\nonumber\\
			&\geq f(x)-f(y)-2\eps, \text{ and }\label{eq:VlowerBound1}\\
			\nrm{V[U_y]-V[U_x]}&\geq \nrm{\Pi_{\overline{\ketbra{0}{0}}}V[U_y](\ketbra{0}{0}^{\otimes b}\otimes I)-\Pi_{\overline{\ketbra{0}{0}}}V[U_x](\ketbra{0}{0}^{\otimes b}\otimes I)}\nonumber\\
			&\geq \varsigma_{\min}\left[\Pi_{\overline{\ketbra{0}{0}}}V[U_y](\ket{0}^{\!\otimes b}\!\otimes\! I)\right]-\varsigma_{\max}\left[\Pi_{\overline{\ketbra{0}{0}}}V[U_x](\ket{0}^{\!\otimes b}\!\otimes\! I)\right]\nonumber\\		
			&= \sqrt{1 \!-\!\varsigma^2_{\max}\left[(\bra{0}^{\otimes b}\!\otimes\! I)V[U_y](\ket{0}^{\!\otimes b}\!\otimes\! I)\right]}-\sqrt{1\!-\!\varsigma^2_{\min}\left[(\bra{0}^{\otimes b}\!\otimes\! I)V[U_x](\ket{0}^{\!\otimes b}\!\otimes\! I)\right]}\nonumber\\						
			&\geq \sqrt{1-(f(y)-\eps)^2}-\sqrt{1-(f(x)+\eps)^2}.\label{eq:VlowerBound2}	
		\end{align}
		By combining \eqref{eq:VupperBound} and \eqref{eq:VlowerBound1}-\eqref{eq:VlowerBound2} we get that
		\begin{equation*}
			T\geq \frac{\max\left[f(x)-f(y)-2\eps,\sqrt{1-(f(y)-\eps)^2}-\sqrt{1-(f(x)+\eps)^2}\right]}{\nrm{R(x)-R(y)}}.
		\end{equation*}
		Combining this inequality with \eqref{eq:normForm} proves \eqref{eq:preciseQueryBound}-\eqref{eq:lessPreciseQueryBound}.
		Finally note, that if $a>1$ essentially the same argument can be used to prove \eqref{eq:preciseQueryBound}-\eqref{eq:lessPreciseQueryBound}, just one needs to define $U_z$ with additional tensor products of identity matrices acting on the new ancilla qubits.
	\end{proof}
	
 	The above lower bound suggests that the spectrum of $H$ lying closea to $1$ is more flexible than the spectrum lying below say $\frac{1}{2}$ in absolute value. Indeed it turns out that the spectrum of $H$ lying close to $1$ is quadratically more useful than the spectrum $I\subseteq[-\frac12,\frac12]$, cf. Corollary~\ref{cor:exponentialApx} and Lemma~\ref{lem:PolyNormDiff}.
 	This lower bound also explains why is it so difficult to amplify the spectrum close to $1$, cf. Theorem~\ref{thm:singularValAmp}.
 	Finally note, that since eigenvalue transformation is a special case of singular value transformation it also gives a lower bound in singular value transformation. 
 	
 	\section{Conclusion}
 	Our main contribution in this paper is to provide a paradigm that unifies a host of quantum algorithms ranging from singular value estimation, linear equation solving, quantum simulation to quantum walks.  Prior to our contribution each of these fields would have to be understood independently, which makes mastering all of them a challenge.  By presenting them all within the framework of quantum singular value transformation, many of the most popular techniques in these fields follow as a direct consequence.  This greatly simplifies the learning process while also revealing algorithms that were hitherto unknown.
 	
 	The main result of this paper is an efficient method for implement singular value transformation, extending earlier qubitization techniques.
 	The paper describes several novel applications to this general result, including an algorithm for performing certain ``non-commutative'' measurements, an exponentially improved algorithm for simulating fractional queries to an unknown unitary oracle, and an improved algorithm for principal component regression.  
 	
 	We also give a novel view on quantum matrix arithmetics by summarizing known results about block-encoded matrices, showing that they enable performing matrix arithmetic on quantum computers in a simple and efficient manner.  The described method in principle can give exponential savings in terms of the dimension of the matrices, and perfectly fits into our framework.

 	An interesting question for future work involves the recent work by Catalin Dohotaru and Peter H\o yer which shows that a wide range of quantum walk algorithms can be unified within a single paradigm called controlled quantum amplification~\cite{dohotaru2017controlledQAmp}.  While the structure of the quantum circuits introduced by them bears a strong resemblance to those used in qubitization, it is difficult to place this work within the framework we present here.  The question of how to unify their approach with our techniques therefore remains open.

	\subsection*{Acknowledgments}
	A.G. thanks Ronald de Wolf, Robin Kothari, Joran van Apeldoorn, Shantanav Chakraborty, Stacey Jeffrey, Vedran Dunjko and Yimin Ge for inspiring discussions. Y.S. was supported in part by the Army Research Office (MURI award W911NF-16-1-0349); the U.S. Department of Energy, Office of Science, Office of Advanced Scientific Computing Research, Quantum Algorithms Teams program; and the National Science Foundation (grant 1526380). He thanks Andrew Childs, Guoming Wang, Cedric Lin, John Watrous, Ben Reichardt, Guojing Tian and Aaron Ostrander for helpful discussions.

    \bibliographystyle{alphaUrlePrint}
	\bibliography{Bibliography}
	
	\appendix

\providecommand\mywordcount{
	\ifcount
	$ \phantom{\sum}$ \\ \noindent\textbf{\large Wordcount} \\ $\phantom{\sum}$ \\
	\noindent\input|"if [ -f /home/gilyen/texcount.pl ]; then /home/gilyen/texcount.pl -sub=section \jobname.tex | grep -e Words -e Number -e Section -e top -e Part | awk 1 ORS='\string\\\string\\' | sed -e 's/\string\_/ /g'; else texcount -sub=section \jobname.tex | grep -e Words -e Number -e Section -e top -e Part | awk 1 ORS='\string\\\string\\' | sed -e 's/\string\_/ /g'; fi"
	text+headers+captions (\#headers/\#floats/\#inlines/\#displayed)\\
	\else
	\fi
}
	
	
\end{document}